\newcolumntype{P}[1]{>{\centering\arraybackslash}p{#1}}
\renewcommand{\backref}[1]{}
\renewcommand{\backrefalt}[4]{%
\ifcase #1 %
\or
[p.\ #2]%
\else
[pp.\ #2]%
\fi}
\crefname{thm}{Theorem}{Theorems}
\crefname{property}{Property}{Properties}
\crefname{lemma}{Lemma}{Lemmas}
\crefname{proposition}{Proposition}{Propositions}
\crefname{defn}{Definition}{Definitions}
\crefname{theorem}{Theorem}{Theorems}
\crefname{conjecture}{Conjecture}{Conjectures}
\crefname{corollary}{Corollary}{Corollaries}
\crefname{claim}{Claim}{Claims}
\crefname{section}{Section}{Sections}
\crefname{appendix}{Appendix}{Appendices}
\crefname{figure}{Fig.}{Figs.}
\crefname{table}{Table}{Tables}
\newlist{propertyenum}{enumerate}{1} 
\setlist[propertyenum]{label=\alph*), ref=(\alph*)}
\newtheorem{thm}{Theorem}[section]
\newtheorem{lemma}[thm]{Lemma}
\newtheorem{proposition}[thm]{Proposition}
\newtheorem{claim}[thm]{Claim}
\newtheorem*{conjecture*}{Conjecture}
\newtheorem*{property}{Properties of ENTCFs}
\theoremstyle{definition}
\newtheorem{defn}[thm]{Definition}
\newtheorem*{remark}{Remark}
\newcommand{\amatrix}[2]{\hspace{-3pt} \begin{pmatrix} #1 \\ #2 \end{pmatrix}}
\newcommand{\tr}{\mathrm{Tr}}
\newcommand{\pr}{\mathrm{Pr}}
\newcommand{\C}{\mathbb{C}}
\newcommand{\N}{\mathbb{N}}
\newcommand{\thexp}{\mathrm{th}}
\newcommand*{\medcap}{\mathbin{\scalebox{1.3}{\ensuremath{\cap}}}}
\newcommand*{\medcup}{\mathbin{\scalebox{1.3}{\ensuremath{\cup}}}}
\newcommand{\thetaset}{[2N]\cup\{0,\diamond\}}
\newcommand{\thetasetint}{[2N]\cup\{0\}}
\newcommand{\qtestset}{\{0^{2N},1^{2N},0^N1^N,1^N0^N\}}
\newcommand{\negl}{{\mathrm{negl}(\lambda)}}
\newcommand{\notnegl}{{\mu(\lambda)}}
\newcommand{\hatx}{\hat{x}}
\newcommand{\hatb}{\hat{b}}
\newcommand{\hath}{\hat{h}}
\newcommand{\hatPi}{\hat{\Pi}}
\newcommand{\Gen}{\mathrm{Gen}}
\newcommand{\key}{\mathrm{key}}
\newcommand{\CHK}{\mathrm{CHK}}
\newcommand{\Preimage}{\mathrm{INV}}
\newcommand{\Sigmaset}{\mathrm{\Sigma}}
\newcommand{\poly}{\mathrm{poly}}
\newcommand{\modulo}{\mathrm{mod}}
\newcommand{\sigmax}{\sigma^{X}}
\newcommand{\sigmaz}{\sigma^{Z}}
\newcommand{\tildealpha}{\tilde{\alpha}}
\newcommand{\tildesigma}{\tilde{\sigma}}
\newcommand{\real}{\operatorname{Re}}
\newcommand{\str}{\mathrm{str}}
\newcommand{\capprox}{\overset{c}{\approx}}
\newcommand{\csimeq}{\overset{c}{\simeq}}
\newcommand{\bignorm}[1]{\Bigl\lVert #1 \Bigr\rVert}
\newcommand{\normstate}[2]{\|#1\|_{#2}}
\newcommand{\norminfty}[1]{\|#1\|_{\infty}}
\newcommand{\normone}[1]{\|#1\|_{1}}
\newcommand{\ipstate}[3]{\langle #1, #2 \rangle_{#3}}
\newcommand{\anc}{\mathrm{anc}}
\newcommand{\PosH}{\mathrm{Pos}(\mathcal{H})}
\newcommand{\Pos}{\mathrm{Pos}}
\newcommand{\LH}{\mathcal{L}(\mathcal{H})}
\newcommand{\UH}{\mathcal{U}(\mathcal{H})}
\newcommand{\BraKet}[2]{\langle #1 | #2 \rangle}
\newcommand{\ketbra}[2]{|#1\rangle\langle#2|}
\newcommand{\ketbrasame}[1]{|#1\rangle\langle#1|}
\DeclareMathOperator{\Supp}{Supp}
\DeclareMathOperator{\Rank}{Rank}
\DeclareMathOperator{\vc}{vec}
\crefname{pluralequation}{Eqs.}{Eqs.}
\Crefname{pluralequation}{Equations}{Equations}
\newcommand{\1}{\mathbbm{1}}
\newcommand{\calA}{\mathcal{A}}
\newcommand{\calD}{\mathcal{D}}
\newcommand{\calF}{\mathcal{F}}
\newcommand{\calG}{\mathcal{G}}
\newcommand{\calH}{\mathcal{H}}
\newcommand{\calK}{\mathcal{K}}
\newcommand{\calL}{\mathcal{L}}
\newcommand{\calP}{\mathcal{P}}
\newcommand{\calV}{\mathcal{V}}
\newcommand{\calX}{\mathcal{X}}
\newcommand{\calY}{\mathcal{Y}}
\newcommand{\calZ}{\mathcal{Z}}
\newcommand{\calVdag}{\mathcal{V}^{\dagger}}
\newcommand{\tildeD}{\widetilde{D}}
\newcommand{\tildeS}{\widetilde{S}}
\newcommand{\tildeZ}{\widetilde{Z}}
\newcommand{\tildeX}{\widetilde{X}}
\newcommand{\hatX}{\hat{X}}
\newcommand{\tildepsi}{\widetilde{\psi}}
\newcommand{\zetadiamond}{\zeta_{\diamond}}
\newcommand{\chidiamond}{\chi_{\diamond}}
\newcommand{\abar}{\overline{a}}
\newcommand{\gammaP}{\gamma_{P}}
\newcommand{\gammaH}{\gamma_{H}}
\newcommand{\gammaHq}{\gamma_{H,q}}
\newcommand{\fail}{\epsilon}
\newcommand{\failP}{\epsilon_P}
\newcommand{\failH}{\epsilon_H}
\newcommand{\failHq}{\epsilon_{H,q}}
\DeclarePairedDelimiter{\abs}{\lvert}{\rvert}
\DeclarePairedDelimiter{\norm}{\lVert}{\rVert}
\DeclareMathOperator*{\E}{\mathbb{E}}
\numberwithin{equation}{section}
\title{Parallel self-testing of EPR pairs under computational assumptions}
\author{
Honghao Fu\thanks{Massachusetts Institute of Technology. \texttt{honghaof@mit.edu} }
\and 
Daochen Wang\thanks{University of Maryland. \texttt{wdaochen@gmail.com}}  
\and 
Qi Zhao\thanks{University of Maryland and The University of Hong Kong. \texttt{zhaoqi@cs.hku.hk}}
}
\date{}
\begin{document}

\maketitle

\begin{abstract}
    Self-testing is a fundamental feature of quantum mechanics that allows a classical verifier to force untrusted quantum devices to prepare certain states and perform certain measurements on them. The standard approach assumes at least two spatially separated devices. Recently, Metger and Vidick~\cite{mv2021selftest} showed that a single EPR pair of a single quantum device can be self-tested under \emph{computational assumptions}. In this work, we generalize their results to give the first parallel self-test of $N$ EPR pairs and measurements on them in the single-device setting under the same computational assumptions. We show that our protocol can be passed with probability negligibly close to $1$ by an honest quantum device using $\poly(N)$ resources. Moreover, we show that any quantum device that fails our protocol with probability at most $\fail$ must be $\poly(N,\fail)$-close to being honest in the appropriate sense. In particular, our protocol can test any distribution over tensor products of computational or Hadamard basis measurements, making it suitable for applications such as device-independent quantum key distribution \cite{metger2021diqkd} under computational assumptions. Moreover, a simplified version of our protocol is the first that can efficiently certify an arbitrary number of qubits of a single cloud quantum computer using only classical communication.
\end{abstract}

\tableofcontents

\section{Introduction}

Self-testing is a fundamental feature of quantum mechanics that allows a classical verifier to force a quantum device (sometimes called prover) to prepare certain states and measure them in certain bases up to local isometries.
The term ``self-test'' was first coined by Mayers and Yao~\cite{mayersyao2004selftest} in 2004, but its concept can be traced back to much earlier works that study the remarkable features of quantum
correlations~\cite{bell1964epr,summerswerner1987bell,tsirelson1987bell,popescu1992bell,bmr1992bell}.

To perform a self-test, the verifier inputs questions $x$ to the devices and they return answers $a$. The key idea of self-testing is that if $a$ and $x$ obey certain \emph{nonlocal} correlations, then the verifier can deduce the devices'  behavior, assuming they are spatially separated. This assumption, which implies non-communication between the devices, is crucial because otherwise, the devices could reproduce any correlation by using a lookup table. 

The literature on this nonlocal type of self-test is vast~\cite{supic2020selftest}. They address topics such as which correlations can self-test which states, e.g., \cite{coladangelo2017bipartite,goh2018geometry,miller2019ghz}; how efficient and robust a self-test can be, e.g., \cite{mys2012robust,mckague2017chsh,nv2017quantumlinearity,nv2018lowdegree,chao2018largeentanglement,fu2019}; and how to use self-testing to, e.g., certify a quantum computer's components~\cite{sekatski2018certifying}, delegate quantum computations~\cite{ruv2013command,coladangelo2019leash}, and characterize the complexity of quantum correlations~\cite{ji2020mipre}.

A limitation of nonlocal self-testing is the assumption of spatial separation. In practice, it is difficult
to certify this assumption, especially if the device is compact or falls outside our physical control. Therefore, it is interesting to ask whether we can replace this assumption with another one so that we can self-test a \emph{single} quantum device. We illustrate the nonlocal and single-device settings in \cref{fig:nonlocal_single}.

\newcommand{\Cross}{$\mathbin{\tikz [x=2ex,y=2ex,line width=.2ex] \draw (0,0) -- (1,1) (0,1) -- (1,0);}$}
\begin{figure}[H]
\center
\resizebox{340pt}{!}{
        \begin{tikzpicture}[thick]
        \tikzstyle{operator} = [draw,rounded rectangle,fill=white,minimum size=2em] 
        \tikzstyle{device} = [draw,rounded rectangle,fill=white,minimum size=4.5em] 
        \tikzstyle{phase} = [fill,shape=circle,minimum size=5pt,inner sep=0pt]
        \tikzstyle{surround} = [fill=white,thick,draw=black,rounded corners=2mm]
        \tikzset{edge/.style = {->,> = latex'}}
        
        \node[operator] (v1) at 
        (0, -1) {Verifier};
        
        \node[operator] (p1) at
        (3, 0) {Device 1};
        
        \node[operator] (p2) at 
        (3, -2) {Device 2};
        \path [->] 
        ([yshift=2mm]v1.east) 
        edge node [above=0.5mm] {x} 
        ([yshift=2mm]p1.west);
        \path [->] 
        ([yshift=0.1mm]p1.west) 
        edge node [below=0.5mm] {a} 
        ([yshift=0.1mm]v1.east);
        \path [->] 
        ([yshift=-0.1mm]v1.east) 
        edge node [above=0.5mm] {y} 
        ([yshift=-0.1mm]p2.west);
        \path [->] 
        ([yshift=-2mm]p2.west) 
        edge node [below=0.5mm] {b} 
        ([yshift=-2mm]v1.east);
        
        \node at (3,-1)  {\Cross};
        \draw[red,ultra thick] (4.5,0) -- (4.5,-2);
        \path [->] 
        ([xshift=-1mm]p1.south) 
        edge node [] {} 
        ([xshift=-1mm]p2.north);
        \path [->] 
        ([xshift=1mm]p2.north) 
        edge node [] {} 
        ([xshift=1mm]p1.south);
        
        \node[device] (v2) at 
        (6, -1) {Verifier};
        
        \node[device] (p3) at 
        (9.5, -1) {Device};
        
        \path [->] 
        ([yshift=6mm]v2.east) 
        edge node [above=0mm] {$x_1$} 
        ([yshift=6mm]p3.west);
        \path [->] 
        ([yshift=1mm]p3.west) 
        edge node [above=0mm] {$a_1$} 
        ([yshift=1mm]v2.east);
        \path [->] 
        ([yshift=-1mm]v2.east) 
        edge node [below=0.1mm] {$x_2$} 
        ([yshift=-1mm]p3.west);
        \path [->] 
        ([yshift=-6mm]p3.west) 
        edge node [below=0.1mm] {$a_2$} 
        ([yshift=-6mm]v2.east);
        \end{tikzpicture}}
	\caption{Self-testing in the nonlocal setting (left)  involves (at least) two spatially separated devices that cannot communicate. In the single-device setting (right), there is only one device.}
  \label{fig:nonlocal_single}
\end{figure}

\paragraph{Computational self-testing.} 

Recently, beginning with seminal work by Mahadev~\cite{mahadev2018verification} on the classical verification of quantum computations, a series of works, e.g., \cite{gheorghiuvidick2019rsp,
brakerski2020simpler,
chia2020noninteractive,alagic2020noninteractive, vidickzhang2020classicalzero, vidickzhang2021classicalproofs,kahanamokumeyer2021classicallyverifiable,brakerski2021cryptographic,
hirahara2021lowdepth,
liu2021depthefficient,
zhu2021demonstration, mv2021selftest, metger2021diqkd, mizutani2021computational}, have explored how computational assumptions can be leveraged by a classical verifier to control a single quantum device in certain ways. Typically, the assumption used is that the Learning-With-Errors (LWE)~\cite{regev2009lwe} problem is hard to solve efficiently, even for quantum computers, which is a standard assumption. However, except for \cite{gheorghiuvidick2019rsp,vidickzhang2021classicalproofs,mv2021selftest,metger2021diqkd, mizutani2021computational}, the level of control established in these works is much weaker than in nonlocal self-testing. For example, if a device passes Mahadev's verification protocol~\cite{mahadev2018verification}, it only means that, to quote \cite{mv2021selftest}, ``\textit{there exists} a quantum state such that the distribution over the prover's answers \textit{could have been} produced by performing the requested measurements on this state''. We do not know whether the prover \textit{actually} prepared that state and performed the requested measurements on it.

Metger and Vidick~\cite{mv2021selftest} are the first to explicitly propose the self-testing of a single device under computational assumptions. We interchangeably refer to this as computational or single-device self-testing. In a sense, their work is a culmination of many previous results because self-testing offers the strongest control. The main limitation of \cite{mv2021selftest} and follow-up work \cite{mizutani2021computational} is that they only self-test two and three qubits, respectively. In this work, we introduce a protocol that self-tests $N$ EPR pairs and measurements on them in the computational setting. Our protocol can be passed with probability negligibly close to $1$ by an honest quantum device using $\poly(N)$ resources. On the other hand, we show that any quantum device accepted by our protocol with probability $\geq 1-\epsilon$ must be $\poly(N,\epsilon)$-close\footnote{By $\poly(N,\epsilon)$, we mean a real function of $N$ and $\epsilon$ of order $O(N^a\epsilon^b)$ as $N\rightarrow \infty$ and $\epsilon\rightarrow 0$, where $a,b>0$ are constants.} to being honest in the appropriate sense. 

\paragraph{Main results.}

 We give a protocol for testing the following states and measurements.
 \begin{equation}\label{eq:states}
 \begin{aligned}
    \hspace{12pt} \textbf{States.} \;  &\Bigl\{ \ket{\tau^{\theta,v}} \coloneqq \ket{v_1} \otimes \cdots \otimes \ket{v_{\theta-1}} \otimes \ket{(-)^{v_{\theta}}} \otimes \ket{v_{\theta+1}} \otimes \cdots \otimes\ket{v_{2N}} \bigm| \theta\in \{1,\ldots,2N\}, \, v\in \{0,1\}^{2N} \Bigr\} \\
    &\bigcup \, \Bigl \{ \ket{\tau^{0,v}} \coloneqq \ket{v_1} \otimes \cdots \otimes\ket{v_{2N}} \bigm| v\in \{0,1\}^{2N} \Bigr\} 
    \\
    &\bigcup \, \Bigl\{ \ket{\tau^{\diamond,v}}\coloneqq\frac{1}{\sqrt{2^N}}\bigotimes_{i=1}^N (\sigmax)^{v_i}\otimes (\sigmax)^{v_{N+i}}(\ket{0}_i\ket{+}_{N+i} + \ket{1}_i\ket{-}_{N+i})\bigm| v \in \{0,1\}^{2N}\Bigr\},
 \end{aligned}
 \end{equation}
where $\ket{(-)^{a}} \coloneqq (\ket{0} + (-1)^{a} \ket{1})/\sqrt{2}$ for $a \in \{0,1\}$.

\medskip
\textbf{Measurements.} Any distribution over tensor products of computational (Pauli-$Z$) or Hadamard (Pauli-$X$) basis measurements on $2N$ qubits; more precisely, any distribution over the following set of  measurements 
\begin{equation}\label{eq:measurements}
    \Bigl\{ \bigl\{ \Pi_q^u \coloneqq \ketbrasame{B_{q_1}^{u_1}}\otimes \ldots \otimes \ketbrasame{B_{q_{2N}}^{u_{2N}}} \bigm| u \in \{0,1\}^{2N}\bigr \} \Bigm| {q\in \{0,1\}^{2N}} \Bigr\},
\end{equation}
where
\begin{equation}\label{eq:bb84_states}
    \ket{B_0^0} \coloneqq \ket{0}, \quad \ket{B_0^1} \coloneqq \ket{1}, \quad \ket{B_1^0} \coloneqq \ket{+}, \quad \text{and} \quad \ket{B_1^1} \coloneqq \ket{-}.
\end{equation}
\medskip
Note that we can self-test product states and entangled states together with local measurements in the single-device setting, which is not possible in the nonlocal setting.

Our protocol generalizes the protocols in \cite{gheorghiuvidick2019rsp, mv2021selftest} and uses the Extended Noisy Trapdoor Claw-Free function Families, or ENTCFs, from~\cite{mahadev2018verification}. An ENTCF consists of two indistinguishable function-pair families, a claw-free family $\calF$ and an injective family $\calG$, and satisfies various properties under the LWE hardness assumption. 

In our protocol, the classical verifier first samples $\theta \in \{0,1,\ldots,2N\}\cup\{\diamond\}$ uniformly at random.  Then it generates the public keys and trapdoors of $2N$ function pairs from $\calF \cup \calG$ according to $\theta$ as follows.
\begin{enumerate}
    \item $\theta=0$: all pairs are from $\calG$.
    \item $\theta \in \{1,\ldots,2N\}$: the $\theta^{\thexp}$ pair is from $\calF$ and the remaining $2N-1$ pairs are from $\calG$.
    \item $\theta = \diamond$: all pairs are from $\calF$.
\end{enumerate}
The verifier sends the public keys to the device. The device then sends back $2N$ images, $y_1,\ldots,y_{2N}$, of these function pairs -- these play the role of a commitment. In the second round, the verifier either (i) checks the commitment by asking for preimages of the $y_i$ and accepts or rejects accordingly, or (ii) asks for an equation involving the preimages of the $y_i$. In case (ii), there is a final round where the verifier sends with probability $1/2$ a uniformly random $q\in \{0^{2N},1^{2N},0^{N}1^{N},1^N0^N\}$ and with probability $1/2$ a random $q\in \{0,1\}^{2N}$ according to some distribution $\mu$ of its choosing. The device sends back the result $u\in \{0,1\}^{2N}$ of performing some measurement $\{P_q^u\}_u$. The verifier lastly checks that $u$ is consistent with measuring $\ket{\tau^{\theta,v}}$ using $\{\Pi_q^u\}_u$ , where $v\in \{0,1\}^{2N}$ is some bitstring that the verifier can compute efficiently using the trapdoors, and accepts or rejects accordingly.

\begin{thm}[Informal]\label{thm:informal_selftest}
  Let $\lambda\in \mathbb{N}$ be a security parameter and let $N=\poly(\lambda)$. Assuming the LWE problem of size $\lambda$ cannot be solved in $\poly(\lambda)$ time, our protocol satisfies the following properties.
  \begin{description}
    \item[Completeness (see \cref{thm:completeness}).]
    Using $\poly(\lambda)$ qubits and quantum gates, a quantum device can prepare one of the $2N$-qubit states in $\{\ket{\tau^{\theta,v}} \mid \theta\in \{0,1,\ldots,2N\}\cup\{\diamond\}, v\in \{0,1\}^{2N}\}$ and measure it using $\{\Pi_q^u \mid u \in \{0,1\}^{2N}\}$ upon question $q\in \{0,1\}^{2N}$ to pass our protocol with probability $\geq 1-\negl$. Moreover, the verifier can be classical and run in $\poly(\lambda)$ time.
    \item[Soundness (see \cref{thm:soundness}).]
     If a quantum device passes our protocol in $\poly(\lambda)$ time with probability $\geq 1 - \epsilon$,
    then the device must have prepared a (sub-normalized) state $\sigma^{\theta,v}$, measured it using $\{P_q^u\}_u$, and received outcome $u$, such that
    \begin{align}
        &\sum_{v \in \{0,1\}^{2N}}\norm{V \sigma^{\theta,v} V^\dagger - 
        \ketbra{\tau^{\theta,v}}{\tau^{\theta,v}} \otimes \alpha^{\theta,v}}_1 \leq O(N^{7/4}\epsilon^{1/32}) \quad \text{and}\label{eq:intro_soundness_states}
        \\
        &\mathbb{E}_{q\leftarrow \mu} \Bigl[ \sum_{u,v \in \{0,1\}^{2N}}\norm{V P_q^u \, \sigma^{\theta,v} \, P_q^u V^\dagger - 
        \Pi_q^{u} \, \ketbra{\tau^{\theta,v}}{\tau^{\theta,v}} \, \Pi_q^{u} \otimes \alpha^{\theta,v}}_1 \Bigr] \leq O(N^2\epsilon^{1/32}),\label{eq:intro_soundness_measurements}
    \end{align}
   where $\theta\in \{0,1,\ldots,2N\}\cup\{\diamond\}$, $\mu$ is the distribution on $\{0,1\}^{2N}$ chosen by the verifier in our protocol, $u,v\in \{0,1\}^{2N}$ are known to the verifier, $V$ is an efficient isometry independent of $\{\theta, \mu, u, v\}$, and the $\alpha^{\theta,v}$s are some auxiliary states that are computationally indistinguishable from some fixed state $\alpha$.
  \end{description}
\end{thm}

We highlight the $\poly(N,\epsilon)$ soundness error (or robustness) that we achieve. Good robustness is critical if we want to use our protocol in practice because real quantum devices are imperfect. The more imperfect a device is, the more robust a protocol needs to be to control it.

\paragraph{Techniques.}

The completeness of our self-testing protocol follows straightforwardly from the properties of ENTCFs (see \cref{sec:completeness}). The main challenge is to prove  soundness. We give a high-level overview here and provide more details in \cref{sec:soundness}. We start by defining $4N$ observables of the device $\{X_i, Z_i \mid i \in [2N]\}$ using its measurement operators. The strategy is to characterize these observables as the standard $\sigmax_i$ and $\sigmaz_i$ Pauli observables on $2N$ qubits where $i$ indexes those qubits. Then, we characterize the device's states by their invariance under products of projectors corresponding to these observables and the device's measurements as products of these projectors. To characterize $X_i$ and $Z_j$, we first generalize techniques in \cite{mv2021selftest} to show that $X_i$ and $Z_j$ obey certain state-dependent commutation and anti-commutation relations (\cref{sec:commute,sec:anti_commute}). To carry out the generalization, it is important for the verifier to select $\theta$ from the set $\thetaset$ for two reasons. The first is that they allow us to bound the failure probability associated with \emph{each} $\sigma^\theta$ by $2N+2$ (the number of possible $\theta$s) times the \emph{average} failure probability over all $\theta$s. The second is that this restricted set of $\theta$s suffices for us to characterize $X_i$ and $Z_i$ as $\sigmax_i$ and $\sigmaz_i$. Intuitively, $\theta = 0$ is used to characterize $\{Z_1, \ldots, Z_{2N}\}$, $\theta \in [2N]$ is used to characterize $X_\theta$, and $\theta=\diamond$ is used to characterize EPR pairs. We give a more precise correspondence in \cref{tab:testing_correspondence}.

From \cref{sec:conj_inv} onwards, we introduce new techniques to handle products of projectors (corresponding to these observables). These techniques differ significantly from \cite{mv2021selftest} because their techniques are not susceptible to generalization to arbitrary $N$. These techniques also differ significantly from those used in nonlocal self-testing because we lack the perfect state-\emph{independent} commutation relations between observables on two spatially-separated devices. More specifically, we introduce a ``operator-state commutation'' lemma (\cref{prop:operator_state_commutation}) and a ``lifting'' lemma (part $5$ of \cref{lem:lifting}, and \cref{lem:lifting_proj}) that together give us the ability to ``commute an observable past a state'' by leveraging the cryptographic properties of ENTCFs based on the LWE hardness assumption. We then use this ability to handle products of projectors (see \cref{sec:swap_states,sec:swap_observables,sec:soundness_states_meas}). This ability is useful because, for example, $X_1Z_2X_3\psi = Z_2X_1X_3\psi$ (1) does not follow from the commutation relation $X_1Z_2\psi = Z_2 X_1 \psi$, where $\psi$ is some density operator; however, it does follow if we could commute $X_3$ past $\psi$ first because $X_1$ and $Z_2$ would then be directly next to $\psi$. Having all (1)-like relations implies that $X_i$ and $Z_i$ can be characterized as $\sigmax_i$ and $\sigmaz_i$ respectively, which follows from results in approximate representation theory~\cite{Vidick_Fsmp_2021,gowershatami2017rep}. We remark that the preceding discussion is for intuition only: in fact, we do not explicitly prove (1)-like relations and then apply representation theory. Instead, we do these two steps simultaneously to directly show that an explicit ``swap'' isometry (defined by \cref{fig:swap_isometry} in \cref{sec:swap}) approximately maps $X_i$ and $Z_i$ to $\sigmax_i$ and $\sigmaz_i$ respectively.

\paragraph{Applications.}

We present two applications of our result, the first is for device-independent (DI) quantum key distribution (QKD), and the second is for dimension testing. We stress that for both applications, we crucially rely on the characterization of \emph{measurements} in \cref{eq:intro_soundness_measurements} of \cref{thm:informal_selftest}. Our characterization of measurements significantly differentiates our work from that of \cite{gmp2022parallelrsp}, which only characterizes states.

\emph{DIQKD \textup{(\cref{sec:diqkd})}.} 
A DI protocol is one where the parties involved do not need to trust the inner working of the devices they use to be sure that the devices have successfully implemented the protocol. A QKD protocol is one for establishing information-theoretically secure keys between two parties. Previous DIQKD protocols rely on the nonlocal assumption. This assumption is usually justified experimentally by spatially separating two devices by a large distance, which is difficult to implement.  Recently, Metger et.~al.~\cite{metger2021diqkd} proposed a different setting for DIQKD: they replace the nonlocal assumption with the assumption that the two devices are computationally bounded. It is still assumed that the devices cannot communicate with an eavesdropper because otherwise the eavesdropper can learn the generated shared secret key. (We refer the reader to \cite[Section 4]{metger2021diqkd} for a discussion on why this assumption might still hold even when the devices themselves can communicate.) However, since their protocol sequentially repeats the self-test in \cite{mv2021selftest}, their soundness proof
relies on the IID assumption that the device behaves identically and independently at each repetition to argue that it has prepared and measured many EPR pairs.

Our DIQKD protocol consists of a random number of ``test rounds'' followed by a final ``generation round'', where both round types are based on our self-test. The $N$ EPR pairs certified in the generation round are used to generate $\Omega(N)$ shared keys. Because of the parallel nature of our self-test, our DIQKD protocol does not require the IID assumption. We sketch a soundness proof that uses a ``cut-and-choose'' argument from \cite[Theorem 4.33 (arXiv version v2)]{gmp2022parallelrsp} to upper bound the failure probability of the device in the generation round, conditioned on the protocol not aborting in the test rounds. This argument does not require an IID assumption \emph{between} rounds. Then, we use \cref{eq:intro_soundness_measurements} of \cref{thm:informal_selftest} to lower bound the key rate, which does not require an IID assumption \emph{within} any round. Hence we remove the IID assumption altogether. The application of our self-test to remove the IID assumption from DIQKD in the computational setting can be viewed as analogous to the application of a nonlocal self-test to remove the IID assumption from DIQKD in the usual nonlocal setting~\cite{ruv2013command}.

\emph{Dimension test \textup{(\cref{sec:dimension})}.} 
Our dimension test is a simplified version of our self-test and is inspired by the nonlocal dimension test in \cite{chao2020quantum} and its exposition in \cite[Section~2.5.2]{Vidick_Fsmp_2021}. The protocol in \cite{chao2020quantum} works as follows. The verifier chooses a random bit $\theta\in \{0,1\}$ and random bitstring $x\in \{0,1\}^n$ and sends $n$ qubits to the device such that the qubits encode $x$ in the computational basis ($\theta=0$) or in the Hadamard basis ($\theta=1$). After the device has received all $n$ qubits, the verifier sends $\theta$ to the device and asks it to return a bitstring $x'\in \{0,1\}^n$. If $x'=x$, the verifier certifies that the device has a large quantum dimension. Our protocol can be viewed as a version of this protocol, where the verifier classically delegates the preparation of the appropriate $n$-qubit states to the prover in a secure manner. Although our protocol is inspired by \cite{chao2020quantum}, our security proof uses \cref{thm:informal_selftest} and differs significantly from that in \cite{chao2020quantum}.

\begin{thm}[Informal, see \cref{cor:dimension}]\label{thm:informal_dimension}
Under the same computational assumptions as in \cref{thm:informal_selftest}, if a quantum device runs in $\poly(\lambda)$ time and passes our dimension test with probability $\geq 1 - \epsilon$, then its quantum dimension is at least
$(1 - O(N^2\epsilon^{1/32}))2^{N}$.
\end{thm}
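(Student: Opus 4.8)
**The plan is to derive the dimension bound directly from the soundness statement of the self-testing theorem, specialized to the case $\theta = \diamond$.** Recall that $\ket{\tau^{\diamond, v}}$ for $v \in \{0,1\}^{2N}$ is a family of $2^{2N}$ states obtained by applying local Pauli-$X$ corrections to a fixed $N$-fold maximally entangled state; in particular, for any fixed $v$ the reduced state on the first $N$ qubits of $\ket{\tau^{\diamond,v}}$ has rank $2^N$. The key point is that soundness tells us the prover's post-measurement (or pre-measurement) state $\sigma^{\diamond, v}$, after applying the efficient isometry $V$, is close in trace distance to $\ketbra{\tau^{\diamond,v}}{\tau^{\diamond,v}} \otimes \alpha^{\diamond,v}$, summed over $v$. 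Since $V$ is an isometry, it does not change the rank of the relevant reduced states, so the prover's actual quantum memory must support a subsystem that is close to something of rank $\geq 2^N$.

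**The concrete steps are as follows.** First, I would fix the dimension-testing protocol to be the $\theta = \diamond$ branch of the self-testing protocol (possibly with the measurement phase simplified, e.g.\ only using $q \in \{2,3\}$ or just asking for the commitment check), and observe that the honest prover of the completeness statement uses $\poly(\lambda)$ qubits, so the claim is only about the lower bound. Second, I would invoke the soundness bound \eqref{eq:soundness_intro} with $\theta = \diamond$: averaging over $v$, there exists at least one $v^*$ (in fact a $1 - O(N^{7/4}\epsilon^{1/32})$ fraction) for which $\norm{V\sigma^{\diamond,v^*}V^\dagger - \ketbra{\tau^{\diamond,v^*}}{\tau^{\diamond,v^*}} \otimes \alpha^{\diamond,v^*}}_1 \leq O(N^{7/4}\epsilon^{1/32})$, after appropriate renormalization of the sub-normalized $\sigma^{\diamond,v^*}$ (here one must be slightly careful that the total weight $\sum_v \tr(\sigma^{\diamond,v})$ is close to $1$, which follows because the prover passes with high probability). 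Third, I would use the standard fact that trace distance cannot increase under the partial trace and under applying a fixed isometry, together with a bound relating trace distance to the difference of ranks / spectra: if $\rho$ and $\tau$ are states with $\norm{\rho - \tau}_1 \leq \delta$ and $\tau$ has a reduced state $\tau_A$ of rank $r$, then the reduced state $\rho_A$ has (effective) rank at least $(1-\delta) r$ in the sense that its largest $r$ eigenvalues sum to at least $1-\delta - (\text{something small})$, hence its support has dimension $\geq (1-\delta) r$. Applied with $r = 2^N$ and $\delta = O(N^{7/4}\epsilon^{1/32})$ on the first $N$-qubit subsystem, this gives that the prover's Hilbert space has dimension $\geq (1 - O(N^{7/4}\epsilon^{1/32}))2^N$.

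**I expect the main obstacle to be making the ``dimension'' claim precise and robust.** The quantity $(1 - \delta)2^N$ is not literally a dimension (it need not be an integer, and trace-distance closeness to a rank-$2^N$ state does not force the support to have dimension exactly $2^N$ minus a small number — it could have full rank with a tiny tail). The right formulation, which I would adopt, is that the prover's state has \emph{at least} $(1-\delta)2^N$ worth of its weight on a $2^N$-dimensional subspace, equivalently that any quantum memory capable of passing the protocol must have Hilbert space dimension $\geq 2^N$ outright (since $2^N$ is an integer and the effective rank exceeds $2^N - 1$ once $\delta < 2^{-N}$, which holds in the relevant regime); alternatively one phrases it operationally as ``$(1-\delta)$-close to requiring $N$ qubits.'' A secondary technical point is handling sub-normalization: the states $\sigma^{\theta,v}$ in \cref{thm:informal_selftest} are sub-normalized and indexed so that they sum (over the protocol's randomness) to the true state, so I would first argue $\sum_v \tr \sigma^{\diamond,v} \geq 1 - O(\epsilon)$ from the acceptance probability, then renormalize before applying the rank argument. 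None of this is deep, but it requires care to state the corollary cleanly; I would present it as \cref{cor:dimension} with the bound written exactly as $(1 - O(N^{7/4}\epsilon^{1/32}))2^N$ and a one-line remark that this certifies $N$ qubits whenever the right-hand side exceeds $2^N - 1$.
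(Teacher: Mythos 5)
There is a genuine gap in your central step. You use only the first soundness equation (closeness of $\calV \sigma^{\diamond,v}\calVdag$ to $\ketbra{\tau^{\diamond,v}}{\tau^{\diamond,v}}\otimes\alpha^{\diamond,v}$) and then argue that, since the reduced state on the first $N$ qubits of $\ket{\tau^{\diamond,v}}$ is maximally mixed, the device's memory must have rank close to $2^N$. But the $2N$ qubits on which $\ket{\tau^{\diamond,v}}$ lives are \emph{ancilla} qubits introduced by the isometry $\calV:\calH\to\mathbb{C}^{2^{2N}}\otimes\calH$; the near-maximal mixedness of the first $N$ of them is accounted for by entanglement with the \emph{second} $N$ ancilla qubits, not with the device. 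The isometry preserves the rank of the global state, but $\ketbra{\tau^{\diamond,v}}{\tau^{\diamond,v}}\otimes\alpha^{\diamond,v}$ is (pure)$\otimes\alpha$, so this rank can be tiny; and the rank of a reduced state on freshly introduced ancillas does not lower bound $\dim(\calH)$. Concretely, a device with a one-dimensional Hilbert space is perfectly consistent with ``there exists an isometry $V$ with $V\sigma V^\dagger = \ketbra{\tau^{\diamond,v}}{\tau^{\diamond,v}}\otimes\alpha$'' if $V$ is treated as a black box, which is all your argument uses. So closeness of the pre-measurement state to the self-tested state cannot, by itself, certify dimension.

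What makes the bound work in the paper is the \emph{second} soundness equation (the post-measurement characterization, \cref{eq:soundness_measurements}), specialized to $\theta=0$ and $q=1$: the $2^N$ post-measurement states $P_1^u\sigma^{0,v}P_1^u$ are states on the device's own space $\calH$, and soundness forces $\calV\bigl(\sum_u P_1^u\sigma^{0,v}P_1^u\bigr)\calVdag$ to be close to $2^{-N}\1\otimes\alpha$. Here the near-maximally-mixed factor sits on a register whose input through $\calV$ is a fixed pure ancilla state $\ket{0}^{\otimes N}$, so its mixedness must be paid for by the rank of the device's state; \cref{prop:rank} (proved via the vec correspondence and a Schmidt-rank overlap bound) makes this robust, giving $\Rank(\rho)\geq(1-O(N^{7/4}\epsilon^{1/32}))2^N$ on $\calH$. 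This is why the paper's dimension protocol needs no $\theta=\diamond$ round at all (it uses $N$ keys, $\theta\in\{0,\ldots,N\}$, and a Hadamard-basis question), and why the errors in \cref{lem:replace} and \cref{lem:lifting_proj} had to be kept independent of the number of measurement outcomes. Two further points your write-up omits: the theorem bounds the \emph{quantum} dimension, which requires the classical-quantum decomposition $\calH=\calH_C\otimes\calH_Q$ and a block-diagonality argument (your bound would also count classical registers such as $\calH_Y,\calH_R$); and the sub-normalization/selection step (choosing a single $v$, and in the paper also a classical branch $c^\star$, then renormalizing) must be done before applying the rank argument, which the paper handles explicitly in the proof of \cref{cor:dimension}.
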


Note that by quantum dimension, we mean the dimension of the device's quantum memory. We take the base-$2$ logarithm of the quantum dimension of a device as a count of its number of qubits. Therefore, the theorem also lower bounds the number of qubits by $N - O(N^2\epsilon^{1/32})$. Importantly, $O(N^2\epsilon^{1/32}) = \poly(N,\epsilon)$, which means that it suffices to run our protocol $\poly(N)$ times to estimate $\epsilon$ to an accuracy that is sufficient to ensure an $\Omega(N)$ lower bound on the qubit-count. As a single run of our protocol also only takes $\poly(N)$ time, the total time to implement a dimension test is $\poly(N)$, which is theoretically efficient. Intuitively, this theorem is proved using \cref{eq:intro_soundness_measurements} of \cref{thm:informal_selftest} to argue that the Hilbert space $\calH$ of the device must be able to accommodate all possible post-measurement states that could result from performing a Hadamard basis measurement of $N$ qubits in a computational basis state. Since there are $2^N$ such post-measurement states, and they are all orthogonal, we deduce a quantum dimension lower bound of $2^N$. A formal proof is more challenging because  \cref{eq:intro_soundness_measurements} of \cref{thm:informal_selftest} gives an approximation and we need to prove that the rank of a quantum state is robust against the approximation error.

Compared to nonlocal dimension tests \cite{brunner2008testing,cai2016new,coladangelo2020two}, the advantage of ours is that we do not need to assume spatial separation between multiple devices. Compared to prepare-and-measure dimension tests \cite{gallego2010device,chao2017overlapping,chao2018largeentanglement,chao2020quantum},
the advantage of ours is that the verifier does not need to be quantum -- all computations and communications are classical.
To the best of our knowledge, our dimension test is the first\footnote{More recently, \cite{mahadev2022efficient} also claims a dimension test using completely different methods.} that can test for an arbitrary quantum dimension in the computational setting. In fact, whether this is possible was recently raised as an open question by Vidick in \cite[pg.~84]{Vidick_Fsmp_2021}.

\paragraph{Discussion.}

One interesting direction is to further improve the efficiency and robustness of our protocol. When $N = \lambda$, one bottleneck in improving the efficiency is that sending (the public key of) one function pair already requires $\poly(\lambda) = \poly(N)$ bits of communication. In recent work, it has been shown that, instead of sending the public keys, the verifier can apply a \emph{succinct batch key generation algorithm} to reduce the cost of sending public keys \cite{bartusek2022succinct}. We expect that techniques in \cite{bartusek2022succinct}
can be used to shorten other messages of our protocol as well.
Turning to robustness, we note that there exists a nonlocal self-test \cite{nv2017quantumlinearity} which uses $\poly(N)$ bits of communication and achieves robustness $\poly(\epsilon)$. It might be possible to combine our techniques with those in \cite{nv2017quantumlinearity} to achieve similar robustness in the computational setting. Another interesting question to ask is what $\mathsf{MIP}^\ast$ protocols can be compiled into computation delegation protocols under computational assumptions. For comparison, it has been shown that classical $\mathsf{MIP}$ protocols sound against non-signalling provers can be turned into computation delegation protocols \cite{tauman2013,tauman2014}. It would also be interesting to see if a systematic way exists to translate nonlocal self-tests into computational ones. We note that \cite{kahanamokumeyer2021classicallyverifiable} suggests that the two settings might not be too different at a conceptual level by presenting a test of quantumness in the computational setting that closely resembles the nonlocal CHSH test~\cite{chsh}. 
Recently, Kalai et.~al. proposed a way to 
construct a proof-of-quantumness protocol from any nonlocal game with a classical and quantum separation
using quantum homomorphic encryption \cite{kalai2022quantum}.
However, it is unknown if the aforementioned protocols are quantumly sound.
Going beyond quantum dimension testing, it would be interesting to see if our protocol can be combined with those that test quantum circuit depth \cite{chia2022classical,arora2023quantum} to give a protocol that tests the quantum volume of a quantum computer.

\paragraph{Organization.} Our paper is organized as follows. In \cref{sec:prelims}, we give the preliminaries required. More specifically, we review ENTCFs and prove a variety of approximation lemmas. In \cref{sec:completeness}, we describe our protocol and prove that it can be passed with probability negligibly close to $1$ by an honest quantum device using $\poly(N)$ resources (\cref{thm:completeness}). In \cref{sec:soundness}, we prove that the soundness error of our protocol can be controlled to within $\poly(N,\fail)$ (\cref{thm:soundness}). In \cref{sec:diqkd}, we present a DIQKD protocol based on our self-testing protocol. In \cref{sec:dimension}, we present a simplified version of our self-testing protocol that can efficiently certify an arbitrary quantum dimension (\cref{cor:dimension}).

\paragraph{Acknowledgments.}
We especially thank Carl Miller, Tony Metger, and Thomas Vidick for many helpful discussions and correspondence. We also thank Anne Broadbent, Nai-Hui Chia, Shih-Han Hung, Yi Lee, Atul Mantri, Peter Yuen, and Jiayu Zhang for helpful discussions. HF acknowledges the support of the NSF QLCI program (grant OMA-2016245). DW acknowledges the support of the Army Research Office (grant W911NF-20-1-0015); the Department of Energy, Office of Science, Office of Advanced Scientific Computing Research, Accelerated Research in Quantum Computing program; and the National Science Foundation (grant DMR-1747426). QZ acknowledges the support of the Department of Defense through the QuICS Hartree Postdoctoral Fellowship. 

\paragraph{Note.}
During the preparation of the first arXiv version of this manuscript, we became aware of related independent work by Alexandru Gheorghiu, Tony Metger, and Alexander Poremba, who construct a protocol for remotely preparing a tensor product of random BB84 states and use this to remove the need for quantum communication from a number of quantum cryptographic protocols. We refer to their paper~\cite{gmp2022parallelrsp} for more details and thank them for their cooperation in publishing the first versions of our respective results at the same time. In the second version of the manuscript, we used ideas from \cite[Proof of Proposition 4.32 (arXiv version v2)]{gmp2022parallelrsp} in our proof of \cref{lem:ci_aux}. We also used ideas from \cite[Proof of Theorem 4.33 (arXiv version v2)]{gmp2022parallelrsp} in our construction of a DIQKD protocol in \cref{sec:diqkd}. The present manuscript is the third arXiv version and improves upon the second arXiv version in its presentation. In particular, we have improved the introduction to \cref{sec:soundness}.

\section{Preliminaries}
\label{sec:prelims}

\subsection{Notation}

$\mathbb{N}$ is the set of positive integers. For $k\in \mathbb{N}$, we write $[k]\coloneqq \{1,2,\ldots,k\}$. Except in \cref{sec:dimension}, we reserve the letter $N$ for the number of EPR pairs we self-test and so $2N$ is the number of qubits we self-test. In \cref{sec:dimension}, we reserve $N$ for the number of qubits we dimension-test. We do not use special fonts for vectors. Unless otherwise indicated, the (lowercase) letters $a,c$ are reserved for single bits, i.e., $a,c \in \{0,1\}$; $u,v$ for bitstrings in $\{0,1\}^{2N}$; $\epsilon,\delta$ for real numbers in $(0,1)$; and $n$ for a positive integer. When we use these reserved symbols as indices of a sum without specifying the range, the range should be taken as the entire domain of these symbols. For example, $\sum_a$ always means $\sum_{a\in \{0,1\}}$. For a set $X$ and a condition $C$ on elements of that set, we use the notation $\sum_{x \in X \mid C}$ to mean a sum over all $x \in X$ that satisfy condition $C$. The set $X$ can be implicit, so, for example, $\sum_{v | v_1=a}$ means a sum over all $v\in \{0,1\}^{2N}$ with $v_1=a$. For a finite set $X$, we use the notation $x\leftarrow_U X$ to mean that $x$ is sampled from $X$ uniformly at random. For a probability distribution $\mu$ on $X$, we use the notation $x\leftarrow_\mu X$ to mean that $x$ is sampled from $X$ according to $\mu$.
 
$\calH$ denotes a finite-dimensional Hilbert space. $\LH$ denotes the set of linear operators on $\calH$ and $\UH$ denotes the set of unitary operators on $\calH$. $\PosH$ denotes the positive semi-definite operators on $\calH$, $\PosH \coloneqq \{A\in \LH \mid A \geq 0 \}$. We sometimes refer to operators in $\PosH$ or vectors in $\calH$, not necessarily normalized, as (quantum) states. For operators $A, B \in \LH$, $A \geq B$ means $A - B \geq 0$, i.e, $A - B \in \PosH$. $\mathcal D(\calH)$ denotes the set of density operators, $\mathcal D(\calH) \coloneqq \{A\in \PosH \mid \tr[A]=1 \}$. All Hilbert spaces in this work are viewed as $\mathbb{C}^{m}$ for some $m\in \mathbb{N}$ under a fixed choice of basis: this is necessary for some notions we use to make sense, for example, quantum gates and the vector-operator correspondence in \cref{sec:dimension}.

We write $\lambda\in \mathbb{N}$ for the security parameter.
Most quantities in this work are dependent on $\lambda$. Therefore, for convenience, we often make the dependence implicit.  
A function $f: \mathbb N\rightarrow \mathbb{R}$ is said to be negligible if for any polynomial $p \in \mathbb{R}[x]$, $\lim_{\lambda\rightarrow \infty} f(\lambda)p(\lambda)=0$. We denote such functions by $\negl$.

For an operator $X \in \LH$, we write $\norm{X}_p \coloneqq \tr [\abs{X}^p]^{1/p}$, where $|X|\coloneqq \sqrt{X^{\dagger}X}$, for the Schatten $p$-norm. In this work, we mainly work with the trace norm $\norm{X}_1$, Frobenius norm $\norm{X}_2$ (also written as $\norm{X}_F$), and operator norm $\norm{X}_{\infty}$. For operators $A,B$ in $\LH$, we use $\langle A, B \rangle \coloneqq \tr[A^{\dagger} B]$ to denote the Hilbert-Schmidt inner product. The commutator and anti-commutator of $A$ and $B$ are defined as $[A,B] \coloneqq AB-BA$ and  $\{A,B\} \coloneqq AB+BA$ respectively. 

The single-qubit $Z$ and $X$ Pauli operators are denoted $\sigmaz\coloneqq\begin{psmallmatrix} 1 & 0 \\ 0 & -1 \end{psmallmatrix}$
and  $\sigmax \coloneqq \begin{psmallmatrix} 0& 1 \\ 1 & 0 \end{psmallmatrix}$ which have eigenstates $\{\ket{0} \coloneqq \begin{psmallmatrix} 1 \\ 0 \end{psmallmatrix}, \ket{1} \coloneqq \begin{psmallmatrix} 0 \\ 1 \end{psmallmatrix}\}$ and $\{\ket{(-)^0}\coloneqq \frac{1}{\sqrt{2}}(\ket{0}+\ket{1}),\ket{(-)^1} \coloneqq \frac{1}{\sqrt{2}}(\ket{0}-\ket{1}) \}$, respectively. Given $m\in \mathbb{N}$, and $i\in [m]$, we define $\sigmaz_i$ (resp.~$\sigmax_i$) to be linear operator on $(\mathbb{C}^2)^{\otimes m}$ that acts as $\sigmaz$ (resp.~$\sigmax_i$) on the $i$th tensor factor and identity on all other tensor factors. The value of $m$ should always be clear from the context.

An observable on $\calH$ refers to a Hermitian operator in $\LH$. We say an observable is a binary observable if it has two eigenvalues, $-1$ and $+1$. For a binary observable $O$, we define $O^{(0)}$ (resp. $O^{(1)}$) to be the projector onto the $+1$ (resp. $-1$) eigenspace of $O$. Equivalently, for $b\in\{0,1\}$, we define $O^{(b)}\coloneqq \frac{\1+(-1)^{b}O}{2}$. Note that $O = O^{(0)}-O^{(1)}$.
We say $\{ P^i  \mid i \in [n]\}$ is a projective measurement if $P^i \geq 0$ for all $i\in [n]$,
$P^i P^j = \delta_{i,j} P^i$ for all $i,j\in [n]$, and $\sum_{i=1}^n P^i = \1$.

\subsection{Extended noisy trapdoor claw-free function families}

In this sub-section, we summarize the properties that we employ of Extended Noisy Trapdoor Claw-free function Families (ENTCFs). Our discussion is based on the arXiv v2 version of \cite{mahadev2018verification}. For full details about the properties of ENTCFs, we refer to
\cite[Definitions~4.1--4.4]{mahadev2018verification}. For full details about how to construct ENTCFs under the LWE hardness assumption, we refer to \cite[Section 9]{mahadev2018verification}; in particular, throughout this work, we make the LWE hardness assumption as described in \cite[Definition 3.4]{mahadev2018verification} where the LWE parameters are set according to \cite[Section 9.1]{mahadev2018verification} as functions of the security parameter $\lambda$.

Let $\lambda \in \mathbb{N}$ be a security parameter. Let $\calX\subseteq\{0,1\}^{w}$ and $\calY$ be finite sets that depend on $\lambda$, where $w = w(\lambda)$ is some integer that is a polynomially-bounded function of $\lambda$. An ENTCF consists of two families of function pairs, $\calF$ and $\calG$.
Function pairs from these two families are labeled by public keys. The set of public keys for $\calF$ is denoted by $\calK_\calF$, and the set of public keys for $\calG$ is denoted by $\calK_\calG$. For $k \in \calK_\calF$, a function pair $(f_{k,0}, f_{k,1})$ from $\calF$ is called a \emph{claw-free} pair. For $k \in \calK_\calG$, a function pair
$(f_{k,0}, f_{k,1})$ from $\calG$
is called an \emph{injective} pair. For any $k \in \calK_\calF \cup \calK_\calG$, the functions $f_{k,0}$ and $f_{k,1}$ map an $x \in \calX$ to a probability distribution on $\calY$. Note that the keys and function pairs of an ENTCF are functions of $\lambda$. We use the terms ``efficient'' and ``negligible'' to refer to $\poly(\lambda)$-time and $\negl$ respectively.

\begin{property}\leavevmode
\begin{propertyenum}
    \item\label{property:efficient_generation} \emph{Efficient function generation property} \cite[Definitions 4.1 (1), 4.2 (1)]{mahadev2018verification}. There exist efficient classical probabilistic algorithms $\Gen_\calF$
    and $\Gen_\calG$ for $\calF$ and $\calG$ respectively with 
    \begin{equation}
        \Gen_\calF(1^\lambda) \to (k \in \calK_\calF, t_k) \quad \text{and} \quad \Gen_\calG(1^\lambda) \to (k \in \calK_\calG, t_k),
    \end{equation}
    where $t_k$ is known as a trapdoor. We write $\Gen_\calF(1^\lambda)_{\key}$ and $\Gen_\calG(1^\lambda)_{\key}$ for the marginal distributions of the public key from $\Gen_\calF(1^\lambda)$ and $\Gen_\calG(1^\lambda)$ respectively. 

    \item\label{property:injective_pair}
    \emph{(Disjoint) injective pair property} \cite[Definitions 4.1 (2), 4.2 (2)]{mahadev2018verification}. For all $k \in \calK_\calF\cup \calK_\calG$, $x,x'\in \calX$ with $x \neq x'$, and $b \in \{0,1\}$, we have $\Supp(f_{k,b}(x)) \medcap \Supp(f_{k,b}(x')) = \emptyset$. 
    
    For all $k \in \calK_\calF$, there exists a perfect matching $R_k\subseteq \calX \times \calX$ of $\calX$ such that $f_{k,0}(x)=f_{k,1}(x')$ (equal as distributions on $\calY$) if and only if $(x,x')\in R_k$. We call any pair $(x,x')\in R_k$ a \emph{claw}.
    In particular, for all $k\in \calK_\calF$, we have
    $\medcup_{x \in \calX} \Supp(f_{k,0}(x)) = \medcup_{x\in \calX} \Supp(f_{k,1}(x))$.

    In contrast, for all $k \in \calK_\calG$, we have
    $(\medcup_{x\in \calX} \Supp(f_{k,0}(x)))\medcap (\medcup_{x'\in \calX} \Supp(f_{k,1}(x'))) = \emptyset$.

    \item\label{property:efficient_range_superposition}
    \emph{Efficient range superposition property} \cite[Definitions 4.1~(3.c), 4.2~(3.b), 4.3 (1)]{mahadev2018verification}. Given $k\in \calK_\calF \cup \calK_\calG$, there exists an efficient quantum algorithm that prepares a state that is negligibly close to
    \begin{equation}
        \ket{\psi} \coloneqq \frac{1}{\sqrt{2\cdot |\calX|}} \sum_{b\in\{0, 1\}}\sum_{x\in \calX, y\in \calY}\sqrt{(f_{k,b}(x))(y)}\ket{b}\ket{x}\ket{y},
    \end{equation}
    in trace distance. (In the case $k\in \calK_\calF$, this property follows from applying \cite[Lemma 3.8]{mahadev2018verification} to \cite[Definition 4.1 (3.c)]{mahadev2018verification}, as done in \cite[start of Section 5.1]{mahadev2018verification}.)

    \item\label{property:adaptive_hardcore_bit}
    \emph{Adaptive hardcore bit property} \cite[Definition 4.1~(4)]{mahadev2018verification}. 
    There does not exist an efficient quantum algorithm that, given $k \leftarrow \Gen_\calF(1^\lambda)_{\key}$,
     can compute $b\in \{0,1\}$ and $x_b \in \calX$
    for some $b \in \{0,1\}$, $d \in \{0,1\}^{w} \backslash \{0^w\}$,\footnote{Formally, the computed $d$ needs to be in some set $S_{k,b,x_b}$ contained in $\{0,1\}^w\backslash\{0^w\}$ but, as $S_{k,b,x_b}$ still contains all but a negligible fraction of elements in $\{0,1\}^w$ and membership in $S_{k,b,x_b}$ can be checked efficiently given $(k,b,x_b,t_k)$, it behaves like $\{0,1\}^w\backslash\{0^w\}$ for our purposes, so we equate it to $\{0,1\}^w\backslash\{0^w\}$ for notational convenience.} and, with non-negligible advantage, a bit $d\cdot (x_0 \oplus x_1) \in \{0,1\}$ such that $(x_0, x_1) \in R_k$. In particular, this means no efficient quantum algorithm can compute a claw $(x_0,x_1)\in R_k$. This is why function pairs from $\calF$ are called claw-free.

    \item\label{property:injective_invariance}
    \emph{Injective invariance property}~\cite[Definition 4.3 (2)]{mahadev2018verification}.  There does not exist an efficient quantum algorithm that can distinguish between the distributions $\Gen_\calF(1^\lambda)_{\key}$ and $\Gen_\calG(1^\lambda)_{\key}$ with non-negligible advantage.

    \item\label{property:efficient_decoding}
    \emph{Efficient decoding property}~\cite[Definitions 4.1 (2, 3.a, 3.b), 4.2 (2, 3.a), 4.3 (1)]{mahadev2018verification}. In this paper, we define the following ``decoding maps'' that decode the output of functions from an ENTCF.  These follow \cite[Definition 2.1]{mv2021selftest} but we restate them here for completeness.
    \begin{enumerate}
        \item Let $m\in \mathbb{N}$, $m = \poly(\lambda)$. For $k \in (\calK_\calF \cup \calK_\calG)^m$, $y \in \calY^m$,
        $b \in \{0,1\}^m$, and $x \in \calX^m$, we define
        \begin{equation}
            \begin{aligned}
                \text{CHK}(k,y,b,x) \coloneqq
                \begin{cases}
                    0 &\text{ if } y_i \in \Supp( f_{k_i,b_i}(x_i)) \text{ for all }i\in [m], \\
                    1 &\text{otherwise.}
                \end{cases}
            \end{aligned}
        \end{equation}
        \item For $k \in \calK_\calG$ and $y \in \calY$, we 
        define
        \begin{equation}
        \begin{aligned}
            \hatb(k,y) \coloneqq
            \begin{cases}
                    0 &\text{ if } y \in \bigcup_{x \in \calX} \Supp(f_{k,0}(x)), \\
                    1 &\text{ if } y \in \bigcup_{x \in \calX} \Supp(f_{k,1}(x)), \\
                    \perp &\text{ otherwise.}
            \end{cases}
        \end{aligned}
        \end{equation}
        \item For $b\in \{0,1,\perp\}$, $k \in \calK_\calF \cup \calK_\calG$, and $y \in \calY$,
        we define
        \begin{equation}
            \begin{aligned}
                \hatx(b, k, y) \coloneqq
                \begin{cases}
                        \perp &\text{if } y \notin
                        \bigcup_{x \in \calX} \Supp(f_{k,b}(x)) \text{ or }b=\perp, \\
                        x
                        & \text{ such that }
                        y \in \Supp(f_{k,b}(x)).
                \end{cases}
            \end{aligned}
        \end{equation}
        In addition, for $k\in \calK_\calG$, we use the shorthand $\hatx(k,y) \coloneqq \hatx(\hatb(k,y), k, y)$.
        \item For $k \in \calK_\calF$, $y \in \calY$,
        and $d \in \{0,1\}^{w}$, we define 
        \begin{equation}
            \hath(k,y,d) \coloneqq 
            \begin{cases}
            d \cdot ( \hatx(0, k, y) \oplus
            \hatx(1, k,y)) &\text{if  $y\in \bigcup_{x \in \calX} \Supp(f_{k,0}(x))$ and  $d \neq 0^w$},
            \\
            \perp &\text{otherwise}.
            \end{cases}
        \end{equation}
\end{enumerate}
The efficient decoding property states that $\hatb$, $\hatx$, and $\hath$ can be computed efficiently given a trapdoor $t_k$ for $k$ by a classical deterministic algorithm and that $\text{CHK}$ can be computed efficiently even without a trapdoor by a classical deterministic algorithm (note $m = \poly(\lambda)$).
\end{propertyenum}

\begin{remark}
The adaptive hardcore bit and injective invariance properties are the only properties that require the LWE hardness assumption.
\end{remark}
\end{property}

\subsection{Efficient quantum operations and computational indistinguishability}

In this subsection, we record \cite[Definition 2.2]{mv2021selftest}, which formalizes the
notion of efficient quantum operations. We also append a definition for the efficiency of POVMs that is not present in \cite[Definition 2.2]{mv2021selftest}.
\begin{defn}[Efficient unitaries, isometries, observables, and measurements]\label{def:efficient_operators}{\ \\}
Let $\{\calH_\lambda \mid \lambda \in \N\}$, $\{\calH_{A,\lambda} \mid \lambda \in \N\}$, and $\{\calH_{B,\lambda} \mid \lambda \in \N\}$ be families of finite-dimensional Hilbert spaces where $\dim(\calH_{A,\lambda}) \leq \dim(\calH_{B,\lambda})$ for all $\lambda$. 
\begin{enumerate}
    \item We say a family of unitaries $\{U_\lambda \in \calL(\calH_\lambda) \mid \lambda \in \N\}$ is efficient if there exists a classical Turing machine $M$ that, on input $1^\lambda$, outputs a description of a quantum circuit with a fixed gate set that implements $U_\lambda$ in $\poly(\lambda)$ time.		
    
    \item We say a family of isometries $\{V_\lambda: \calH_{A,\lambda} \to \calH_{B,\lambda} \mid \lambda \in \N\}$ is efficient
    if there exists an efficient family of unitaries $\{U_\lambda \in \calL(\calH_{B,\lambda}) \mid \lambda \in \N\}$,
    such that $V_\lambda = U_\lambda(\1_{A,\lambda} \otimes \ket{0_{k(\lambda)}})$,
    where $\ket{0_{k(\lambda)}}$ denotes a fiducial state in an ancillary Hilbert space of dimension $k(\lambda) \coloneqq \dim(\calH_{B,\lambda})/\dim(\calH_{A,\lambda})$.  
    
    \item We say a family of binary observables $\{Z_\lambda \in \LH \mid \lambda \in \N\}$ is efficient,
    if each $\calH_{B,\lambda} \cong (\C^2)^{\otimes \poly(\lambda)}$ and there exists a family of efficient unitaries
    $\Set{U_\lambda \in \calL(\calH_{A,\lambda} \otimes \calH_{B,\lambda}) \mid \lambda \in \N}$ such that 
    for any $\ket{\psi}_{A,\lambda} \in \calH_{A,\lambda}$,
    \begin{equation}
        U_\lambda^\dagger (\sigma_Z \otimes \1) U_\lambda( \ket{\psi}_{A,\lambda} \otimes \ket{0}_{B,\lambda})
        = (Z_\lambda \ket{\psi}_{A,\lambda}) \otimes \ket{0}_{B,\lambda}.
    \end{equation}
    
    \item Let $\{\calA_\lambda \subset \N \mid \lambda \in \N\}$ be a family of finite sets. We say 
    a family of projective measurements 
    \begin{equation}
    \Set{ M_\lambda = \Set{M_\lambda^{i} \in \calL(\calH_{A,\lambda}) \mid i \in \calA_\lambda}
    \mid \lambda \in \N}
    \end{equation}
    is efficient if the family of isometries $\{V_\lambda \coloneqq \sum_{i \in \calA_\lambda} \ket{i} \otimes M_\lambda^{i} \mid \lambda \in \N \}$ is efficient. 
    
    \item Let $\{\calA_\lambda \subset \N \mid \lambda \in \N\}$ be a family of finite sets. We say that a family of POVMs
    \begin{equation}
        \Set{E_\lambda = \Set{E_\lambda^{i} \in \calL(\calH_{A,\lambda}) \mid i \in \calA_\lambda}
        \mid \lambda \in \N}
    \end{equation}
    is efficient if there exists an efficient family of isometries $\{V_\lambda: \calH_{A,\lambda} \to \calH_{B,\lambda} \mid \lambda \in \N\}$ and an efficient family of projective measurements $\{ M_\lambda = \{M_\lambda^{i} \in \calL(\calH_{B,\lambda}) \mid i \in \calA_\lambda \}
    \mid \lambda \in \N \}$ such that $E_{\lambda}^i = V_{\lambda}^{\dagger} M_{\lambda}^i V_{\lambda}$ for all $i \in \calA_\lambda$.
\end{enumerate}
\end{defn}

We formally define the notion of computational indistinguishability.
\begin{defn}[Computational indistinguishability]\label{def:comp_indist}
We say that two families of positive semi-definite operators $\{\sigma(\lambda)\}_{\lambda \in \mathbb{N}}$ and $\{\tau(\lambda)\}_{\lambda \in \mathbb{N}}$ are computationally distinguishable with advantage at most $\delta = \delta(\lambda)$ if the following holds. For all efficient families of POVMs $\{\{E_\lambda, \1-E_\lambda\} \mid \lambda \in \mathbb{N}\}$ (with respect to a \emph{fixed} polynomial in $\lambda$), there exists $\lambda_0\in \mathbb{N}$, such that for all $\lambda\geq \lambda_0$, we have

\begin{equation}\label{eq:comp_indist}
\abs{\tr[E_\lambda\sigma(\lambda)] - \tr[E_{\lambda} \tau(\lambda)]} \leq \delta(\lambda).
\end{equation}

In this case, we write $\sigma \csimeq_{\delta} \tau$. If instead of $\delta$, we have $O(\delta)$ on the right-hand side of \cref{eq:comp_indist}, we write $\sigma \capprox_{\delta} \tau$. If $\delta$ can be chosen to be a negligible function of $\lambda$, then we say that the two families of positive operators are computationally indistinguishable (without qualification).
\end{defn}

We sometimes write \cref{eq:comp_indist} in terms of an  efficient family of algorithms $\calA_{\lambda}$ that output a bit $b\in \{0,1\}$ corresponding to $\{E_{\lambda}, \1-E_{\lambda}\}$. In this case, $\tr[E_{\lambda} \psi(\lambda)]$ is written as $\Pr(\calA_\lambda \text{ outputs } 0  \text{ on input } \psi(\lambda))$. When working with computational indistinguishability, we often make the dependence on $\lambda$ implicit and abuse language by referring to states or POVMs instead of families of them.

\subsection{Approximation lemmas}

In this subsection, we prove various approximation lemmas that will be used in our soundness proof. These lemmas will be used to bound how close an arbitrary device is to being honest by its failure probability. We prove two types of approximation lemmas. The first consists of purely mathematical inequalities that hold unconditionally. The second consists of mathematical inequalities involving states and operators that hold assuming the states are computationally distinguishable with some advantage and the operators are efficient.

We first recall some facts about operators in $\LH$. These will be frequently used without further comment.
\begin{enumerate}
    \item For $A \in \LH$, $\abs{\tr[A]} \leq \norm{A}_1$ and if $A \geq 0$, $\tr[A] = \abs{\tr[A]} = \norm{A}_1$.
    \item For $A,B,C\in \LH$ and $p\in [1,\infty]$, $\|ABC\|_p\leq \|A\|_{\infty}\|B\|_p \|C\|_{\infty}$.
    \item (H\"older's inequality for Schatten $p$-norms). For 
    $A, B \in \LH$ and $p, q\in [1,\infty]$ with $1/p + 1/q = 1$, $\|AB\|_1\le  \|A\|_{p} \, \|B\|_q$. Note: (i) this is stronger than a common form with $\abs{\langle{A,B\rangle}}$ on the left-hand side, (ii) when $p=q=2$, this is also known as the Cauchy-Schwarz inequality for Schatten $2$-norms.
\end{enumerate}

We will use the following notions of approximation.
\begin{defn}\label{def:approximation}
In the following, the notation left of ``$\iff$'' is defined on its right.

\begin{enumerate}

\item \textbf{Complex vectors}. For $a,b \in \C^n$, we write
\begin{equation}
    a \simeq_\epsilon b \iff \normone{ a - b } \leq \epsilon \quad \text{and} \quad a \approx_\epsilon b \iff \normone{ a - b } \leq O(\epsilon).
\end{equation}

\item \textbf{State distance}.
For $\phi, \psi \in \LH$, we write
\begin{equation}
\phi \simeq_\epsilon \psi \iff \norm{\phi-\psi}_1^2 \leq \epsilon \quad \text{and} \quad \phi \approx_\epsilon \psi \iff \norm{\phi-\psi}_1^2 \leq O(\epsilon).
\end{equation}
The use of this notation is usually reserved for when $\phi,\psi$ are quantum states, i.e., elements of $\PosH$, hence the name ``state distance''.

\item \textbf{State-dependent operator distance}.
\newline
For $A, B \in \LH$ and $\psi \in \PosH$, we write $\norm{A}_{\psi}^2\coloneqq\tr[A^{\dagger}A\psi] = \norm{A\sqrt{\psi}}_2^2$ and
\begin{equation}
     A \simeq_{\epsilon, \psi} B \iff \norm{A-B}^2_{\psi} \leq \epsilon \quad \text{and} \quad
     A \approx_{\epsilon, \psi} B \iff \norm{A-B}^2_{\psi} \leq O(\epsilon).
\end{equation}
\end{enumerate}

\end{defn}

Note that $\simeq$ is ``more precise'' than $\approx$. For example, if $a,b\in \mathbb{C}$, then $a\simeq b\implies a\approx b$.  In these preliminaries, we choose to use $\simeq$ instead of $\approx$ to be more precise (all results still hold under changing $\simeq$ to $\approx$). This extra precision will occasionally be useful when proving our main results. However, we will usually use $\approx$ instead of $\simeq$ as $\approx$ allows us to hide constant factors and is therefore more convenient.

The following lemma relates the state-dependent norm $\norm{\cdot}_{\psi}$ to the operator norm $\norm{\cdot}$ and the trace of $\psi$. It is a generalization of \cite[Lemma 2.17]{mv2021selftest}.
\begin{lemma}
\label{lem:state_to_infty_norm}
Let $\psi \in \PosH$ and $A \in \LH$. Then
\begin{equation}
    \norm{A}_{\psi} \leq \norm{A}_{\infty} \cdot \sqrt{\tr[\psi]}.
\end{equation}
\end{lemma}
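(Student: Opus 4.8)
The plan is to reduce everything to the identity $\norm{A}_{\psi}^2 = \tr[A^{\dagger}A\psi] = \norm{A\sqrt{\psi}}_2^2$ already recorded in \cref{def:approximation}, and then bound the Frobenius norm of the product $A\sqrt{\psi}$ using submultiplicativity of Schatten norms. First I would write $\psi = \sqrt{\psi}\,\sqrt{\psi}$, which is legitimate since $\psi \in \PosH$, so that $\tr[A^{\dagger}A\psi] = \tr[\sqrt{\psi}\,A^{\dagger}A\,\sqrt{\psi}] = \norm{A\sqrt{\psi}}_2^2$ by cyclicity of the trace.

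Next I would apply fact~2 from the list of operator facts (namely $\|ABC\|_p \leq \|A\|_\infty \|B\|_p \|C\|_\infty$) with $A \mapsto A$, $B \mapsto \sqrt{\psi}$, $C \mapsto \1$, and $p = 2$, which gives $\norm{A\sqrt{\psi}}_2 \leq \norm{A}_\infty \, \norm{\sqrt{\psi}}_2$. Finally, $\norm{\sqrt{\psi}}_2^2 = \tr[\sqrt{\psi}\,\sqrt{\psi}] = \tr[\psi]$, so combining the displays yields $\norm{A}_\psi^2 \leq \norm{A}_\infty^2 \, \tr[\psi]$, and taking square roots (valid since $\tr[\psi] \geq 0$) gives the claim.

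There is no real obstacle here; the statement is essentially a one-line consequence of the vector-operator correspondence together with Schatten-norm submultiplicativity. The only points worth a moment's care are that no Hermiticity of $A$ is needed (we only ever use $A^{\dagger}A \geq 0$ implicitly and the norm bound $\|A\|_\infty = \|A^\dagger\|_\infty$ is not even required), and that $\sqrt{\psi}$ is well-defined and positive semi-definite precisely because $\psi \in \PosH$, which is what makes $\tr[\psi] \geq 0$ and hence the final square root meaningful.
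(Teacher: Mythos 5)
Your proof is correct. It takes a mildly different route from the paper: the paper bounds $\tr[A^\dagger A\psi] \leq \norm{A^\dagger A\,\psi}_1 \leq \norm{A^\dagger A}_\infty \norm{\psi}_1$ directly via H\"older's inequality (fact 3 in the preliminaries), whereas you first pass to the factorization $\psi = \sqrt{\psi}\sqrt{\psi}$ and the identity $\norm{A}_\psi = \norm{A\sqrt{\psi}}_2$ (already recorded in \cref{def:approximation}) and then invoke the submultiplicativity bound $\norm{ABC}_p \leq \norm{A}_\infty\norm{B}_p\norm{C}_\infty$ with $p=2$, together with $\norm{\sqrt{\psi}}_2^2 = \tr[\psi]$. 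Both arguments are one-liners resting on standard Schatten-norm facts the paper lists; yours leans on the vectorized form of the state-dependent norm and avoids H\"older explicitly, while the paper's avoids introducing $\sqrt{\psi}$ at all. Neither buys anything substantive over the other, and your observations that no Hermiticity of $A$ is needed and that positivity of $\psi$ is what makes $\sqrt{\psi}$ and the final square root well-defined are accurate.
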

\begin{proof}
    The lemma follows from H\" older's inequality:
    \begin{equation}
        \norm{A}_{\psi}^2 = \tr[A^\dagger A\psi] 
        \leq \norm{A^\dagger A \psi}_1 
        \leq
        \norm{A^\dagger A}_{\infty}\cdot \norm{\psi}_1
        = \norm{A}_{\infty}^2 \cdot \tr[\psi].
    \end{equation}
\end{proof}

The next lemma is similar to 
\cite[Lemma 2.18(ii)]{mv2021selftest} except that we do not require the $n$ to be constant.

\begin{lemma}\label{lem:op_approx_compts}
Let $\psi_i \in \PosH$ for all $i \in [n]$ and $\psi \coloneqq \sum_{i=1}^n \psi_i$. Let $\epsilon \geq  0$. Let $A, B \in \LH$.  Then the following are equivalent:
\begin{enumerate}
    \item There exists $\epsilon_1,\ldots, \epsilon_n\geq 0$ with $\sum_{i=1}^n \epsilon_i \leq \epsilon$ such that $A\simeq_{\epsilon_i,\psi_i} B$ for all $i\in [n]$.
    \item $A\simeq_{\epsilon,\psi} B$.
\end{enumerate}
\end{lemma}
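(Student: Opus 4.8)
The plan is to prove the equivalence of the two statements in \cref{lem:op_approx_compts} by unpacking the definition of $\simeq_{\epsilon,\psi}$ and exploiting the additivity of the state-dependent seminorm $\norm{\cdot}_\psi^2$ in its second argument. The key observation is that for any fixed operator $C \in \LH$ (we will take $C = A - B$), the map $\psi \mapsto \norm{C}_\psi^2 = \tr[C^\dagger C \psi]$ is linear in $\psi$, and nonnegative on $\PosH$ since $C^\dagger C \geq 0$. Hence $\norm{C}_\psi^2 = \sum_{i=1}^n \tr[C^\dagger C \psi_i] = \sum_{i=1}^n \norm{C}_{\psi_i}^2$. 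This single identity does essentially all the work.

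For the direction $1 \Rightarrow 2$: assuming $A \simeq_{\epsilon_i,\psi_i} B$ for each $i$, i.e.\ $\norm{A-B}_{\psi_i}^2 \leq \epsilon_i$, I sum over $i$ to get $\norm{A-B}_\psi^2 = \sum_i \norm{A-B}_{\psi_i}^2 \leq \sum_i \epsilon_i \leq \epsilon$, which is exactly $A \simeq_{\epsilon,\psi} B$. For the direction $2 \Rightarrow 1$: given $\norm{A-B}_\psi^2 \leq \epsilon$, I simply define $\epsilon_i \coloneqq \norm{A-B}_{\psi_i}^2 \geq 0$; then $A \simeq_{\epsilon_i,\psi_i} B$ holds by definition (with equality), and $\sum_i \epsilon_i = \norm{A-B}_\psi^2 \leq \epsilon$ by the additivity identity. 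Both directions are immediate once the additivity of $\tr[C^\dagger C\,\cdot\,]$ over the decomposition $\psi = \sum_i \psi_i$ is noted.

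There is essentially no obstacle here — the lemma is a routine bookkeeping fact, and the only thing to be careful about is the mild point that this works for all $n$ (not just constant $n$) precisely because nothing in the argument depends on $n$: linearity of trace over a finite sum holds regardless of how many terms there are, and no union bound or loss of constants is incurred. This is the whole reason the lemma is stated separately from \cite[Lemma 2.18(ii)]{mv2021selftest}: their constant-$n$ restriction is an artifact of a different proof technique, whereas the state-dependent seminorm's exact additivity makes the restriction unnecessary. I would write the proof in three or four lines, making the identity $\norm{A-B}_\psi^2 = \sum_{i=1}^n \norm{A-B}_{\psi_i}^2$ explicit and then reading off both implications from it.
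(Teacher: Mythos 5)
Your proof is correct and follows essentially the same route as the paper: both directions reduce to the additivity identity $\norm{A-B}_{\psi}^2 = \sum_{i=1}^n \norm{A-B}_{\psi_i}^2$, with the reverse direction handled by setting $\epsilon_i \coloneqq \norm{A-B}_{\psi_i}^2$, exactly as in the paper's proof.
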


\begin{proof}
For $1\implies 2$, consider
\begin{equation}
    \tr[(A-B)^\dagger (A-B) \psi]
    = \sum_{i=1}^n\tr[(A-B)^\dagger (A-B) \psi_i]
    \leq \sum_{i=1}^n \epsilon_i \leq \epsilon.
\end{equation}

For $2\implies 1$, define $\epsilon_i \coloneqq \normstate{A-B}{\psi_i}^2 \geq 0$ for $i\in [n]$, so that $A \simeq_{\epsilon_i,\psi_i} B$ by definition. Then
\begin{equation}
    \sum_{i=1}^n \epsilon_i =  \sum_{i=1}^n\tr[(A-B)^\dagger (A-B) \psi_i] =  \tr[(A-B)^\dagger (A-B) \psi] \leq \epsilon.
\end{equation}
\end{proof}

The following replacement lemma will be frequently used in our analysis to replace the device's operators by their ideal counterparts at the cost of introducing some error. Its first and second parts are similar to \cite[Lemma 2.21]{mv2021selftest} but strengthened to include the trace of the state $\psi$. Its third and fourth parts allow us to replace states and operators in the presence of  projective measurements. Importantly, we keep the error independent of the number of projectors constituting the projective measurement. Intuitively, this should be possible because $\sum_i P^i = \1$ for $\{P^i\}_i$ a projective measurement. 
\begin{lemma}[Replacement lemma]\label{lem:replace}
\begin{enumerate}
    \item[]
    \item Let $\psi \in \PosH$ and $A, B, C \in \LH$. If $A \simeq_{\epsilon, \psi} B$
    and $\norm{C}_{\infty} = c$ for some constant $c$, then
    \begin{equation}
        \tr[CA\psi] \simeq_{c\sqrt{\tr[\psi]\cdot\epsilon}}
        \tr[CB\psi] \quad \text{and} \quad
        \tr[AC\psi] \simeq_{c\sqrt{\tr[\psi]\cdot\epsilon}}
        \tr[BC\psi].
    \end{equation}

    \item Let $\psi, \psi' \in \LH$
    and $A \in \LH$. If $\psi \simeq_{\epsilon} \psi'$ and $\norm{A}_{\infty} = c$
    for some constant $c$, then
    \begin{equation}
        \tr[A\psi] \simeq_{c\sqrt{\epsilon}} \tr[A\psi'].
    \end{equation}

    \item Let $\psi\in \PosH$, and $A,B\in \LH$ be Hermitian. Let $\{X_1,\ldots,X_n\}$ be a set of mutually commuting binary observables and let $\{Y_1,\ldots,Y_n\}$ be another such set. If $A\simeq_{\epsilon,\psi} B$, then, for all $i\in [n]$,
    \begin{equation}
        \sum_{u\in \{0,1\}^n} \abs{\tr[(A-B) X_1^{(u_1)} X_2^{(u_2)} \cdots X_n^{(u_n)} Y_i^{(u_i)} Y_{i+1}^{(u_{i+1})} \cdots Y_n^{(u_n)} \psi]} \leq \sqrt{\tr[\psi]\cdot \epsilon}.
    \end{equation}
    
    \item Let $\psi, \psi' \in \LH$ be Hermitian, and $\{P^i\}_{i\in [n]}$ be a projective measurement on $\calH$. If $\psi \simeq_{\epsilon} \psi'$, then 
    \begin{equation}
        \sum_{i=1}^n \abs{\tr[P^i (\psi-\psi')]} \leq \sum_{i=1}^n \normone{P^i (\psi-\psi')P^i]} \leq \normone{\psi-\psi'}  \leq  \sqrt{\epsilon}.
    \end{equation}
\end{enumerate}
\end{lemma}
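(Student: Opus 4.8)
\emph{Overview.} The plan is to dispose of parts~1,~2 and~4 by routine applications of H\"older's inequality (in the Cauchy--Schwarz form for Schatten $2$-norms) together with \cref{lem:state_to_infty_norm}, and to concentrate on part~3, which is the only part whose bound must be kept free of any factor depending on the number $2^n$ of measurement outcomes. For part~1, write $\tr[C(A-B)\psi]=\langle \sqrt{\psi}\,C^\dagger,\,(A-B)\sqrt{\psi}\rangle$, apply Cauchy--Schwarz to get $\abs{\tr[C(A-B)\psi]}\le\norm{\sqrt{\psi}C^\dagger}_2\norm{(A-B)\sqrt{\psi}}_2=\norm{C}_{\psi}\norm{A-B}_{\psi}$, then bound the second factor by $\sqrt{\epsilon}$ (hypothesis) and the first by $c\sqrt{\tr[\psi]}$ (\cref{lem:state_to_infty_norm}); the other estimate is symmetric. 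For part~2, $\psi\simeq_\epsilon\psi'$ gives $\normone{\psi-\psi'}\le\sqrt{\epsilon}$, so $\abs{\tr[A(\psi-\psi')]}\le\norminfty{A}\normone{\psi-\psi'}\le c\sqrt{\epsilon}$.

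\emph{Part 4.} Set $\Delta\coloneqq\psi-\psi'$ (Hermitian) and take its Jordan decomposition $\Delta=\Delta_+-\Delta_-$ with $\Delta_\pm\ge 0$ and $\normone{\Delta}=\tr[\Delta_+]+\tr[\Delta_-]$. Since each $P^i$ is a projector, $\abs{\tr[P^i\Delta]}=\abs{\tr[P^i\Delta P^i]}\le\normone{P^i\Delta P^i}\le\normone{P^i\Delta_+P^i}+\normone{P^i\Delta_-P^i}=\tr[P^i\Delta_+P^i]+\tr[P^i\Delta_-P^i]$, by the triangle inequality for the trace norm and positivity. Summing over $i$ and using $\sum_iP^i=\1$ collapses the right-hand side to $\tr[\Delta_+]+\tr[\Delta_-]=\normone{\Delta}\le\sqrt{\epsilon}$, which simultaneously bounds $\sum_i\normone{P^i\Delta P^i}$.

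\emph{Part 3.} Here applying part~1 to each of the $2^n$ summands would lose a fatal factor $\sqrt{2^n}$, so instead I would absorb the phases into a single contraction. Put $D\coloneqq A-B$ (Hermitian), $\Pi_u^X\coloneqq X_1^{(u_1)}\cdots X_n^{(u_n)}$, $Q_u^Y\coloneqq Y_i^{(u_i)}\cdots Y_n^{(u_n)}$, and for each $u$ choose a unit scalar $s_u$ with $\bar s_u\tr[D\Pi_u^X Q_u^Y\psi]=\abs{\tr[D\Pi_u^X Q_u^Y\psi]}$, so that $\sum_u\abs{\tr[D\Pi_u^X Q_u^Y\psi]}=\tr[DT\psi]$ with $T\coloneqq\sum_u\bar s_u\Pi_u^X Q_u^Y$. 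Since the $X_j$ (resp.\ the $Y_j$) mutually commute, $\{\Pi_u^X\}_{u\in\{0,1\}^n}$ is a projective measurement and $\{R_w\coloneqq Y_i^{(w_1)}\cdots Y_n^{(w_{n-i+1})}\}_{w\in\{0,1\}^{n-i+1}}$ is a projective measurement with $Q_u^Y=R_{u_{\ge i}}$. Using $\Pi_u^X\Pi_{u'}^X=\delta_{u,u'}\Pi_u^X$ and then regrouping the outer sum by $w=u_{\ge i}$,
\[
T^\dagger T=\sum_u R_{u_{\ge i}}\,\Pi_u^X\,R_{u_{\ge i}}=\sum_w R_w\Bigl(\,\sum_{u:\,u_{\ge i}=w}\Pi_u^X\Bigr)R_w=\sum_w R_w\,\overline{\Pi}_w\,R_w\le\sum_w R_w=\1,
\]
where $\overline{\Pi}_w\coloneqq X_i^{(w_1)}\cdots X_n^{(w_{n-i+1})}\le\1$ is a projector (the sum over $u_1,\dots,u_{i-1}$ collapses since the $X_j$ commute), so $\norminfty{T}\le 1$. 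Finally $\tr[DT\psi]=\langle D\sqrt{\psi},\,T\sqrt{\psi}\rangle$, whence by Cauchy--Schwarz $\sum_u\abs{\tr[D\Pi_u^X Q_u^Y\psi]}\le\norm{D\sqrt{\psi}}_2\norm{T\sqrt{\psi}}_2=\norm{D}_{\psi}\norm{T\sqrt{\psi}}_2\le\sqrt{\epsilon}\cdot\sqrt{\norminfty{T}^2\tr[\psi]}\le\sqrt{\tr[\psi]\,\epsilon}$, using $\norm{T\sqrt{\psi}}_2^2=\tr[T^\dagger T\psi]\le\norminfty{T^\dagger T}\tr[\psi]$.

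\emph{Main obstacle.} The whole difficulty of part~3 is the observation that the phase-weighted operator $T=\sum_u\bar s_u\Pi_u^X Q_u^Y$ is a contraction; everything else is H\"older/Cauchy--Schwarz bookkeeping. I expect the one place requiring care to be the regrouping in the display, i.e.\ verifying that $\sum_{u:\,u_{\ge i}=w}\Pi_u^X$ genuinely collapses to the projector $\overline{\Pi}_w$ — this is precisely where mutual commutativity of the $X_j$, and the fact that both sets are projective measurements, is used.
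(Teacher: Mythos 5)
Your proposal is correct. Parts 1, 2, and 4 coincide with the paper's proof (Cauchy--Schwarz in the state-dependent inner product plus \cref{lem:state_to_infty_norm} for part 1, H\"older for part 2, and the Jordan decomposition $\psi-\psi'=\Delta_+-\Delta_-$ with $\sum_i P^i=\1$ for part 4). Part 3 is where you genuinely deviate, and your route works: the paper instead inserts an extra $P^u$ via idempotence, applies Cauchy--Schwarz term by term to get $\sum_u\sqrt{\tr[P^u(A-B)\psi(A-B)]}\sqrt{\tr[P^uQ^u\psi Q^u]}$, applies a second Cauchy--Schwarz over the $2^n$ terms, and then collapses $\sum_{u_1,\ldots,u_{i-1}}P^u=\tilde P^u\le\1$ so the second factor is at most $\sqrt{\tr[\psi]}$; you absorb the phases into a single operator $T=\sum_u\bar s_u\Pi^X_uQ^Y_u$, prove $T^\dagger T=\sum_u Q^Y_u\Pi^X_uQ^Y_u\le\1$ using exactly the same orthogonality of the $\Pi^X_u$ and the same partial-sum collapse, and finish with one Cauchy--Schwarz. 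The two arguments rest on the same two facts, but yours packages them as a contraction (operator-norm) bound on a phase-weighted sum, which makes it structurally transparent why the estimate is independent of the number of outcomes and gives a reusable statement ($\norm{T}_\infty\le1$), whereas the paper's double Cauchy--Schwarz avoids introducing the auxiliary operator and stays within the $\norm{\cdot}_\psi$ bookkeeping used throughout the preliminaries. Your identified "main obstacle" — verifying $\sum_{u:\,u_{\ge i}=w}\Pi^X_u=\overline{\Pi}_w$ — is indeed the crux and you resolve it correctly via mutual commutativity and completeness of the spectral projectors.
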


\begin{proof}
The first equation of the first part follows from
\begin{equation}
    \abs{\tr[C(A-B)\psi]} = \abs{\langle C, A-B \rangle_{\psi}} 
    \leq \norm{C^\dagger}_{\psi} \cdot \norm{A-B}_{\psi} 
    \leq c\sqrt{\tr[\psi]} \sqrt{\epsilon},
\end{equation}
where, in the last inequality, we use \cref{lem:state_to_infty_norm} to bound
$\norm{C^\dagger}_{\psi} \leq c \sqrt{\tr[\psi]}$. The second equation of the first part can be shown analogously.

The second part follows from $\abs{\tr[A(\psi - \psi')]} \leq \norm{A}_{\infty} \cdot \norm{\psi - \psi'}_1$.

Consider the third part. We first write $P^{u} \coloneqq X_1^{(u_1)} X_2^{(u_2)} \cdots X_n^{(u_n)}$,  $\tilde{P}^{u} \coloneqq X_i^{(u_i)} X_{i+1}^{(u_{i+1})} \cdots X_n^{(u_n)}$, and $Q^{u} \coloneqq Y_i^{(u_i)} Y_{i+1}^{(u_{i+1})} \cdots Y_n^{(u_n)}$ for convenience. Note that $(P^u)^{\dagger} = (P^u)^2 = P^u$ by the commutativity of the $X_i$s. Similarly, $(Q^u)^{\dagger} = (Q^u)^2 = Q^u$ by the commutativity of the $Y_i$s. Then, the third part follows from
\begin{equation}
\begin{aligned}
    &\quad \sum_{u} \abs{\tr[(A-B)P^{u} Q^{u}\psi]} 
    \\
    &= \sum_{u} \abs{\langle P^{u}(A-B) \sqrt{\psi}, P^{u} Q^{u} \sqrt{\psi} \rangle} 
    &&\text{($(P^u)^{\dagger}P^u = P^u$)}
    \\
    & \leq \sum_u \norm{P^{u}(A-B) \sqrt{\psi}}_2\norm{P^{u}Q^{u}\sqrt{\psi}}_2
    &&\amatrix{\text{Cauchy-Schwarz}}{\text{for Schatten $2$-norms}}
    \\
    & = \sum_u \sqrt{\tr[P^{u}(A-B)\psi (A-B)]}\sqrt{\tr[P^u Q^u \psi Q^u]}
    &&\text{($(P^u)^{\dagger}P^u = P^u$; $A,B$ Hermitian)}
    \\
    & \leq \sqrt{\sum_u \tr[P^{u}(A-B)\psi (A-B)]} \cdot \sqrt{\sum_u \tr[P^u Q^u \psi Q^u]}
    &&\text{(Cauchy-Schwarz)}
    \\
    & = \sqrt{\tr[(A-B)^2\psi]} \cdot \sqrt{\sum_{u_{i},u_{i+1},\ldots, u_n} \tr[\tilde{P}^u Q^u \psi Q^u]}
    &&\text{$\Bigl(\sum_{u_1,\ldots,u_{i-1}}P^u = \tilde{P}^u\Bigr)$}
    \\
    & \leq \sqrt{\epsilon} \cdot \sqrt{\sum_{u_{i},u_{i+1},\ldots,u_n} \tr[Q^u \psi Q^u]} = \sqrt{\tr[\psi]\cdot \epsilon}
    &&\text{(lemma conditions).}
\end{aligned}
\end{equation}

Finally, consider the fourth part. We write the Hermitian operator $\sigma \coloneqq \psi - \psi'$ in terms of the decomposition $\sigma = R - S$ where $R, S$ are positive semi-definite operators with $RS=0$, so that $\abs{\sigma} = R+S$. Then, the fourth part follows from
\begin{equation}
\begin{aligned}
    \sum_{i=1}^n \abs{\tr[P^i \sigma]}  &= \sum_{i=1}^n \abs{\tr[P^i \sigma P^i]} \leq  \sum_{i=1}^n \normone{P^i \sigma P^i}  \leq \sum_{i=1}^n (\normone{P^i R P^i} + \normone{P^i S P^i} ) 
    \\
    & = \sum_{i=1}^n (\tr[P^i R P^i] + \tr[P^i S P^i]) = \tr[R+S] = \tr[\abs{\sigma}] = \normone{\psi-\psi'} \leq \sqrt{\epsilon}.
\end{aligned}
\end{equation}
\end{proof}

The next lemma is elementary and states that the trace norm of a sum of orthogonal operators equals the sum of their trace norms. We give a proof for completeness.
\begin{lemma}\label{lem:normone_orthogonal}
    Suppose $A_1,A_2,\ldots,A_n\in \LH$ are Hermitian and $A_iA_j=0$ for all $i\neq j$, then
    \begin{equation}
        \bignorm{\sum_{i=1}^n A_i}_1 = \sum_{i=1}^n \normone{A_i}.
    \end{equation}
\end{lemma}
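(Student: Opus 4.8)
The plan is to reduce everything to the observation that Hermiticity upgrades the one-sided relations $A_iA_j=0$ to a genuine orthogonality of supports. First I would note that, since each $A_i$ is Hermitian, $A_jA_i = (A_i^\dagger A_j^\dagger)^\dagger = (A_iA_j)^\dagger = 0$ for all $i\neq j$ as well. Hence $\Range(A_j)\subseteq \ker(A_i)$, and because $A_i$ is Hermitian, $\ker(A_i)=\Range(A_i)^{\perp}$; therefore the subspaces $\Supp(A_i)=\Range(A_i)$ are mutually orthogonal. Equivalently, writing $P_i$ for the orthogonal projector onto $\Supp(A_i)$, we have $P_iP_j=\delta_{ij}P_i$ and $A_i = P_iA_iP_i$.

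Next I would exploit this to compute the absolute value of $B\coloneqq\sum_{i=1}^n A_i$ directly. Since $\ker(|A_i|)=\ker(A_i^\dagger A_i)=\ker(A_i)$, the operators $|A_i|$ have the same (mutually orthogonal) supports as the $A_i$, so $|A_i|\,|A_j|=0$ for $i\neq j$. Now observe
\begin{equation}
  |B|^2 = B^\dagger B = B^2 = \Bigl(\sum_i A_i\Bigr)^2 = \sum_{i,j} A_iA_j = \sum_i A_i^2 = \sum_i |A_i|^2 = \Bigl(\sum_i |A_i|\Bigr)^2,
\end{equation}
where the cross terms vanish on both sides because $A_iA_j = 0$ and $|A_i|\,|A_j|=0$ for $i\neq j$. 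The operator $\sum_i |A_i|$ is positive semi-definite, so by uniqueness of the positive square root of a positive semi-definite operator, $|B| = \sum_i |A_i|$. Taking the trace gives $\norm{B}_1 = \tr\bigl[\sum_i |A_i|\bigr] = \sum_i \tr[|A_i|] = \sum_i \normone{A_i}$, as claimed.

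I do not expect a genuine obstacle here; the lemma is elementary, as stated. The only point requiring a little care is the passage from the one-sided hypothesis $A_iA_j=0$ to orthogonality of supports, which is exactly where Hermiticity is used (the statement is false for general operators, e.g. nilpotent ones). An equivalent route, if preferred, is to note that $B$ is block-diagonal with respect to the orthogonal decomposition $\calH = \bigl(\bigoplus_i \Supp(A_i)\bigr)\oplus\bigl(\bigoplus_i\Supp(A_i)\bigr)^{\perp}$, with $i$-th block equal to the restriction of $A_i$ and the last block zero; then $|B|$ is block-diagonal with blocks $|A_i|$, and the trace-norm identity follows by summing over blocks.
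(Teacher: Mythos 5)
Your proof is correct, and it takes a somewhat different route from the paper's. The paper reduces to the case $n=2$, observes (as you do) that Hermiticity upgrades $A_iA_j=0$ to $A_jA_i=0$, and then notes that $A_1$ and $A_2$ therefore commute, so they can be simultaneously diagonalized; the trace-norm identity is then read off from the eigenvalues, using the fact that the sets of indices with nonzero eigenvalues are disjoint. You instead work with all $n$ summands at once: you convert the two-sided vanishing into orthogonality of the supports $\Supp(A_i)$, show $|A_i|\,|A_j|=0$, compute $|B|^2 = \bigl(\sum_i |A_i|\bigr)^2$ for $B=\sum_i A_i$, and invoke uniqueness of the positive square root to conclude $|B|=\sum_i|A_i|$, after which taking the trace finishes the argument. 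Your approach avoids both the reduction to $n=2$ and the spectral decomposition, and it yields the stronger operator identity $\bigl|\sum_i A_i\bigr| = \sum_i |A_i|$ as an intermediate step; the paper's version is more hands-on and makes the cancellation of eigenvalue overlaps explicit. All the steps you use (e.g.\ $\ker(A_i)=\Range(A_i)^{\perp}$ for Hermitian $A_i$, $\ker(|A_i|)=\ker(A_i)$, uniqueness of the positive square root) are standard in the finite-dimensional setting of $\LH$, so there is no gap.
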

\begin{proof}
    It clearly suffices to prove the lemma  for the case $n=2$. In this case, let us write $A\coloneqq A_1$ and $B\coloneqq A_2$. First, note that $AB = 0 = (AB)^{\dagger} = B^{\dagger}A^{\dagger} = BA$ so $A$ and $B$ commute. Therefore, we can simultaneously diagonalize $A$ and $B$ and write $A = \sum_i \lambda_i \ketbrasame{i}$ and $B = \sum_i \mu_i \ketbrasame{i}$ for some $\lambda_i,\mu_i\in \mathbb{R}$ and $\{\ket{i}\}_{i\in [\dim(\calH)]}$ an orthonormal basis of $\calH$. Now, $AB=0$ implies $\sum_i \lambda_i\mu_i \ketbrasame{i} = 0$ and so the sets $S\coloneqq \{i| \lambda_i\neq 0\}$ and $T\coloneqq \{i| \mu_i\neq 0\}$ are disjoint. Therefore, the lemma follows from
    \begin{equation*}
        \normone{A+B} = \sum_i \abs{\lambda_i + \mu_i} = \sum_{i\in S} \abs{\lambda_i} + \sum_{i\in T} \abs{\mu_i} = \normone{A} + \normone{B}. \qedhere
    \end{equation*}
\end{proof}

The next lemma is similar to \cite[Lemma 2.22]{mv2021selftest} but also strengthened to include the trace of $\psi$.
\begin{lemma}\label{lem:operator_to_state}
Let $A, B, C \in \LH$ with $\norm{C}_\infty = c$ for some constant $c$, and let $\psi \in \PosH$.
Then the following holds:
\begin{equation}\label{eq:operator_to_state1}
    A \simeq_{\epsilon, \psi} B
    \implies  A \psi C \simeq_{c^2 \tr[\psi]\epsilon} B \psi C
    \text{ and } 
    C \psi A^\dagger 
    \simeq_{c^2 \tr[\psi]\epsilon} C \psi B^\dagger.
\end{equation}
In particular, let $P, Q \in \LH$ be such that $\norminfty{P},\norminfty{Q}\leq 1$, then
\begin{equation}\label{eq:operator_to_state2}
	P\simeq_{\epsilon,\psi}Q \implies P\psi P \simeq_{4\tr[\psi] \epsilon}Q\psi Q.
\end{equation}
\end{lemma}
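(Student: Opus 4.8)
The plan is to reduce the two displayed implications, and then the ``in particular'' statement, to the Cauchy--Schwarz inequality for Schatten-$2$ norms $\normone{XY}\le\norm{X}_2\norm{Y}_2$. Write $D\coloneqq A-B$, so that the hypothesis $A\simeq_{\epsilon,\psi}B$ reads $\norm{D\sqrt{\psi}}_2^2=\tr[D^\dagger D\psi]\le\epsilon$.

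For the first implication I would factor $A\psi C-B\psi C=(D\sqrt{\psi})(\sqrt{\psi}\,C)$ and apply Cauchy--Schwarz to get $\normone{A\psi C-B\psi C}\le\norm{D\sqrt{\psi}}_2\cdot\norm{\sqrt{\psi}\,C}_2$. The first factor is $\le\sqrt{\epsilon}$; for the second, $\norm{\sqrt{\psi}\,C}_2^2=\tr[C C^\dagger\psi]=\norm{C^\dagger}_{\psi}^2\le c^2\tr[\psi]$ by \cref{lem:state_to_infty_norm} applied to $C^\dagger$ (using $\norm{C^\dagger}_\infty=\norm{C}_\infty=c$). Squaring gives $\normone{A\psi C-B\psi C}^2\le c^2\tr[\psi]\epsilon$, i.e.\ $A\psi C\simeq_{c^2\tr[\psi]\epsilon}B\psi C$. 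The second implication follows by the mirror-image factorization $C\psi A^\dagger-C\psi B^\dagger=(C\sqrt{\psi})(\sqrt{\psi}\,D^\dagger)$, with $\norm{C\sqrt{\psi}}_2^2=\tr[C^\dagger C\psi]\le c^2\tr[\psi]$ and $\norm{\sqrt{\psi}\,D^\dagger}_2^2=\tr[D\psi D^\dagger]=\tr[D^\dagger D\psi]\le\epsilon$ by cyclicity of the trace --- or, more simply, by taking adjoints in the first implication and using $\normone{X}=\normone{X^\dagger}$, $\norm{C^\dagger}_\infty=\norm{C}_\infty$.

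For the ``in particular'' part --- where $P,Q$ are Hermitian (as they are for the projectors to which we apply it) with $\norm{P}_\infty,\norm{Q}_\infty\le1$ --- I would split $P\psi P-Q\psi Q=P\psi(P-Q)+(P-Q)\psi Q$ and bound each term by $\sqrt{\tr[\psi]\,\epsilon}$: the second term directly by the first implication (with $A=P$, $B=Q$, $C=Q$, so $c=\norm{Q}_\infty\le1$), and the first term by the second implication (with $C=P$, $A=P$, $B=Q$, using $(P-Q)^\dagger=P-Q$ so that $P\psi(P-Q)=P\psi(P-Q)^\dagger$, and $c=\norm{P}_\infty\le1$). The triangle inequality then gives $\normone{P\psi P-Q\psi Q}\le2\sqrt{\tr[\psi]\,\epsilon}$, and squaring yields $P\psi P\simeq_{4\tr[\psi]\epsilon}Q\psi Q$.

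There is no real obstacle here; the only points of care are (i) inserting $\sqrt{\psi}$ on the side that makes $\norm{D}_\psi$ (equivalently $\norm{D\sqrt{\psi}}_2$) the quantity that appears --- which for the term $P\psi(P-Q)$ forces the use of the second implication and hence of the Hermiticity of $P,Q$ --- and (ii) carrying the $\sqrt{\tr[\psi]}$ factor, supplied by \cref{lem:state_to_infty_norm}, which is the only strengthening over \cite[Lemma 2.22]{mv2021selftest}.
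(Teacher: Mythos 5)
Your proof is correct and follows essentially the same route as the paper's: \cref{eq:operator_to_state1} is exactly the Cauchy--Schwarz factorization through $\sqrt{\psi}$ used in \cite[Lemma 2.22]{mv2021selftest} (which the paper simply cites), strengthened by \cref{lem:state_to_infty_norm}, and your telescoping $P\psi P-Q\psi Q=P\psi(P-Q)+(P-Q)\psi Q$ is the same splitting through the cross term $P\psi Q$ that the paper combines with the triangle inequality. Your remark that Hermiticity of $P,Q$ is used to handle the term $P\psi(P-Q)$ is a fair point of care: the paper states the ``in particular'' part for general $P,Q$ but only ever applies it to Hermitian operators, and its own chain implicitly relies on the same fact.
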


\begin{proof}
The proof of  \cref{eq:operator_to_state1} is the same as that of \cite[Lemma 2.22]{mv2021selftest} except we use the bound $\norm{C}_{\psi}^2\leq c^2 \tr[\psi]$ (\cref{lem:state_to_infty_norm}). \cref{eq:operator_to_state2} follows from the first part via
\begin{align*}
    P\simeq_{\epsilon,\psi}Q 
    &\implies P\psi P \simeq_{\norm{P}_\psi^2 \epsilon} P\psi Q \quad \text{and} \quad P\psi Q \simeq_{\norm{Q}_\psi^2\epsilon} P\psi Q 
    &&\text{(\cref{eq:operator_to_state1})}
    \\
    &\implies P\psi P \simeq_{\epsilon\cdot (\norm{P}_\psi + \norm{Q}_\psi)^2}Q\psi Q
    &&\text{(triangle inequality)}
    \\
    &\implies P\psi P \simeq_{4\tr[\psi]\epsilon}Q\psi Q &&\text{($\norminfty{P},\norminfty{Q} \leq 1$)}. \qedhere
\end{align*}
\end{proof}

The next lemma is used to bound the distance between post-measurement states when similar measurements are applied to the same state or when the same measurement is applied to similar states.
\begin{lemma}[Post-measurement approximation lemma]\label{lem:post_meas_approx}
\begin{enumerate}
    \item[]
    \item Let $\psi_j \in \PosH$ for $j\in [n]$ and $\psi \coloneqq \sum_{j=1}^n \psi_j$. Let $\{P^i\}_{i\in [m]}$ and $\{Q^i\}_{i\in [m]}$ be two projective measurements on $\calH$. If $\sum_{i,j} \norm{P^i-Q^i}_{\psi_j}^2\leq \epsilon$, then
    \begin{equation}
        \sum_{i\in [m],j\in [n]} P^i \psi_j P^i \otimes \ketbrasame{i,j} \simeq_{4\tr[\psi]\cdot \epsilon} \sum_{i\in [m],j\in [n]} Q^i \psi_j Q^i \otimes \ketbrasame{i,j}.
    \end{equation}
    \item Let $\psi_j, \psi_j' \in \PosH$ for $j\in [n]$ and $\psi \coloneqq\sum_{j=1}^n\psi_j$ and $\psi' \coloneqq\sum_{j=1}^n\psi_j'$. Let $\{P^i\}_{i\in [m]}$ be a projective measurement on $\calH$. If $\sum_{j=1}^m \normone{\psi_j-\psi_j'} \leq \epsilon$, then
    \begin{equation}
        \sum_{i\in[m],j\in[n]} P^i\psi_j P^i \otimes \ketbrasame{i,j} \simeq_{\epsilon^2} \sum_{i\in [m],j\in [n]} P^i\psi_j' P^i \otimes \ketbrasame{i,j}.
    \end{equation}
\end{enumerate}
\end{lemma}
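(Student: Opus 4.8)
The plan is to handle both parts the same way: first use \cref{lem:normone_orthogonal} to collapse the direct sum over the classical register $\ketbrasame{i,j}$ into a sum of trace norms of individual ``blocks'', and then bound that sum so that the final error stays independent of $m$ and $n$ (this independence is the whole point of the lemma). In both parts the operators $A_{i,j}\otimes\ketbrasame{i,j}$ appearing under the norm are Hermitian and pairwise orthogonal, since the $\ketbrasame{i,j}$ are orthogonal projectors, so \cref{lem:normone_orthogonal} applies verbatim after relabelling the index set $[m]\times[n]$ by a single index.

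For Part~1, \cref{lem:normone_orthogonal} reduces the claim to bounding $\sum_{i,j}\normone{P^i\psi_jP^i-Q^i\psi_jQ^i}$. I would write $P^i\psi_jP^i-Q^i\psi_jQ^i=P^i\psi_j(P^i-Q^i)+(P^i-Q^i)\psi_jQ^i$, factor $\psi_j=\sqrt{\psi_j}\sqrt{\psi_j}$, and apply the Cauchy--Schwarz inequality for Schatten $2$-norms to each summand. Using $\norm{P^i-Q^i}_{\psi_j}=\norm{(P^i-Q^i)\sqrt{\psi_j}}_2$ (the operator is Hermitian) and $\norm{P^i\sqrt{\psi_j}}_2^2=\tr[P^i\psi_j]$, this gives $\normone{P^i\psi_jP^i-Q^i\psi_jQ^i}\le\norm{P^i-Q^i}_{\psi_j}\bigl(\sqrt{\tr[P^i\psi_j]}+\sqrt{\tr[Q^i\psi_j]}\bigr)$. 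Summing over $(i,j)$ and applying Cauchy--Schwarz once more on that sum, the first factor is $\sqrt{\sum_{i,j}\norm{P^i-Q^i}_{\psi_j}^2}\le\sqrt{\epsilon}$ by hypothesis, while for the second factor I use $(\sqrt{a}+\sqrt{b})^2\le2(a+b)$ together with the completeness relations $\sum_iP^i=\sum_iQ^i=\1$, which give $\sum_{i,j}\tr[P^i\psi_j]=\sum_{i,j}\tr[Q^i\psi_j]=\tr[\psi]$; hence the second factor is at most $2\sqrt{\tr[\psi]}$. The resulting trace-norm bound $2\sqrt{\tr[\psi]\epsilon}$, squared, is exactly $4\tr[\psi]\cdot\epsilon$, as required.

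For Part~2, \cref{lem:normone_orthogonal} reduces the claim to bounding $\sum_{i,j}\normone{P^i(\psi_j-\psi_j')P^i}$. Here I would invoke, for each fixed $j$, the ``pinching'' estimate already used in the proof of part~4 of \cref{lem:replace}: decompose the Hermitian operator $\psi_j-\psi_j'=R_j-S_j$ with $R_j,S_j\ge0$ and $R_jS_j=0$, so that $\abs{\psi_j-\psi_j'}=R_j+S_j$; then $\sum_i\normone{P^i(\psi_j-\psi_j')P^i}\le\sum_i\bigl(\tr[P^iR_jP^i]+\tr[P^iS_jP^i]\bigr)=\tr[R_j+S_j]=\normone{\psi_j-\psi_j'}$. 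Summing over $j$ and using the hypothesis $\sum_{j}\normone{\psi_j-\psi_j'}\le\epsilon$ (the upper limit printed as $m$ should read $n$) gives a trace-norm bound of $\epsilon$, whose square is $\epsilon^2$.

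All the estimates are routine Cauchy--Schwarz/H\"older manipulations, so I do not expect a genuine obstacle; the only point that genuinely matters is keeping the constants free of $m$ and $n$, which in Part~1 is achieved precisely by invoking the completeness relation after the outer Cauchy--Schwarz step, and in Part~2 by applying the pinching inequality $\sum_i\normone{P^i\sigma P^i}\le\normone{\sigma}$ separately for each $j$ before summing.
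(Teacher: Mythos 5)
Your proposal is correct and follows essentially the same route as the paper's proof: collapse the block-diagonal sum via \cref{lem:normone_orthogonal}, then for Part 1 split $P^i\psi_jP^i-Q^i\psi_jQ^i$ into two terms and apply Cauchy--Schwarz for Schatten $2$-norms followed by a second Cauchy--Schwarz with the completeness relations, and for Part 2 apply the pinching bound of part 4 of \cref{lem:replace} for each $j$ before summing. Your observation that the upper limit $m$ in the hypothesis of Part 2 should read $n$ is also correct.
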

\begin{proof}
    The first part of the lemma follows from:
    \begin{align*}
        &\bignorm{\sum_{i,j} P^i \psi_j P^i \otimes \ketbrasame{i,j} - \sum_{i,j} Q^i \psi_j Q^i \otimes \ketbrasame{i,j}}_1
        \\ 
        &= \sum_{i,j} \normone{P^i \psi_j P^i - Q^i \psi_j Q^i} &&\text{(\cref{lem:normone_orthogonal})}
        \\
        &\leq \sum_{i,j} \normone{(P^i-Q^i) \psi_j P^i} + \normone{Q^i \psi_j (P^i-Q^i)} &&\text{(triangle inequality)}
        \\
        &\leq \sum_{i,j} \norm{(P^i-Q^i)\sqrt{\psi_j}}_2 (\norm{\sqrt{\psi_j}P^i}_2 + \norm{\sqrt{\psi_j}Q^i}_2) &&\amatrix{\text{Cauchy-Schwarz}}{\text{for Schatten $2$-norms}}
        \\
        &= \sum_{i,j} \norm{P^i-Q^i}_{\psi_j} \Bigl(\sqrt{\tr[P^i\psi_j]} + \sqrt{\tr[Q^i\psi_j]}\Bigr) &&\text{(definitions)}
        \\
        &\leq \sqrt{\sum_{i,j} \norm{P^i-Q^i}_{\psi_j}^2} \cdot \Biggl(\sqrt{\sum_{i,j} \tr[P^i\psi_j]} + \sqrt{\sum_{i,j} \tr[Q^i\psi_j]} \Biggr) &&\text{(Cauchy-Schwarz)}
        \\
        &\leq 2 \sqrt{\epsilon} \sqrt{\tr[\psi]} &&\text{(lemma conditions)}.
    \end{align*}
    
    The second part of the lemma follows from
    \begin{equation}
        \bignorm{\sum_{i,j} P^i\psi_j P^i \otimes \ketbrasame{i,j} - \sum_{i,j} P^i\psi_j' P^i \otimes \ketbrasame{i,j}}_1 = \sum_{i,j} \normone{P^i\psi_j P^i - P^i\psi_j'P^i}
        \leq \sum_j \normone{\psi_j - \psi_j'} \leq \epsilon,
    \end{equation}
    where we used \cref{lem:normone_orthogonal} for the first equality, part 4 of the replacement lemma (\cref{lem:replace}) for the first inequality, and the lemma condition for the second inequality.
\end{proof}

The next two lemmas are proved in \cite{mv2021selftest}. We omit the proofs.
\begin{lemma}[{\cite[Lemma 2.23]{mv2021selftest}}]\label{lem:mv_223}
    Let $\calH_1$ and $\calH_2$ be Hilbert spaces with
    $\dim(H_1) \leq \dim(H_2)$, $V$ be an isometry: $\calH_1 \to \calH_2$, and $A$ and $B$ be binary observables on $\calH_1$ and $\calH_2$ respectively.
    Then, the following holds for all $\psi \in \Pos(\calH_1)$:
    \begin{equation}
    \begin{aligned}
        &VAV^\dagger \simeq_{\epsilon, V\psi V^\dagger} B
        \implies A \simeq_{\epsilon, \psi} V^\dagger BV, \\
        &A \simeq_{\epsilon, \psi} V^\dagger B V 
        \implies VAV^\dagger \simeq_{\sqrt{\epsilon}, V\psi V^\dagger} B.
    \end{aligned}
    \end{equation}
\end{lemma}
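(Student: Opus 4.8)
The plan is to prove the two implications separately. Throughout, I would write $\Pi \coloneqq VV^\dagger \in \Pos(\calH_2)$ for the projector onto the image of $V$ and use repeatedly that $V$ is an isometry, i.e. $V^\dagger V = \1_{\calH_1}$, so that $V^\dagger\Pi = V^\dagger$ and $V^\dagger(\1-\Pi) = 0$. For the first implication I would start from the algebraic identity $A - V^\dagger BV = V^\dagger(VAV^\dagger - B)V$, which holds because $V^\dagger VAV^\dagger V = A$. Writing $M \coloneqq VAV^\dagger - B$ and using cyclicity of the trace, the claim reduces to
\[
\norm{A - V^\dagger BV}_\psi^2 = \tr\!\big[M^\dagger \Pi M \,(V\psi V^\dagger)\big] \le \tr\!\big[M^\dagger M\,(V\psi V^\dagger)\big] = \norm{M}_{V\psi V^\dagger}^2 \le \epsilon,
\]
where the middle inequality is just $M^\dagger(\1-\Pi)M \ge 0$ together with $V\psi V^\dagger \ge 0$. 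Note that this direction uses only that $V$ is an isometry, not that $A,B$ are binary.

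For the second implication the plan is to split the operator of interest into a part supported inside the image of $V$ and a part orthogonal to it. I would put $\rho \coloneqq V\psi V^\dagger$ (so that $\Pi\rho = \rho\Pi = \rho$, hence $\sqrt\rho = \Pi\sqrt\rho$) and $C \coloneqq V^\dagger BV$, which is Hermitian with $\norm{C}_\infty \le \norm{B}_\infty = 1$ and satisfies $VCV^\dagger = \Pi B\Pi$. A short computation then gives the decomposition
\[
(VAV^\dagger - B)\sqrt\rho = V(A - C)V^\dagger\sqrt\rho \;-\; (\1-\Pi)\,B\,\Pi\,\sqrt\rho,
\]
in which the first summand has the form $\Pi X$ and the second has the form $(\1-\Pi)Y$, so they are orthogonal in the Hilbert--Schmidt inner product and the Pythagorean identity yields $\norm{VAV^\dagger - B}_{V\psi V^\dagger}^2 = \norm{V(A-C)V^\dagger\sqrt\rho}_2^2 + \norm{(\1-\Pi)B\Pi\sqrt\rho}_2^2$. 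Using $V^\dagger V = \1$ and $V^\dagger\rho V = \psi$, the first term equals $\norm{A-C}_\psi^2 \le \epsilon$ exactly; and using $B^2 = \1$ (this is where ``$B$ binary'' enters), the second, ``leakage'', term collapses to $\tr[\psi] - \tr[C^2\psi] = \tr[(\1-C^2)\psi]$.

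The main obstacle, and the reason this direction loses a square root, is bounding the leakage term $\tr[(\1-C^2)\psi]$. For this I would use $A^2 = \1$: expanding $\norm{A-C}_\psi^2 = \tr[\psi] - 2\real\tr[AC\psi] + \norm{C}_\psi^2 \le \epsilon$ and applying the Cauchy--Schwarz inequality for the $\psi$-seminorm, $\abs{\tr[AC\psi]} \le \norm{A}_\psi\,\norm{C}_\psi = \sqrt{\tr[\psi]}\,\norm{C}_\psi$, gives $\big(\sqrt{\tr[\psi]} - \norm{C}_\psi\big)^2 \le \epsilon$; since $\norm{C}_\psi \le \norm{C}_\infty\sqrt{\tr[\psi]} \le \sqrt{\tr[\psi]}$, this in turn gives $\tr[(\1-C^2)\psi] = \big(\sqrt{\tr[\psi]} - \norm{C}_\psi\big)\big(\sqrt{\tr[\psi]} + \norm{C}_\psi\big) \le 2\sqrt{\tr[\psi]}\,\sqrt\epsilon$. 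Altogether $\norm{VAV^\dagger - B}_{V\psi V^\dagger}^2 \le \epsilon + 2\sqrt{\tr[\psi]}\,\sqrt\epsilon = O(\sqrt\epsilon)$ with $\tr[\psi] = O(1)$, as holds for the subnormalized states to which we apply the lemma. I expect this leakage estimate to be essentially tight: a two-dimensional example in which $B$ is a Pauli observable slightly rotated off the image of $V$ shows the square-root loss cannot be removed, so one should not hope for a symmetric $\epsilon$-to-$\epsilon$ statement in this direction.
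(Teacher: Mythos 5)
The paper never proves this lemma itself --- it is imported by citation from \cite{mv2021selftest} ("we omit the proofs") --- so there is no in-paper argument to compare against; judged on its own terms, your proof is correct and gives a clean, self-contained reconstruction. The first direction via $A - V^\dagger BV = V^\dagger(VAV^\dagger - B)V$ and discarding $M^\dagger(\1-\Pi)M \geq 0$ is exact (and, as you note, needs neither observable to be binary). For the second direction, the Pythagorean split of $(VAV^\dagger - B)\sqrt{\rho}$ into the range part $V(A-C)V^\dagger\sqrt{\rho}$ and the leakage $(\1-\Pi)B\Pi\sqrt{\rho}$, the identity that the leakage equals $\tr[(\1-C^2)\psi]$ via $B^2=\1$, and the estimate $\tr[(\1-C^2)\psi] \leq 2\sqrt{\tr[\psi]}\sqrt{\epsilon}$ from $A^2=\1$ plus Cauchy--Schwarz in the $\psi$-weighted inner product all check out. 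The one point to flag is that what you prove is $\norm{VAV^\dagger - B}^2_{V\psi V^\dagger} \leq \epsilon + 2\sqrt{\tr[\psi]}\,\sqrt{\epsilon}$, i.e.\ the big-$O$ form $\approx_{\sqrt{\epsilon}}$ under $\tr[\psi] = O(1)$, not the literal $\simeq_{\sqrt{\epsilon}}$ (constant $1$, arbitrary $\psi \in \Pos(\calH_1)$) as restated here. That literal form is indeed unattainable: your own two-dimensional example (take $\calH_1 = \C$, $V$ the embedding onto $\ket{0}$, $A = 1$, $B = \cos\theta\,\sigma_Z + \sin\theta\,\sigma_X$, $\psi = 1$) has hypothesis $(1-\cos\theta)^2 = \epsilon$ but conclusion $2(1-\cos\theta) = 2\sqrt{\epsilon}$, so both the square-root loss and a constant of at least $2$ are forced; moreover a trace bound on $\psi$ is genuinely needed, since both sides scale linearly under $\psi \mapsto t\psi$ while $\sqrt{\epsilon}$ does not. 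Since the paper only applies the lemma to subnormalized states and explicitly says all results survive replacing $\simeq$ by $\approx$, your version is exactly what is used downstream; the discrepancy lies in the paper's restatement of the cited lemma, not in your argument.
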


\begin{lemma}[{\cite[Lemma 2.24]{mv2021selftest}}]\label{lem:mv_224}
    Let $\calH_1$ and $\calH_2$ be Hilbert spaces with
    $\dim(H_1) \leq \dim(H_2)$, $V$ be an isometry: $\calH_1 \to \calH_2$, and $A$ and $B$ be binary observables on $\calH_1$ and $\calH_2$ respectively.
    Then, the following holds for all $\psi \in \Pos(\calH_1)$ and $b \in \{0,1\}$:
    \begin{equation}
    \begin{aligned}
        &A \simeq_{\epsilon, \psi} V^\dagger B V 
        \implies A^{(b)} \simeq_{\epsilon, \psi } V^\dagger B^{(b)} V,
        \\
        &  B \simeq_{\epsilon, V\psi V^\dagger} VAV^\dagger
        \implies B^{(b)} \simeq_{\epsilon, V \psi V^\dagger} V A^{(b)} V^\dagger.
    \end{aligned}
    \end{equation}
\end{lemma}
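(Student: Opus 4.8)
The plan is to reduce both implications to the elementary identity $O^{(b)} = (\1 + (-1)^b O)/2$ for a binary observable $O$, combined with the isometry relation $V^\dagger V = \1_{\calH_1}$; once these are in hand, each implication becomes a two-line computation with the state-dependent norm $\norm{\cdot}_\psi$.

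For the first implication, I would expand $V^\dagger B^{(b)} V = V^\dagger \bigl(\tfrac{\1_{\calH_2} + (-1)^b B}{2}\bigr) V = \tfrac{\1_{\calH_1} + (-1)^b V^\dagger B V}{2}$, where the last step uses $V^\dagger V = \1_{\calH_1}$. Subtracting $A^{(b)} = \tfrac{\1_{\calH_1} + (-1)^b A}{2}$ yields the exact identity $A^{(b)} - V^\dagger B^{(b)} V = \tfrac{(-1)^b}{2}\bigl(A - V^\dagger B V\bigr)$, so that $\norm{A^{(b)} - V^\dagger B^{(b)} V}_\psi^2 = \tfrac14 \norm{A - V^\dagger B V}_\psi^2 \le \epsilon/4 \le \epsilon$, which is the claim (in fact with room to spare).

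For the second implication the same expansion gives $V A^{(b)} V^\dagger = \tfrac{V V^\dagger + (-1)^b V A V^\dagger}{2}$, and now $V V^\dagger$ is only the projector $\Pi$ onto the image of $V$, not the identity on $\calH_2$; this discrepancy is the one point that needs care. Writing $\rho \coloneqq V\psi V^\dagger$, I would use that $\rho$ is supported on the image of $V$, i.e.\ $\Pi \rho = \rho \Pi = \rho$, hence $(\1_{\calH_2} - \Pi)\rho = \rho(\1_{\calH_2} - \Pi) = 0$. From $B^{(b)} - V A^{(b)} V^\dagger = \tfrac12\bigl[(\1_{\calH_2} - \Pi) + (-1)^b(B - V A V^\dagger)\bigr]$ and the Hermiticity of $B$ and $V A V^\dagger$ (so that squaring equals multiplying by the adjoint), expanding $\tr[(\,\cdot\,)^2 \rho]$ kills the $(\1_{\calH_2}-\Pi)$-term and both cross terms by cyclicity of the trace, leaving $\norm{B^{(b)} - V A^{(b)} V^\dagger}_\rho^2 = \tfrac14 \tr[(B - V A V^\dagger)^2 \rho] = \tfrac14 \norm{B - V A V^\dagger}_{V\psi V^\dagger}^2 \le \epsilon/4 \le \epsilon$. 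This direct route is cleaner than reducing to the first implication via \cref{lem:mv_223}, which would moreover incur a loss from $\epsilon$ to $\sqrt{\epsilon}$. I do not anticipate a genuine obstacle; the only subtlety is the $V V^\dagger \ne \1$ point, which dissolves once one observes that the state-dependent norm only sees the restriction of operators to the image of $V$.
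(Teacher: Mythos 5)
Your proof is correct: both implications follow from the identity $O^{(b)}=(\1+(-1)^bO)/2$, with the only subtlety — that $VV^\dagger$ is merely a projector — handled properly by noting $V\psi V^\dagger$ is supported on the image of $V$, so the $(\1-VV^\dagger)$ term and the cross terms vanish under the trace. The paper itself omits the proof and defers to \cite[Lemma 2.24]{mv2021selftest}, whose argument is essentially this same elementary computation, so there is nothing to add.
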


The next lemma will be used to characterize the states of the quantum device given a characterization of its observables. The $i\in [n]$ will index qubits.
\begin{lemma}\label{lem:linear_error_accumulation}
    Let $P_1,\ldots,P_n\in \PosH$ be projectors and $A\in \LH$ be such that 
    \begin{equation}
    \normone{A-P_i A P_i} \leq \epsilon_i,
    \end{equation}
    for all $i\in [n]$. Then, writing $P_{1:n} \coloneqq P_1 P_{2} \cdots P_n$, we have
    \begin{equation}
        \normone{A- P_{1:n}^{\dagger} A P_{1:n}} \leq \sum_{i=1}^n \epsilon_i.
    \end{equation}
\end{lemma}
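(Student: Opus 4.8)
The plan is to prove this by induction on $n$, peeling off one projector at a time and using a telescoping decomposition so that the errors add rather than multiply. The base case $n=1$ is exactly the hypothesis. For the inductive step, write $P_{1:n} = P_1 \cdot P_{2:n}$ where $P_{2:n} \coloneqq P_2 P_3 \cdots P_n$, and use the triangle inequality to bound $\normone{A - P_{1:n}^\dagger A P_{1:n}}$ by $\normone{A - P_1 A P_1} + \normone{P_1 A P_1 - P_{1:n}^\dagger A P_{1:n}}$. The first term is at most $\epsilon_1$ by hypothesis. For the second term, note $P_{1:n}^\dagger A P_{1:n} = P_{2:n}^\dagger (P_1 A P_1) P_{2:n}$, so I want to apply the inductive hypothesis to the operator $P_1 A P_1$ — but that operator satisfies the needed bounds only approximately, so I should instead first replace $P_1 A P_1$ by $A$ inside, i.e., bound $\normone{P_1 A P_1 - P_{2:n}^\dagger (P_1 A P_1) P_{2:n}}$ by $\normone{P_1 A P_1 - P_{2:n}^\dagger A P_{2:n}} + \normone{P_{2:n}^\dagger A P_{2:n} - P_{2:n}^\dagger (P_1 A P_1) P_{2:n}}$, where the second piece is at most $\normone{A - P_1 A P_1} \le \epsilon_1$ again since conjugation by the contraction $P_{2:n}$ (norm $\le 1$) does not increase the trace norm — wait, that double-counts $\epsilon_1$, so I need to be more careful about the order of peeling.

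The cleaner approach: induct by peeling off $P_n$ from the \emph{right} end. Set $Q \coloneqq P_{1:n-1}$ and write $P_{1:n}^\dagger A P_{1:n} = P_n Q^\dagger A Q P_n$. Then
\begin{align*}
  \normone{A - P_n Q^\dagger A Q P_n} &\le \normone{A - Q^\dagger A Q} + \normone{Q^\dagger A Q - P_n Q^\dagger A Q P_n}.
\end{align*}
The first term is at most $\sum_{i=1}^{n-1}\epsilon_i$ by the inductive hypothesis. For the second term, the key observation is that $\normone{X - P_n X P_n} \le \normone{X - Q^\dagger X Q}$ does not directly help; instead I compare to $A$: since $Q^\dagger A Q = Q^\dagger (P_n A P_n) Q + Q^\dagger (A - P_n A P_n) Q$, and conjugation by the contraction $Q$ is trace-norm non-increasing (fact 2 in the preliminaries, with $\norminfty{Q}\le 1$ since each $P_i$ is a projector), the ``cross'' differences collapse: $\normone{Q^\dagger A Q - P_n Q^\dagger A Q P_n}$. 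Here $P_n Q^\dagger A Q P_n$ — I should use that $P_n$ commutes with nothing in particular, so the honest route is $\normone{Q^\dagger A Q - P_n Q^\dagger A Q P_n} \le \normone{Q^\dagger(A - P_nAP_n)Q} + \normone{Q^\dagger P_n A P_n Q - P_n Q^\dagger P_n A P_n Q P_n} \le \epsilon_n + (\text{a term that needs $P_n$ to pass through } Q)$. This does not close cleanly either, which tells me the genuinely right decomposition is the standard ``sandwich'' telescope:
\begin{equation*}
  A - P_{1:n}^\dagger A P_{1:n} = \sum_{k=1}^n \bigl( P_{1:k-1}^\dagger A' P_{1:k-1} - P_{1:k}^\dagger A' P_{1:k}\bigr),
\end{equation*}
applied to $A' = A$, where each summand equals $P_{1:k-1}^\dagger (A - P_k A P_k) P_{1:k-1}$ only if $P_k$ commutes past $P_{1:k-1}$ — which it doesn't.

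So the actual working telescope, which I believe is what the authors intend, is: $A - P_{1:n}^\dagger A P_{1:n} = \sum_{k=1}^{n} \bigl(P_{k:n}^\dagger A P_{k:n} - P_{k+1:n}^\dagger A P_{k+1:n}\bigr)$ with the convention $P_{n+1:n} = \1$, where $P_{k:n} \coloneqq P_k P_{k+1}\cdots P_n$. The $k$-th summand is $P_{k+1:n}^\dagger\bigl(P_k (\ast) P_k - (\ast)\bigr)P_{k+1:n}$ — no; rather $P_{k:n}^\dagger A P_{k:n} = P_{k+1:n}^\dagger P_k A P_k P_{k+1:n}$, so the $k$-th difference is $P_{k+1:n}^\dagger\bigl(P_k A P_k - A\bigr)P_{k+1:n} + (P_{k+1:n}^\dagger A P_{k+1:n} - P_{k+1:n}^\dagger A P_{k+1:n})$, i.e., exactly $P_{k+1:n}^\dagger (P_k A P_k - A) P_{k+1:n}$, whose trace norm is $\le \normone{P_k A P_k - A} \le \epsilon_k$ since $P_{k+1:n}$ is a contraction. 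Summing over $k$ and applying the triangle inequality gives the bound $\sum_{k=1}^n \epsilon_k$. The main obstacle — and the only subtle point — is choosing this particular telescope (peeling from the \emph{inside out}, so that at each stage the freshly exposed projector $P_k$ sits directly against $A$ and the remaining product $P_{k+1:n}$ merely conjugates, acting as a trace-norm contraction); once that decomposition is written down the estimate is immediate from the triangle inequality and the submultiplicativity fact $\normone{XBX^\dagger}\le \norminfty{X}^2\normone{B}$ with $\norminfty{P_{k+1:n}}\le 1$.
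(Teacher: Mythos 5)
Your final telescoping decomposition $A - P_{1:n}^\dagger A P_{1:n} = \sum_{k=1}^{n}\bigl(P_{k+1:n}^\dagger A P_{k+1:n} - P_{k:n}^\dagger A P_{k:n}\bigr)$, with each summand equal to $P_{k+1:n}^\dagger (A - P_k A P_k) P_{k+1:n}$ and bounded by $\epsilon_k$ since $\norminfty{P_{k+1:n}}\le 1$, is correct and is essentially the paper's argument: the paper phrases it as an induction that peels off the innermost projector $P_1$ and bounds the remaining difference by conjugation-contraction, which is exactly your telescope unrolled. The earlier false starts in your write-up (peeling from the wrong end, the double-counted $\epsilon_1$) are abandoned before the final version, so no gap remains.
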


\begin{proof}
We argue by induction on $n\geq 0$. The base case, $\normone{A-A}=0$, clearly holds. 

Now, for $i\in [n+1]$, we write $P_{i:n+1} \coloneqq P_i P_{2} \cdots P_{n+1}$. Then,
\begin{align*}
    &\quad \normone{A - P_{1:n+1}^{\dagger} A P_{1:n+1}} 
    \\
    &\leq \normone{A -  P_{2:n+1}^{\dagger} A P_{2:n+1}} + \normone{P_{2:n+1}^{\dagger} A P_{2:n+1} - P_{1:n+1}^{\dagger} A P_{1:n+1}} &&\text{(triangle inequality)}
    \\
    &\leq \sum_{i=2}^{n+1} \epsilon_i + \normone{P_{2:n+1}^{\dagger} A P_{2:n+1} - P_{2:n+1}^{\dagger}P_1 A P_1 P_{2:n+1}}
    &&\amatrix{\text{inductive hypothesis}}{\text{and $P_{1:n+1} = P_1 P_{2:n+1}$}}
    \\
    &\leq \sum_{i=2}^{n+1}\epsilon_i + \normone{A - P_1 A P_1} \leq \sum_{i=1}^{n+1} \epsilon_i
    && \amatrix{
        \text{$\norminfty{P_{2:n+1}}\leq1$ then}}{
        \text{lemma conditions}},
\end{align*}
which completes the proof.
\end{proof}

Unlike the preceding lemmas, all remaining lemmas of this subsection concern computational indistinguishability. In particular, this means they only hold with respect to efficient operations.
\begin{lemma}[Triangle inequality for computational indistinguishability]\label{lem:ci_triangle}
    Let $\{\rho_i \mid i \in [n]\} \subseteq \PosH$ be such that $\rho_i \csimeq_{\epsilon_i} \rho_{i+1}$ for all $i \in [n-1]$, then
    \begin{equation}
        \rho_1 \csimeq_{\sum_{i=1}^{n-1} \epsilon_i} \rho_n.
    \end{equation}
\end{lemma}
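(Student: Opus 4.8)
The plan is to unfold \cref{def:comp_indist} and telescope, exactly mirroring the proof of the ordinary triangle inequality but carried out "inside" a fixed distinguisher. First I would fix an arbitrary efficient family of two-outcome POVMs $\{E_\lambda, \1 - E_\lambda\}$ (with respect to some fixed polynomial in $\lambda$). For each $i \in [n-1]$, applying the hypothesis $\rho_i \csimeq_{\epsilon_i} \rho_{i+1}$ to \emph{this same} family produces a threshold $\lambda_0^{(i)} \in \N$ such that $\abs{\tr[E_\lambda \rho_i(\lambda)] - \tr[E_\lambda \rho_{i+1}(\lambda)]} \leq \epsilon_i(\lambda)$ for all $\lambda \geq \lambda_0^{(i)}$. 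Since the chain has a fixed finite length $n$, the quantity $\lambda_0 \coloneqq \max_{i \in [n-1]} \lambda_0^{(i)}$ is well-defined and finite, and all $n-1$ inequalities hold simultaneously for every $\lambda \geq \lambda_0$.

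Then, for every $\lambda \geq \lambda_0$, the usual triangle inequality in $\R$ gives
\begin{equation}
    \abs{\tr[E_\lambda \rho_1(\lambda)] - \tr[E_\lambda \rho_n(\lambda)]}
    \leq \sum_{i=1}^{n-1} \abs{\tr[E_\lambda \rho_i(\lambda)] - \tr[E_\lambda \rho_{i+1}(\lambda)]}
    \leq \sum_{i=1}^{n-1} \epsilon_i(\lambda).
\end{equation}
Because the efficient family $\{E_\lambda, \1 - E_\lambda\}$ was arbitrary, this is precisely the assertion $\rho_1 \csimeq_{\sum_{i=1}^{n-1}\epsilon_i} \rho_n$ according to \cref{def:comp_indist}. (The analogous statement for $\capprox$ follows by the same argument, absorbing the constants.)

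The only point needing care — and the closest thing to an obstacle — is the order of quantifiers in \cref{def:comp_indist}: the threshold $\lambda_0$ may legitimately depend on the chosen POVM family, so one must fix the distinguisher \emph{first} and only afterwards invoke the $n-1$ hypotheses, each supplying its own threshold for that specific distinguisher, then take the maximum. This maneuver is valid exactly because the chain has constantly many links; if $n$ were allowed to grow with $\lambda$, the maximum need not exist and the bound could fail. No efficiency subtleties arise, since we never construct any new operation — we merely reuse the single distinguisher already under consideration — so the reasoning stays entirely within the class of efficient operations that \cref{def:comp_indist} quantifies over.
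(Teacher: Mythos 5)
Your proof is correct and takes essentially the same route as the paper: fix an arbitrary efficient POVM, telescope with the ordinary triangle inequality on $\abs{\tr[E_\lambda\rho_i]-\tr[E_\lambda\rho_{i+1}]}$, and conclude from \cref{def:comp_indist}. The paper's proof is exactly this one-line telescoping argument; your additional bookkeeping of the per-link thresholds $\lambda_0^{(i)}$ and their maximum is a correct (and slightly more careful) spelling-out of the same idea.
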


\begin{proof}
    Let $\{E, \1-E\}$ be an efficient POVM. The lemma follows from
    \begin{equation*}
        \abs{\tr[E \rho_1] - \tr[E \rho_n]} \leq \sum_{i=1}^{n-1} 
        \abs{\tr[E \rho_i] - \tr[E \rho_{i+1}]} \leq  \sum_{i=1}^{n-1} \epsilon_i. \qedhere
    \end{equation*}
\end{proof}

\begin{lemma}[Partitioning property of computational indistinguishability]\label{lem:ci_partitioning}
    Let $\sigma,\tau \in \Pos(H)$. Let $\{\Pi^i\}_{i\in [n]}$ be an efficient projective measurement. If $\sigma \csimeq_{\delta} \tau$, then there exists $\delta_i\geq 0$ such that 
    \begin{equation}
        \Pi^i \sigma \Pi^i \simeq_{\delta_i} \Pi^i \tau \Pi^i,
    \end{equation}
    for all $i\in [n]$, and $\sum_{i=1}^n \delta_i \leq 2\delta$.
\end{lemma}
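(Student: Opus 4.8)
The plan is to turn the single hypothesis $\sigma\csimeq_\delta\tau$ into the $n$ branch statements by \emph{combining distinguishers}: for each branch I extract a near-optimal efficient distinguisher for $(\Pi^i\sigma\Pi^i,\Pi^i\tau\Pi^i)$ and glue the $n$ of them (after restricting each to live inside $\Img(\Pi^i)$) into one efficient two-outcome measurement on $\calH$ whose advantage against $(\sigma,\tau)$ is, up to arbitrarily small slack, the \emph{sum} of the branch advantages; the hypothesis then caps that sum. Concretely, write $\Delta_i:=\Pi^i\sigma\Pi^i-\Pi^i\tau\Pi^i$, which is Hermitian and satisfies $\Delta_i=\Pi^i\Delta_i\Pi^i$. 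Let $p_i$ (resp.\ $m_i$) be the supremum of $\tr[E\Delta_i]$ (resp.\ $\tr[-E\Delta_i]$) over all $E$ such that $\{E,\1-E\}$ is an efficient POVM; both are finite (bounded by $\tr[\sigma]+\tr[\tau]$). Since relabelling outcomes sends $\{E,\1-E\}$ to $\{\1-E,E\}$, the optimal efficient distinguishing advantage for the pair $(\Pi^i\sigma\Pi^i,\Pi^i\tau\Pi^i)$ equals $\max(p_i,m_i)$, so setting $\delta_i:=\max(p_i,m_i)\le p_i+m_i$ makes $\Pi^i\sigma\Pi^i\csimeq_{\delta_i}\Pi^i\tau\Pi^i$ hold by definition. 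It then suffices to prove $\sum_i p_i\le\delta$ and $\sum_i m_i\le\delta$, as adding these gives $\sum_i\delta_i\le 2\delta$.

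Next I would prove $\sum_i p_i\le\delta$ (the argument for $\sum_i m_i\le\delta$ is identical with $\sigma$ and $\tau$ swapped). Fix $\eta>0$ and, for each $i$, pick an efficient POVM $\{E_i,\1-E_i\}$ with $\tr[E_i\Delta_i]\ge p_i-2^{-i}\eta$. Replace $E_i$ by $\Pi^iE_i\Pi^i$: this preserves the value $\tr[E_i\Delta_i]$ because $\Delta_i=\Pi^i\Delta_i\Pi^i$, preserves efficiency because the new POVM is the composition of the efficient coarse-grained measurement $\{\Pi^i,\1-\Pi^i\}$ with $\{E_i,\1-E_i\}$, and forces $0\le E_i\le\Pi^i$. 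Because the $\Pi^i$ are orthogonal projectors summing to $\1$, the operator $F:=\sum_i E_i$ obeys $0\le F\le\1$, so $\{F,\1-F\}$ is a POVM; it is implemented by running the efficient measurement $\{\Pi^j\}_{j\in[n]}$, reading the outcome $i$, then running $\{E_i,\1-E_i\}$, and accepting iff both stages accept. Its acceptance probability on any $\rho$ equals $\sum_i\tr[\Pi^iE_i\Pi^i\rho]=\sum_i\tr[E_i\rho]=\tr[F\rho]$, using $E_i=\Pi^iE_i\Pi^i$. Hence $\tr[F(\sigma-\tau)]=\sum_i\tr[E_i\Delta_i]\ge\sum_i p_i-\eta$, and applying $\sigma\csimeq_\delta\tau$ to the efficient POVM $\{F,\1-F\}$ gives $\sum_i p_i-\eta\le\delta$; letting $\eta\to 0$ yields $\sum_i p_i\le\delta$.

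The step that needs care is the efficiency of the glued POVM $\{F,\1-F\}$: one must check that ``measure $\{\Pi^j\}_j$, read $i$, run the circuit for $\{E_i,\1-E_i\}$'' is realised by a single circuit family of size bounded by a fixed polynomial in $\lambda$, and that prepending the resulting channel to the isometry/projective-measurement decompositions from \cref{def:efficient_operators} keeps the whole thing efficient in the sense of that definition. This is where one uses that $n=\poly(\lambda)$ and that the branch distinguishers can be taken efficient with respect to a common polynomial (harmless, since there are only polynomially many branches): the dispatch step is then a polynomial-size multiplexed unitary. The only other subtlety is that the suprema $p_i,m_i$ need not be attained among efficient distinguishers, which is exactly what the slack $\eta$ (split as $2^{-i}\eta$) absorbs. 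I expect this ``efficiency bookkeeping'' to be the main obstacle; the rest is the short computation above. Finally note the constant $2$ is necessary for this method and in fact tight, since branches where $\Delta_i$ is mostly positive contribute to $\sum_i p_i$ while branches where it is mostly negative contribute to $\sum_i m_i$, and each of $\sum_i p_i$ and $\sum_i m_i$ is only bounded by $\delta$ on its own.
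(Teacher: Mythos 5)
Your proof is correct and follows essentially the same route as the paper's: combine the branch-wise (near-)optimal efficient distinguishers into one efficient procedure that first measures $\{\Pi^i\}_i$ and then dispatches to the $i$-th distinguisher, and apply the hypothesis $\sigma \csimeq_\delta \tau$ once for each sign to get the factor $2$. The only differences are cosmetic refinements — you work with suprema plus an $\eta$-slack instead of assuming maximizers exist, and you phrase the glued distinguisher as the operator $F=\sum_i \Pi^i E_i \Pi^i$ rather than describing the algorithm directly — which, if anything, tightens the paper's argument.
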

\begin{proof}
    For $i\in[n]$, let 
    \begin{equation}
        \alpha_i \coloneqq \max \tr[A(\Pi^i \sigma \Pi^i - \Pi^i \tau\Pi^i)] \quad \text{and} \quad  \beta_i \coloneqq \max \tr[-B(\Pi^i \sigma \Pi^i - \Pi^i \tau\Pi^i)],
    \end{equation}
    where the max is taken over all efficient POVM elements $A$ and $B$. Because $A$ and $B$ can be zero, we see $\alpha_i,\beta_i\geq 0$. Let $A_i$ and $B_i$ be the corresponding maximizers. 

    By definition, $\delta_i \coloneqq \max\{\alpha_i,\beta_i\}$ equals how computationally indistinguishable $\Pi^i \sigma \Pi^i$ is from  $\Pi^i \tau\Pi^i$. As $\alpha_i,\beta_i\geq 0$, we have $\delta_i \leq \alpha_i+ \beta_i$, so it suffices to upper bound $\sum_{i=1}^n \alpha_i+\beta_i$.

    Consider the following efficient algorithm for distinguishing $\sigma$ and $\tau$: 
    \begin{enumerate}
        \item Measure $\{\Pi^i\}_i$ on the input state and record the outcome.
        \item If the outcome is $i$ then measure $\{A_i, \1-A_i\}$.
        \item Output $0$ if the result corresponds to $A_i$ and $1$ otherwise.
    \end{enumerate}
    The probability that this algorithm outputs $0$ on any input $\psi\in \PosH$ is
    \begin{equation}
        \sum_{i=1}^n \tr[\Pi^i\psi\Pi^i] \, \tr\Bigl[A_i \cdot \frac{\Pi^i\psi\Pi^i}{\tr[\Pi^i\psi\Pi^i]}\Bigr] = \sum_{i=1}^n \tr[A_i \Pi^i \psi \Pi^i].
    \end{equation}
    Therefore, by the definition of $\sigma \csimeq_{\delta} \tau$, we deduce
    \begin{equation}
        \sum_{i=1}^n \alpha_i = \sum_{i=1}^n \tr[A_i (\Pi^i \sigma \Pi^i - \Pi^i \tau \Pi^i)] \leq \delta.
    \end{equation}
    We can similarly show that $\sum_{i=1}^n\beta_i \leq \delta$. Therefore, $\sum_{i=1}^n \delta_i\leq 2\delta$ as required. 
\end{proof}

We record the following lemma.
\begin{lemma}[{\cite[Lemma 2.6]{mv2021selftest}}]
\label{lem:sum_unitary}
Let $U_1, U_2$ be efficient unitaries on $\calH$. Then, $(U_1 + U_2)^\dagger (U_1 + U_2)$ and $(U_1 - U_2)^\dagger (U_1 - U_2)$
are observables and there exists an efficient quantum algorithm that given a state $\psi \in \calD(\calH)$ outputs a bit $b$ with
\begin{equation}
    \pr[b=0|\psi] =
    \frac{1}{4} \tr[(U_1 +U_2)^\dagger (U_1+U_2)\psi] \quad \text{and} \quad
    \pr[b=1|\psi] =
    \frac{1}{4} \tr[(U_1 -U_2)^\dagger (U_1-U_2)\psi].
\end{equation}
\end{lemma}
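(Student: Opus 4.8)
The plan is to treat the two assertions separately: the ``observable'' claim is immediate from the definitions, while the algorithmic claim comes from a standard Hadamard-test-style circuit.

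For the first claim, recall that in this paper an observable is simply a Hermitian operator (a binary observable additionally has eigenvalues $\pm 1$, but that is not what is claimed here). For any $A \in \LH$ the operator $A^\dagger A$ is positive semi-definite, hence Hermitian, so taking $A = U_1 + U_2$ and $A = U_1 - U_2$ shows that $(U_1 \pm U_2)^\dagger (U_1 \pm U_2)$ are observables. I would also record the identity $(U_1 \pm U_2)^\dagger(U_1 \pm U_2) = 2\1 \pm (U_1^\dagger U_2 + U_2^\dagger U_1)$: it makes manifest that both operators are nonnegative with operator norm at most $4$ and that they sum to $4\1$, the last fact being exactly why $\{\tfrac14 (U_1+U_2)^\dagger(U_1+U_2),\, \tfrac14(U_1-U_2)^\dagger(U_1-U_2)\}$ is a legitimate two-outcome POVM, consistent with the two claimed probabilities summing to $\tr[\psi] = 1$.

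For the algorithmic claim I would adjoin one ancilla qubit $A$ initialized in $\ket{0}$ and run the circuit: (i) a Hadamard on $A$; (ii) the controlled unitary $C \coloneqq \ketbrasame{0}_A \otimes U_1 + \ketbrasame{1}_A \otimes U_2$; (iii) another Hadamard on $A$; (iv) a computational-basis measurement of $A$, outputting the result $b$. On a pure input $\ket{\psi}$, the state just before step (iv) is $\tfrac12\bigl(\ket{0}_A \otimes (U_1+U_2)\ket{\psi} + \ket{1}_A \otimes (U_1-U_2)\ket{\psi}\bigr)$, so $\pr[b=0\mid\psi] = \tfrac14 \norm{(U_1+U_2)\ket{\psi}}^2 = \tfrac14 \bra{\psi}(U_1+U_2)^\dagger(U_1+U_2)\ket{\psi}$ and likewise $\pr[b=1\mid\psi] = \tfrac14\bra{\psi}(U_1-U_2)^\dagger(U_1-U_2)\ket{\psi}$. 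Extending to a general $\psi = \sum_k p_k \ketbrasame{\psi_k} \in \calD(\calH)$ by linearity of the trace gives exactly the two identities in the statement.

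For efficiency, I would note that $C$ factors as $C = \bigl(\ketbrasame{0}_A \otimes U_1 + \ketbrasame{1}_A \otimes \1\bigr)\bigl(\ketbrasame{0}_A \otimes \1 + \ketbrasame{1}_A \otimes U_2\bigr)$, i.e.\ an anti-controlled $U_1$ (applied when the ancilla reads $0$) followed by a controlled $U_2$; from a $\poly(\lambda)$-size circuit for $U_i$ one obtains a $\poly(\lambda)$-size circuit for its (anti-)controlled version by replacing each gate with its controlled version, so $C$, and hence the whole circuit (which adds only two Hadamards and one measurement), is efficient in the sense of \cref{def:efficient_operators}. I do not anticipate a genuine obstacle; the only points needing a little care are (a) making explicit, at the level of circuit descriptions, the reduction from ``efficient $U_i$'' to ``efficient controlled-$U_i$'', and (b) handling that $\psi$ need not be pure, which the linearity step takes care of.
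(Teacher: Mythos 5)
Your proof is correct. Note that the paper itself does not reprove this lemma (it cites \cite{mv2021selftest} and omits the proof), but its accompanying remark points out that the original argument uses a purification step of unclear efficiency and that one could instead work directly with mixed states --- which is exactly what you do: your Hadamard-test circuit with the controlled unitary $\ketbrasame{0}_A\otimes U_1+\ketbrasame{1}_A\otimes U_2$, followed by extension from pure inputs to arbitrary $\psi\in\calD(\calH)$ by linearity of the trace, is the standard linear-combination-of-unitaries construction that the paper also employs in the proof of \cref{lem:lifting_proj}. The only points worth making fully explicit, as you anticipate, are the gate-level reduction from an efficient circuit for $U_i$ to an efficient (anti-)controlled-$U_i$ in the sense of \cref{def:efficient_operators}, and that the probabilities are affine in $\psi$ so the mixed-state case follows with no purification needed; neither is a gap.
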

\begin{remark}
The proof in \cite{mv2021selftest} uses a purification step that we do not know how to make efficient. However, they could have worked directly with mixed states to derive the result (see the proof of \cref{lem:lifting_proj}).
\end{remark}

We end this section with two lemmas that allow us to use the computational indistinguishability between two states $\psi$ and $\psi'$ to ``lift'' a true statement about $\psi$ onto $\psi'$.

\begin{lemma}[{Lifting lemma, see \cite[Lemma 2.25 (i--v)]{mv2021selftest}}]\label{lem:lifting}
    Let $\psi, \psi' \in \calD(\calH)$ be such that
    $\psi \csimeq_\delta \psi'$.
    \begin{enumerate}
        \item Let $A$ be an efficient binary observable
        on $\calH$. Then, $\tr[A\psi] \simeq_{2\delta} \tr[A\psi']$.
        \item Let $A,B$ be efficient binary observables on 
        $\calH$. Then,
        \begin{align}
            A \simeq_{\epsilon, \psi} B \implies 
            A \simeq_{\delta+\epsilon, \psi'} B.
        \end{align}
        \item Let $A,B$ be efficient binary observables on 
        $\calH$. Then,
        \begin{align}
            [A, B] \simeq_{\epsilon,\psi} 0
            \implies [A, B] \simeq_{\delta+\epsilon, \psi'} 0.
        \end{align}
        \item Let $A,B$ be efficient binary observables on 
        $\calH$. Then,
        \begin{align}
            \{A, B\} \simeq_{\epsilon,\psi} 0
            \implies \{A, B\} \simeq_{\delta+\epsilon, \psi'} 0.
        \end{align}
        \item Let $\calH'$ be another Hilbert space with $\dim(\calH') \geq \dim(\calH)$, $U$ an efficient unitary on $\calH$, $B$ an efficient binary observable on $\calH'$, and $V:\calH \rightarrow \calH'$ an efficient isometry. Then
        \begin{equation}\label{eq:lifting_vbvu}
            \real \tr[V^{\dagger}BV \, U \psi] \simeq_{2\delta} \real \tr[V^{\dagger}BV \, U \psi'] 
            \quad \text{and} \quad
            \real \tr[U \, V^{\dagger}BV \psi] \simeq_{2\delta} \real \tr[U \, V^{\dagger}BV \psi'].
        \end{equation}
    \end{enumerate}
\end{lemma}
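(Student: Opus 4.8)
The plan is to prove all five parts by one recipe: express the left-hand quantity about $\psi$ as $\tr[E\psi]$ (or a short linear combination of such) for an \emph{efficient} POVM element $E$, then invoke the defining inequality of $\psi\csimeq_\delta\psi'$ in \cref{def:comp_indist}, which replaces $\psi$ by $\psi'$ at a cost of $\delta$ per POVM element. Part 1 is immediate this way: since $A$ is an efficient binary observable, $\{A^{(0)},A^{(1)}\}$ is an efficient projective measurement (\cref{def:efficient_operators}), so $A^{(0)}$ is an efficient POVM element; as $A=A^{(0)}-A^{(1)}$ and $\tr\psi=\tr\psi'=1$ we get $\tr[A\psi]=2\tr[A^{(0)}\psi]-1$, whence $\abs{\tr[A\psi]-\tr[A\psi']}=2\abs{\tr[A^{(0)}(\psi-\psi')]}\le 2\delta$.

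For parts 2--4 the point is to realize the relevant state-dependent norms as measurement outcome probabilities. First note that an efficient binary observable is in particular an efficient unitary: its defining ancilla circuit maps $\ket{\phi}\mapsto A\ket{\phi}$ and returns the ancilla to $\ket{0}$, which is an efficient implementation of the unitary $A$; hence $A$, $B$, $AB$, $BA$ are all efficient unitaries. Writing $U_1\coloneqq AB$ and $U_2\coloneqq BA$ we have $\{A,B\}=U_1+U_2$ and $[A,B]=U_1-U_2$, so \cref{lem:sum_unitary} (applied once with the pair $(U_1,U_2)$, and once with $(A,B)$ in place of $(U_1,U_2)$) supplies efficient two-outcome POVMs whose outcome probabilities on $\psi$ are $\tfrac14\norm{A-B}_\psi^2$, $\tfrac14\norm{[A,B]}_\psi^2$, and $\tfrac14\norm{\{A,B\}}_\psi^2$ respectively. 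Each of these is at most $\tfrac14$ times the corresponding hypothesis bound on $\psi$, so \cref{def:comp_indist} transfers each bound to $\psi'$ with an additive error of order $\delta$; unwinding the $\tfrac14$ factors gives the three statements $A\simeq_{\delta+\epsilon,\psi'}B$, $[A,B]\simeq_{\delta+\epsilon,\psi'}0$, and $\{A,B\}\simeq_{\delta+\epsilon,\psi'}0$, with the exact constant on $\delta$ coming from bookkeeping.

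For part 5, I would first reduce the second equation to the first: since $\psi$ and $B$ are Hermitian, $\real\tr[U\,V^\dagger BV\psi]=\real\overline{\tr[U\,V^\dagger BV\psi]}=\real\tr[V^\dagger BV\,U^\dagger\psi]$ and $U^\dagger$ is again an efficient unitary, so only $\real\tr[V^\dagger BV\,U\psi]$ needs handling. Next, $W\coloneqq BVU\colon\calH\to\calH'$ is an efficient isometry: $W^\dagger W=U^\dagger V^\dagger B^2 V U=\1$, and writing $V=U_V(\1\otimes\ket{0})$ gives $W=\bigl((BU_V)(U\otimes\1)\bigr)(\1\otimes\ket{0})$ with $(BU_V)(U\otimes\1)$ an efficient unitary. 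Then run the Hadamard test with the two isometries $V$ and $W$: prepare a control qubit in $\ket{+}$ together with $\psi$ and a $\ket{0}$ ancilla, conditionally apply the unitary implementing $V$ (when the control is $0$) or the unitary implementing $W$ (when the control is $1$) to the input-plus-ancilla register, apply a Hadamard to the control, and measure it; by linearity this efficient procedure outputs $0$ with probability $\tfrac12+\tfrac12\real\tr[V^\dagger W\psi]=\tfrac12+\tfrac12\real\tr[V^\dagger BVU\psi]$. Applying \cref{def:comp_indist} to the corresponding efficient POVM element and multiplying by $2$ yields $\real\tr[V^\dagger BVU\psi]\simeq_{2\delta}\real\tr[V^\dagger BVU\psi']$.

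The main obstacle is conceptual rather than computational: the left-hand quantities are built from non-Hermitian operators ($A-B$, $AB$, $V^\dagger BVU$), so they are not a priori outcome probabilities of any measurement, whereas computational indistinguishability can only be used against genuinely efficient POVMs. The fix is the interference/Hadamard-test constructions above, which for parts 2--4 are packaged inside \cref{lem:sum_unitary}. Part 5 is the most delicate instance: one must first verify that $BVU$ is again an efficient isometry and then build a two-isometry interference circuit, tracking the ancilla registers carefully so that all the operators involved really are efficiently implementable.
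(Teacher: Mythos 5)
Your proof is correct and, in substance, follows the same route as the paper: for parts 1--4 it reproduces the argument of \cite{mv2021selftest} that the paper cites verbatim rather than re-proving (realize the relevant squared state-dependent norms as outcome probabilities of efficient two-outcome measurements via \cref{lem:sum_unitary} with the pairs $(A,B)$ and $(AB,BA)$, then transfer with \cref{def:comp_indist}), and for part 5 --- the only part the paper proves --- you give a mild variant of the paper's argument. Concretely, you reduce the second equation to the first by Hermiticity (replacing $U$ by $U^{\dagger}$) where the paper replaces $V$ by $VU^{\dagger}$, and you run a controlled-$V$ / controlled-$(BVU)$ Hadamard test directly, whereas the paper writes $V = W(\1\otimes\ket{0})$ and applies \cref{lem:sum_unitary} to the unitaries $U\otimes\1$ and $W^{\dagger}BW$ acting on $\psi\otimes\ketbrasame{0}$. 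Both give exactly the $2\delta$ bound; your route additionally needs the (standard, but worth stating) facts that $BVU$ is an efficient isometry and that controlled versions of the implementing unitaries are efficient, which the paper's conjugation trick $W^{\dagger}BW$ sidesteps. Your part 1 is also fine and gives the stated $2\delta$.

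The one soft spot is the constant in parts 2--4, which you defer to ``bookkeeping.'' The bookkeeping your construction actually yields is $\epsilon + 4\delta$, not $\delta+\epsilon$: by \cref{lem:sum_unitary} the efficiently accessible probabilities are $\tfrac14\norm{A-B}_{\psi}^2$, $\tfrac14\norm{[A,B]}_{\psi}^2$, and $\tfrac14\norm{\{A,B\}}_{\psi}^2$, so a $\delta$ bound on the difference of probabilities becomes a $4\delta$ bound on the squared norms after unwinding the $\tfrac14$. Since every downstream use of this lemma in the paper is through $\approx$ (i.e.\ up to constant factors), the discrepancy is harmless, but as written your argument does not deliver the literal constant $\delta+\epsilon$ claimed in the statement; you should either track the factor $4$ explicitly or explain how it is removed.
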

\begin{proof}
The first four parts of the lemma (and their proofs) correspond to \cite[Lemma 2.25 (i--iv)]{mv2021selftest} verbatim, except that we record the constant factors arising from their proof. The fifth part is implicit in \cite[Proof of Lemma 2.25(v)]{mv2021selftest}. We present a full proof of the fifth part for completeness.

First, note that the second equation in \cref{eq:lifting_vbvu} follows from the first by setting $V$ to $VU^{\dagger}$, which is efficient. Therefore, it suffices to prove the first equation in \cref{eq:lifting_vbvu}. Let $W \in \calL(\calH')$ be an efficient unitary such that
$V = W(\1 \otimes \ket{0}_{\calH''})$ where $\dim(\calH'') =
\dim(\calH')/\dim(\calH)$. $\calH''$ exists because we can add dimension to $\calH'$.
Since $U \otimes \1_{\calH''}$ and $W^\dagger B W$ are efficient unitaries,
we can apply \cref{lem:sum_unitary} to see that
there exists an efficient algorithm that outputs a bit $b$ with
\begin{equation}
    \pr[b=0|\psi] = \frac{1}{4}\tr[
    (U\otimes\1_{\calH''} - W^\dagger B W)^\dagger
    (U\otimes\1_{\calH''} - W^\dagger B W) (\psi \otimes \ketbra{0}{0}_{\calH''})].
\end{equation}
Since $\psi \csimeq_\delta \psi'$, this means
\begin{equation}
\begin{aligned}
    &\tr[
    (U\otimes\1_{\calH''} - W^\dagger B W)^\dagger
    (U\otimes\1_{\calH''} - W^\dagger B W) (\psi' \otimes \ketbra{0}{0}_{\calH''})] \\
    \simeq_{4\delta}
    &\tr[
    (U\otimes\1_{\calH''} - W^\dagger B W)^\dagger
    (U\otimes\1_{\calH''} - W^\dagger B W) (\psi \otimes \ketbra{0}{0}_{\calH''})].
\end{aligned}
\end{equation}
On the other hand, we have
\begin{align}
    &\tr[
    (U\otimes\1_{\calH''} - W^\dagger B W)^\dagger
    (U\otimes\1_{\calH''} - W^\dagger B W) (\psi' \otimes \ketbra{0}{0}_{\calH''})]
    = 2 - 2\real \tr[V^\dagger B VU \psi'], \\
    &\tr[
    (U\otimes\1_{\calH''} - W^\dagger B W)^\dagger
    (U\otimes\1_{\calH''} - W^\dagger B W) (\psi \otimes \ketbra{0}{0}_{\calH''})]
    = 2 - 2\real \tr[V^\dagger B VU \psi].
\end{align}
Therefore,
\begin{equation}
    \real \tr[V^\dagger B VU \psi] 
    \simeq_{2\delta} \real \tr[V^\dagger B VU \psi'],
\end{equation}
as required.
\end{proof}

In our work, we will need a new type of lifting lemma for handling projective measurements.

\begin{lemma}[Lifting-under-projections lemma]\label{lem:lifting_proj}
   Let $\{\Pi^u\}_{u\in \{0,1\}^n}$ and $\{P^u\}_{u\in \{0,1\}^n}$ be two efficient projective measurements on $\calH$. Let $\psi, \psi' \in \PosH$ be such that $\psi\csimeq_\delta \psi'$ for some $\delta\geq 0$. Then, there exists $\alpha_u, \beta_u  \geq 0$ for $u\in \{0,1\}^n$ such that 
    \begin{equation}\label{eq:sum_proj_partu}
    \begin{aligned}
        (\Pi^u + P^u) \psi (\Pi^u + P^u) \csimeq_{\alpha_u}  (\Pi^u + P^u) \psi' (\Pi^u + P^u),
        \\
        (\Pi^u - P^u) \psi (\Pi^u - P^u) \csimeq_{\beta_u}  (\Pi^u - P^u) \psi' (\Pi^u - P^u),
    \end{aligned}
    \end{equation}
    where $\sum_{u\in \{0,1\}^n} (\alpha_u + \beta_u) \leq 4\delta$. 
    
    Moreover, let $\{Q^w\}_{w\in \{0,1\}^m}$ be another efficient projective measurement such that $Q^w$ commutes with $\Pi^u$ for all $w\in \{0,1\}^m$ and $u\in \{0,1\}^n$. Then, we have
    \begin{equation}
        \sum_{w \in \{0,1\}^m, \, u\in \{0,1\}^n} \abs{\real \tr[Q^w \Pi^u P^u \psi] - \real \tr[Q^w \Pi^u P^u \psi']} \leq 2\delta.
    \end{equation}
\end{lemma}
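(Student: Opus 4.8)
The plan is to mimic the structure of \cref{lem:ci_partitioning} and of part 5 of \cref{lem:lifting}: in each case we realize the quantity whose robustness we want as (an affine function of) the acceptance probability of an \emph{efficient} quantum algorithm built from the Naimark dilations of the given projective measurements, and then invoke the hypothesis $\psi \csimeq_\delta \psi'$. Recall that, by part 4 of \cref{def:efficient_operators}, the isometries $V_\Pi := \sum_u \ket{u}\otimes \Pi^u$, $V_P := \sum_u \ket{u}\otimes P^u$, and $V_Q := \sum_w \ket{w}\otimes Q^w$ are efficient, and that composing and padding efficient isometries keeps them efficient, so $\psi \csimeq_\delta \psi' \implies V\psi V^\dagger \csimeq_\delta V\psi' V^\dagger$ for any efficient isometry $V$.

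For the first part I would introduce the efficient isometry $W := \tfrac{1}{\sqrt 2}\bigl(\ket{0}\otimes V_\Pi + \ket{1}\otimes V_P\bigr)$ (a Hadamard on a control qubit followed by the two controlled dilations; it is an isometry since $V_\Pi^\dagger V_\Pi = V_P^\dagger V_P = \1$). On the enlarged space, consider the efficient projective measurement that measures the control qubit in the Hadamard basis and the outcome register in the computational basis; a short calculation shows its $(s,u)$-branch, for $s \in \{+,-\}$, sends $\psi$ to $\tfrac14 \ketbrasame{s}\otimes \ketbrasame{u}\otimes (\Pi^u + sP^u)\psi(\Pi^u + sP^u)$, and likewise for $\psi'$. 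Since $W\psi W^\dagger \csimeq_\delta W\psi' W^\dagger$, applying \cref{lem:ci_partitioning} to this pair with the above measurement yields nonnegative $\delta_{s,u}$ with $\sum_{s,u}\delta_{s,u} \le 2\delta$ such that the $(s,u)$-branches of $W\psi W^\dagger$ and $W\psi' W^\dagger$ are $\csimeq_{\delta_{s,u}}$; discarding the fixed tensor factors $\ketbrasame{s}\otimes\ketbrasame{u}$ and rescaling away the $\tfrac14$ gives $(\Pi^u + sP^u)\psi(\Pi^u+sP^u) \csimeq_{O(\delta_{s,u})} (\Pi^u+sP^u)\psi'(\Pi^u+sP^u)$. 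Taking $\alpha_u,\beta_u$ to be these rescaled $\delta_{+,u},\delta_{-,u}$ and carefully tracking constants gives $\sum_u(\alpha_u+\beta_u) = O(\delta)$, as claimed.

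For the second part I want an efficient algorithm whose branch-$(w,u)$ acceptance probability is an affine function of $\real\tr[Q^w\Pi^u P^u\psi]$ with $(w,u)$-independent coefficients. The idea is a ``sum of unitaries'' gadget in the spirit of \cref{lem:sum_unitary} and part 5 of \cref{lem:lifting}: writing each projector as $\tfrac12(\1 + R)$ with $R$ the corresponding reflection (a binary observable, hence an efficient unitary after dilation), one can assemble two efficient unitaries $U_1, U_2$ on a common padded space — one coherently measuring $P$ and then $\Pi$ into a register $R_u$, the other coherently measuring $Q$ into a register $R_w$ — so that, via \cref{lem:sum_unitary} (in its mixed-state form noted in the remark after that lemma), the output bit satisfies $\pr[b=0,(w,u)\mid \psi] = c_{w,u} - c'\,\real\tr[Q^w\Pi^u P^u\psi]$ with $\sum_{w,u}c_{w,u}\le 1$ and $c'>0$ independent of $(w,u)$; the register structure is exactly what stops $Q^w$ from being over-counted once per $u$. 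Given such a gadget, I would partition $\{(w,u)\}$ by the sign of $\real\tr[Q^w\Pi^u P^u(\psi-\psi')]$, run the algorithm once per sign with the matching conditional continuation, and invoke $\psi\csimeq_\delta\psi'$ twice; since $c'$ enters as a factor of $2$ in the affine relation (just as in part 5 of \cref{lem:lifting}), the two applications give $\sum_{w,u}\abs{\real\tr[Q^w\Pi^u P^u(\psi-\psi')]} \le 2\delta$.

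I expect the gadget construction in the second part to be the main obstacle: one must (i) assemble $U_1,U_2$ so that the implied POVM over $(w,u,b)$ is genuinely sub-normalized — handling the fact that a projector appearing ``once per outcome of the other measurement'' would otherwise inflate the normalization by an exponential factor, which is precisely why the error must be kept independent of the number of outcomes — and (ii) check that the branch probability is exactly affine in $\real\tr[Q^w\Pi^u P^u\psi]$ with the constant that lands on $2\delta$ rather than a larger multiple of $\delta$. The mixed-state version of \cref{lem:sum_unitary} and the freedom to pad Hilbert spaces are the tools I anticipate using here; the rest is bookkeeping parallel to the proofs of \cref{lem:ci_partitioning} and part 5 of \cref{lem:lifting}.
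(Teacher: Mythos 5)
Your first part is essentially the paper's own argument in light disguise: your isometry $W$ is, up to a Hadamard on the control qubit, exactly the operator $\frac{1}{2}\bigl(\ket{0}\otimes\sum_u\ket{u}\otimes(\Pi^u+P^u)+\ket{1}\otimes\sum_u\ket{u}\otimes(\Pi^u-P^u)\bigr)$ used in the paper's proof, and outsourcing the bookkeeping to \cref{lem:ci_partitioning} is legitimate, since that lemma is proved by precisely the combined-distinguisher construction the paper writes out here by hand. The one caveat is quantitative: each $(s,u)$-branch of $W\psi W^\dagger$ carries the sub-normalized state $\tfrac14(\Pi^u+sP^u)\psi(\Pi^u+sP^u)$, so undoing the $\tfrac14$ turns $\sum_{s,u}\delta_{s,u}\le 2\delta$ into $\sum_u(\alpha_u+\beta_u)\le 8\delta$ rather than the stated $4\delta$; since every application of this lemma has $\delta=\negl$ or absorbs constants into $O(\cdot)$, this slack is harmless, but as written you do not recover the stated constant.

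The genuine gap is in the ``moreover'' part. Your argument there rests on a sum-of-unitaries gadget whose branch-$(w,u)$ acceptance probability is affine in $\real\tr[Q^w\Pi^uP^u\psi]$ with coefficients independent of $(w,u)$, and you explicitly leave its construction open; that construction is the entire content of the claim (it is exactly where outcome-independence of the error has to be enforced), so the proof is incomplete at its crux. Moreover, no new gadget is needed: the second part follows from the first by algebra. Set $\rho^{\pm}_{w,u}\coloneqq Q^w(\Pi^u\pm P^u)\psi(\Pi^u\pm P^u)Q^w$ and $\rho'^{\pm}_{w,u}$ analogously for $\psi'$. The polarization identity $(\Pi^u+P^u)\psi(\Pi^u+P^u)-(\Pi^u-P^u)\psi(\Pi^u-P^u)=2(\Pi^u\psi P^u+P^u\psi\Pi^u)$ expresses the real-part trace in the statement as $\tfrac14\bigl(\tr[\rho^{+}_{w,u}]-\tr[\rho^{-}_{w,u}]\bigr)$. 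Now apply \cref{lem:ci_partitioning} with the efficient measurement $\{Q^w\}_w$ to each pair produced by your first part: this yields $\alpha_{w,u},\beta_{w,u}\ge 0$ with $\sum_w\alpha_{w,u}\le 2\alpha_u$ and $\sum_w\beta_{w,u}\le 2\beta_u$ such that $\rho^{+}_{w,u}\csimeq_{\alpha_{w,u}}\rho'^{+}_{w,u}$ and $\rho^{-}_{w,u}\csimeq_{\beta_{w,u}}\rho'^{-}_{w,u}$, and taking the trivial (efficient) POVM element $\1$ as the distinguisher bounds $\abs{\tr[\rho^{\pm}_{w,u}]-\tr[\rho'^{\pm}_{w,u}]}$ by $\alpha_{w,u}$, respectively $\beta_{w,u}$. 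Summing gives $\sum_{w,u}\abs{\real\tr[Q^w\Pi^uP^u\psi]-\real\tr[Q^w\Pi^uP^u\psi']}\le\tfrac14\sum_{w,u}(\alpha_{w,u}+\beta_{w,u})\le\tfrac12\sum_u(\alpha_u+\beta_u)=O(\delta)$, independent of the $2^{n+m}$ outcomes --- exactly the property you were trying to engineer with the gadget, and without the sign-partitioning step or the unresolved normalization issues you flag as the main obstacle.
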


\begin{proof}
    Let $\rho^{\pm}_u \coloneqq  (\Pi^u \pm P^u) \psi (\Pi^u \pm P^u) $ and $\rho'^\pm_u \coloneqq  (\Pi^u \pm P^u) \psi' (\Pi^u \pm P^u)$. Let
    \begin{equation}
        \alpha^+_u \coloneqq \max_{A_u} \tr[A_u(\rho^{+}_u - \rho'^{+}_u)] \geq 0  \quad \text{and} \quad \beta^+_u \coloneqq \max_{B_u} \tr[B_u (\rho^{-}_u - \rho'^{-}_u)] \geq 0,
    \end{equation}
    where the max is taken over efficient POVM elements $A_u$ and $B_u$. Let $A_u^*$ and $B_u^*$ denote the respective maximizers. By the linear combination of unitaries technique \cite{childswiebe2012lcu} (also see \cref{lem:sum_unitary}) and the fact that $\{\Pi^u\}_{u\in \{0,1\}^n}$ and $\{P^u\}_{u\in \{0,1\}^n}$ are efficient, we see that the following isometry is efficient:
    \begin{equation}
        V \coloneqq \frac{1}{2} \Bigl( \ket{0} \otimes \sum_{u\in \{0,1\}^n} \ket{u} \otimes (\Pi^u + P^u) + \ket{1} \otimes \sum_{u\in \{0,1\}^n} \ket{u} \otimes (\Pi^u - P^u) \Bigr).
    \end{equation}

    In addition, the POVM $\{\Gamma, \1-\Gamma\}$ is efficient, where 
    \begin{equation}
        \Gamma \coloneqq \sum_{u\in \{0,1\}^n} \ketbrasame{0,u} \otimes A^*_u + \sum_{u\in \{0,1\}^n} \ketbrasame{1,u} \otimes B^*_u.
    \end{equation}
    Therefore, there exists an efficient algorithm that, given any $\sigma \in \PosH$, outputs $0$ with probability
    \begin{equation}  
         \tr[\Gamma \calV \sigma \calVdag] = \frac{1}{2} \Bigl( \sum_{u\in \{0,1\}^n} \tr[A_u^*(\Pi^u + P^u)\sigma(\Pi^u + P^u)] + \tr[B_u^*(\Pi^u - P^u)\sigma(\Pi^u - P^u)]\Bigr).
    \end{equation}
    Therefore, by the definition of $\psi \csimeq_\delta \psi'$, we have $ \tr[\Gamma \calV \rho \calVdag] -  \tr[\Gamma \calV \rho' \calVdag] \leq \delta$. That is,
    \begin{equation}
        \frac{1}{2}\sum_{u\in \{0,1\}^n} (\alpha^+_u + \beta^+_u) \leq \delta.
    \end{equation}
    Now, let 
    \begin{equation}
        \alpha^-_u \coloneqq \max_{A_u} -\tr[A_u(\rho^{+}_u - \rho'^{+}_u)] \geq 0  \quad \text{and} \quad \beta^-_u \coloneqq \max_{B_u} -\tr[B_u (\rho^{-}_u - \rho'^{-}_u)] \geq 0,
    \end{equation}
    where the maximization is again over efficient POVM elements $A_u$ and $B_u$. We can similarly show that 
    \begin{equation}
        \frac{1}{2}\sum_{u\in \{0,1\}^n} (\alpha^-_u + \beta^-_u) \leq \delta.
    \end{equation}
    
    But $\alpha_u \coloneqq \max\{\alpha^+_u, \alpha^-_u\}$  equals how computational indistinguishable $\rho^+_u$ and $\rho'^+_u$ are. Likewise $\beta_u \coloneqq \max\{\beta^+_u, \beta^-_u\}$ equals how computational indistinguishable $\rho^-_u$ and $\rho'^-_u$ are. Therefore, we obtain
    \begin{equation}
        \frac{1}{2}\sum_{u\in \{0,1\}^n} (\alpha_u + \beta_u) \leq \frac{1}{2}\sum_{u\in \{0,1\}^n} (\alpha^+_u + \beta^+_u + \alpha^-_u + \beta^-_u) \leq 2\delta.
    \end{equation}
    Hence the first part of the lemma. 
    
    Now consider the second, ``moreover'', part of the lemma. Let $\rho^{\pm}_{w,u} \coloneqq  Q^w (\Pi^u \pm P^u) \psi (\Pi^u \pm P^u) Q^w$ and  $\rho'^\pm_{w,u} \coloneqq  Q^w (\Pi^u \pm P^u) \psi' (\Pi^u \pm P^u) Q^w$. Since $\{Q^w\}_{w\in \{0,1\}^m}$ is efficient, we can use the partitioning property of computational indistinguishability (\cref{lem:ci_partitioning}) to deduce that there exists $\alpha_{w,u},\beta_{w,u}\geq0$ for $u\in \{0,1\}^n, w\in \{0,1\}^m$ such that
    \begin{equation}
        \rho^{+}_{w,u} \csimeq_{\alpha_{w,u}} \rho^{+}_{w,u} \quad \text{and} \quad \rho^{-}_{w,u} \csimeq_{\beta_{w,u}} \rho^{-}_{u,w},
    \end{equation}
    where $\sum_{w\in \{0,1\}^m} \alpha_{w,u} \leq 2\alpha_u$ and $\sum_{w\in \{0,1\}^m} \beta_{w,u} \leq 2\beta_u$. 
    
    The second part of the lemma then follows from the first part by
    \begin{equation}
    \begin{aligned}
        &\sum_{w \in \{0,1\}^m, \,u\in \{0,1\}^n} \abs{\real \tr[Q^w \Pi^u P^u \psi] - \real \tr[Q^w\Pi^u P^u \psi'] }
        \\
        &= \sum_{w \in \{0,1\}^m, \,u\in \{0,1\}^n} \Bigl| \frac{1}{4} \Bigl(\tr[\rho^+_{w,u}]- \tr[\rho^-_{w,u}] - (\tr[\rho'^+_{w,u}]- \tr[\rho'^-_{w,u}]) \Bigr) \Bigr|
        \\
        &\leq \frac{1}{4} \sum_{w \in \{0,1\}^m, \,u\in \{0,1\}^n} (\alpha_{w,u} + \beta_{w,u}) \leq \frac{1}{2} \sum_{u\in \{0,1\}^n} (\alpha_u + \beta_u) \leq 2 \delta,
    \end{aligned}
    \end{equation}
    where the first equality uses the fact that $Q^w$ commutes with $\Pi^u$ for all $w\in \{0,1\}^m$ and $u\in \{0,1\}^n$.
\end{proof}

\section{Completeness of self-testing protocol}
\label{sec:completeness}

In this section, we introduce our self-testing protocol in Fig.~\ref{fig:protocol}. We then prove \cref{thm:completeness} by describing an efficient honest quantum device that passes our protocol with probability $\geq 1 - 2N \cdot \negl$ and noting that the verification can be efficiently performed classically.

\begin{figure}
\centering
\scalebox{0.785}{
\begin{tabular}{p{17cm}}
\hline
\smallskip
1. Input: $\lambda \in \mathbb{N}$. Set $N = \poly(\lambda)$. Given a distribution $\mu$ on $\{0,1\}^{2N}$.  Sample $\theta\leftarrow_U \thetaset$ uniformly at random. Sample $2N$ key-trapdoor pairs $(k_1,t_{k_1}),\ldots,(k_{2N},t_{k_{2N}})$ from an ENTCF according to $\theta$ as follows:
\begin{enumerate}[leftmargin=50pt]
    \item[$\theta \in \mathrm{[} 2N \mathrm{]} $:] the $\theta$-th key-trapdoor pair is sampled from $\Gen_{\calF}(1^{\lambda})$ and the remaining $2N-1$ pairs are all sampled from $\Gen_{\calG}(1^{\lambda})$.
    \item[$\theta = 0$:] all the key-trapdoor pairs are sampled from
    $\Gen_{\calG}(1^{\lambda})$.
    \item[$\theta = \diamond$:] all the key-trapdoor pairs are sampled from
    $\Gen_{\calF}(1^{\lambda})$.
\end{enumerate}
Send the keys $k = (k_1,\ldots,k_{2N})$ to the device.

\\

 2.    Receive $y = (y_1,\ldots,y_{2N})\in \calY^{2N}$ from the device.
    
\\

  3.  Sample round type ``preimage" or ``Hadamard" uniformly at random and send to the device.
  
\\

   \underline{Case ``preimage''}: receive 
   \begin{equation*}
   (b,x) = (b_1,\ldots,b_{2N}, x_1,\ldots,x_{2N})
   \end{equation*}
   from the device, where $b \in \{0,1\}^{2N}$ and $x\in \{0,1\}^{2Nw}$.

  If $\CHK(k_i,y_i,b_i,x_i)=0$ for all $i\in [2N]$, \textbf{accept}, else \textbf{reject}. 
  
\\

    \underline{Case ``Hadamard''}:
    receive 
    \begin{equation*}
        d=(d_1,\ldots,d_{2N}) \in \{0,1\}^{2Nw}
    \end{equation*}
    from the device.
    
\\

    4. With probability $1/2$, sample $q\leftarrow_U \{0^{2N}, 1^{2N}, 0^N1^N, 1^N0^N\}$ uniformly at random,
    and with probability $1/2$
    sample $q\leftarrow_{\mu} \{0,1\}^{2N}$ according to the distribution $\mu$. Send $q$ to the device.
    
\\

    \hspace{4.15mm} Receive $u\in \{0,1\}^{2N}$ from the device.
    \begin{enumerate}[\text{Case} A.,leftmargin=55pt]
     \item $\theta = 0$ and
     \begin{enumerate}[leftmargin=5pt]
         \item[if] $q_i = 0$ and $\hatb(k_i,y_i)\neq u_i$ for some $i \in [2N]$, \textbf{reject},
         \item[else] \textbf{accept}.
     \end{enumerate}
        \item $\theta \in [2N]$ and
        \begin{enumerate}[leftmargin=5pt]
            \item[if] $q_i = 0$ and $\hatb(k_i,y_i)\neq u_i$ for some $i\neq \theta$, \textbf{reject},
            \item[if] $q_\theta = 1$ and  $\hath(k_\theta,y_\theta,d_\theta) \oplus 
            \hatb(k_{\theta+N}, y_{\theta+N})\neq u_\theta$, \textbf{reject}, \item[else] \textbf{accept}.
        \end{enumerate}
        \item $\theta = \diamond$ and
        \begin{enumerate}[leftmargin=5pt]
            \item[if] $q_i = 0$, $q_{N+i} = 1$ and $u_i \oplus u_{N+i} \neq 
            \hat{h}(k_{N+i},y_{N+i}, d_{N+i})$ for some $i \in [N]$, \textbf{reject}, 
            \item[if] $q_i=1$, $q_{N+i} = 0$ and $u_i \oplus u_{N+i} \neq 
            \hat{h}(k_{i},y_{i}, d_{i})$ for some $i \in [N]$, \textbf{reject},
            \item[else] \textbf{accept}.
        \end{enumerate}
    \end{enumerate}
\\
    \hline
\end{tabular}
}
\caption{A protocol that self-tests EPR pairs of a computationally efficient device.}
\label{fig:protocol}
\end{figure}

\begin{thm}\label{thm:completeness}
There exists a quantum device using $2N\cdot \poly(\lambda)$ qubits and quantum gates that is accepted by our self-testing protocol with probability $\geq 1 - 2 N \cdot \negl$. Moreover, there exists a classical verifier that runs in $2N \cdot \poly(\lambda)$ time.
\end{thm}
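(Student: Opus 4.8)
The plan is to exhibit an explicit honest prover whose behavior mirrors the intended semantics of the protocol, and to argue that each check the verifier performs is passed with probability $1 - \negl$ by invoking the ENTCF properties. First, for each $\theta \in \thetaset$, the honest prover, upon receiving keys $k = (k_1,\dots,k_{2N})$, runs the efficient range superposition procedure (property 3) on each $k_i$ to prepare a state negligibly close to $\bigotimes_{i=1}^{2N} \ket{\psi_{k_i}}$, then measures the $\calY$-registers to obtain $y = (y_1,\dots,y_{2N})$ and reports $y$. After this measurement, register $i$ holds (negligibly close to) $\ket{b_i}\ket{\hatx(b_i,k_i,y_i)}$ when $k_i\in \calK_\calG$ (a single branch survives, by the disjointness in property 2), and the superposition $\tfrac{1}{\sqrt 2}\sum_{b_i}\ket{b_i}\ket{\hatx(b_i,k_i,y_i)}$ when $k_i\in \calK_\calF$ (both branches survive, being a claw). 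In the preimage round, the prover measures all surviving $(b,x)$ registers in the computational basis and returns them; $\CHK(k_i,y_i,b_i,x_i)=0$ holds for every $i$ by construction (the $x_i$ really is a preimage), so the verifier accepts with probability $1-\negl$.

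The Hadamard round is where the self-tested states appear. The prover applies Hadamard gates to the $x$-registers of each block and measures them, obtaining $d = (d_1,\dots,d_{2N})$ which it reports; the residual state on the $b$-registers is (negligibly close to) a product over $i$ of: $\ket{\hatb(k_i,y_i)}$ when $k_i\in\calK_\calG$, and $\ket{(-)^{\hath(k_i,y_i,d_i)}}$ when $k_i\in\calK_\calF$ — this is a standard consequence of Hadamard-measuring one half of $\tfrac1{\sqrt2}\sum_b \ket b\ket{x_b}$ (see \cite{mahadev2018verification,mv2021selftest}). Concretely, this residual state is negligibly close to $\ket{\tau^{\theta,v}}$ for the appropriate $v$ depending on $\theta$: for $\theta=0$ it is the computational-basis string $v_i = \hatb(k_i,y_i)$; for $\theta\in[2N]$ it has a single Hadamard-basis qubit in position $\theta$; and for $\theta=\diamond$ the $N$ claw-free pairs produce, block by block, the entangled state $\ket{\tau^{\diamond,v}}$ (the phase relation $\hath(k_i,y_i,d_i)$ on the $i$-th qubit is exactly what correlates qubit $i$ with qubit $N+i$). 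Then, given $q\in\{0,1,2,3\}$, the prover measures this register in the basis $\{\Pi_q^u\}_u$ and returns $u=v$ (when $q$ matches the basis in which $\ket{\tau^{\theta,v}}$ is already an eigenstate) or the measured outcome otherwise; one checks case by case (Cases A--D in Fig.~\ref{fig:protocol}) that the reported string satisfies all of the verifier's equality constraints with probability $1-\negl$. The only subtlety is that the measurement bases $\Pi_q^u$ and the checks are designed precisely so that the honest residual state is an eigenstate of the relevant tensor factors; for instance in Case D with $q=3$, each pair $(i,N+i)$ of $\ket{\tau^{\diamond,v}}$, measured in the computational basis, yields outcomes whose XOR equals $\hath(k_i,y_i,d_i)$ by the definition of that state, and similarly for $q=2$ with the roles of $i$ and $N+i$ swapped.

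Finally, efficiency: by property 1 the verifier samples the $2N$ key-trapdoor pairs in $\poly(\lambda)$ time; all its checks use only $\CHK$, $\hatb$, $\hatx$, $\hath$, which by the efficient decoding property are computable in $\poly(\lambda)$ time given the trapdoors $t_{k_i}$ (and $\CHK$ even without them); since $2N = 2\lambda = \poly(\lambda)$, the total verifier runtime is $\poly(\lambda)$ and it is entirely classical. The honest prover's operations — preparing $2N$ range superpositions, computational- and Hadamard-basis measurements, and the efficient measurements $\{\Pi_q^u\}_u$ (which act qubitwise and hence are efficiently implementable) — use $\poly(\lambda)$ qubits and gates. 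I expect the main obstacle to be the bookkeeping in the Hadamard round: one must verify, separately for each of the four $q$-values within each of the four $\theta$-cases, that the string $v$ (or the measurement outcome) produced by the honest prover satisfies exactly the equality tested by the verifier, and in particular that the $\hath$-based phase decoding correctly tracks the correlations in $\ket{\tau^{\diamond,v}}$; this is routine but requires care to get the index arithmetic ($\theta$ vs. $\theta\pm N$, $i$ vs. $N+i$) right.
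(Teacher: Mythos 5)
There is a genuine gap in the Hadamard round: your honest prover never creates any entanglement, yet the target states and the verifier's checks require it. After the prover Hadamard-measures the $x$-registers, the residual state in the $\theta=\diamond$ case is the \emph{product} state $\bigotimes_{i=1}^{2N}\ket{(-)^{\hath_i}}$ with $\hath_i = d_i\cdot(\hatx_{0,i}\oplus\hatx_{1,i})$; each claw-free block yields one unentangled qubit, and nothing "correlates qubit $i$ with qubit $N+i$" as you assert. The state $\ket{\tau^{\diamond,v}}$, by contrast, consists of $N$ EPR pairs. The paper's honest prover therefore applies $N$ controlled-$\sigma^Z$ gates between qubits $i$ and $N+i$ after obtaining $d$ (this is the step producing the state $\ket{\alpha}$ in the paper's proof, equal to $\ket{\psi^{\hath'}}$ when $\theta=\diamond$). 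Your prover, which directly measures the post-$d$ residual state with $\{\Pi_q^u\}_u$, fails Case D: for $q=2$ the verifier checks $v_i\oplus v_{N+i}=\hath(k_{N+i},y_{N+i},d_{N+i})$ for all $i\in[N]$, but measuring $\ket{(-)^{\hath_i}}$ in the computational basis gives a uniformly random $v_i$, so each parity check holds only with probability $1/2$ and the round is rejected with probability $1-2^{-N}$ (similarly for $q=3$).

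The same omission breaks Cases A and B. With the controlled-$Z$ gates, the single Hadamard-basis qubit becomes $\ket{(-)^{\hath_\theta\oplus\hatb_{\theta\pm N}}}$, which is exactly why the verifier's $q=1$ check reads $\hath(k_\theta,y_\theta,d_\theta)\oplus\hatb(k_{\theta\pm N},y_{\theta\pm N})=v_\theta$ (and analogously inside the $q=2,3$ checks). Your prover would report $v_\theta=\hath_\theta$ and be rejected whenever $\hatb(k_{\theta\pm N},y_{\theta\pm N})=1$, i.e.\ with probability roughly $1/2$, so acceptance is bounded away from $1$ rather than $\geq 1-\negl$. The rest of your argument (range-superposition preparation, the preimage case via $\CHK$, and the efficiency of both parties) matches the paper and is fine; the fix is to insert the entangling controlled-$\sigma^Z$ step before the $q$-dependent measurement and to redo the case-by-case bookkeeping against the XOR-form checks in Fig.~\ref{fig:protocol}.
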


For the rest of this paper, given input security parameter $\lambda\in \mathbb{N}$, we set $N = \poly(\lambda)$ to be a fixed polynomially bounded function of $\lambda$, so that both the device and the verifier above are efficient and the device passes the protocol with probability negligibly close to $1$.

\begin{proof}
In the first round, for each $i\in [2N]$, by the efficient range superposition property of ENTCFs (\cref{property:efficient_range_superposition}), the (honest quantum) device uses $k_i$ to prepare a state $\ket{\psi_i'}$ that is $\negl$-close to
\begin{equation}\label{eq:state}
    \ket{\psi_i} \coloneqq \frac{1}{\sqrt{2\cdot |\calX|}} \sum_{b\in\{0, 1\}}\sum_{x\in \calX, y\in \calY}\sqrt{(f_{k_i,b}(x))(y)}\ket{b}\ket{x}\ket{y},
\end{equation}
in trace distance, which uses $\poly(\lambda)$ qubits and quantum gates. Therefore, the device prepares the tensor product $\bigotimes_{i=1}^{2N}\ket{\psi_i'}$ using $2N\cdot\poly(\lambda)$ qubits and quantum gates. 

We have $\|\bigotimes_{i=1}^{2N} \ket{\psi_i}- \bigotimes_{i=1}^{2N} \ket{\psi_i'}\|_1\leq 2N\cdot \negl$. Therefore, the output distributions arising from all subsequent measurements made on $\bigotimes_{i=1}^{2N} \ket{\psi_i'}$ are the same as those made on $\bigotimes_{i=1}^{2N} \ket{\psi_i}$ up to $2N \cdot \negl$ in total variation distance. Therefore, for the rest of this proof, we can assume that the device has actually prepared $\ket{\psi}\coloneqq \bigotimes_{i=1}^{2N} \ket{\psi_i}$ and 
reduce the lower bound on the success probability by $2N\cdot \negl$ at the end.

Then, the device measures the (image) $y$ register of $\ket{\psi_i}$ and sends the outcome to the verifier. By the (disjoint) injective pair property of ENTCFs (\cref{property:injective_pair}), after the $y$ measurement, the state $\ket{\psi_i}$ collapses to $\ket{\phi_i}\ket{y_i}$, where
\begin{equation}
   \begin{aligned}
        \ket{\phi_i} \coloneqq \begin{cases}
            \ket{\hatb(k_i, y_i)}\ket{\hatx(k_i, y_i)}  &\text{ if } k_i \in \calK_\calG, \\
            \frac{1}{\sqrt{2}} (\ket{0}\ket{\hatx_0(k_{i}, y_i)}+\ket{1}\ket{\hatx_1(k_{i}, y_i)}) &\text{ if } k_{i} \in \calK_\calF.
        \end{cases}
\end{aligned} 
\end{equation}
In the following, we use the shorthand $\hatb_i\coloneqq \hatb(k_i,y_i) \in \{0,1\}$ and, for $a\in \{0,1\}$, $\hatx_{a,i}\coloneqq\hatx(a, k_i, y_i)\in \calX$.

In the second round, there are two cases, ``preimage'' or ``Hadamard''. In the ``preimage'' case, the device measures the $b$ and $x$ registers of each $\ket{\phi_i}$ in the computational basis and sends the outcome to the device. This will always be accepted by the device using the definition of $\CHK$.

In the ``Hadamard'' case, the device measures the $x$ register of each $\ket{\phi_i}$ in the Hadamard basis and sends the outcome $d=(d_1,d_2,\ldots,d_{2N})$ to the verifier. After this measurement, $\ket{\phi_i}$ collapses to $\ket{\alpha_i}\ket{d_i}$, where, if $\theta\in [2N]$, then
\begin{equation}
    \ket{\alpha_i} = 
    \begin{cases}
    \ket{\hatb_i} &\text{if $i\neq \theta$},
    \\
     (\ket{0} + (-1)^{d_\theta\cdot (\hatx_{0,\theta} \oplus \hatx_{1,\theta})}\ket{1})/\sqrt{2} &\text{if $i=\theta$};
    \end{cases}
\end{equation}
if $\theta=0$, then $\ket{\alpha_i} = \ket{\hatb_i}$; and if $\theta=\diamond$, then
$\ket{\alpha_i} = (\ket{0} + (-1)^{d_i\cdot (\hatx_{0,i} \oplus \hatx_{1,i})}\ket{1})/\sqrt{2}$.

In the following, we use the shorthand $\hath_i
\coloneqq d_i\cdot (\hatx_{0,i} \oplus \hatx_{1,i})\in \{0,1\}$ and 
\begin{equation}\label{eq:hardcore_bitstring}
    \hath' \coloneqq (\hath_{N+1},\ldots,\hath_{2N}, \hath_1,\hath_2,\ldots,\hath_{N})\in \{0,1\}^{2N}.
\end{equation}
For $v\in \{0,1\}^{2N}$, we also define the state
\begin{equation}\label{eq:def_psi}
    \ket{\psi^{v}} \coloneqq \frac{1}{\sqrt{2^N}}\bigotimes_{i=1}^N (\sigmax)^{v_i}\otimes (\sigmax)^{v_{N+i}}(\ket{0}_i\ket{+}_{N+i} + \ket{1}_i\ket{-}_{N+i}),
\end{equation}
which consists of $N$ (locally-rotated) EPR pairs.

Then, the device applies $N$ controlled-$\sigmaz$ gates between the $i$-th and $(N+i)$-th qubits of $\bigotimes_{i=1}^{2N}\ket{\alpha_i}$ for all $i\in [N]$ (note that the controlled-$\sigmaz$ gate is independent of which qubit is the control and which qubit is the target). The device has now prepared the $2N$-qubit state
\begin{equation}\label{eq:honest_final_state}
    \ket{\alpha}\coloneqq 
    \begin{cases}
    \ket{\hatb_1,\ldots,\hatb_{\theta-1}}
    \ket{(-)^{\hatb_{\theta+N} \oplus \hath_{\theta}}}
    \ket{\hatb_{\theta+1},\ldots,\hatb_{2N}} &\text{if $\theta\in [2N]$, $\theta \leq N$},\\
    \ket{\hatb_1,\ldots,\hatb_{\theta-1}}
    \ket{(-)^{\hatb_{\theta-N} \oplus \hath_{\theta}}}
    \ket{\hatb_{\theta+1},\ldots,\hatb_{2N}} &\text{if $\theta\in [2N]$, $\theta > N$},
    \\
    \ket{\hatb_1,\ldots,\hatb_{2N}} &\text{if $\theta=0$},\\
    \ket{\psi^{\hath'}}  &\text{if $\theta=\diamond$}.
    \end{cases}
\end{equation}

In the ``Hadamard'' case, there is a third and final round where the verifier sends a bitstring $q\in \{0,1\}^{2N}$ to the device. The device performs the following $q$-dependent measurements. For $i \in [2N]$, if $q_i = 0$, measure the $i$th qubit of $\ket{\alpha}$ in the computational basis, otherwise, measure the $i$th qubit of $\ket{\alpha}$ in the Hadamard basis.
The device finally sends the outcome $u\in \{0,1\}^{2N}$ of these measurements to the verifier. The expressions on the right-hand side of \cref{eq:honest_final_state} imply that the device passes the last checks made by the verifier with probability  $\geq 1 - 2N \cdot \negl$. (The only way the device fails these last checks is if $d_i = 0^w$ for some $i\in [2N]$, which happens with probability at most $2N \cdot \negl$).

The ``moreover'' part of the theorem follows directly from the efficient function generation and the efficient decoding properties of ENTCFs (\cref{property:efficient_generation,property:efficient_decoding}).
\end{proof}

\section{Soundness of self-testing protocol}
\label{sec:soundness}

In this section, we show that our self-testing protocol achieves $\poly(N,\fail)$ soundness error. Unlike the proof of completeness in \cref{sec:completeness}, we use the adaptive hardcore bit and injective invariance properties of ENTCFs to prove soundness in this section. Therefore, it is necessary for us to make the LWE hardness assumption throughout this section.

We start with \cref{sec:devices} where we mathematically model quantum devices. Essentially, a quantum device is a four-tuple $D=(S,M,\Pi,P)$ where $S$ is a set of states, $M$, $\Pi$, and $\{P_q \mid q\in \{0,1\}^{2N}\}$ are the measurements $D$ performs to obtain the $d$, $(b,x)$, and $u$ given question $q$, respectively, as defined in our protocol (\cref{fig:protocol}). We use $P_q$ to define observables of the quantum device called $X_{q,i}$ and $Z_{q,i}$ for $i\in [2N]$ that should act as $\sigmax_i$ and $\sigmaz_i$, respectively. (Recall that $\sigmax_i$ is the Pauli $X$ operator acting on qubit $i$ and $\sigmaz_i$ is the Pauli $Z$ operator acting on qubit $i$.) For different choices of $(\theta,q, i) \in (\thetaset) \times \{0,1\}^{2N}\times  [2N]$, our goal is to characterize the actions of the observables $X_{q,i}$ and $Z_{q,i}$ on the state $\sigma^\theta$, which is the post-$M$-measurement state. Unlike in nonlocal self-testing, where we only need to characterize one state, here we need to characterize multiple states and observables. Our strategy is to first decompose 
\begin{equation}
    \sigma^\theta \approx \sum_{v \in  \{0,1\}^{2N}} \sigma^{\theta,v},
\end{equation}
where $\sigma^{\theta,v}$ are (subnormalized) states that correspond to certain $(y,d)$ measurement outcomes. We then show that $X_{q,i}$ and $Z_{q,i}$ act on $\sigma^{\theta,v}$ similarly to how $\sigmax_i$ and $\sigmaz_i$ act on certain ideal states up to some error that can be bounded by the probability that the device fails our protocol for a given $(\theta,q)$. More specifically, the ideal state corresponding to $\sigma^{\theta,v}$ is $\tau^{\theta,v}$ as defined in \cref{def:tau_thetav}. In \cref{tab:testing_correspondence}, we 
 summarize which $(\theta,q)$ pairs are used to characterize which observables. For example, $(\theta, q) = (0,0^{2N})$ is used to characterize $Z_{0^{2N},i}$ for all $i\in [2N]$ while $(\theta, q) = (\diamond, 0^N1^N)$ is used to characterize $Z_{0^N1^N,i}\cdot X_{0^N1^N,i+N}$ for all $i\in [N]$.
\begin{table}[ht]
    \centering
    \renewcommand*{\arraystretch}{1.2}
    \begin{tabular}{c|c|c|c}
        $(q_i, q_{i+N})$ with $i \leq N$ & $\theta = 0$ &  $\theta \in [2N]$ &
        $\theta=\diamond$ \\
        \hline
         $(0,0)$ & $Z_{q,i}$ and $Z_{q,i+N} $ & - & -  \\
         $(1,1)$ & - & $X_{q,\theta}$ if $\theta \in \{i,i+N\}$ & - \\
        $(0,1)$ & - & - & $Z_{q,i} \cdot X_{q,i+N} $ \\
        $(1,0)$ & - & - & $X_{q,i} \cdot Z_{q,i+N} $ \\ 
    \end{tabular}
    \caption{Correspondence between the $(\theta,q)$ used in our protocol and the observables tested.}\label{tab:testing_correspondence}
\end{table}

Sampling $\theta\leftarrow_U \thetaset$ and $q$ as in our protocol allows us to effectively characterize the actions of $X_{q,i}$ and $Z_{q,i}$ on $\sigma^{\theta}$ for all $(\theta,q,i) \in (\thetaset) \times \{0^{2N},1^{2N}, 0^N1^N,1^N0^N\} \times [2N]$ using the overall failure probability of the device. If we naively sampled $\theta \leftarrow_U \{0,1\}^{2N}$, we would not be able to effectively bound the failure probability of the device given some $\theta$ by the overall failure probability of the device since the former probability can be smaller than the latter by a factor of $2^{2N}$.

Then, in \cref{prop:tilde_equal_nontilde}, we use the computational indistinguishability of elements in  $\{\sigma^\theta \mid \theta\in \thetaset\}$ to argue that, for all $q\in \{0,1\}^{2N}$, the observables $X_{q,i}$ and $Z_{q,i}$ act on $\sigma^\theta$ similarly to how $X_i \coloneqq X_{1^{2N},i}$ and $Z_i \coloneqq Z_{0^{2N},i}$ act on $\sigma^\theta$ for all $\theta \in\thetaset$. This allows us to restrict attention for most of the subsequent analysis to $X_i$ and $Z_i$ since results that hold for them automatically hold for $X_{q,i}$ and $Z_{q,i}$ via \cref{prop:tilde_equal_nontilde}.


For self-testing, we not only need to characterize the action of a single observable on $\sigma^\theta$ as sketched above but we also need to characterize the actions of \emph{products} of observables.
In \cref{sec:commute,sec:anti_commute}, we establish the observables' $\sigma^\theta$-state-dependent commutation and anti-commutation relations. These relations involve products of two observables. Our proof generalizes and refines techniques used in \cite{mv2021selftest,gheorghiuvidick2019rsp}. In particular, we need to define error parameters associated with each element in $\{\sigma^{\theta,v} \mid v\in \{0,1\}^{2N}\}$. We then use these error parameters collectively to bound the overall approximation error associated with $\sigma^\theta$.

In \cref{sec:conj_inv}, we prove observable-state commutation relations for certain pairs of observables and states. These relations are necessary for showing that products of \emph{more than} two observables, each being either $X_{q,i}$ or $Z_{q,i}$, satisfy the same state-dependent (anti-)commutation relations as $\sigmaz_i$ and $\sigmax_i$, respectively. Recall that the previous (anti-)commutation relations only concern products of two observables. However, we also want to show, for example,
\begin{equation}\label{eq:example_operator_state_commutation}
Z_1 X_3 Z_2 \sigma^3 \approx X_3 Z_1 Z_2 \sigma^3,
\end{equation}
for some appropriate approximation error. \cref{eq:example_operator_state_commutation} does not follow from the $\sigma^3$-state-dependent commutation relation between $Z_1$ and $X_3$ because $Z_1$ and $X_3$ are not directly next to $\sigma^3$ due to the obstructing $Z_2$. However, \cref{eq:example_operator_state_commutation} does follow once we first use an observable-state commutation relation to commute $Z_2$ past $\sigma^3$. We view our use of observable-state commutation relations as one of the main technical contributions of this work. These techniques should be independently useful in any future work on computational self-testing involving more than two qubits.

In \cref{sec:swap}, we give the formula of the swap isometry $\calV$ and then describe an efficient quantum circuit that implements it. Our swap isometry can be viewed as a special case of the swap isometry proposed in \cite[Figure 2]{yang2013robust} in the nonlocal setting. In particular, it is not the obvious generalization of the swap isometry used in \cite[Proof of Lemma 4.28]{mv2021selftest} which is more difficult to analyze. 

In \cref{sec:swap_observables,sec:swap_states}, we analyze the effect of the swap isometry on the observables and states of the device. More specifically, in \cref{sec:swap_observables}, we show that $\calV$ maps the $X_{q,i}$ and $Z_{q,i}$ observables approximately to $\sigmax_i$ and $\sigmaz_i$.
In \cref{sec:swap_states}, we show that $\calV$ maps the state $\sigma^{\theta,v}$ of the device to a state of the form $\tau^{\theta,v}\otimes \alpha^{\theta,v}$, where 
$\tau^{\theta,v}$ is the ideal state and 
$\alpha^{\theta,v}$ is some junk state such that $\alpha^{\theta,v}$ is close to being computationally indistinguishable from a fixed state $\alpha$ for all $\theta\in \thetaset$ and $v\in \{0,1\}^{2N}$.

Lastly, in \cref{sec:soundness_states_meas}, we put everything together to give our main soundness result, \cref{thm:soundness}. The main task is to characterize the measurement operator $P_q = \{P_q^u \mid u \in \{0,1\}^{2N}\}$, where each $P_q^u$ is approximately a product of $2N$ binary projectors of the form $Z_{q,i}^{(u_i)}$ and $X_{q,i}^{(u_i)}$. Recall that we have 
previously characterized $Z_{q,i}$ and $X_{q,i}$ individually in \cref{sec:swap_observables}. To characterize the products of their corresponding projectors, we use the operator-state commutation relation to sequentially replace each projector in the product by its ideal counterpart. 

The analysis in \cref{sec:swap_observables,sec:swap_states,sec:soundness_states_meas} crucially relies on results in
\cref{sec:commute,sec:anti_commute,sec:conj_inv} which allow us to bound the soundness error in \cref{thm:soundness} by $O(\poly(N, \epsilon))$.
If we directly generalized the soundness analysis of \cite{mv2021selftest}, we would obtain an extremely loose $O(2^N \epsilon^{1/2^N})$ bound on the soundness error.\footnote{See \cite[End of Section 1.3 (arXiv version v2)]{gmp2022parallelrsp} for an explanation of why the technique in \cite{mv2021selftest} would lead to such a loose bound.}

\subsection{Quantum devices}
\label{sec:devices}

In this subsection, we model a general quantum device
that can be used by a device in our protocol specified in \cref{sec:completeness}.
Our definition is based on the definition given in
\cite[Section 4.1]{mv2021selftest}.

\begin{defn}\label{def:device}
    A device $D = (S, M, \Pi, P)$ is specified by Hilbert spaces named $\calH_D$, $\calH_Y$, and $\calH_R$, with $\dim(\calH_Y) = \abs{\calY}^{2N}$ and $\dim(\calH_R) = 2^{2Nw}$, and the following.
    \begin{enumerate}
        \item A set $S \coloneqq \{ \psi^\theta \mid \theta \in \thetaset \} \subset \calD(\calH_D \otimes \calH_Y)$ of states where each state $\psi^\theta$ is classical on $\calH_Y$:
        \begin{align}
            \psi^{\theta} \coloneqq \sum_{y \in \calY^{2N}} \psi^{\theta}_{y} \otimes \ketbra{y}{y}.
        \end{align}
        The state $\psi^\theta_y$ models the device's state immediately after returning $y \in \calY^{2N}$ to the verifier if the verifier initially sampled $\theta \in \thetaset$. More precisely, $\psi^{\theta}_y$ (and hence $\psi^{\theta}$) is a function of the public keys $k\in (\calK_{\calF} \cup \calK_{\calG})^{2N}$ that the verifier sampled according to $\theta$, as described in the protocol. We choose to make the $k$-dependence implicit for notational convenience.  

        \item A projective measurement $\Pi$
        for the preimage test
        on $\calH_D \otimes \calH_Y$:
        \begin{align}
            \Pi \coloneqq \Biggl\{\Pi^{b,x} \coloneqq \sum_{y\in \calY^{2N}} 
            \Pi_y^{b,x} \otimes \ketbra{y}{y} \Biggm| b \in \{0,1\}^{2N}, x \in \calX^{2N} \Biggr\}.
        \end{align}
        The measurement outcome $b,x$ is the device's 
        answer for the preimage test.

        \item A projective measurement $M$ on $\calH_D \otimes \calH_Y$ for the device's first answer in the Hadamard test:
        \begin{align}
            M \coloneqq \Biggl\{M^d \coloneqq \sum_{y\in \calY^{2N}} M_y^d \otimes \ketbrasame{y} \Biggm| d \in \{0,1\}^{2Nw}\Biggr\}.
        \end{align}
        We write $\sigma^\theta(D)$ for the classical-quantum state that results from measuring $M$ on $\psi^{\theta}$ followed by writing measurement outcome $d$ into another classical register whose Hilbert space is denoted by $\calH_R$. That is,
        \begin{equation}
            \sigma^\theta(D) \coloneqq \sum_{y\in \calY^{2N}, \, d\in \{0,1\}^{2Nw}} \sigma_{y,d}^{\theta}(D) \otimes \ketbra{y,d}{y,d} \in \calH_D\otimes\calH_Y\otimes\calH_R,
        \end{equation}
        where $\sigma_{y,d}^\theta(D) \coloneqq M_y^{d} \psi^{\theta}_y M_y^{d}$.
        \item Projective measurements $P_q$ on $\calH_D \otimes \calH_Y \otimes \calH_R$ for the device's second answer in the Hadamard test when asked question $q \in \{0,1\}^{2N}$:
        \begin{align}
            P_q \coloneqq 
            \Biggl\{ P_q^u = \sum_{y\in \calY^{2N}, d\in \{0,1\}^{2Nw}} P_{q, y, d}^{u} \otimes \ketbra{y,d}{y,d} \Biggm| u \in \{0,1\}^{2N} \Biggr\}.
        \end{align}
        The measurement outcome $v$ is the device's answer for the question $q$.
    \end{enumerate}
\end{defn}
\begin{remark}
    We stress that the states $\psi^{\theta}_y$ depend on the public keys $k\in (\calK_{\calF} \cup \calK_{\calG})^{2N}$ that the verifier sampled according to $\theta$. In particular, all subsequent states of the device also depend on $k$. When we make a statement about such $k$-dependent states labeled by $\theta$, that statement is often understood as holding in expectation over $k$ sampled according to $\theta$. The expectation over $k$ is necessary when the statement is derived using the adaptive hardcore bit or injective invariance properties of ENTCFs, which only hold in expectation over $k$.
\end{remark}

Henceforth, unqualified sums over each of the symbols $b,x,y,d$ always refer to sums over $b\in \{0,1\}^{2N}$, $x\in \calX^{2N}$, $y\in\calY^{2N}$, and $d\in \{0,1\}^{2Nw}$ respectively, unless otherwise stated.

In this work, we focus on efficient quantum devices, which are defined below.
\begin{defn}
    A device $D = (S, \Pi, M, P)$ is efficient
    if all the states in $S$ can be
    efficiently prepared and all the measurements
    $\Pi, M$, and $P$ are efficient.
\end{defn}

As in the nonlocal self-testing, we want to show that each projector $P_q^u$ behaves like a tensor product of projectors on $2N$ systems. Therefore, we define the marginal observables on each of those systems as done in \cite[Definition 4.4]{mv2021selftest}.
\begin{defn}[Marginal observables]\label{def:observables}
Let $D = (S, \Pi, M, P)$ be a device. For $i\in [2N]$ and $q \in \{0,1\}^{2N}$, we define the binary observables
\begin{equation}
\begin{aligned}
    &Z_{q,i}(D) \coloneqq \sum_v (-1)^{v_i} P_{q}^v \quad \text{if } q_i = 0 \quad \text{and}\\
    &X_{q,i}(D) \coloneqq \sum_v (-1)^{v_i} P_{q}^v \quad \text{if } q_i = 1.
\end{aligned}
\end{equation}
For $i\in [2N]$, $y\in \calY^{2N}$, and $d\in \{0,1\}^{2Nw}$, we also define the binary observables
\begin{equation}
    \begin{aligned}
        &Z_{q, i,y,d}(D) \coloneqq \sum_v (-1)^{v_i} P_{q,y,d}^v \quad \text{if } q_i = 0 \quad \text{and}\\
        &X_{q, i,y,d}(D) \coloneqq \sum_v (-1)^{v_i} P_{q,y,d}^v \quad \text{if } q_i = 1.
    \end{aligned}
\end{equation}
Note that $Z_{q,j}(D)$ commutes with $X_{q,k}(D)$ for $j \neq k$ according to these definitions.
\end{defn}
In later parts of this section, we will repeatedly work with $Z_{0^{2N}, i}(D)$, $Z_{0^{2N}, i, y,d}(D)$, $X_{1^{2N}, i}(D)$, and $X_{1^{2N}, i,y,d}(D)$, where $i\in [2N]$, $y\in \calY^{2N}$, and $d\in \{0,1\}^{2Nw}$. Therefore, we define the abbreviations
\begin{equation}
\label{eq:XZ_operator}
    \begin{alignedat}{4}
    &Z_i(D) &&\coloneqq Z_{0^{2N},i}(D),  \quad &&Z_{i,y,d}(D) &&\coloneqq Z_{0^{2N},i,y,d}(D),
    \\
    &X_i(D) &&\coloneqq X_{1^{2N},i}(D),  \quad &&X_{i,y,d}(D) &&\coloneqq X_{1^{2N},i,y,d}(D).
\end{alignedat}
\end{equation}

We also define the abbreviations
\begin{equation}
    \label{eq:tilde_operator}
    \begin{aligned}
    \tildeZ_i(D) &\coloneqq Z_{0^N1^N,i}(D) &&\text{for all $i\in [N]$},
    \\
    \tildeZ_i(D) &\coloneqq Z_{0^N1^N,i}(D) &&\text{for all $i\in \{N+1,\dots,2N\}$},
    \\
    \tildeX_i(D) &\coloneqq X_{1^N0^N,i}(D) &&\text{for all $i\in [N]$},
    \\
    \tildeX_i(D) &\coloneqq X_{0^N1^N,i}(D) &&\text{for all $i\in \{N+1,\dots,2N\}$}.
\end{aligned}
\end{equation}

Our soundness analysis will characterize the states $\sigma^{\theta,v}$ in the following definition as the states that we are self-testing. In the following definition, for $i\in [2N]$, we write
\begin{equation}
    \modulo(i + N, 2N) \coloneqq \begin{cases}
        i + N &\text{if $i \leq N$},
        \\
        i - N &\text{if $i > N$}.
    \end{cases}
\end{equation}

\begin{defn}[Hadamard round post-$d$-measurement states $\sigma^{\theta,v}$]\label{def:sig_thetav}
Let $D$ be a device. For $\theta\in \thetaset$ and $v\in \{0,1\}^{2N}$, we define the state
\begin{equation}
\sigma^{\theta, v}(D) \coloneqq \sum_{(y,d)\in \Sigmaset(\theta,v)}\sigma_{y,d}^{\theta}(D) \otimes \ket{y,d}\bra{y,d} \in \calH_D\otimes \calH_Y \otimes \calH_R,
\end{equation}
where, 
\begin{equation}
    \Sigmaset(\theta,v)
    \coloneqq
    \begin{cases}
    \bigl\{(y,d)\bigm| \hatb(k_i,y_i) = v_i \text{ for all } i \neq \theta \text{ and } \hath(k_\theta, y_{\theta}, d_{\theta}) = v_{\theta} \oplus v_{\modulo(\theta+N, 2N)} \bigr\} &\text{if $\theta\in [2N]$},
    \\
    \bigl\{(y,d) \bigm| \hatb(k_i,y_i) = v_i \text{ for all } i\bigr\} &\text{if $\theta=0$},
    \\
    \bigl\{(y,d) \bigm| \hath(k_i,y_i,d_i) = v_{\modulo(i+N,2N)} \text{ for all } i \bigr\} &\text{if $\theta=\diamond$.}
    \end{cases}
\end{equation}
In all cases, $(y,d)$ ranges over $\calY^{2N}\times \{0,1\}^{2Nw}$, $i$ ranges over $[2N]$, and the state $\sigma^{\theta, v}(D)$ implicitly depend on keys $k \in (\calK_{\calF}\cup \calK_{\calG})^{2N}$ chosen according to $\theta$ as described in the protocol.
\end{defn}

Note that $\sum_v \sigma^{\theta, v}(D) \leq \sigma^\theta(D)$ by definition because the $\Sigmaset(\theta,v)$s partition a subset of $\calY^{2N}\times \{0,1\}^{2Nw}$. In particular, taking traces on both sides gives $\sum_v\tr[\sigma^{\theta,v}(D)]\leq 1$. In the honest case,
\begin{equation}
    \tr_{Y,R}[\sigma^{\theta, v}(D)]
    = \begin{cases}
    2^{-2N} \ketbrasame{v_1 \ldots v_{\theta-1} (-)^{v_\theta} v_{\theta+1}
    \ldots v_{2N}} & \text{for $\theta\in [2N]$,} 
    \\
    2^{-2N} \ketbrasame{v} & \text{for $\theta=0$,} 
    \\
    2^{-2N} \ketbrasame{\psi^{v}} & \text{for $\theta=\diamond$,}
    \end{cases}
\end{equation}
where $\ket{v}\coloneqq \ket{v_1 v_2 \ldots v_{2N}}$ and we recall the definition of $\ket{\psi^v}$ from \cref{eq:def_psi}.

In the soundness proof, we will use quantities called $\gammaP$ and $\gammaH$ to bound how far away the $Z_{q,i}(D), X_{q,i}(D)$ observables and $\sigma^{\theta,v}$ states are from the self-tested observables and states. We define them below.
\begin{defn}[$\gammaP$ and $\gammaH$]\label{def:gamma}
Let $D$ be a device. We define the following quantities that all relate to the failure probabilities of $D$.

\begin{enumerate}
\item \textbf{Preimage test.} For $\theta\in \thetaset$, we define
\begin{align}
\label{eq:t_theta}
t_{\theta}(D)\coloneqq \tr\Bigl[\sum_{y} \, \sum_{(b,x)\in \Preimage(k,y)}\Pi^{b,x}_y \psi^\theta_y\Bigr] = \tr\Bigl[\sum_{y,b,x \mid \CHK(k,y,b,x)=0}\Pi_y^{b,x}\psi_y^{\theta}\Bigr],
\end{align}
where an implicit expectation is taken over the keys $k\in (\calK_{\calF}\cup \calK_{\calG})^{2N}$ that are sampled according to $\theta$ as described in the protocol, and
\begin{equation}
\Preimage(\theta,y) \coloneqq 
\{(b,x) \in \{0,1\}^{2N}\times \calX^{2N} \,|\, b_i = \hatb(k_i,y_i), x_i =\hatx(b_i,k_i,y_i) \text{ for all } i \}.
\end{equation}
Note that the set $\Preimage(\theta,y)$ can be empty due to $\hatb$ or $\hatx$ returning $\perp$; in that case, $\sum_{(b,x)\in \Preimage(\theta,y)} \Pi^{b,x}_y$ is taken to mean $0$. We then define
\begin{equation}
    \gammaP(D) \coloneqq 1- \min \{t_{\theta}(D) \mid \theta\in \thetaset\}.
\end{equation}

\item \textbf{Hadamard test.} For $\theta \in \thetasetint$, $q \in \{0,1\}^{2N}$ and $i \in [2N]$, we define
\begin{alignat}{2}
\label{eq:r_theta} &r_{\theta,q,i}(D) \coloneqq \begin{cases}
    \tr\Bigl[\sum_v Z_{q,i}^{(v_i)} \sigma^{\theta,v}\Bigr]  &\text{if } q_i = 0 \\
    1 &\text{otherwise}
\end{cases},
\\
\label{eq:s_theta} &s_{\theta,q,i}(D) \coloneqq 
\begin{cases}
    \tr\Bigl[\sum_v X_{q,i}^{(v_i)} \sigma^{\theta,v}\Bigr] & \text{if } q_i = 1 \\
    1 & \text{otherwise}
\end{cases}.
\end{alignat}
For $\theta = \diamond$ and $i\in [N]$, we define
\begin{align}
\label{eq:r_diamond} & r_{\diamond, q, i}(D) \coloneqq \begin{cases}
    \sum_v \tr \Bigl[(Z_{q,i} X_{q, N+i})^{(v_i)}\sigma^{\diamond,v} \Bigr]
&\text{if } q_i = 0 \text{ and } q_{N+i} = 1 \\
1 & \text{otherwise}
\end{cases},\\
\label{eq:s_diamond}  & s_{\diamond,q, i}(D) \coloneqq \begin{cases}
    \sum_v \tr \Bigl[(X_{q,i} Z_{q, N+i})^{(v_{N+i})}\sigma^{\diamond,v} \Bigr]
&\text{if } q_i = 1 \text{ and } q_{N+i} = 0 \\
1 & \text{ otherwise}
\end{cases}.
\end{align}
We then define 
\begin{equation}\label{eq:gammaTq}
     \gammaHq(D) \coloneqq 1 - \min \{\min\{r_{\theta,q,i}, s_{\theta,q,i} \mid \theta \in  \thetasetint, i \in [2N]\}, \min\{r_{\diamond,q,i}, s_{\diamond,q,i} \mid i \in [N]\} \}.
\end{equation}
\end{enumerate}
\end{defn}

Since our soundness proof proceeds by first characterizing the
operators $X_i(D), Z_i(D), \tilde{X}_i(D)$ and $\tilde{Z}_i(D)$ (defined in \cref{eq:XZ_operator,eq:tilde_operator}),
we write
\begin{equation}\label{eq:gammaT_convenience}
    \gammaH(D) \coloneqq \max\{\gammaHq(D) \mid q \in \{0^{2N}, 1^{2N}, 0^N1^N, 1^N0^N\}\}.
\end{equation}
Henceforth, we assume\footnote{This assumption holds when $D$ fails our protocol with sufficiently small probability -- see \cref{prop:gamma_bound_by_fail}.} that $\gammaP(D), \gammaH(D) < 1$ only for notational convenience so that we can simplify, for example, $O(\gammaH(D) + \gamma(D)^r)$ to $O(\gamma(D)^r)$ when $0<r<1$. For the same reason, we assume $N\gammaH(D)$ is non-negligible in $\lambda$ so that we can simplify, for example, $O(N\gammaH(D) + \negl)$ to $O(N\gammaH(D))$.

We define the following failure probabilities of a device, which can be estimated by running the protocol multiple times.
\begin{defn}[Failure probabilities]
\label{def:failure_prob}
Let $D$ be a device. For $q\in \{0,1\}^{2N}$, we define 
\begin{align}
    &\failP(D) \coloneqq \pr(D \text{ fails preimage test} \mid \text{ case: preimage}),\\
    &\failHq(D) \coloneqq \pr(D \text{ fails Hadamard test} \mid \text{ case: Hadamard and question } q).
\end{align}
Then, 
\begin{equation}
    \fail(D) \coloneqq \failP(D)/2 + \Bigl(\sum_{q \in \{0^{2N},1^{2N},0^N1^N, 1^N0^N\}} \frac{1}{4} \failHq(D) + \sum_{q\in \{0,1\}^{2N}} \mu(q)\failHq(D)\Bigr)/4.
\end{equation}
is the average failure probability.
\end{defn}
Henceforth, when $D$ is clear from the context,
we mostly drop the $D$-dependence from the quantities
\begin{equation}
    \sigma^\theta, \sigma^\theta_{y,d}, \sigma^{\theta,v}, Z_i, X_i,\tildeZ_i,\tildeX_i, Z_{i,y,d}, X_{i,y,d}, Z_{q,i}, X_{q,i}, t_{\theta}, r_{\theta,q,i},s_{\theta,q,i},\gammaHq,\gammaH, \failP, \failHq, \fail.
\end{equation}

While it is easier to bound the soundness error using $\gammaP,\gammaHq$ and $\gammaH$, they 
are not immediately observable to the verifier. However, we can bound them by the observable failure probabilities, $\failP, \failHq, \text{ and }\fail$, as follows.

\begin{proposition}[$\gamma$ bounded by failure probability $\epsilon$]\label{prop:gamma_bound_by_fail}
Let $D$ be a device. Then, for all $q \in \{0,1\}^{2N}$,
\begin{align}
\label{eq:gammaPleq}\gammaP \leq (2N&+2) \failP \leq 2(2N+2)\fail, \\
\label{eq:gammaTqleq}
\gammaHq &\leq (2N+2) \failHq.
\end{align}
Moreover,
\begin{align} 
    \label{eq:gammaTsubsetq}
    \gammaH \coloneqq \max\{\gammaHq \mid q \in \{0^{2N}, 1^{2N}, 0^N1^N, 1^N0^N\}\} \leq 16(2N+2) \fail.
\end{align}

\end{proposition}

\begin{proof}
As $t_{\theta}\leq 1$ for all $\theta$, \cref{eq:gammaPleq} follows from 
\begin{equation}
\failP = 1 - \frac{1}{2N+2}\Bigl(t_{\diamond} + \sum_{\theta=0}^{2N} t_{\theta} \Bigr) \geq 1-\frac{1}{2N+2}(2N + 1 + \min\{t_{\theta} \mid \theta\in \thetaset \}) = \frac{\gammaP}{2N+2}.
\end{equation}

For $\theta \in \thetaset$ and $q\in \{0,1\}^{2N}$, let $p_{\theta,q}$ be the success probability of the device when the verifier chooses $\theta$ and asks question $q$. Since $\Pr(\medcap_i A_i)\leq \min_i\{\pr(A_i)\}$ for any events $A_i$, we obtain
\begin{align}
    &p_{0,q} \leq \min\{r_{0,q,i} \mid i \in [2N]\}, \\
    &p_{\theta,q} \leq \min\{ r_{\theta,q,i}, s_{\theta,q,\theta} \mid i \in [2N]\} \quad \text{ for } \theta \in [2N], \\
    &p_{\diamond,q} \leq \min\set{r_{\diamond,q,i}, s_{\diamond,q,i} \mid i \in [N]}.
\end{align}
Recall that
\begin{align}
    \gammaHq &= 1 - \min \set{\min\set{r_{\theta,q,i}, s_{\theta,q,i} \mid \theta \in \{0\} \cup [2N], i \in [2N]}, \min\set{r_{\diamond,q,i}, s_{\diamond,q,i} \mid i \in [N]}  } \\
    &= 1 - \min \set{p_{\theta, q} \mid \theta \in \thetaset}.
\end{align}
Then
\begin{align*}
    1 - \failHq = \frac{1}{2N+2} \sum_{\theta \in \thetaset} 
    p_{\theta,q} 
    \leq \frac{2N+1}{2N+2} + \frac{\min\set{p_{\theta,q} \mid \theta \in \thetaset}}{2N+2}, 
\end{align*}
from which \cref{eq:gammaTqleq} follows.

Because in the Hadamard test, with probability $1/2$, $q$ is chosen from $\set{0^{2N}, 1^{2N}, 0^N1^N, 1^N0^N}$ uniformly at random, we have
\begin{align*}
    \frac{1}{4}(\fail_{H,0^{2N}} + \fail_{H,1^{2N}} + \fail_{H,0^{N}1^N} + \fail_{H,1^N0^N}) \leq 4\fail.
\end{align*}
That is, for $q \in \set{0^{2N}, 1^{2N}, 0^N1^N, 1^N0^N}$, $\failHq \leq 16\fail$.
Then \cref{eq:gammaTsubsetq} follows from \cref{eq:gammaTqleq}.
\end{proof}

\begin{defn}\label{def:zeta_chi_def}
For $\theta \in \thetasetint$, $q\in \{0,1\}^{2N}$, $i\in [2N]$, and $v\in \{0,1\}^{2N}$, we define
\begin{align}
    \zeta(\theta, q, i, v) \coloneqq \normstate{Z_{q,i} - (-1)^{v_i}I}{\sigma^{\theta,v}}^2 && \text{if $q_i = 0$,}
    \\
    \chi(\theta, q, i, v) \coloneqq \normstate{X_{q,i} - (-1)^{v_i}I}{\sigma^{\theta,v}}^2 && \text{if $q_i = 1$}.
\end{align}
For $q\in \{0,1\}^{2N}$, $i\in [2N]$, and $v\in \{0,1\}^{2N}$, we define
\begin{align}
    \zetadiamond(q, i, v) &\coloneqq \normstate{Z_{q,i}X_{q,N+i} - (-1)^{v_i}I}{\sigma^{\diamond,v}}^2 && \text{if $q_i = 0$ and $q_{N+i} = 1$},
    \\
    \chidiamond(q, i, v) &\coloneqq \normstate{X_{q,i}Z_{q,N+i} - (-1)^{v_{N+i}}I}{\sigma^{\diamond,v}}^2 && \text{if $q_i = 1$ and $q_{N+i} = 0$}.
\end{align}
\end{defn}
The values $\{\zeta,\zetadiamond,\chi,\chidiamond\}$ in the definition above can be viewed as being of order $\gammaHq/2^{2N}$ (see \cref{lem:zeta_chi_bounds_new}). We will later bound the errors of various approximate operator-state relations
by sums of $2^{2N}$ terms, where each is of the form $\zeta$, $\zetadiamond$, $\chi$, or $\chidiamond$.

For convenience, we say an equation involving $\zeta(\theta,q,i,v)$, $\chi(\theta,q,i,v)$, $\zetadiamond(q,i,v)$, or $\chidiamond(q,i,v)$ holds for all ``appropriate'' $(\theta,q,i,v)$ to mean that the equation holds for all $(\theta,q,i,v)$ in the appropriate domain as specified in \cref{def:zeta_chi_def}.

For later convenience, for all appropriate $(\theta,i,v)$, we define
\begin{alignat}{4}\label{eq:zetachi_convenience}
&\zeta(\theta,i,v) &&\coloneqq \zeta(\theta,0^{2N},i,v), 
\quad &&\chi(\theta,v) &&\coloneqq \chi(\theta,1^{2N},\theta,v), 
\\
&\zetadiamond(i,v) &&\coloneqq \zetadiamond(0^N1^N,i,v),
\quad &&\chidiamond(i,v) &&\coloneqq \chidiamond(1^N0^N,i,v).
\end{alignat}

By the definition of $\simeq$, for all appropriate $(q,i,\theta,v)$, we have
\begin{equation}
\begin{alignedat}{4}
&Z_{q,i} &&\simeq_{\zeta(\theta,q,i,v),\sigma^{\theta,v}} (-1)^{v_i}I  &&\quad \text{and} \quad 
X_{q,i} &&\simeq_{\chi(\theta,q,i,v),\sigma^{\theta,v}} (-1)^{v_i}I,
\\
&Z_{q,i}X_{q,N+i} &&\simeq_{\zetadiamond(q,i,v),\sigma^{\diamond,v}} (-1)^{v_i}I  &&\quad \text{and} \quad 
X_{q,i}Z_{q,N+i} &&\simeq_{\chidiamond(q,i,v),\sigma^{\diamond,v}} (-1)^{v_{N+i}}I.
\end{alignedat}
\end{equation}
In addition, by elementary algebra, we can express
\begin{equation}\label{eq:zeta_chi_as_norm}
\begin{alignedat}{3}
&\zeta(\theta,q,i,v)/4 &&= \tr[\sigma^{\theta,v}] - \tr[Z_{q,i}^{(v_i)}\sigma^{\theta,v}] &&= \norm{Z_{q,i}^{(v_i)}-\1}^2_{\sigma^{\theta,v}},
\\
& \chi(\theta,q,i,v)/4 &&= \tr[\sigma^{\theta,v}] - \tr[X_{q,i}^{(v_i)}\sigma^{\theta,v}] &&= \norm{X_{q,i}^{(v_i)}-\1}^2_{\sigma^{\theta,v}},
\\
&\zetadiamond(q,i,v)/4 &&= \tr[\sigma^{\diamond,v}] - \tr[(Z_{q,i} X_{q,N+i})^{(v_i)}\sigma^{\diamond,v}] &&= \norm{(Z_{q,i} X_{q,N+i})^{(v_i)}-\1}^2_{\sigma^{\diamond,v}},
\\
&\chidiamond(q,i,v)/4 &&= \tr[\sigma^{\diamond,v}] - \tr[(X_{q,i} Z_{q,N+i})^{(v_{N+i})}\sigma^{\diamond,v}] &&= \norm{(X_{q,i} Z_{q,N+i})^{(v_{N+i})}-\1}^2_{\sigma^{\diamond,v}}.
\end{alignedat}
\end{equation}

The lemma below follows directly from \cref{def:observables,def:gamma}.
\begin{lemma}\label{lem:sandwich_argument}

For $\theta \in \thetasetint$, $q \in \{0,1\}^{2N}$, and $i \in [2N]$ such that $q_i = 0$ and $i \neq \theta$, we have
\begin{equation}
    1-\gammaHq \leq r_{\theta,q,i} = \sum_{v} \tr[Z_{q,i}^{(v_i)} \sigma^{\theta,v}] \leq \sum_{v} \tr[\sigma^{\theta,v}] \leq \tr[\sigma^{\theta}] = 1.
\end{equation}

For $\theta \in [2N]$ and $q \in \{0,1\}^{2N}$ such that $q_\theta = 1$,
we have
\begin{equation}
1-\gammaHq \leq s_{\theta,q,\theta} = \sum_{v} \tr[X_{q,\theta}^{(v_\theta)} \sigma^{\theta,v}] \leq \sum_{v} \tr[\sigma^{\theta,v}] \leq \tr[\sigma^{\theta}] = 1.
\end{equation}

For $q \in \{0,1\}^{2N}$ and $i\in [N]$ such that $q_i = 0$ and $q_{N+i} = 1$, we have
\begin{alignat}{6}
&1-\gammaHq &&\leq r_{\diamond,q,i} &&= \sum_{v} \tr[(Z_{q,i}X_{q,N+i})^{(v_i)} \sigma^{\diamond,v}] &&\leq \sum_{v} \tr[\sigma^{\theta,v}] &&\leq \tr[\sigma^{\theta}] &&= 1.
\end{alignat}

For $q \in \{0,1\}^{2N}$ and $i\in [N]$ such that $q_i = 1$ and $q_{N+i}=0$, we have
\begin{alignat}{6}
&1-\gammaHq &&\leq s_{\diamond,q,i} &&= \sum_{v} \tr[(X_{q,i}Z_{q,N+i})^{(v_{N+i})} \sigma^{\diamond,v}] &&\leq \sum_{v} \tr[\sigma^{\theta,v}] &&\leq \tr[\sigma^{\theta}] &&= 1.
\end{alignat}

\end{lemma}

The following lemma bounds exponential sums of $\zeta(\theta,q,i,v)$, $\chi(\theta,q,\theta,v)$, $\chidiamond(q,i,v)$, and $\zetadiamond(q,i,v)$ over $v\in \{0,1\}^{2N}$ using $\gammaHq$.

\begin{lemma}\label{lem:zeta_chi_bounds_new}
For all appropriate $(\theta,q,i,v)$ with $i \neq \theta$, we have
\begin{equation}
    \sum_{v} \zeta(\theta,q,i,v)  \leq 4 \gammaHq. 
\end{equation}

For all appropriate $(\theta,q,i,v)$ with $i=\theta$, we have
\begin{equation}
    \sum_v \chi(\theta,q,i,v) \leq 4\gammaHq. 
\end{equation}

For all appropriate $(q,i,v)$, we have
\begin{equation}
    \sum_v \zetadiamond(q,i,v) \leq 4\gammaHq \quad \text{and} \quad  \sum_v \chidiamond(q,i,v) \leq 4\gammaHq.
\end{equation}

In particular, we have
\begin{equation}
    \sum_v \zeta(\theta,i,v) \leq 4\gammaH, \quad \sum_v \chi(\theta,v) \leq 4\gammaH, \quad \sum_v \zetadiamond(i,v) \leq 4\gammaH, \quad \text{and} \quad \sum_v \chidiamond(i,v) \leq 4\gammaH.
\end{equation}
\end{lemma}

\begin{proof}
Consider the first inequality. Fix $\theta\in \thetasetint$, $q\in \{0,1\}^{2N}$, and $i\in [2N]$ with $q_i = 0$ and $i\neq\theta$.  Summing the expression for $\zeta(\theta,q,i,v)$ in \cref{eq:zeta_chi_as_norm} over $v\in \{0,1\}^{2N}$ and using \cref{lem:sandwich_argument}, we obtain
\begin{equation}
\sum_{v}\zeta(\theta,q,i,v) \leq 4  - 4 \sum_{v} \tr[Z_{q,i}^{(v_i)}\sigma^{\theta,v}] \leq 4 \gammaHq.
\end{equation}
The argument is analogous for the remaining three inequalities before the ``in particular'' part. 

The ``in particular'' part follows from the definitions of $\zeta(\theta,i,v)$, $\chi(\theta,v)$, $\zetadiamond(i,v)$, and $\chidiamond(i,v)$ in  \cref{eq:zetachi_convenience} together with the definition of $\gammaH$ in \cref{eq:gammaT_convenience}.
\end{proof}

Note that the above lemmas control certain parameters of a device using its failure probability. But we have yet to leverage our computational assumptions. The following lemma, which we will frequently use later together with the lifting lemmas (\cref{lem:lifting,lem:lifting_proj}), allows us to use the injective invariance property of ENTCFs (\cref{property:injective_invariance}) to further control the device.

\begin{lemma}[Indistinguishability of $\{\psi^{\theta}\}_{\theta}$ $(\mathrm{resp.}~\{\sigma^{\theta}\}_{\theta})$]~\label{lem:indistinguishability} Any pair of states in $\{\psi^{\theta}\}_{\theta\in \thetaset}$ \newline $(\mathrm{resp.}~\{\sigma^{\theta}\}_{\theta\in \thetaset})$ of an efficient device $D$ are computationally indistinguishable.
\end{lemma}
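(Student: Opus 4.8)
The plan is to reduce the claimed indistinguishability to the injective invariance property of ENTCFs by a hybrid argument over the $2N$ key coordinates, and then to transport the resulting near-indistinguishability of key distributions through the device, using only that $D$ is efficient. Fix $\theta,\theta'\in\thetaset$ with $\theta\neq\theta'$, and for $\vartheta\in\thetaset$ write $D_\vartheta$ for the distribution on public-key tuples $k=(k_1,\ldots,k_{2N})$ obtained by sampling each $k_i$ from $\Gen_\calF(1^\lambda)_{\key}$ or $\Gen_\calG(1^\lambda)_{\key}$ according to the recipe in step 1 of the protocol. First I would show $D_\theta\csimeq_{\negl}D_{\theta'}$. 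Inspecting the recipes, $D_\theta$ and $D_{\theta'}$ differ only on a set $S\subseteq[2N]$ of coordinates ($|S|\leq 2N$) on which one recipe prescribes $\Gen_\calF(1^\lambda)_{\key}$ and the other $\Gen_\calG(1^\lambda)_{\key}$. Introduce hybrids $H_0=D_\theta,H_1,\ldots,H_{|S|}=D_{\theta'}$, where $H_j$ samples the first $j$ elements of $S$ according to $D_{\theta'}$'s recipe and every other coordinate according to $D_\theta$'s recipe. Consecutive hybrids $H_{j-1}$ and $H_j$ differ in exactly one coordinate, distributed as $\Gen_\calF(1^\lambda)_{\key}$ in one and $\Gen_\calG(1^\lambda)_{\key}$ in the other; the remaining coordinates are sampled identically and, by the efficient function generation property, efficiently. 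Hence any efficient distinguisher for $H_{j-1}$ versus $H_j$ yields (after efficiently sampling the remaining coordinates) an efficient distinguisher for $\Gen_\calF(1^\lambda)_{\key}$ versus $\Gen_\calG(1^\lambda)_{\key}$, so its advantage is negligible by injective invariance. Since $|S|\leq 2N=\poly(\lambda)$, summing over the hybrids via the triangle inequality for computational indistinguishability (\cref{lem:ci_triangle}) gives $D_\theta\csimeq_{\negl}D_{\theta'}$.

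Next I would transport this to the device. Since $D$ is efficient, there is an efficient quantum algorithm that on input $k$ prepares a state whose expectation over $k\sim D_\vartheta$ equals $\psi^\vartheta$; postcomposing with the efficient measurement $M$ and recording its outcome then yields an efficient preparation of $\sigma^\vartheta$. Let $\{E,\1-E\}$ be any efficient POVM on $\calH_D\otimes\calH_Y$ (resp.\ on $\calH_D\otimes\calH_Y\otimes\calH_R$). Composing the relevant state preparation with a measurement of $E$ defines an efficient algorithm $\calA$ with $\pr_{k\sim D_\vartheta}[\calA(k)=0]=\tr[E\psi^\vartheta]$ (resp.\ $=\tr[E\sigma^\vartheta]$), where $\psi^\vartheta$ and $\sigma^\vartheta$ are read in expectation over the keys $k$ sampled according to $\vartheta$, per the remark following \cref{def:device}. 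Applying $D_\theta\csimeq_{\negl}D_{\theta'}$ to $\calA$ gives $\abs{\tr[E\psi^\theta]-\tr[E\psi^{\theta'}]}\leq\negl$ and $\abs{\tr[E\sigma^\theta]-\tr[E\sigma^{\theta'}]}\leq\negl$. As $\{E,\1-E\}$ was an arbitrary efficient POVM, \cref{def:comp_indist} yields $\psi^\theta\csimeq_{\negl}\psi^{\theta'}$ and $\sigma^\theta\csimeq_{\negl}\sigma^{\theta'}$, which is the claim.

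I do not expect a genuine obstacle here; the only delicate point is bookkeeping. One must be careful that the $\theta$-labelled states are to be understood in expectation over the keys $k$ (so that $\tr[E\psi^\theta]$ really is the acceptance probability of the composed algorithm $\calA$), and one must check that the hybrid construction uses only $|S|\leq 2N=\poly(\lambda)$ steps, so that the accumulated negligible advantages remain negligible. Beyond that, the argument is a routine combination of injective invariance, efficient function generation, and \cref{def:comp_indist,lem:ci_triangle}.
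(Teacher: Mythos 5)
Your proposal is correct and follows essentially the same route as the paper: a hybrid argument over the $2N$ key coordinates reducing to the injective invariance property (using that the key strings for any two $\theta$'s differ in at most $2N = \poly(\lambda)$ coordinates and invoking \cref{lem:ci_triangle}), followed by transporting the indistinguishability of key distributions through the efficient device to the states $\psi^\theta$ (resp.\ $\sigma^\theta$). Your write-up is only slightly more explicit about the reduction and the composed distinguisher than the paper's proof, but the substance is identical.
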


\begin{proof}
For $v\in \{0,1\}^{2N}$, let $\calD(v) \coloneqq (k_1,k_2,\ldots,k_{2N})$ be the distribution on $2N$-tuples of public keys such that each $k_i$
is independently distributed according to $\Gen_{\calG}(1^{\lambda})_{\key}$ if $v_i=0$ or $\Gen_{\calF}(1^{\lambda})_{\key}$ if $v_i=1$.

For $\theta\in \thetaset$, let
\begin{equation}
\calD^{\theta} \coloneqq 
\begin{cases}
    \calD(\str(\theta)) &\text{if $\theta\in [2N]$,}
    \\
    \calD(0^{2N}) &\text{if $\theta = 0$,}
    \\
    \calD(1^{2N}) &\text{if $\theta = \diamond$,}
\end{cases}
\end{equation}
where $\str(\theta) \in \{0,1\}^{2N}$ is the $2N$-bit string with a $1$ at position $\theta$ and $0$s elsewhere.

The injective invariance property of ENTCFs (\cref{property:injective_invariance}) states that the distributions $\Gen_{\calG}(1^{\lambda})_{\key}$ and $\Gen_{\calF}(1^{\lambda})_{\key}$ are computationally indistinguishable. Therefore, it is clear that for any $u,v\in\{0,1\}^{2N}$, with $u,v$ differing by exactly one bit, $\calD(u)$ and $\calD(v)$ are computationally indistinguishable. Therefore, any pair of distributions in $\{\calD^{\theta}\}_{\theta\in\thetaset}$ are computationally indistinguishable up to $2N \, \negl = \negl$ by \cref{lem:ci_triangle} and  our setting $N = \lambda$. This is because, for any $\theta, \theta'\in \thetasetint$, $\str(\theta)$ and $\str(\theta')$ differ by at most $2$ bits while $\str(\theta)$ and $1^{2N}$ differ by at most $2N$ bits. As $D$ can prepare the state $\psi^{\theta}$ (resp.~$\sigma^{\theta}$) efficiently given keys drawn from $\calD^{\theta}$, all pairs of states in $\{\psi^{\theta}\}_{\theta \in \thetaset}$  (resp.~$\{\sigma^{\theta}\}_{\theta \in \thetaset}$) must be computationally indistinguishable.
\end{proof}

In later parts of the soundness proof, we use $\gamma_P$, $\gammaHq$ and $\gammaH$ to characterize $\sigma^\theta$-dependent operator relations. 
First, we show $\sum_v\sigma^{\theta,v}$ is $\gammaP$-close to $\sigma^\theta$ so that in later proofs we can replace
$\sigma^\theta$ by $\sum_v\sigma^{\theta,v}$, which is easier to analyze---see \cref{def:zeta_chi_def} and the lemmas following it.

\begin{defn}[Valid $y$]\label{def:valid_y}
    Let $y\in \calY^{2N}$ and $k \in (\calK_\calF \cup \calK_\calG)^{2N}$. We say $y$ is valid (with respect to $k$) if $\hatb(k_i,y_i) \neq \perp$ for all $i\in [2N]$; equivalently, 
    \begin{equation}
        y_i \in \bigcup_x \bigl(\Supp(f_{k_i,0}(x)) \cup \Supp(f_{k_i,1}(x))\bigr), \quad \text{for all $i\in [2N]$.}
    \end{equation}
    Otherwise, we say $y$ is invalid. 
\end{defn}

\begin{lemma}\label{lem:sum_sigma_v}
    Let $D$ be a device. For all $\theta \in \thetaset$, $\norm{\sigma^\theta - \sum_v \sigma^{\theta,v}}_1 \leq \gammaP$. 
\end{lemma}

\begin{proof}
    We have
    \begin{equation}
    \begin{aligned}
    \bignorm{\sigma^\theta - \sum_v \sigma^{\theta, v}}_1 
    = & \, \bignorm{ \sum_{\text{invalid } y} 
    \; \sum_d M_y^d \psi_y^\theta M_y^d \otimes \ketbrasame{y,d} }_1 \\
    = & \, \Bigl| \tr\Bigl[ \sum_{\text{invalid } y}   
    \; \sum_d M_y^d \psi_y^\theta M_y^d \otimes \ketbrasame{y,d}\Bigr] \Bigr| \\
    =& \, \sum_{\text{invalid } y}  \tr[\psi_y^\theta] \leq 1-t_{\theta} \leq \gammaP,
\end{aligned}
\end{equation}
where the last two inequalities follow from the definitions of $t_\theta$ and $\gammaP$ respectively.
\end{proof}

\cref{lem:zeta_chi_bounds_new,lem:indistinguishability,lem:sum_sigma_v} imply that, on the state $
\sigma^{\theta}$, $Z_{q,i}$ is close to $Z_i \coloneqq Z_{0^{2N},i}$ and $X_{q,i}$ is close to $X_i\coloneqq X_{1^{2N},i}$. More precisely:
\begin{proposition}
\label{prop:tilde_equal_nontilde}
For all $q\in \{0,1\}^{2N}$, $\theta\in [2N]\cup \{0,\diamond\}$, and $i\in[2N]$, we have
\begin{equation}
\begin{aligned}
    Z_{q,i} \simeq_{8\gammaHq + 8\gammaH + 4 \gammaP, \sigma^\theta} Z_{i} && \text{if $q_i=0$,}
    \\
    X_{q,i} \simeq_{8\gammaHq + 8\gammaH + 4 \gammaP, \sigma^\theta} X_{i} && \text{if $q_i=1$.}
\end{aligned}
\end{equation}
\end{proposition}
\begin{proof}
We first prove the equation involving $Z$. It suffices to prove this equation for $\theta=0$ because of the computational indistinguishability of the $\sigma^\theta$s (\cref{lem:indistinguishability}) and part 2 of the lifting lemma (\cref{lem:lifting}). Let $q\in \{0,1\}^{2N}$ and $i\in [2N]$ be such that $q_i=0$. Then, we have
\begin{equation}
\begin{aligned}
    \norm{Z_{q,i} - Z_i}^2_{\sigma^{0,v}} 
    &\leq 
    (\norm{Z_{q,i} - (-1)^{v_i}I}_{\sigma^{0,v}} + \norm{Z_i - (-1)^{v_i}I}_{\sigma^{0,v}})^2
    \\
    &\leq 2 \zeta(0,q,i,v) + 2\zeta(0,0^{2N},i,v).
\end{aligned}
\end{equation}

Therefore, by \cref{lem:op_approx_compts} and the bounds on $\zeta$ in \cref{lem:zeta_chi_bounds_new}, which is applicable because $i\neq 0$, we obtain
\begin{equation}\label{eq:tilde_equal_nontilde_sumv_new}
    Z_{q,i}\simeq_{8\gammaHq + 8\gamma_{H,0^{2N}}, \sum_v \sigma^{0,v}} Z_i.
\end{equation}

Now, using \cref{lem:sum_sigma_v}, we obtain
\begin{equation}
    \Bigl| \tr[(Z_{q,i}-Z_i)^2\sigma^0] - \tr[(Z_{q,i}-Z_i)^2\sum_v\sigma^{0,v}] \Bigr| \leq \norminfty{Z_{q,i}-Z_i}^2 \cdot \gammaP \leq 4 \gammaP.
\end{equation}
Therefore, by \cref{eq:tilde_equal_nontilde_sumv_new}, we have
\begin{equation}
    \tr[(Z_{q,i}-Z_i)^2\sigma^0] \leq  \tr[(Z_{q,i} - Z_i)^2\sum_v\sigma^{0,v}] + 4\gammaP \leq 8\gammaHq + 8\gamma_{H,0^{2N}} + 4 \gammaP,
\end{equation}
which, noting $\gamma_{H,0^{2N}}\leq \gammaH$ by definition, completes the proof of the first equation of the lemma. 

Now consider the second equation of the lemma involving $X$.  It again suffices to only consider $\theta=i$ by the reasoning at the start. The proof is then analogous to the proof of the first equation, except we use the bounds on $\chi$ in \cref{lem:zeta_chi_bounds_new}, which is applicable because $\theta=i$. We omit the details. 
\end{proof}

\subsection{Reduction to perfect device}
\label{sec:reduction}

In this subsection, we follow the strategy of \cite[Lemma 4.13]{mv2021selftest} by first showing
that any efficient device with $\gammaP(D)<1$ is close to a ``perfect'' efficient device. A perfect device is one that, for any $\theta\in \thetaset$ chosen by the verifier, can always pass the preimage test except with $\negl$ small probability.
Then, we restrict attention to efficient perfect devices for the rest of the soundness proof before \cref{sec:soundness_states_meas}.  

\begin{defn}[Perfect device]\label{def:perfect_device}
Let $D = (S,\Pi,M,P)$ be an device. We say $D$ is perfect if $\gammaP(D) = \negl$.
\end{defn}

\begin{proposition}
\label{prop:reduce_to_pd}
Let $D = (S, \Pi, M, P)$ be an efficient device with $\gammaP(D)<1$ and $S = \{\psi^\theta \mid \theta\in \thetaset \}$.
Then, there exists an efficient perfect device $\tildeD = (\tildeS, \Pi, M, P)$ which uses the same measurements $\Pi, M, P$ as $D$ and has states $\tildeS = \{ \tildepsi^\theta \mid \theta\in \thetaset \}$ that satisfy: for all $\theta\in \thetaset$,
\begin{equation}
        \norm{\psi^\theta - \tildepsi^\theta}_1 \leq \sqrt{\gamma_P(D)}.
\end{equation}
\end{proposition}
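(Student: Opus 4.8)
The statement is a standard "purification trick / amplitude amplification by projection" reduction: given a device whose preimage-test success probability $t_\theta(D)$ may be bounded away from $1$, we want to produce a new device with the same measurements $\Pi,M,P$ but with new states $\tildepsi^\theta$ on which the preimage test (almost) always succeeds, and with $\|\psi^\theta-\tildepsi^\theta\|_1\le\sqrt{\gammaP(D)}$. The natural candidate for $\tildepsi^\theta$ is the state $\psi^\theta$ conditioned on passing the preimage test — i.e., we renormalize the projection of $\psi^\theta$ onto the "accept" subspace of $\Pi$. The key quantitative input is the gentle measurement lemma: if an event has probability $\ge 1-\delta$ on $\psi$, then the post-measurement state conditioned on that event is $\sqrt\delta$-close in trace distance to $\psi$.

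**Key steps.** First I would recall that by \cref{eq:t_theta}, $t_\theta(D) = \tr[\sum_{y,b,x\mid\CHK(k,y,b,x)=0}\Pi_y^{b,x}\psi_y^\theta] \ge 1-\gammaP(D) > 0$ for every $\theta\in\thetaset$ (in expectation over $k$, but the argument is pointwise in $k$ so I can fix $k$ and take expectations at the end; alternatively work with the averaged states directly). Define the "good" projector $\Lambda^\theta \coloneqq \sum_{y}\sum_{(b,x)\mid \CHK(k,y,b,x)=0}\Pi_y^{b,x}\otimes\ketbrasame{y}$ on $\calH_D\otimes\calH_Y$; since $\Pi$ is a projective measurement and the sum is over a subset of outcomes, $\Lambda^\theta$ is indeed a projector, and $\tr[\Lambda^\theta\psi^\theta] = t_\theta(D)$. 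Now set
\begin{equation}
    \tildepsi^\theta \coloneqq \frac{\Lambda^\theta\psi^\theta\Lambda^\theta}{\tr[\Lambda^\theta\psi^\theta]}.
\end{equation}
This is a valid density operator since the denominator is positive. Since $\Lambda^\theta$ is defined in terms of $y$-classical operators, $\tildepsi^\theta$ is still classical on $\calH_Y$, so $\tildeD=(\tildeS,\Pi,M,P)$ is a legitimate device. It is efficient: $\tildepsi^\theta$ can be prepared by preparing $\psi^\theta$, measuring $\Pi$, and postselecting on the accept outcome — but postselection is not efficient in general, so here I would instead invoke the standard argument (as in \cite[Lemma 4.13]{mv2021selftest}) that one only needs the measurement statistics, or more carefully note that since $\gammaP(D)<1$ is a fixed constant the postselection succeeds with constant probability and can be made efficient by $O(1)$ repetitions; I would follow whatever convention \cite{mv2021selftest} uses here. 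Next, perfection: by construction every outcome $(b,x)$ in the support of $\Pi$ applied to $\tildepsi^\theta$ satisfies $\CHK(k,y,b,x)=0$ (any "reject" outcome $\Pi^{b',x'}$ satisfies $\Pi^{b',x'}\Lambda^\theta=0$), so $t_\theta(\tildeD)=1$, hence $\gammaP(\tildeD)=0=\negl$.

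**The trace-distance bound.** This is where the real content is, and it is the gentle measurement lemma. For any density operator $\rho$ and projector $\Lambda$ with $\tr[\Lambda\rho]\ge 1-\delta$, one has $\|\rho - \Lambda\rho\Lambda/\tr[\Lambda\rho]\|_1 \le 2\sqrt\delta$ in the usual normalization; but since \cite{mv2021selftest}'s and this paper's convention for "state distance" squares the trace norm (cf.\ \cref{def:approximation}), and since the claimed bound is $\sqrt{\gammaP(D)}$ rather than $2\sqrt{\gammaP(D)}$, I expect the relevant elementary estimate is the tighter one: $\|\rho-\Lambda\rho\Lambda\|_1\le 2\sqrt{\tr[(\1-\Lambda)\rho]}$ together with $\|\Lambda\rho\Lambda - \Lambda\rho\Lambda/\tr[\Lambda\rho]\|_1 = |1-1/\tr[\Lambda\rho]|\tr[\Lambda\rho\Lambda] = 1-\tr[\Lambda\rho]\le\delta$, so that $\|\rho-\tildepsi^\theta\|_1 \le 2\sqrt\delta + \delta$; I would then need to check whether the paper intends $\sqrt{\gammaP(D)}$ up to constants or has a sharper argument. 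Actually, the cleanest route giving exactly $\sqrt{\gammaP(D)}$: write $\|\psi^\theta-\tildepsi^\theta\|_1$, use that both are density operators so $\tr[\psi^\theta]=\tr[\tildepsi^\theta]=1$, and apply the Fuchs–van de Graaf / fidelity bound $\|\rho-\sigma\|_1 \le 2\sqrt{1-F(\rho,\sigma)^2}$ with $F(\psi^\theta,\tildepsi^\theta)^2 \ge \tr[\Lambda^\theta\psi^\theta] = t_\theta(D) \ge 1-\gammaP(D)$ — the fidelity lower bound holds because $\tildepsi^\theta$ is (proportional to) $\Lambda^\theta\psi^\theta\Lambda^\theta$, for which $\langle\psi^\theta|\tildepsi^\theta|\psi^\theta\rangle$-type computations give fidelity at least $\sqrt{t_\theta}$. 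Finally, take expectation over $k$ (the states $\psi^\theta,\tildepsi^\theta$ are $k$-dependent, $\gammaP(D)$ is defined via an expectation over $k$), using concavity/Jensen on $\sqrt{\cdot}$ to pass from the pointwise bound $\|\psi^\theta_k-\tildepsi^\theta_k\|_1\le\sqrt{1-t_\theta(D;k)}$ to $\mathbb{E}_k\|\cdots\|_1 \le \sqrt{\mathbb{E}_k[1-t_\theta(D;k)]} \le\sqrt{\gammaP(D)}$.

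**Main obstacle.** The genuinely delicate point is the efficiency of $\tildeD$: postselection is not a priori an efficient quantum operation. Since $\gammaP(D)<1$ is assumed (a constant, not just $o(1)$), postselection succeeds with constant probability, so a fixed number of repetitions suffices and $\tildepsi^\theta$ is efficiently preparable — but I would want to state this carefully and match it to how \cite[Lemma 4.13]{mv2021selftest} handles it, since the soundness theorem later quantifies over efficient devices and the whole reduction is only useful if $\tildeD$ stays efficient. The trace-distance estimate itself is routine gentle-measurement bookkeeping; the only subtlety there is getting the constant right (squared vs.\ unsquared trace-norm conventions and the factor of $2$) so that the bound reads exactly $\sqrt{\gammaP(D)}$, which I'd reconcile by using the fidelity argument above rather than the cruder $2\sqrt\delta+\delta$ bound.
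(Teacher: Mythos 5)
Your proposal follows essentially the same route as the paper: project $\psi^\theta$ onto the CHK-accepting subspace of the preimage measurement, renormalize, bound the disturbance by a gentle-measurement/fidelity argument, and make the preparation efficient by repeat-until-success. The paper's only extra content is the explicit circuit realizing your binary measurement $\{\Lambda^\theta,\1-\Lambda^\theta\}$ efficiently (apply the unitary $U_\Pi$ dilating $\Pi$, compute $\CHK$ into a single ancilla bit — which needs no trapdoor — measure only that bit, apply $U_\Pi^\dagger$, trace out the ancilla), with the trace-distance bound transferred from the dilated level by unitary invariance and monotonicity under partial trace; your worry about the constant ($\sqrt{\gammaP}$ versus $2\sqrt{\gammaP}$) is immaterial downstream, and the paper's own display has the same factor-of-two looseness.

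The one genuine gap is your efficiency/perfection bookkeeping. A perfect device must satisfy $\gammaP(\tildeD)=\negl$, not merely ``passes with constant probability,'' so a \emph{fixed (constant) number} of repetitions of the postselection does not suffice: the preparation would abort (and hence the device could fail the preimage test) with constant probability, leaving $\gammaP(\tildeD)=\Omega(1)$. The paper instead repeats the prepare-and-check procedure up to $\poly(\lambda)$ times, so the abort probability is $(1-t_\theta)^{\poly(\lambda)}\leq \gammaP^{\poly(\lambda)}=\negl$; this is precisely where the hypothesis $\gammaP(D)<1$ is used. Relatedly, your claim ``$t_\theta(\tildeD)=1$, hence $\gammaP(\tildeD)=0$'' holds only for the idealized postselected state; for the efficiently preparable device one gets $\negl$ rather than $0$, and only after the polynomially many retries. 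With that repetition count corrected, the rest of your argument (classicality on $\calH_Y$ preserved, orthogonality of reject outcomes to $\Lambda^\theta$, Jensen over the key distribution) matches the paper's proof.
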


\begin{proof}
    $\tildeD$ can efficiently prepare each state $\tildepsi^\theta$ in $\tildeS$ as follows. $\tildeD$ first follows $D$ to prepare $\psi^\theta$ using a given set of public keys $k \in (\calK_{\calF}\cup \calK_{\calG})^{2N}$ sampled according to $\theta$ as described in the protocol. $\tildeD$ then applies the efficient unitary $U_\Pi$ associated with the efficient measurement $\Pi$, as per \cref{def:efficient_operators}, to create the state
    \begin{equation}\label{eq:unitary_Pi}
        \phi^{\theta} \coloneqq  U_\Pi (\ketbrasame{0_{2N+2Nw}}_{\anc} \otimes \psi^{\theta}) U_{\Pi}^{\dagger} = \sum_{y,b,x,b',x'} \ket{y,b,x}\bra{y,b',x'} \otimes 
        \Pi^{b,x}_y \psi_y^\theta \Pi^{b',x'}_y.   
    \end{equation}

    On this state, $\tildeD$ evaluates the
    $\CHK$ function---which is efficient given only $k$ and not the trapdoor $t_k$, see the efficient decoding property of ENTCFs (\cref{property:efficient_decoding})---on the classical register holding  $(y, b, x)$ to create the state
    \begin{equation}\label{eq:chk_phitheta}
        \sum_{y, b, x, b', x'}
        \ket{\CHK(k,y,b,x)}\bra{\CHK(k,y,b',x')} \otimes
        \ket{y,b,x}\bra{y,b',x'} \otimes 
        \Pi^{b,x}_y \psi_y^\theta \Pi^{b',x'}_y.
    \end{equation}
    $\tildeD$ next measures the first (single-bit) register of the state in \cref{eq:chk_phitheta}. 
    The probability that the measurement outcome is $0$ is equal to $\tr[\Lambda\phi^{\theta}]$, where we write $\Lambda$ for the projector
    \begin{equation}
        \Lambda \coloneqq \sum_y \sum_{b,x \mid \CHK(k,b,x,y) = 0 } \ketbrasame{y,b,x}.
    \end{equation}
    Note that $\tr[\Lambda\phi^{\theta}] = t_{\theta}$, where we recall $t_{\theta}$ from \cref{def:gamma}.
    
    If the measurement outcome is $0$, $\tildeD$ continues by applying $U_{\Pi}^{\dagger}$ to the post-measurement states, and then traces out the $\anc$ register to output the state
    \begin{equation}
        \tildepsi^{\theta} \coloneqq \tr_{\anc} \Bigg[U_\Pi^\dagger \, \frac{\Lambda \phi^{\theta} \Lambda}{\tr[\Lambda\phi^{\theta}]}
       \, U_\Pi \Bigg].
    \end{equation}
    But, using \cref{eq:unitary_Pi}, we can write
    \begin{equation}
        \psi^{\theta} = \tr_{\anc} \Big[ U_{\Pi}^{\dagger} \phi^{\theta} U_{\Pi} \Big].
    \end{equation}
    Therefore, since the trace norm cannot increase under partial trace and is unitarily invariant, we obtain
    \begin{equation}
        \normone{\tildepsi^{\theta} - \psi^{\theta}} \leq \bignorm{\frac{\Lambda \phi^{\theta} \Lambda}{\tr[\Lambda\phi^{\theta}]} - \phi^{\theta}}_1 \leq 2\sqrt{1-\tr[\Lambda \phi^{\theta}]} = \sqrt{1-t_{\theta}} \leq \sqrt{\gammaP},
    \end{equation}
    where we used the gentle measurement lemma in the second inequality, see, e.g.,~\cite[Lemma 9.4.1 in arXiv v8]{wilde2017textbook}.

    Now, $\tildeD$ might not obtain $0$ when it measures the first register of the state in \cref{eq:chk_phitheta}. In this case, it repeats the above procedure up to (a fixed) $\poly(\lambda)$ times, stopping and outputting $\tildepsi^{\theta}$ the first time $0$ is measured and aborting if $0$ is never measured within those $\poly(\lambda)$ repeats. But, the probability of aborting is $(1-t_{\theta})^{\poly(\lambda)}\leq \gammaP^{\poly(\lambda)} = \negl$ as required.
\end{proof}

We finish this subsection by proving several lemmas that hold for efficient perfect devices. \cref{lem:sigma_equal_sumv,lem:collapse} are used implicitly in \cite{mv2021selftest}. The other \cref{lem:uniformity_sigma_marginals,lem:trace_xsigma_zero} are generalizations of explicit lemmas in \cite{mv2021selftest} and provide a precise reference in their proofs.
\begin{lemma}\label{lem:sigma_equal_sumv}
Let $D$ be a perfect device. For all $\theta\in \thetaset$, we have  
\begin{equation}
\sigma^\theta\simeq_{\negl} \sum_v \sigma^{\theta,v}.
\end{equation} 
\end{lemma}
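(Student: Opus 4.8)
The plan is to combine \cref{lem:sum_sigma_v}, which already gives $\norm{\sigma^\theta - \sum_v \sigma^{\theta,v}}_1 \leq \gammaP(D)$ for an arbitrary device, with the definition of a perfect device (\cref{def:perfect_device}), which says exactly that $\gammaP(D) = \negl$. First I would recall that the ``state distance'' notation $\phi \simeq_\epsilon \psi$ means $\norm{\phi - \psi}_1^2 \leq \epsilon$, so I need to show $\norm{\sigma^\theta - \sum_v \sigma^{\theta,v}}_1^2 = \negl$. Since $\gammaP = \negl$ and squaring a negligible function yields a negligible function (for any polynomial $p$, $\lim_\lambda \gammaP(\lambda)^2 p(\lambda) = \lim_\lambda \gammaP(\lambda) \cdot (\gammaP(\lambda) p(\lambda)) = 0 \cdot 0 = 0$), we get $\norm{\sigma^\theta - \sum_v \sigma^{\theta,v}}_1^2 \leq \gammaP^2 = \negl$, which is precisely $\sigma^\theta \simeq_{\negl} \sum_v \sigma^{\theta,v}$.

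Concretely, the proof is three lines: invoke \cref{lem:sum_sigma_v} to bound the trace norm by $\gammaP$, square both sides to get $\norm{\sigma^\theta - \sum_v \sigma^{\theta,v}}_1^2 \leq \gammaP^2$, and then use $\gammaP = \negl$ (perfect device) together with the closure of negligible functions under multiplication (in particular squaring) to conclude the right-hand side is $\negl$. There is essentially no obstacle here; the only thing to be slightly careful about is the bookkeeping that $\negl^2 = \negl$, which is standard. Here is the write-up:

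\begin{proof}
By \cref{lem:sum_sigma_v}, we have $\norm{\sigma^\theta - \sum_v \sigma^{\theta,v}}_1 \leq \gammaP(D)$. Since $D$ is a perfect device, $\gammaP(D) = \negl$ by \cref{def:perfect_device}. Squaring, we obtain
\begin{equation}
    \bignorm{\sigma^\theta - \sum_v \sigma^{\theta,v}}_1^2 \leq \gammaP(D)^2 = \negl,
\end{equation}
where we used that the square of a negligible function is negligible: for any polynomial $p$, $\lim_{\lambda\to\infty} \gammaP(\lambda)^2 p(\lambda) = \lim_{\lambda\to\infty} \gammaP(\lambda) \cdot (\gammaP(\lambda) p(\lambda)) = 0$, using that $\gammaP(\lambda) p(\lambda) \to 0$ and $\gammaP(\lambda) \to 0$. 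By the definition of state distance (\cref{def:approximation}), this means exactly $\sigma^\theta \simeq_{\negl} \sum_v \sigma^{\theta,v}$.
\end{proof}
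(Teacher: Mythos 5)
Your proof is correct and follows exactly the paper's route: the paper's proof is simply the one-line observation that the claim follows from \cref{lem:sum_sigma_v} together with \cref{def:perfect_device}, and you have spelled out the (routine) bookkeeping that the definition of $\simeq$ involves the squared trace norm and that the square of a negligible function is negligible.
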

\begin{proof}
The lemma follows from \cref{lem:sum_sigma_v} and the definition of a perfect device (\cref{def:perfect_device}).
\end{proof}

\begin{lemma}[Collapsing property]\label{lem:collapse}
Let $D$ be an efficient perfect device. For all $i \in [2N]$ and $\theta \in \thetaset$, we have
\begin{align*}
    \psi^\theta \csimeq_{\negl} \sum_{b,x} \Pi^{b,x}
    \psi^\theta \Pi^{b,x}.
\end{align*}
\end{lemma}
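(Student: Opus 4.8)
The statement asserts that for an efficient perfect device, applying the preimage measurement $\Pi = \{\Pi^{b,x}\}$ to $\psi^\theta$ and discarding the outcome leaves the state computationally unchanged. The idea is that a perfect device passes the preimage test with probability $1-\negl$, which forces the post-measurement state (conditioned on passing) to be close to the pre-measurement state; and the subtlety is that ``close'' must be upgraded to ``computationally close'' for the full dephased state $\sum_{b,x}\Pi^{b,x}\psi^\theta\Pi^{b,x}$, not just the projection onto the accepting outcomes. I expect the proof to run as follows.

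First I would recall from \cref{eq:t_theta} and the definition of a perfect device that $t_\theta = \tr[\sum_{y,b,x\mid \CHK(k,y,b,x)=0}\Pi_y^{b,x}\psi_y^\theta] \geq 1 - \gammaP(D) = 1-\negl$, in expectation over the keys $k$ drawn according to $\theta$. Write $\Lambda \coloneqq \sum_{y}\sum_{b,x\mid \CHK(k,y,b,x)=0}\ketbrasame{y,b,x}$ for the ``accept'' projector as in the proof of \cref{prop:reduce_to_pd}, so that $\tr[\Lambda\Pi(\psi^\theta)] \geq 1-\negl$ where $\Pi(\cdot)$ denotes the dephasing channel $\sum_{b,x}\Pi^{b,x}(\cdot)\Pi^{b,x}$. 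Crucially, $\Lambda$ commutes with the classical registers, and on the accepting outcomes the preimages $(b,x)$ are \emph{uniquely determined} by $y$ via $b_i=\hatb(k_i,y_i)$, $x_i=\hatx(b_i,k_i,y_i)$, because of the disjoint trapdoor injective pair property; this is the analogue of the decoding being well-defined. So the dephasing restricted to valid accepting outcomes acts trivially in the sense that measuring $(b,x)$ there gives no new information beyond $y$.

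The core step is then a gentle-measurement / ``almost trivial measurement'' argument. Let $\Pi^{\mathrm{acc}} \coloneqq \sum_{y}\sum_{(b,x)\in \Preimage(\theta,y)}\Pi^{b,x}$ and $\Pi^{\mathrm{rej}} \coloneqq \1 - \Pi^{\mathrm{acc}}$, so $\tr[\Pi^{\mathrm{acc}}\psi^\theta] = t_\theta \geq 1-\negl$. Since $\Pi^{\mathrm{acc}}$ is a projector close to the identity on $\psi^\theta$, the gentle measurement lemma (\cite[Lemma 9.4.1]{wilde2017textbook}, as used in \cref{prop:reduce_to_pd}) gives $\normone{\psi^\theta - \Pi^{\mathrm{acc}}\psi^\theta\Pi^{\mathrm{acc}}} \leq 2\sqrt{1-t_\theta} = \negl$, and likewise $\sum_{b,x}\Pi^{b,x}\psi^\theta\Pi^{b,x}$ differs from $\sum_{(b,x)}\Pi^{b,x}\Pi^{\mathrm{acc}}\psi^\theta\Pi^{\mathrm{acc}}\Pi^{b,x}$ by $\negl$ in trace norm (the dephasing channel is a contraction on trace norm). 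But the $\Pi^{b,x}$ in $\Preimage(\theta,y)$ are mutually orthogonal \emph{and} each one, when it appears, is pinned down uniquely by the classical $y$ data; so dephasing the state $\Pi^{\mathrm{acc}}\psi^\theta\Pi^{\mathrm{acc}}$ by $\{\Pi^{b,x}\}$ leaves it invariant (the measurement is redundant on the support of $\Pi^{\mathrm{acc}}$, which is already block-diagonal in the $(b,x)$ blocks that are consistent with $y$). Combining, $\sum_{b,x}\Pi^{b,x}\psi^\theta\Pi^{b,x} \simeq_{\negl} \Pi^{\mathrm{acc}}\psi^\theta\Pi^{\mathrm{acc}} \simeq_{\negl} \psi^\theta$ in trace norm, hence also $\csimeq_{\negl}$.

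The main obstacle, and the point that needs care, is the uniqueness claim: that on accepting outcomes the pair $(b,x)$ is a deterministic function of $y$, so the measurement $\{\Pi^{b,x}\}$ acts as a pinching into blocks that $\Pi^{\mathrm{acc}}\psi^\theta\Pi^{\mathrm{acc}}$ already respects. This follows from the disjoint trapdoor injective pair property — $\Supp(f_{k_i,b_i}(x_i)) \cap \Supp(f_{k_i,b_i}(x_i')) = \emptyset$ for $x_i\neq x_i'$, and for $k_i\in\calK_\calG$ the two branches $b_i=0,1$ have disjoint ranges — so that $\CHK(k,y,b,x)=0$ determines $b$ and $x$ from $y$ whenever $y$ is valid; invalid $y$ contribute $\negl$ total weight on a perfect device and are absorbed into the error. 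Once this structural fact is in hand, the rest is just the gentle-measurement estimate and contractivity of the dephasing channel, so actually the statement follows quickly; indeed one could also phrase it as a one-line corollary of \cref{lem:replace} part 4 applied with $\psi-\psi' = \psi^\theta - \Pi^{\mathrm{acc}}\psi^\theta\Pi^{\mathrm{acc}}$.
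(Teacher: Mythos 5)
There is a genuine gap, and it is exactly at the step you flag as the ``core'' structural fact. Your claim that on accepting outcomes $(b,x)$ is uniquely determined by $y$ is only true for coordinates whose key is injective ($k_i \in \calK_\calG$). But the lemma quantifies over all $\theta \in \thetaset$: for $\theta \in [2N]$ the $\theta$-th key is claw-free, and for $\theta = \diamond$ \emph{all} keys are claw-free. For a claw-free key, $\CHK(k_i,y_i,b_i,x_i)=0$ accepts both members of the claw, so $\Pi^{\mathrm{acc}}$ contains two orthogonal $(b_i,x_i)$-blocks for the same valid $y_i$, and the honest state after the $y$-measurement is a coherent superposition $\tfrac{1}{\sqrt2}(\ket{0,\hatx_0}+\ket{1,\hatx_1})$ across them. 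Dephasing by $\{\Pi^{b,x}\}$ destroys this coherence, so $\sum_{b,x}\Pi^{b,x}\psi^\theta\Pi^{b,x}$ is \emph{statistically far} from $\psi^\theta$ when $\theta\neq 0$ (trace distance close to $1$ for $\theta=\diamond$). Consequently your conclusion $\sum_{b,x}\Pi^{b,x}\psi^\theta\Pi^{b,x}\simeq_{\negl}\psi^\theta$ in trace norm is false for those $\theta$; the fact that you derive a statement stronger than the claimed $\csimeq_{\negl}$ is the warning sign. Gentle measurement plus pinching cannot prove this lemma, because the lemma is inherently computational for $\theta\neq 0$.

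The missing ingredient is the injective invariance property (i.e., the LWE-based computational indistinguishability of the $\psi^\theta$'s, \cref{lem:indistinguishability}). The paper's proof establishes the identity essentially as you do, but \emph{only} for $\theta=0$, where all keys are injective, perfectness kills the invalid-$y$ mass, and the disjoint trapdoor injective pair property makes $\psi^0$ exactly block-diagonal so that the pinching acts trivially. It then transfers the statement to arbitrary $\theta$ by noting that the dephasing channel $\sum_{b,x}\Pi^{b,x}(\cdot)\Pi^{b,x}$ is efficient (since $\Pi$ is), so
\begin{equation*}
    \sum_{b,x}\Pi^{b,x}\psi^\theta\Pi^{b,x} \csimeq_{\negl} \sum_{b,x}\Pi^{b,x}\psi^0\Pi^{b,x} \simeq_{\negl} \psi^0 \csimeq_{\negl} \psi^\theta,
\end{equation*}
and concludes with the triangle inequality for computational indistinguishability (\cref{lem:ci_triangle}). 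To repair your argument you should restrict your structural/pinching step to $\theta=0$ and add this computational lifting step for all other $\theta$.
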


\begin{proof}
    For $c\in\{0,1\}$, let
    $\calY_{i,c} \coloneqq \{y_i \in \calY \mid 
    \hatb(k_i,y_i) = c\}$. Note that $y \in \calY^{2N}$ is valid (see \cref{def:valid_y}) if and only if $y_i \in \calY_{i,0} \cup \calY_{i,1}$ for all $i \in [2N]$.

    We first prove the lemma in the case $\theta = 0$, so $k\in \calK_\calG^{2N}$. As the device is perfect and $\theta=0$, with probability $\geq 1- \negl$, we have 
    \begin{equation}\label{eq:perfect_psiys}
        \psi_y^0 =
        \begin{cases}
        0 & \text{for all invalid $y\in \calY^{2N}$,}
        \\
        \Pi_{y}^{\hatb(k,y), \hatx(k,y)} \psi^0_y \Pi_{y}^{\hatb(k,y), \hatx(k,y)}
        & \text{for all valid $y \in \calY^{2N}$.}
        \end{cases}
    \end{equation}
    Therefore, with probability $\geq 1 - \negl$, we have
    \begin{align*}
        \sum_{b,x} \Pi^{b,x}
    \psi^0 \Pi^{b,x}
        &=  \sum_{b,x,y} \Pi^{b,x}_y
        \psi^0_y \Pi^{b,x}_y\otimes \ketbrasame{y}
        \\
        &= \sum_{\text{valid } y} \Pi_{y}^{\hatb(k,y), \hatx(k,y)}
        \psi^0_y \Pi_{y}^{\hatb(k,y), \hatx(k,y)} \otimes \ketbrasame{y}
        \\
        &= \sum_{\text{valid } y}\psi_y^0 \otimes \ketbrasame{y} = \sum_{y}\psi_y^0 \otimes \ketbrasame{y} = \psi^0.
    \end{align*}

    When $\theta \neq 0$, the computational indistinguishability of the $\psi^{\theta}$s (\cref{lem:indistinguishability}) implies
    \begin{align}
        \sum_{b,x}
        \Pi^{b,x} \psi^\theta \Pi^{b,x}
        \csimeq_{\negl}
        \sum_{b,x}
        \Pi^{b,x} \psi^0 \Pi^{b,x}
        \simeq_{\negl} \psi^0 \csimeq_{\negl}
        \psi^\theta.
    \end{align}
    The conclusion follows by the triangle inequality for computational indistinguishability (\cref{lem:ci_triangle}).
\end{proof}

\begin{lemma}\label{lem:uniformity_sigma_marginals}
Let $D$ be an efficient perfect device. For all $\theta\in[2N]$, we have
\begin{equation}\label{eq:uniformity_sigma_marginals}
\sum_{v}(-1)^{v_{\theta}} \tr[\sigma^{\theta,v}] \simeq_{\negl} 0.
\end{equation}
\end{lemma}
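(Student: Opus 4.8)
The plan is to prove $\sum_v (-1)^{v_\theta} \tr[\sigma^{\theta,v}] \simeq_{\negl} 0$ for a perfect device $D$ and $\theta \in [2N]$ by exploiting the adaptive hardcore bit property of ENTCFs. The key observation is that the quantity $\sum_v (-1)^{v_\theta}\tr[\sigma^{\theta,v}]$ measures a correlation between the bit $v_\theta = \hath(k_\theta, y_\theta, d_\theta) \oplus v_{\modulo(\theta+N,2N)}$ that the device implicitly commits to (through the definition of $\Sigmaset(\theta,v)$, see \cref{def:sig_thetav}) and nothing else — so summing over $v$ with the sign $(-1)^{v_\theta}$ isolates exactly the device's ability to predict the hardcore bit $\hath(k_\theta, y_\theta, d_\theta)$. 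If this sum were non-negligibly bounded away from $0$, we could build an efficient adversary that, given a claw-free key $k_\theta$, computes a preimage $x_{b} \in \calX$ (via the preimage measurement $\Pi$, which succeeds with probability $\ge 1 - \negl$ since $D$ is perfect), and separately computes $d_\theta$ together with a guess for $d_\theta \cdot (\hatx_{0,\theta}\oplus \hatx_{1,\theta})$ (via the Hadamard-round measurements $M$ and $P_1$), contradicting the adaptive hardcore bit property.

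**Main steps.**

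First I would rewrite $\sum_v (-1)^{v_\theta}\tr[\sigma^{\theta,v}]$ using \cref{def:sig_thetav}: since $\sigma^{\theta,v} = \sum_{(y,d)\in\Sigmaset(\theta,v)}\sigma^\theta_{y,d}\otimes\ketbrasame{y,d}$ and the $\Sigmaset(\theta,v)$ partition (a subset of) the valid pairs, the sum equals $\sum_{(y,d) \text{ valid}} (-1)^{\hath(k_\theta,y_\theta,d_\theta) \oplus b^*(y)} \tr[\sigma^\theta_{y,d}]$ where $b^*$ encodes the $\hatb$-values of the other coordinates; crucially the $v_{\modulo(\theta+N,2N)}$ contribution and the constraint $\hatb(k_i,y_i)=v_i$ for $i\ne\theta$ cancel in pairs when we sum over all $v$ consistent with a fixed $(y,d)$, leaving only the sign $(-1)^{\hath(k_\theta,y_\theta,d_\theta)}$ times a term that does not depend on the choice of the "other" $v$-bits. (I should double-check this cancellation carefully — the precise bookkeeping of which $v_i$ are free versus pinned by $(y,d)$ is where an error could creep in.) Second, I would observe that, since $D$ is a perfect device, $\gammaP(D) = \negl$, so by \cref{lem:sigma_equal_sumv} and the collapsing property (\cref{lem:collapse}), $\psi^\theta$ is $\negl$-indistinguishable from the state the device holds after a successful preimage measurement; this lets an adversary extract a valid preimage $x_{b_\theta}$ for one branch $b_\theta$. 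Third, I would construct the adversary $\calA$: on input key $k_\theta$, it pads with $\calG$-keys on the other $2N-1$ coordinates (which, by injective invariance plus \cref{lem:indistinguishability}, is $\negl$-indistinguishable from the honest $\theta$-distribution), runs $D$ to get $y$, runs the preimage measurement to get $(b,x)$, then (on a fresh run, or by the collapsing property which guarantees the $x$-register is already essentially measured) runs $M$ to get $d$ and $P_1$ to get $v$, and outputs $(b_\theta, d_\theta, v_\theta \oplus v_{\modulo(\theta+N,2N)})$; the advantage of this adversary in the adaptive hardcore bit game is exactly (up to $\negl$) $\tfrac12|\sum_v(-1)^{v_\theta}\tr[\sigma^{\theta,v}]|$. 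Fourth, I invoke the adaptive hardcore bit property to conclude this advantage is $\negl$, hence the sum is $\negl$.

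**Main obstacle.**

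The main difficulty is step one — the combinatorial cancellation showing that $\sum_v(-1)^{v_\theta}\tr[\sigma^{\theta,v}]$ genuinely depends only on the hardcore-bit prediction and not on a "free" choice of $v$-coordinates. Concretely, for fixed valid $(y,d)$, the pairs $(y,d)$ appears in $\Sigmaset(\theta,v)$ exactly for those $v$ with $v_i = \hatb(k_i,y_i)$ for all $i\ne\theta$ and $v_\theta = \hath(k_\theta,y_\theta,d_\theta)\oplus v_{\modulo(\theta+N,2N)}$; since $\modulo(\theta+N,2N)\ne\theta$, the value $v_{\modulo(\theta+N,2N)}$ is pinned to $\hatb(k_{\modulo(\theta+N,2N)},y_{\modulo(\theta+N,2N)})$, so in fact $v$ is \emph{uniquely determined} by $(y,d)$, and $(-1)^{v_\theta} = (-1)^{\hath(k_\theta,y_\theta,d_\theta)}\cdot(-1)^{\hatb(k_{\modulo(\theta+N,2N)},y_{\modulo(\theta+N,2N)})}$. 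So the sum is $\sum_{(y,d)\text{ valid}}(-1)^{\hath(k_\theta,y_\theta,d_\theta)}(-1)^{\hatb(k_{\modulo(\theta+N,2N)},y_{\modulo(\theta+N,2N)})}\tr[\sigma^\theta_{y,d}]$; I then need to argue that the extra sign on the $\modulo(\theta+N,2N)$-coordinate can be absorbed into the adversary's output (it is efficiently computable given the trapdoor for that $\calG$-key, which the adversary generates itself) so the adversary's guess bit is $\hath(k_\theta,y_\theta,d_\theta)$-correlated exactly as needed. This reduction to the adaptive hardcore bit game, and carefully tracking that all the padding and measurement steps introduce only $\negl$ error, is the crux; everything else is an assembly of \cref{lem:collapse}, \cref{lem:indistinguishability}, and \cref{lem:sigma_equal_sumv}.
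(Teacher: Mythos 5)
Your overall route is the same as the paper's: assume the bias is non-negligible, use perfectness to guarantee valid preimages, use the collapsing property (\cref{lem:collapse}) to insert the preimage measurement without (computationally) disturbing the subsequent $(y,d)$ distribution, and turn the bias into an adversary against the adaptive hardcore bit property. Your combinatorial bookkeeping in the ``obstacle'' paragraph is in fact more careful than the paper's write-up: for fixed valid $(y,d)$ the label $v$ is uniquely determined, and $(-1)^{v_\theta}=(-1)^{\hath(k_\theta,y_\theta,d_\theta)}(-1)^{\hatb(k_{\modulo(\theta+N,2N)},y_{\modulo(\theta+N,2N)})}$, so the quantity in the lemma is exactly the bias of $\hath\oplus\hatb_{\modulo(\theta+N,2N)}$ over the device's $(y,d)$ distribution, and the extra $\hatb$ sign is absorbable because the adversary holds the trapdoors for all $\calG$-keys it sampled itself.

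However, the adversary you actually construct in the ``main steps'' has a genuine flaw: it measures $P_1$ to obtain the prover's third-round answer $v$ and outputs $v_\theta\oplus v_{\modulo(\theta+N,2N)}$ as its hardcore-bit guess, claiming its advantage equals $\tfrac12\bigl|\sum_v(-1)^{v_\theta}\tr[\sigma^{\theta,v}]\bigr|$. That identification is false. The $v$ labelling $\sigma^{\theta,v}$ in \cref{def:sig_thetav} is the verifier-computed label determined by $(y,d)$ via the trapdoors; it has nothing to do with the prover's $P_1$ outcome, and the lemma's sum is a property of the $(y,d)$ distribution alone. An adversary guessing from the $P_1$ outcome has advantage governed by the correlation between the prover's round-3 answer and $\hath$ (the quantity appearing in \cref{lem:anticomm_xzsigz_marginals}), which can be zero (e.g.\ a prover answering uniformly at random in round 3) even when the assumed bias is maximal, so the contradiction does not follow. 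The correct adversary --- which is what the paper uses, and what your obstacle paragraph is groping toward --- makes no $P_1$ measurement at all: after obtaining $(y,b,x,d)$ it outputs $(b_\theta,x_\theta,d_\theta)$ together with the guess $c^\ast\oplus\hatb(k_{\modulo(\theta+N,2N)},y_{\modulo(\theta+N,2N)})$, where $c^\ast\in\{0,1\}$ is the fixed direction of the assumed bias (you also need this case split on the sign, which you omit). Two smaller points: the ``fresh run'' alternative you offer does not work, since a second run produces a different $y$ and hence a different claw, so the preimage and the equation would not refer to the same $(x_0,x_1)$ --- only the collapsing-property route is valid; and the adversary must output the preimage $x_{b_\theta}$ itself, not just $b_\theta$.
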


\begin{proof} The proof is similar to \cite[Lemma~4.15]{mv2021selftest}. It suffices to prove the lemma for $\theta=1$ as the proof for other $\theta$ is analogous. Assume for contradiction that \cref{eq:uniformity_sigma_marginals} is not true. Then, there exists a non-negligible function $\notnegl$ such that, for infinitely many $\lambda\in \mathbb{N}$, we have
\begin{equation}\label{eq:uniformity_sigma_marginals_nonnegl}
\Bigl| \sum_{v | v_1=0}\tr[\sigma^{1,v}] - \sum_{v | v_1 = 1}\tr[\sigma^{1,v}] \Bigr|> \notnegl.
\end{equation}
We also assume that, for a fixed $\lambda$, the term inside the absolute value sign in \cref{eq:uniformity_sigma_marginals_nonnegl} is $\geq 0$. The proof is analogous if otherwise.

Consider the following efficient algorithm $\calA$ which breaks the adaptive hardcore bit property of ENTCFs (\cref{property:adaptive_hardcore_bit}):
\begin{enumerate}
\item Input: a key $k_1$ sampled from $\Gen_{\calF}(1^{\lambda})_{\key}$.
\item  Independently sample $2N-1$ keys $k_2,\ldots,k_{2N}$,  each from $\Gen_{\calG}(1^{\lambda})_{\key}$. (Let $\calD^1$ denote the distribution on the keys sampled in the first two steps of $\calA$.)
\item Prepare the state $\psi^{1}$, then make measurement obtaining $y$, then make preimage measurement obtaining $(b,x)$, and then make Hadamard measurement obtaining $d$. 
\item Output: $(y_1, b_1,x_1,d_1)$ and print ``$f_{k_1,b_1}(x_1) = y_1 \text{ and } \hath(k_1,y_1,d_1) = 0$''.
\end{enumerate}

Because $D$ is perfect, we have, with probability $\geq 1 - \negl$,
\begin{equation}\label{eq:uniformity_sigma_marginals_preimage}
f_{k_1,b_1}(x_1) = y_1.
\end{equation}

Let $\calA'$ be the same as $\calA$ except that it does not make a preimage measurement and produces no output. By the collapsing property (\cref{lem:collapse}), the state of $\calA$ immediately after the preimage measurement is computationally indistinguishable from its state just before. Therefore, by definition, the state of $\calA$ after the Hadamard measurement is computationally indistinguishable from the state of $\calA'$ after the Hadamard measurement. But note that $\calA'$ behaves exactly like $D$ up to when the Hadamard measurement is made in the Hadamard case. Therefore, the state of $\calA$ after the Hadamard measurement is computationally indistinguishable from the state of $D$ after the Hadamard measurement in the Hadamard case.

Now, using the definitions of $\sigma^{1,v}$ and $\Sigmaset(1,v)$ from \cref{def:sig_thetav}, we see that \cref{eq:uniformity_sigma_marginals_nonnegl} means
\begin{equation}
\pr_{k\sim \calD^1, D}\bigl((y,d)\in \medcup_{v|v_1=0}\Sigmaset(1,v)\bigr) > \pr_{k\sim \calD^1, D}\bigl((y,d)\in \medcup_{v|v_1=1}\Sigmaset(1,v)\bigr) + \notnegl.
\end{equation}
Then, as argued above, we can replace $D$ by $\calA$ by introducing a negligible term as follows
\begin{equation}
\pr_{k\sim \calD^1, \calA}\bigl((y,d)\in \medcup_{v|v_1=0}\Sigmaset(1,v)\bigr) > \pr_{k\sim \calD^1, \calA}\bigl((y,d)\in \medcup_{v|v_1=1}\Sigmaset(1,v)\bigr) + \notnegl + \negl.
\end{equation}
Now, we can rewrite the probabilities appearing on the left- and right-hand sides to see that
\begin{equation}
\begin{aligned}
&\pr\bigl(\hath(k_1,y_1,d_1) = 0 \text{ and } \medcap_{i=2}^{2N} (\hatb(k_i,y_i)\in\{0,1\})\bigr) \\
&> \pr\bigl(\hath(k_1,y_1,d_1) = 1 \text{ and } \medcap_{i=2}^{2N} (\hatb(k_i,y_i)\in\{0,1\})\bigr) + \notnegl + \negl.  
\end{aligned} 
\end{equation}
But 
\begin{equation}
    \pr\bigl(\medcap_{i=2}^{2N} \, (\hatb(k_i,y_i)\in\{0,1\})\bigr) \geq 1 - \negl,
\end{equation}
by the definition of $\hatb$ and the fact $D$ is perfect. Therefore, as $\pr(A\cap B ) \geq \pr(A)+\pr(B) - 1$ for any events $A$ and $B$, we have
\begin{equation}~\label{eq:uniformity_sigma_marginals_generalized_bit}
\pr(\hath(k_1,y_1,d_1) = 0) > \pr(\hath(k_1,y_1,d_1) = 1) + \notnegl + \negl.
\end{equation}
\cref{eq:uniformity_sigma_marginals_preimage,eq:uniformity_sigma_marginals_generalized_bit} together contradict the adaptive hardcore bit property of ENTCFs (\cref{property:adaptive_hardcore_bit}).
\end{proof}

\begin{lemma}\label{lem:trace_xsigma_zero}
Let $D$ be an efficient perfect device. For all $i,\theta\in [2N]$, we have
\begin{equation}
    \tr[X_i\sigma^{\theta}] \simeq_{2\gammaH + \negl} 0.
\end{equation}
\end{lemma}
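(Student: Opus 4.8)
The plan is to prove this by combining the indistinguishability lemma (\cref{lem:indistinguishability}) with the lifting lemma (\cref{lem:lifting}) to reduce to a single value of $\theta$, and then use the adaptive hardcore bit property to show that $\tr[X_i\sigma^\theta]$ must be small. Since $X_i$ is an efficient binary observable (because $D$ is efficient) and all $\sigma^\theta$ are computationally indistinguishable, part~1 of the lifting lemma gives $\tr[X_i\sigma^\theta]\simeq_{\negl}\tr[X_i\sigma^{\theta'}]$ for any $\theta,\theta'\in\thetaset$. Hence it suffices to bound $\tr[X_i\sigma^\theta]$ for one convenient choice, and I would take $\theta=0$, where all keys come from $\calG$.

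First I would recall that $X_i = \sum_v(-1)^{v_i}P_1^v$, so $\tr[X_i\sigma^0] = \sum_v(-1)^{v_i}\tr[X_i^{(v_i)}\cdots]$ — more precisely, writing $X_i = X_i^{(0)} - X_i^{(1)}$ and using $\sum_v\sigma^{0,v}\simeq_{\negl}\sigma^0$ (\cref{lem:sigma_equal_sumv}, since $D$ is perfect), I would reduce $\tr[X_i\sigma^0]$ to an expression involving $\sum_v\tr[X_i^{(v_i)}\sigma^{0,v}]$ and $\sum_v\tr[X_i^{(1-v_i)}\sigma^{0,v}]$. The key point is that $\sigma^{0,v}$ is essentially supported on the ``computational basis outcome $v$'' branch, so intuitively measuring $X_i$ (a Hadamard-basis-type observable, since it comes from $P_1$, the all-Hadamard measurement) on a computational basis state should be uniformly random, forcing $\tr[X_i\sigma^0]\approx 0$. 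To make this rigorous I would turn an assumed non-negligible value of $\tr[X_i\sigma^0]$ into an algorithm that predicts a generalized hardcore bit: the structure mirrors the proof of \cref{lem:uniformity_sigma_marginals} almost exactly, but instead of using the $v_\theta$-marginal imbalance of $\sigma^{\theta,v}$ one would use a $\theta=0$ device together with the fact that $Z_i$-type information is determined by $\hatb$ while $X_i$-type information, if biased, would let one guess $\hath$. Actually a cleaner route: use \cref{lem:tilde_equal_nontilde} is not yet available in the needed form, so I expect the intended proof uses the sandwich/$\gamma_T$ bounds. In fact the cleanest argument is: on $\sigma^{\theta,v}$ we only have control of $Z_i$ (via $\zeta(i,\theta,v)$, valid when $i\neq\theta$), so for $\theta=0$ every $i$ satisfies $i\neq\theta$ and $Z_i\simeq_{\zeta(i,0,v),\sigma^{0,v}}(-1)^{v_i}I$. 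Then I would relate $X_i$ to $Z_i$: no — $X_i$ and $Z_i$ anticommute in spirit, so a cleaner path is to note $\{X_i,Z_i\}$-type relations are established only later.

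Given the position in the paper (before \cref{sec:anti_commute}), I believe the intended proof is the direct cryptographic one: assume for contradiction that $|\tr[X_i\sigma^0]| > \mu(\lambda)$ for infinitely many $\lambda$, say $\tr[X_i\sigma^0] > \mu$ (the other sign is analogous). Unpack $X_i = \sum_v(-1)^{v_i}P_1^v$ and the definition of $\sigma^0 = \sum_{y,d}M_y^d\psi^0_y M_y^d\otimes\ketbrasame{y,d}$; since $D$ is perfect, $\psi^0_y = 0$ for invalid $y$ and the valid branches carry a well-defined bit $\hatb(k_i,y_i)$. Build an algorithm $\calA$ that: takes $k_i\leftarrow\Gen_\calF(1^\lambda)_{\key}$, samples the other $2N-1$ keys from $\Gen_\calG$, prepares $\psi^1$ (note the switch: to invoke the hardcore bit property we need the $i$-th key claw-free, so run with $\theta$ = position $i$, not $\theta=0$; then use indistinguishability of $\sigma^i$ and $\sigma^0$ to transfer the bias), runs the preimage measurement (getting $(b,x)$ with $f_{k_i,b_i}(x_i)=y_i$ w.h.p. by perfectness, after invoking the collapsing property \cref{lem:collapse} to argue the preimage measurement doesn't disturb the later state), runs the Hadamard measurement getting $d$, then runs $P_1$ getting $v$, and outputs $(y_i,b_i,x_i,d_i)$ together with the predicted bit $v_i\oplus\hath(k_i,y_i,d_i)$ — or some such combination that, given the bias $\tr[X_i\sigma^0]>\mu$ and the characterization of $\sigma^{0,v}$ branches, equals $d_i\cdot(\hatx_{0,i}\oplus\hatx_{1,i})$ with advantage bounded below by $\mu/\poly$, contradicting the adaptive hardcore bit property.

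\textbf{The main obstacle} is getting the bookkeeping right in that reduction: one must carefully track how the $P_1$-outcome $v_i$ relates to the hardcore bit $\hath(k_i,y_i,d_i)$ through the isometry-free, pre-swap picture, and ensure that the bias $\tr[X_i\sigma^\theta]>\mu$ really does translate (after using \cref{lem:sigma_equal_sumv}, \cref{lem:collapse}, and \cref{lem:indistinguishability}) into a non-negligible predictive advantage for a single generalized hardcore bit, with all the negligible error terms and the $\poly(N)$ factors absorbed correctly; the factor $2\gammaT$ in the target bound suggests that in fact the cleanest proof avoids the contradiction argument entirely and instead writes $\tr[X_i\sigma^\theta] = \tr[X_i\sigma^0] + \negl$ and then bounds $\tr[X_i\sigma^0]$ by $2\gammaT + \negl$ using the sandwiching of \cref{lem:sandwich_argument} together with \cref{lem:uniformity_sigma_marginals}-style cancellation applied to the $X$-observable on a $\theta'\in\thetasetnonzeroint$ device where $\theta' = i$, so that $s_{i,i}\geq 1-\gammaTone$ and the marginal $\sum_v(-1)^{v_i}\tr[\sigma^{i,v}]\simeq_{\negl}0$ combine to force $|\tr[X_i\sigma^i]|\le 2\gammaT+\negl$; I would pursue this latter route first and fall back to the cryptographic contradiction argument only if the cancellation does not close.
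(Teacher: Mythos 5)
Your final, committed route is exactly the paper's proof: lift to $\theta=i$ via \cref{lem:indistinguishability} and part 1 of \cref{lem:lifting}, use perfectness ($\sigma^i\simeq_{\negl}\sum_v\sigma^{i,v}$) to write $\tr[X_i\sigma^i]\simeq_{\negl}\sum_v(-1)^{v_i}\tr[\sigma^{i,v}]-\tfrac{1}{2}\sum_v(-1)^{v_i}\chi(i,v)$, and close with \cref{lem:uniformity_sigma_marginals} and \cref{lem:zeta_chi_bounds}, which indeed yields the $2\gammaT+\negl$ bound. Your earlier detour (lifting to $\theta=0$ and attempting a direct hardcore-bit reduction there, where all keys are injective so the adaptive hardcore bit property cannot even be invoked) was rightly abandoned, and the cryptographic content you were trying to re-derive is already packaged in \cref{lem:uniformity_sigma_marginals}, which is exactly how the paper uses it.
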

\begin{proof}
The proof is similar to \cite[Corollary 4.16]{mv2021selftest}. Because $D$ is efficient, $X_i$ is an efficient binary observable. Therefore, it suffices to consider $\theta = i$ by using the computational indistinguishability of the $\sigma^\theta$s (\cref{lem:indistinguishability}) and part $1$ of the lifting lemma (\cref{lem:lifting}). Moreover, we only consider $\theta = i = 1$ as the argument is analogous otherwise. Then, the lemma follows from
\begin{equation}
\tr[X_1\sigma^1] = \sum_v \tr[(-1)^{v_1} (2 X_{1}^{(v_1)} - \1)\sigma^{1, v}] = \sum_v (-1)^{v_1} \tr[\sigma^{1,v}]-\frac{1}{2}\sum_{v}(-1)^{v_1}\chi(1,v) \simeq_{2\gammaH} \negl,
\end{equation}
where the middle equality uses \cref{eq:zeta_chi_as_norm} and the last approximation uses \cref{lem:zeta_chi_bounds_new,lem:uniformity_sigma_marginals}.
\end{proof}

\subsection{Commutation relations}
\label{sec:commute}

\begin{proposition}\label{prop:commutation}
Let $D$ be an efficient perfect device. For all $i,j,\theta\in [2N]$ with $i\neq j$, we have
\begin{equation}\label{eq:commutation}
    [Z_i, Z_j] = 0, \quad
    [X_i, X_j] = 0, \quad \mathrm{and} \quad
    [Z_i, X_j] \approx_{\gammaH+\negl,\sigma^{\theta}} 0.
\end{equation}
\end{proposition}

Our proof is similar to that of \cite[Proposition 4.24]{mv2021selftest} except it is more refined in that we use $\zeta_v$ and $\chi_v$ to bound various $\sigma^{\theta,v}$-dependent approximation errors instead of $\gammaH$. For example, in \cref{eq:zeta_chi_example}, We cannot use $\gammaH$ to bound the errors as it would lead to a $O(2^{2N}\gamma_H)$ approximation error in \cref{eq:commutation} due to \cref{eq:zeta_chi_sum} where a sum over $v\in \{0,1\}^{2N}$ is taken.

\begin{proof}
The first two equations follow directly from \cref{def:observables}. Consider the last equation. Because $D$ is efficient, $Z_i$ and $X_j$ are efficient binary observables. Therefore, by the computational indistinguishability of the $\sigma^\theta$s (\cref{lem:indistinguishability}) and part $3$ of the lifting lemma (\cref{lem:lifting}), it suffices to prove the last equation for $\theta=j$. It also suffices to consider $i=1$ and $j=\theta=2$ as the proof is analogous for other $i,j,\theta$ with $i\neq j$. 

For the purposes of this proof, we write $\zeta_v$ for $\zeta(2,1,v)$ and $\chi_v$ for $\chi(2,v)$ for convenience. By definitions, we have
$Z_{1}\approx_{\zeta_{v},\sigma^{2,v}} (-1)^{v_1}I$ and $X_{2}\approx_{\chi_{v},\sigma^{2,v}} (-1)^{v_2}I$. Therefore,
\begin{equation}\label{eq:zeta_chi_example}
    Z_1X_2 \approx_{\chi_v, \sigma^{2,v}} (-1)^{v_2} Z_1 \approx_{\zeta_v,\sigma^{2,v}} (-1)^{v_2+v_1}I\approx_{\chi_v,\sigma^{2,v}} X_2 \,  (-1)^{v_1} \approx_{\zeta_v,\sigma^{2,v}} X_2 Z_1.
\end{equation}
Therefore, by the triangle inequality, we have 
\begin{equation*}
    Z_1X_2  \approx_{\chi_v + \zeta_v, \sigma^{2,v}} X_2 Z_1.
\end{equation*}
By linearity of the trace we can sum the above equation up over $v \in \{0,1\}^{2N}$ to deduce
\begin{equation}\label{eq:zeta_chi_sum}
Z_1X_2  \approx_{\sum_v{\chi_v + \zeta_v}, \sum_{v}\sigma^{2,v}} X_2 Z_1.
\end{equation}
Using $\sum_v \chi_v\leq 4\gammaH$ and $\sum_v \zeta_v\leq 4\gammaH$ from \cref{lem:zeta_chi_bounds_new}, and using $\sigma^2 \simeq_{\negl} \sum_{v}\sigma^{2,v}$ from \cref{lem:sigma_equal_sumv}, which uses the perfectness of $D$, we obtain 
\begin{equation}
    Z_1X_2  \approx_{\gammaH+\negl, \sigma^{2}} X_2 Z_1,
\end{equation}
which completes the proof.
\end{proof}

\subsection{Anti-commutation relations}
\label{sec:anti_commute}

We prove \cref{prop:anticommutation} in this subsection. The results and their proofs in this subsection are generalizations of those in \cite[Section 4.6]{mv2021selftest}. 
More specifically, \cite{mv2021selftest} deals with the two-qubit case and proves results for $\sigma^{(\theta_1,\theta_2)}$ with $(\theta_1,\theta_2) \in \{0,1\}^2$.
The naive generalization to the $2N$-qubit case would consider $(\theta_1, \ldots, \theta_{2N}) \in \{0,1\}^{2N}$. However, such generalization would not allow us to effectively control $\gammaH$ using the failure probability $\failHq$, since \cref{prop:gamma_bound_by_fail} would give the exponentially loose bound $\gammaH \leq 2^{2N} \failH$. To resolve this problem, we observe that it suffices to consider
\begin{equation}
(\theta_1, \ldots, \theta_{2N}) \in
\{z \in \{0,1\}^{2N} \mid \text{Hamming weight of } z \text{ is at most } 1 \} \cup \{1^{2N}\},
\end{equation}
to prove \cref{prop:anticommutation}. This is why we choose $\theta \in \thetaset$. Recall from \cref{fig:protocol} that our $\theta = 0$ corresponds to $(\theta_1,\dots, \theta_{2N}) = 0^{2N}$, our
$\theta \in [2N]$ corresponds to the string with $1$ at index $\theta$ and $0$ everywhere else,
and our $\theta = \diamond$ corresponds to $1^{2N}$.

\begin{proposition}\label{prop:anticommutation}
Let $D$ be an efficient perfect device. For all $i,\theta\in[2N]$, we have
\begin{equation}
    \{Z_i,X_i\}\approx_{\sqrt{\gammaH}+\negl, \sigma^\theta}0.
\end{equation}
\end{proposition}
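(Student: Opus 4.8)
I want to prove the state-dependent anti-commutation relation $\{Z_i, X_i\} \approx_{\sqrt{\gammaT}+\negl,\sigma^\theta} 0$ for each $i,\theta \in [2N]$. As in the commutation proof, since $Z_i, X_i$ are efficient binary observables on an efficient device, the computational indistinguishability of the $\sigma^\theta$s (\cref{lem:indistinguishability}) together with part~4 of the lifting lemma (\cref{lem:lifting}) reduces the problem to a single convenient choice of $\theta$. The natural choice here is $\theta = i$, since that is the index on which the $X_i$-measurement "round" is live (the claw-free key sits at position $i$, so the adaptive hardcore bit is available at position $i$). I will fix $i = \theta = 1$, the general case being analogous.

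**Main argument.** The strategy follows \cite{mv2021selftest}'s two-qubit anti-commutation proof, adapted to arbitrary $N$. The key point is that on $\sigma^{\diamond}$ the $\tildeZ$/$\tildeX$ observables behave nicely, but here I'll instead work with the Hadamard round for $\theta=1$ directly. The heart is to produce, via the adaptive hardcore bit property, a state $\sigma^{1,v}$ on which $Z_1$ is close to $(-1)^{v_1} I$ while $X_1$ (or rather a related observable coming from the "$q=1$" measurement combined with the decoded hardcore bit) is genuinely anti-aligned — more precisely, one shows that the hardcore bit $\hath(k_1,y_1,d_1)$ is (computationally) unpredictable, which forces $X_1$ to not commute with $Z_1$ in a quantified way. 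Concretely, I would: (i) use that on $\sigma^{1,v}$ we have $X_1 \simeq_{\chi(1,v),\sigma^{1,v}} (-1)^{v_1} I$ by \cref{def:zeta_chi_def}; (ii) consider the "conjugated" state $Z_1^{(c)} \sigma^{1} Z_1^{(c)}$ or the action of $X_1^{(b)}$ and use the collapsing property (\cref{lem:collapse}) and computational indistinguishability to relate a preimage-round quantity to a Hadamard-round quantity; (iii) invoke the adaptive hardcore bit property to argue that no efficient algorithm can predict $d_1 \cdot (\hatx_{0,1}\oplus\hatx_{1,1})$ given a preimage $x_b$, which translates into $\sum_v (-1)^{v_1}\tr[Z_1 \sigma^{1,v}]$-type quantities being negligible even after conjugating by $X_1^{(b)}$; (iv) expand $\{Z_1, X_1\}$ as $2(Z_1 X_1^{(0)} X_1 + \ldots)$-style sums over the $\pm 1$ eigenprojectors and collect the resulting error terms, each controlled either by $\chi(1,v), \zeta(1,1,v)$ (summing to $O(\gammaT)$ by \cref{lem:zeta_chi_bounds}) or by $\negl$ from the hardcore bit; (v) sum over $v$, use $\sigma^1 \simeq_\negl \sum_v \sigma^{1,v}$ (\cref{lem:sigma_equal_sumv}, using perfectness), and take a square root to land at the $\sqrt{\gammaT}+\negl$ bound.

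**The obstacle.** The main difficulty, and the place where care is needed, is step (iii)–(iv): cleanly extracting from the adaptive hardcore bit property a usable statement about the observable $X_1$ on $\sigma^{1}$. In \cite{mv2021selftest} this is done for two qubits where one can afford to be somewhat explicit; for general $N$ the complication is that $\sigma^{1}$ lives on $2N$ registers and the relevant measurement $P_1$ produces an outcome $v \in \{0,1\}^{2N}$, so one must be careful that the error terms do not accumulate a factor of $2^{2N}$ — this is exactly why the $\zeta,\chi$ bounds of \cref{lem:zeta_chi_bounds} (which control the exponential sum $\sum_v$ by $4\gammaT$) are indispensable, and why the argument must be organized so that every error is of the form $\sum_v \zeta(\cdot)$ or $\sum_v \chi(\cdot)$ or a genuine $\negl$ arising from a single efficient distinguisher. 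I expect the bookkeeping to route through a conjugated state such as $X_1^{(0)} \sigma^1 X_1^{(0)}$ (or $Z_1^{(c)}\sigma^{1,v}Z_1^{(c)}$), with computational indistinguishability (\cref{lem:indistinguishability}, \cref{lem:lifting}) invoked to swap $\sigma^1$ for $\sigma^\theta$ with $\theta \ne 1$ so that the hardcore-bit-based unpredictability applies, and then swapped back. Getting that chain of reductions to close without an exponential blowup is the crux.
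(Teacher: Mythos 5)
Your high-level plan points in the right direction (reduce to $\theta=i$ via \cref{lem:indistinguishability} and part~4 of \cref{lem:lifting}; expand the anticommutator over the eigenprojectors of $Z_1$; replace $X_1$ by $(-1)^{v_1}\1$ on $\sigma^{1,v}$ at cost $\chi(1,v)$; invoke the adaptive hardcore bit), but as written it contains one step that is simply false and leaves the actual crux unproven. The false step: you propose to control error terms by ``$\zeta(1,1,v)$, summing to $O(\gammaT)$ by \cref{lem:zeta_chi_bounds}''. That lemma only applies when $i\neq\theta$, and indeed $\sum_v\zeta(1,1,v)$ is of order $1$ even for the honest device: when $\theta=1$ the first register is (ideally) in a Hadamard-basis state, so $\tr[Z_1\sigma^{1,v}]=0$ and $\zeta(1,1,v)\approx 2\tr[\sigma^{1,v}]$. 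Consequently you cannot run the whole argument at $\theta=1$. The paper instead proves the companion fact $\sum_a\tr[X_1Z_1^{(a)}\sigma^\theta Z_1^{(a)}]\approx_{\sqrt{\gammaT}+\negl}0$ by lifting \emph{that particular conjugated quantity} to $\theta=2$, where $\zeta(1,2,v)$ is controlled, and combining it with $\tr[X_1\sigma^\theta]\approx 0$, which in turn rests on the marginal-uniformity statement $\sum_v(-1)^{v_1}\tr[\sigma^{1,v}]\simeq_{\negl}0$, itself a separate hardcore-bit reduction. None of these intermediate statements appear in your outline, and without them the term you call ``genuinely anti-aligned'' has no quantitative handle.

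The admitted ``obstacle'' is exactly where the proof lives, and the missing ingredient is concrete: the hardcore-bit adversary must output a \emph{preimage} $(b_1,x_1)$ together with a prediction of $\hath(k_1,y_1,d_1)$, whereas the device's $Z_1$ measurement happens after $d$ is produced and yields no preimage. The quantity to be split by $c=v_1$ is $E_c=\sum_{a,v\mid v_1=c}\tr[X_1Z_1^{(a)}\sigma^{1,v}Z_1^{(a)}]$, and to feed it into the hardcore-bit property one needs the state-dependent exchange $\sum_{a,y,d}\|M_y^d\hatPi_y^{1,a}-Z^{(a)}_{1,y,d}M_y^d\|^2_{\psi^\theta_y}\le 2\gammaT+\negl$ (the paper's \cref{lem:anticommutation_mpi_zm}, proved using perfectness plus a dedicated indistinguishability argument), followed by a Cauchy--Schwarz step showing $E_c\approx_{\sqrt{\gammaT}}F_c$, where $F_c$ is the same quantity with the preimage projector measured \emph{before} the Hadamard measurement; only then does ``measure preimage, then $d$, then $X_1$'' define an efficient algorithm whose success would contradict the adaptive hardcore bit if $F_0-F_1$ were non-negligible. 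Your appeal to the collapsing property and generic computational indistinguishability does not supply this $Z^{(a)}M^d\leftrightarrow M^d\hatPi^{1,a}$ replacement with the right error, uniformly over the exponentially many $v$, so the chain of reductions you describe does not close as stated.
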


We need a few lemmas first.

\begin{lemma}\label{lem:anticommutation_xzsigmaz}
Let $D$ be an efficient device. For all $i,\theta \in [2N]$, we have
\begin{equation}\label{eq:anticommutation_xzsigmaz}
    \sum_{a}\tr[X_i Z_i^{(a)}\sigma^{\theta}Z_{i}^{(a)}]\approx_{\sqrt{\gammaH}+\negl} 0.
\end{equation}
\end{lemma}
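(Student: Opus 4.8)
\textbf{Proof plan for Lemma~\ref{lem:anticommutation_xzsigmaz}.}

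The plan is to relate the expression $\sum_a \tr[X_i Z_i^{(a)}\sigma^\theta Z_i^{(a)}]$ to $\tr[X_i \sigma^\theta]$ by exploiting computational indistinguishability. First I would reduce to a convenient value of $\theta$: since $D$ is efficient, $X_i$ and $Z_i$ (hence the projectors $Z_i^{(a)}$) are efficient, and $\{Z_i^{(0)},Z_i^{(1)}\}$ is an efficient projective measurement; so by \cref{lem:indistinguishability} (the $\sigma^\theta$ are all computationally indistinguishable) together with the appropriate lifting statement for quantities of the form $\real\tr[A\,Z_i^{(a)}\sigma Z_i^{(a)}]$ — concretely part~5 of \cref{lem:lifting} applied with the isometry $V=\sum_a \ket{a}\otimes Z_i^{(a)}$, or the ``moreover'' part of \cref{lem:lifting_proj} — it suffices to prove \cref{eq:anticommutation_xzsigmaz} for one choice, say $\theta=i$, and then for $\theta=i$ (WLOG $\theta=i=1$). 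Actually, since the statement only asserts perfect devices are not needed here (the lemma says ``efficient device''), I should be careful to phrase the reduction using only \cref{lem:lifting}/\cref{lem:lifting_proj}, which hold for general efficient devices.

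Next I would use the fact that measuring $\{Z_i^{(a)}\}_a$ on $\sigma^\theta$ barely disturbs it, in the computational-indistinguishability sense. The key input is that $Z_i=\sum_v(-1)^{v_i}P_0^v$ comes from the projective measurement $P_0$, so $\{Z_i^{(0)},Z_i^{(1)}\}$ is a coarse-graining of $P_0$; and the Hadamard-round checks force the outcome of $P_0$ to be consistent with the $\hatb$-values the verifier computes, so on $\sigma^\theta$ (or rather on $\sum_v\sigma^{\theta,v}$, which is $\gammaP$-close to $\sigma^\theta$, but here $\gammaP$ need not be negligible — so I would instead work with the $\zeta,\chi$ bounds directly). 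The cleanest route: expand $Z_i^{(a)}=\frac{\1+(-1)^a Z_i}{2}$ and use $Z_i\simeq_{\zeta(i,\theta,v),\sigma^{\theta,v}}(-1)^{v_i}\1$ together with $X_i\simeq_{\chi(\theta,v),\sigma^{\theta,v}}(-1)^{v_i}\1$ (the latter valid since $\theta=i$). On each $\sigma^{\theta,v}$, $Z_i^{(a)}\approx \delta_{a,v_i}\1$ and $X_i\approx(-1)^{v_i}\1$, so $\sum_a X_i Z_i^{(a)}\sigma^{\theta,v}Z_i^{(a)}\approx X_i\sigma^{\theta,v}$, while separately $\sum_a (-1)^a$-weighted terms telescope. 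Summing over $v$ and using $\sum_v\zeta(i,\theta,v)\le 4\gammaT$, $\sum_v\chi(\theta,v)\le 4\gammaT$ from \cref{lem:zeta_chi_bounds}, and applying the replacement lemma (\cref{lem:replace}, parts 1 and 3, which keep the error independent of the number of outcomes $v$) with $\sqrt{\tr[\psi]\cdot\epsilon}$-type bounds, I get $\sum_a\tr[X_iZ_i^{(a)}\sigma^\theta Z_i^{(a)}]\approx_{\sqrt{\gammaT}}\tr[X_i\sum_v\sigma^{\theta,v}]$; a further $\negl$ or $\gammaP$-type adjustment (via \cref{lem:sum_sigma_v}) relates $\sum_v\sigma^{\theta,v}$ to $\sigma^\theta$, giving $\approx_{\sqrt{\gammaT}}\tr[X_i\sigma^\theta]$, which is $\simeq_{2\gammaT+\negl}0$ by \cref{lem:trace_xsigma_zero}. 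Collecting, the total error is $O(\sqrt{\gammaT}+\negl)$ as claimed.

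The main obstacle I anticipate is controlling the error uniformly in the exponentially many outcomes $v\in\{0,1\}^{2N}$: the naive triangle inequality over $2^{2N}$ terms would blow up, so I must route every state-dependent replacement through the Cauchy–Schwarz-based estimates of \cref{lem:replace} (part 3) and \cref{lem:op_approx_compts}, which convert $\sum_v \|\cdot\|^2_{\sigma^{\theta,v}}\le 4\gammaT$ into a single $\sqrt{\tr[\sigma^\theta]\cdot 4\gammaT}=O(\sqrt{\gammaT})$ bound. A secondary subtlety is the square-root loss: the $\simeq_{\epsilon,\psi}$ relations give $\|\cdot\|^2_\psi\le\epsilon$, and after summing we only control $\sqrt{\gammaT}$, not $\gammaT$ — which is exactly why the lemma statement has $\sqrt{\gammaT}$ rather than $\gammaT$, so I just need to be careful not to lose more than one square root. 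Finally, I should double-check that the reduction in $\theta$ via \cref{lem:lifting}/\cref{lem:lifting_proj} produces only an additive $\negl$ (not $\delta+\epsilon$ with a bad $\delta$), using that consecutive $\sigma^\theta$ differ by $\negl$ in computational-indistinguishability distance by \cref{lem:indistinguishability}.
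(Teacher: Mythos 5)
There is a genuine gap in your step (2), and it comes from the choice of $\theta$ in the reduction. You lift to $\theta=i$ so that the $\chi$-bound for $X_i$ is available, but you then also invoke $Z_i\simeq_{\zeta(i,\theta,v),\sigma^{\theta,v}}(-1)^{v_i}\1$ together with $\sum_v\zeta(i,\theta,v)\le 4\gammaT$. \cref{lem:zeta_chi_bounds} gives that bound only for $i\neq\theta$: the quantity $\gammaTzero$ in \cref{def:gamma} is a minimum over pairs with $i\neq\theta$, because when $\theta=i$ the verifier cannot (and does not) check $\hatb(k_\theta,y_\theta)$, since $k_\theta$ is a claw-free key. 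Indeed, for the honest device the $\theta$-th qubit of $\sigma^{\theta,v}$ is a Hadamard-basis state, so $\tr[Z_\theta\sigma^{\theta,v}]=0$ and $\zeta(\theta,\theta,v)=2\tr[\sigma^{\theta,v}]$, whence $\sum_v\zeta(\theta,\theta,v)\approx 2$ — a constant, not $O(\gammaT)$. So the central claim ``on each $\sigma^{\theta,v}$, $Z_i^{(a)}\approx\delta_{a,v_i}\1$'' fails for your chosen $\theta=i$; no single $\theta$ gives you simultaneous $\gammaT$-control of both $Z_i$ and $X_i$ as approximate scalars, which is unsurprising since the whole point of the surrounding section is that they approximately anticommute.

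The fix is essentially the paper's proof, and your skeleton is otherwise right: lift instead to some $\theta\neq i$ (the paper takes $i=1$, $\theta=2$, using that $\sum_a Z_1^{(a)}\sigma^\theta Z_1^{(a)}$ is efficiently preparable from $\sigma^\theta$, so part 1 of \cref{lem:lifting} applies; your proposed lifting via \cref{lem:lifting_proj} would also do). For $\theta\neq i$ the $\zeta$-bound is available, and grouping $\sigma(a)\coloneqq\sum_{v\mid v_i=a}\sigma^{\theta,v}$ one gets $\sum_a Z_i^{(a)}\sigma^\theta Z_i^{(a)}\approx_{\gammaT}\sigma^\theta$ — your $X_i\approx(-1)^{v_i}\1$ claim is not needed at all for this step. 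One then concludes with $\tr[X_i\sigma^\theta]\approx_{2\gammaT+\negl}0$ (\cref{lem:trace_xsigma_zero}), exactly as you do at the end; note that lemma internally performs the lift to $\theta=i$ where the $\chi$-bound legitimately lives, which is how the two incompatible pieces of control get combined without ever being used on the same $\sigma^{\theta,v}$. Your error-accounting remarks (routing the exponentially many $v$ through Cauchy--Schwarz-type sums and accepting a single square-root loss to land at $\sqrt{\gammaT}$) are correct and match the paper.
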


\begin{proof}
The proof generalizes that of \cite[Lemma 4.18]{mv2021selftest} and is more refined due to the use of $\zeta$. 

We prove the lemma for $i=1$ as the proof is analogous for other $i$. Now, the states in $\{\sum_{a}Z_1^{(a)}\sigma^{\theta} Z_1^{(a)} \mid \theta\in [2N]\}$ are computationally indistinguishable because they can be efficiently prepared from $\sigma^{\theta}$ by measuring $Z_1$ ($Z_1$ is efficient because $D$ is efficient) and the $\sigma^\theta$s are computationally indistinguishable (\cref{lem:indistinguishability}). In addition, because $D$ is efficient, $X_1$ is an efficient binary observable. Therefore, by part $1$ of the lifting lemma (\cref{lem:lifting}), it suffices to prove the current lemma for $\theta = 2$.

For $a\in \{0,1\}$, we define
\begin{equation}
\begin{aligned}
\sigma(a) \coloneqq \sum_{v | v_1=a} \sigma^{2, v} \quad \text{and} \quad
\zeta(a) \coloneqq \sum_{v | v_1=a} \zeta(2,1,v).
\end{aligned}
\end{equation}
Note that $\sigma^2=\sigma(0)+\sigma(1)$ by definition and $\zeta(0)+\zeta(1)\le 4\gammaH$ by \cref{lem:zeta_chi_bounds_new}.

By \cref{eq:zeta_chi_as_norm}, we have
\begin{equation}
  \zeta(a)/4=\tr[\sigma(a)] - \tr[Z_1^{(a)}\sigma(a)],
\end{equation}
which can be re-expressed as
\begin{equation}
 Z_1^{(a)} \approx_{\zeta(a),\sigma(a)}I, \quad Z_1^{(\overline{a})} \approx_{\zeta(a),\sigma(a)} 0 .
\end{equation}
Therefore, by \cref{lem:operator_to_state}, we have
\begin{equation}
Z_1^{(a)}\sigma(a) Z_1^{(a)} \approx_{\zeta(a)} \sigma(a), \quad Z_1^{(\overline{a})}\sigma(a) Z_1^{(\overline{a})}\approx_{\zeta(a)} 0.
\end{equation}
Therefore, by the triangle inequality (recall that the definition of $\approx$ for states involves a square), we have
\begin{equation}
\sum_{a,v} Z_1^{(a)}\sigma^{2,v} Z_1^{(a)}=\sum_a Z_1^{(a)} (\sigma(0)+\sigma(1))Z_1^{(a)}\approx_{\left(\sqrt{\zeta(0)}+\sqrt{\zeta(1)}\right)^2}\sigma^2.
\end{equation}
Recalling $\zeta(0)+\zeta(1)\le 4\gammaH$ by \cref{lem:zeta_chi_bounds_new}, the above equation means
\begin{equation}
\sum_{a,v} Z_1^{(a)}\sigma^{2,v} Z_1^{(a)}\approx_{\gammaH}\sigma^2.
\end{equation}
Finally, using the above equation and \cref{lem:trace_xsigma_zero}, we obtain
\begin{equation}
\sum_a \tr[X_1 Z_1^{(a)}\sigma^2Z_1^{(a)}] =\tr\Bigl[X_1\sum_{a,v}Z_1^{(a)}\sigma^{2, v}Z_1^{(a)}\Bigr] \approx_{\sqrt{\gammaH}} \tr[X_1\sigma^2] \approx_{\gammaH} 0,
\end{equation}
which completes the proof.
\end{proof}

To state and prove the next lemma, we need the following definition.
\begin{defn}
Let $D$ be a device. For keys $k\in (\calK_{\calF} \cup \calK_{\calG})^{2N}$, $b\in \{0,1\}^{2N}$, $y\in \calY^{2N}\textbf{}$, we define $\hatx(b,k,y)\in \calX^{2N}$ component-wise by  $\hatx(b,k,y)_i\coloneqq \hat{x}(b_i,k_i,y_i)$ for all $i\in [2N]$. Then, we define
\begin{equation}\label{eq:hatpi_by}
\hatPi^{b}_{y} \coloneqq \Pi^{b,\hatx(b,k,y)}_y \quad \text{and} \quad \hatPi^{b} \coloneqq \sum_y \hatPi^{b}_y\otimes \ketbra{y}{y},
\end{equation}
which can be thought of as projectors onto the correct preimage answers. 
Note that if $\hatx(b_i, k_i, y_i) = \perp$ for some $i\in [2N]$, then $\hatPi_y^b = 0$.
Moreover, for $a\in \{0,1\}$ and $i\in [2N]$,  we define
\begin{equation}
\hatPi^{i, a}_{y} \coloneqq \sum_{b|b_i = a} \hatPi^{b}_{y}, \quad
\hatPi^{i, a} \coloneqq \sum_{y} \hatPi^{i,a}_{y}\otimes \ketbra{y}{y},
\quad \text{and} \quad
Z^{(a)}_{i,d} \coloneqq \sum_y Z^{(a)}_{i,y,d} \otimes \ketbra{y}{y}.
\end{equation}
\end{defn}

We also note that, by previous definitions, we have
\begin{equation}
    Z_{i,y,d}^{(a)} = \sum_{v \mid v_i = a} P_{0^{2N},y,d}^v \quad \text{and} \quad 
    Z_i^{(a)} = \sum_d Z^{(a)}_{i,d} \otimes \ketbrasame{d}.
\end{equation}
\begin{lemma}
\label{claim:nomatch_hatpi}
    Let $D$ be a device. Let $\theta\in \thetasetint$
    and $b, b'\in \{0,1\}^{2N}$ be such that $b_j \neq b'_j$
    for some $j \in [2N]$ with $j \neq \theta$. Then, $\hatPi^b_y \psi^\theta_y \hatPi^{b'}_y = 0$, for all $y \in \calY^{2N}$.
\end{lemma}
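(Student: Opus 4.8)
The plan is to observe that this is a purely structural fact about the ENTCF and does not actually use the state $\psi^\theta_y$ at all: I will show that under the hypothesis, at least one of the two projectors $\hatPi^b_y$ and $\hatPi^{b'}_y$ is identically zero, which immediately forces the product $\hatPi^b_y \psi^\theta_y \hatPi^{b'}_y$ to vanish.

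First I would note that since $\theta \in \thetasetint = [2N]\cup\{0\}$ and $j \neq \theta$, the $j$-th key $k_j$ is sampled from $\Gen_{\calG}(1^\lambda)$ in the protocol, so $k_j \in \calK_\calG$ is an injective key. (When $\theta = 0$, all $2N$ keys are injective; when $\theta \in [2N]$, only the $\theta$-th key is claw-free and $j \neq \theta$.) Then I would invoke the disjoint trapdoor injective pair property of ENTCFs: for $k_j \in \calK_\calG$, the sets $\bigcup_{x\in\calX}\Supp(f_{k_j,0}(x))$ and $\bigcup_{x'\in\calX}\Supp(f_{k_j,1}(x'))$ are disjoint. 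Since the claim is symmetric in $b$ and $b'$, I may assume without loss of generality that $b_j = 0$ and $b'_j = 1$.

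Next I would split into two cases according to whether $y_j \in \bigcup_{x}\Supp(f_{k_j,0}(x))$. If $y_j$ is not in this set, then by the definition of $\hatx$ we get $\hatx(b_j,k_j,y_j) = \hatx(0,k_j,y_j) = \perp$, hence $\hatx(b,k,y)$ has a $\perp$ component and so $\hatPi^b_y = \Pi^{b,\hatx(b,k,y)}_y = 0$ by the remark following the definition of $\hatPi^b_y$. If instead $y_j \in \bigcup_x\Supp(f_{k_j,0}(x))$, then disjointness gives $y_j \notin \bigcup_{x'}\Supp(f_{k_j,1}(x'))$, so $\hatx(b'_j,k_j,y_j) = \hatx(1,k_j,y_j) = \perp$ and therefore $\hatPi^{b'}_y = 0$. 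In either case the product $\hatPi^b_y \psi^\theta_y \hatPi^{b'}_y$ is zero, which is what we wanted.

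There is essentially no real obstacle in this argument; the only points requiring care are (i) correctly justifying that $k_j \in \calK_\calG$ in both the $\theta = 0$ and $\theta \in [2N]$ cases, and (ii) noting that the statement is symmetric under swapping $(b,\hatPi^b_y)$ with $(b',\hatPi^{b'}_y)$, so the reduction to $b_j = 0$, $b'_j = 1$ is legitimate.
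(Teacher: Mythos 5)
Your proof is correct and follows essentially the same route as the paper: since $j\neq\theta$ the key $k_j$ is injective, and the disjointness of $\bigcup_x\Supp(f_{k_j,0}(x))$ and $\bigcup_x\Supp(f_{k_j,1}(x))$ forces $\hatx(b_j,k_j,y_j)=\perp$ or $\hatx(b'_j,k_j,y_j)=\perp$, so one of $\hatPi^b_y$, $\hatPi^{b'}_y$ vanishes. The paper merely organizes the same argument as three cases according to the value of $\hatb(k_j,y_j)$ ($=b_j$, $=b'_j$, or $=\perp$) instead of your two cases on support membership; the substance is identical.
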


\begin{proof}
    This proof generalizes that of \cite[Lemma 4.20]{mv2021selftest}.
    We consider the following three cases, which cover all possibilities:
    \begin{enumerate}
        \item $\hatb(k_j,y_j) = b_j$. In this case, we have $y_j\in \medcup_{x\in \calX} \Supp(f_{k_j,b_j}(x))$. Now, $k_j\in \calK_\calG$ as $j\neq \theta$. Therefore, $b_j'\neq b_j$ implies $y_j\notin \medcup_{x\in \calX} \Supp(f_{k_j,b_j'}(x))$. So $\hatx(b_j',k_j,y_j) = \perp$ and $\hatPi^{b'}_y=0$.
        \item $\hatb(k_j,y_j) = b_j'$. In this case, by an analogous argument, we obtain $\hatPi^{b}_y=0$.
        \item $\hatb(k_j,y_j) = \perp$. In this case, we have $\hatx(b_j,k_j,y_j) = \hatx(b_j',k_j,y_j) = \perp$. So $\hatPi^{b}_y=\hatPi^{b'}_y=0$.
    \end{enumerate}
    In all cases, we see  $\hatPi^b_y \psi^\theta_y \hatPi^{b'}_y =0$. Hence the lemma. 
\end{proof}

\begin{lemma}
\label{lem:anticommutation_mpi_zm}
Let $D$ be an efficient perfect device. For all $i,\theta\in[2N]$, we have
\begin{equation}
\label{eq:z_and_Pi}
\sum_{a,y,d} \| M_{y}^{d}\hatPi^{i,a}_y - Z^{(a)}_{i,y,d} M_{y}^{d}\|^2_{\psi^{\theta}_{y}} \simeq_{2\gammaH + \negl} 0.
\end{equation}
\end{lemma}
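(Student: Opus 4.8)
The plan is to establish a two-sided characterization of the observable $Z_i$ in terms of the preimage projectors $\hatPi^{i,a}$, and then push it through the Hadamard measurement $M$. First I would recall that by the definition of $\sigma^{\theta,v}$ and $\Sigmaset(\theta,v)$ (Definition~\ref{def:sig_thetav}), for $\theta \in \thetasetint$ and $i \neq \theta$, the event ``$\hatb(k_i,y_i) = v_i$'' is exactly what pins down the $i$-th bit of $v$; combined with the collapsing property (Lemma~\ref{lem:collapse}), which says $\psi^\theta \csimeq_{\negl} \sum_{b,x}\Pi^{b,x}\psi^\theta\Pi^{b,x}$, and the fact that on a perfect device the preimage answer is essentially always the ``correct'' one, I can replace $\sum_{b,x}\Pi^{b,x}\psi^\theta\Pi^{b,x}$ by $\sum_b \hatPi^b \psi^\theta \hatPi^b$ up to $\negl$. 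So $\psi^\theta \csimeq_{\negl} \sum_a \hatPi^{i,a}\psi^\theta \hatPi^{i,a}$, using Lemma~\ref{claim:nomatch_hatpi} to kill cross terms $\hatPi^b \psi^\theta \hatPi^{b'}$ with $b_j \neq b'_j$ for some $j \neq \theta$ (one still needs $j = i$ here — the point is that $\hatPi^{i,a}$ groups by the $i$-th bit and the off-diagonal blocks vanish provided $i \neq \theta$; for $\theta = i$ one reduces to a different $\theta$ first via computational indistinguishability, as done in the companion lemmas).

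Next, I want to show $Z_i^{(a)}$ ``acts like'' $\hatPi^{i,a}$ after the $M$ measurement, i.e. $M_y^d \hatPi^{i,a}_y \approx Z^{(a)}_{i,y,d} M^d_y$ in the relevant state-dependent norm. The natural route: the verifier's check in case $q=0$ with $\theta = 0$ (Case~C, $q=0$) accepts iff $\hatb(k_i,y_i) = v_i$ for all $i$, and the probability of acceptance is, by Definition~\ref{def:gamma}, controlled by $r_{\theta,i}$ and hence by $\gammaTzero \leq \gammaT$. Spelling this out: $\sum_{y,d} \tr[Z_i^{(a)} M_y^d \hatPi^{i,a}_y \psi^\theta_y \hatPi^{i,a}_y M_y^d]$ is close to $\sum_{y,d}\tr[M_y^d \hatPi^{i,a}_y \psi^\theta_y \hatPi^{i,a}_y M_y^d]$ (both close to $\tr[\sigma^{\theta,\cdot}]$-type quantities) up to $O(\gammaT)$, by the definition of $r_{\theta,i}$ together with Lemma~\ref{lem:sandwich_argument} and the $\zeta$-bounds of Lemma~\ref{lem:zeta_chi_bounds}. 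Then a standard ``$A \simeq_{\epsilon,\psi} B$ means $\|A - B\|_\psi^2$ small'' manipulation — expand $\|M_y^d\hatPi^{i,a}_y - Z^{(a)}_{i,y,d}M^d_y\|^2_{\psi^\theta_y}$, use that $Z^{(a)}_{i,y,d}$ and $\hatPi^{i,a}_y$ are projectors, and that $M$ is a projective measurement so $\sum_d M^d_y\,\cdot\,M^d_y$ is trace-non-increasing — collapses the cross term using the near-equality just established, leaving $\simeq_{2\gammaT + \negl} 0$ after summing over $a$. The factor $2$ comes from the two values of $a$ and from one application of Cauchy–Schwarz.

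The main obstacle I anticipate is the bookkeeping around $\theta = i$ versus $\theta \neq i$: the sets $\Sigmaset(\theta,v)$ treat the $\theta$-th coordinate specially (it involves $\hath$, not $\hatb$), so the identification of $Z_i$ with $\hatPi^{i,a}$ via the $q=0$ check is cleanest when $i \neq \theta$, and one must invoke the computational indistinguishability of the $\sigma^\theta$'s (Lemma~\ref{lem:indistinguishability}) plus the relevant lifting lemma (part of Lemma~\ref{lem:lifting}, or Lemma~\ref{lem:lifting_proj} if projective measurements intervene) to transfer the estimate to all $\theta$ — exactly as in the proof of Lemma~\ref{lem:anticommutation_xzsigmaz}. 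A secondary subtlety is ensuring the error stays independent of the number of outcomes $2^{2N}$: this is precisely what parts 3--4 of the replacement lemma (Lemma~\ref{lem:replace}) and Lemma~\ref{lem:post_meas_approx} are designed for, and the sums $\sum_a$, $\sum_{y,d}$ here are benign because $a$ ranges over $\{0,1\}$ and the $(y,d)$-sum is absorbed into traces of sub-normalized states, so no exponential blow-up occurs. Once the near-equality $\sum_{y,d}\tr[\hatPi^{i,a}_y(\1 - Z^{(a)}_{i,y,d})M^d_y\psi^\theta_y M^d_y \cdots] = O(\gammaT) + \negl$ is in hand, the claimed bound follows by elementary algebra.
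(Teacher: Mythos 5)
Your proposal follows essentially the same route as the paper's proof: expand the squared norm, note that the two quadratic terms are each negligibly close to $1$ (perfectness plus $\sum_d M^d = \1$, $\sum_a Z^{(a)}_{i,d}=\1$), and bound the cross term by identifying $\sum_{a,d}\tr[Z^{(a)}_{i,d}M^d\hatPi^{i,a}\psi^{\theta}\hatPi^{i,a}M^d]$ with an $r_{\theta,i}$-type acceptance probability $\geq 1-\gammaT$ via \cref{claim:nomatch_hatpi} and perfectness, transferring across $\theta$ (the paper computes cleanly at $\theta=0$ and transfers by exhibiting an explicit efficient experiment whose acceptance probability equals this quantity, which is the concrete form of the computational-indistinguishability step you gesture at). The only loose ends are cosmetic: the one-sided cross terms $\tr[Z^{(a)}_{i,d}M^d(\hatPi^{i,a}\psi^{\theta}+\psi^{\theta}\hatPi^{i,a})M^d]$ are turned into the sandwiched form by inserting $\hatPi^{i,0}+\hatPi^{i,1}\simeq_{\negl,\psi^{\theta}}\1$ (with the off-diagonal contribution vanishing identically), which is the role your Step 1 would have to play, and the factor $2$ in the final bound comes from this cross term appearing twice, not from the two values of $a$ or from Cauchy--Schwarz.
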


\begin{proof}
The proof of this lemma generalizes that of \cite[Lemma 4.19]{mv2021selftest}.

We prove the lemma for $i=1$ as the proof for other $i$ is analogous. Expanding the left-hand side of \cref{eq:z_and_Pi}, we obtain
\begin{equation}
\begin{aligned}
    &\sum_{a,y,d} \| M_{y}^{d}\hatPi^{1,a}_y  - Z^{(a)}_{1,y,d} M_{y}^{d}\|^2_{\psi^{\theta}_{y}}\\
    =& \sum_{a,y,d} \tr[
    (\hatPi^{1,a}_y M_{y}^{d} - M_{y}^{d}Z^{(a)}_{1,y,d})
    (M_{y}^{d}\hatPi^{1,a}_y  - Z^{(a)}_{1,y,d} M_{y}^{d})\psi^\theta_y] \\
    =& \sum_{a,y,d} \tr[\hatPi^{1,a}_yM_{y}^{d}\hatPi^{1,a}_y \psi^\theta_y]
    + \sum_{a,y,d} \tr[M_{y}^{d}Z^{(a)}_{1,y,d} M_y^{d}
    \psi^\theta_y]
    - \sum_{a,y,d} \tr[M_y^d Z^{(a)}_{1,y,d} M_y^d (
   \hatPi^{1,a}_y \psi_y^\theta + \psi_y^\theta\hatPi^{1,a}_y)].
\end{aligned}
\end{equation}
Since $\psi^\theta = \sum_y \psi_y^\theta \otimes \ketbra{y}{y}$,
$M^d = \sum_y M_y^d \otimes \ketbra{y}{y}$,
$\hatPi^{a}_{i} = \sum_{y} \hatPi^{i,a}_y \otimes \ketbra{y}{y}$
and
$Z^{(a)}_{i,d} = \sum_y Z^{(a)}_{1,y,d} \otimes \ketbra{y}{y}$,
we can simplify the left-hand side of \cref{eq:z_and_Pi}
as
\begin{equation}\label{eq:mpi_zm_threeterms}
    \sum_{a,d} \tr[ \hatPi^{1,a} M^{d} \hatPi^{1,a}  \psi^\theta] 
    + \sum_{a,d} \tr[ M^{d}Z^{(a)}_{1,d} M^{d}
    \psi^\theta]
    - \sum_{a,d} \tr[Z^{(a)}_{1,d} M^d (
    \hatPi^{1,a} \psi^\theta + \psi^\theta \hatPi^{1,a})M^d ].
\end{equation}
Because the device is perfect, according to \cref{def:failure_prob}, the first term is 
\begin{equation}
  \sum_{a,d} \tr[ \hatPi^{1,a} M^{d} \hatPi^{1,a}  \psi^\theta] = 
  \sum_a \tr\Bigl[\hatPi^{1,a}  \Bigl( \sum_d M^d \Bigr) \hatPi^{1,a}  \psi^\theta \Bigr]
  =
  \sum_a \tr[ \hatPi^{1,a}  \psi^\theta ]
    =  1 - \negl.
\end{equation}
The second term is
\begin{equation}
 \sum_d \tr\Bigl[ M^d \Bigl(\sum_a Z^{(a)}_{1,d}\Bigr) M^d \psi^\theta\Bigr] = 1.
\end{equation}
For the third term, we first notice that
\begin{equation}
    \norm{\hatPi^{1,0} + \hatPi^{1,1} - \1}^2_{\psi^\theta} = 1 - \tr[(\hatPi^{1,0} + \hatPi^{1,1})\psi^\theta] \leq \negl,
\end{equation}
that is
\begin{equation}\label{eq:sumpi_psi_replacement}
    \hatPi^{1,0} + \hatPi^{1,1} \simeq_{\negl, \psi^\theta} \1.
\end{equation}
Then, we have
\begin{equation}\label{eq:mpi_zm_thirdterm_twoterms}
\begin{aligned}
    &\sum_{a} \tr\Bigl[ \Bigl(\sum_d M^d Z^{(a)}_{1,d} M^d\Bigr) (
    \hatPi^{1,a} \psi^\theta + \psi^\theta \hatPi^{1,a} ) \Bigr] \\
    &\simeq_{\negl}
    \sum_{a,d}\tr[M^d Z^{(a)}_{1,d} M^d (
    \hatPi^{1,a} \psi^\theta
    (\hatPi^0_1 + \hatPi^1_1)
    + (\hatPi^0_1 + \hatPi^1_1) \psi^\theta \hatPi^{1,a} )M^d ] \\
    &= 2 \sum_{a,d} \tr[
    Z^{(a)}_{1,d} M^d \hatPi^{1,a} \psi^\theta
    \hatPi^{1,a} M^d] 
    + \sum_{a,d}
    \tr[
    Z^{(a)}_{1,d} M^d (\hatPi^{1,a}\psi^\theta
    \hatPi^{1,1-a}
    +\hatPi^{1,1-a}\psi^\theta
    \hatPi^{1,a})M^d],
\end{aligned}
\end{equation}
where the first approximation follows from the replacement lemma (\cref{lem:replace}) with \cref{eq:sumpi_psi_replacement}, $\norminfty{\hatPi^{1,a}} \leq 1$, and $\norm{\sum_d M^d Z_{1,d}^{(a)} M^d }_\infty \leq 1$. The last bound follows from  $0 \leq \sum_d M^d Z_{1,d}^{(a)} M^d 
\leq \sum_d M^d =  \1$ which is implied by $0 \leq Z_{1,d}^{(a)} \leq \1$.

Now, the second term appearing on the last line \cref{eq:mpi_zm_thirdterm_twoterms} is zero, which follows from: $\hatPi^{1,a}\psi^\theta
    \hatPi^{1,1-a}
    +\hatPi^{1,1-a}\psi^\theta
    \hatPi^{1,a}$ is independent of $a$, $\sum_a Z_{1,d}^{(a)}=\1$, $\sum_d M^d =\1$, and $\hatPi^{1,0}\,\hatPi^{1,1}=0$.

We argue that the first term appearing on the last line \cref{eq:mpi_zm_thirdterm_twoterms} is close to $2$. For $\theta = 0$, we have
\begin{align}
    \sum_d M^d \hatPi^{1,a} \psi^0 \hatPi^{1,a}  M_y^d \otimes \ketbrasame{d} 
    &= \sum_y \sum_d \, \sum_{b,b'\mid b_1=b_1'=a} M_y^d \, \hatPi^{b}_y \psi_y^0 \hatPi^{b'}_y \, M^d \otimes \ketbrasame{y,d}
    \\
    &= \sum_y \sum_d \sum_{b \mid b_1 = a} M_y^d \, \hatPi^{b}_y \psi_y^0 \hatPi^{b}_y \, M_y^d \otimes \ketbrasame{y,d}
    \\
    &\simeq_{\negl} \sum_{\text{valid } y \mid \hatb(k_1,y_1)=a} \,  \sum_d M_y^d \, \hatPi^{\hatb(k,y)}_y \psi_y^0 \hatPi^{\hatb(k,y)}_y \, M_y^d \otimes \ketbrasame{y,d}
    \\
    &= \sum_{\text{valid } y \mid \hatb(k_1,y_1)=a} \,  \sum_d M_y^d \psi_y^0 M_y^d \otimes \ketbrasame{y,d} = \sum_{v \mid v_1=a} \sigma^{0,v},
\end{align}
where we used \cref{claim:nomatch_hatpi} in the second equality, and $\theta=0$ (so $k\in \calK_\calG^{2N}$) and the perfectness of $D$ in the third (approximate) equality.

Therefore, we have
\begin{equation}\label{eq:comp_indist_base0}
\begin{aligned}
    \sum_{a,d} \tr[
    Z^{(a)}_{1,d} M^d \hatPi^{1,a} \psi^0
    \hatPi^{1,a} M^d]
     &= \sum_{a} \tr[Z_1^{(a)}\sum_d M^d \hatPi^{1,a} \psi^0 \hatPi^{1,a}  M_y^d \otimes \ketbrasame{d}] 
    \\
    &\simeq_{\negl} \sum_v \tr[Z_1^{(v_1)}\sigma^{0,v}] \geq 1 - \gammaH,
\end{aligned}
\end{equation}
where the last inequality follows from \cref{lem:sandwich_argument}.

But given any state $\psi^{\theta}$ of $D$, $\sum_{a,d} \tr[Z^{(a)}_{1,d} M^d \hatPi^{1,a} \psi^\theta \hatPi^{1,a} M^d]$ can be estimated by the following efficient algorithm that is independent of $\theta$:
\begin{enumerate}
    \item Measure $\psi^{\theta}$ using the efficient measurement $\{\sum_{b_2 \ldots b_{2N}, x}\Pi^{b_1 b_2 \ldots b_{2N},x}\}_{b_1}$ to obtain outcome $b_1\in \{0,1\}$.
    \item Measure $M^d$ to obtain outcome $d\in \{0,1\}^{2Nw}$.
    \item Measure $Z_{1,d}$ to obtain bit $b_1'$.
    \item Output $1$ if $b_1 = b_1'$ and $0$ otherwise.
\end{enumerate}

Because $D$ is perfect, the state of the system after the first step is negligibly close to $\sum_a\hatPi^{1,a} \psi^\theta \hatPi^{1,a}$ in trace distance. After the second step, the state becomes $\sum_{a,d}M^d\hatPi^{1,a} \psi^\theta \hatPi^{1,a}M^d$. Therefore, the probability the algorithm outputs $1$ is negligibly close to $\sum_{a,d} \tr[Z^{(a)}_{1,d} M^d \hatPi^{1,a} \psi^\theta \hatPi^{1,a} M^d]$.

Therefore, \cref{eq:comp_indist_base0} and the computational indistinguishability of the $\psi^{\theta}$s (\cref{lem:indistinguishability}) gives 
\begin{equation}
    \sum_{a,d} \tr[Z^{(a)}_{1,d} M^d \hatPi^{1,a} \psi^\theta \hatPi^{1,a} M^d] \geq 1 - \gammaH + \negl.
\end{equation}

Overall, putting together the bounds we have derived for each of the three terms in \cref{eq:mpi_zm_threeterms}, we obtain
\begin{align}
    \sum_{a,y,d} \| M_{y}^{d}\hatPi_{i,y}^{a} - Z^{(a)}_{i,y,d} M_{y}^{d}\|^2_{\psi^{\theta}_{y}} \leq 2\gammaH + 
    \negl,
\end{align}
which completes the proof.
\end{proof}

\begin{lemma}\label{lem:anticomm_xzsigz_marginals}
Let $D$ be an efficient perfect device. For all $\theta\in [2N]$ and $c \in \{0,1\}$, we have
\begin{equation}
\sum_{a, v | v_\theta = c} \tr[X_\theta Z_\theta^{(a)}\sigma^{\theta,v}Z_{\theta}^{(a)}] \approx_{\sqrt{\gammaH} + \negl} 0.
\end{equation}
\end{lemma}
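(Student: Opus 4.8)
The plan is to show that each of the two quantities $\Lambda_c \coloneqq \sum_{a,\,v\mid v_\theta=c}\tr[X_\theta Z_\theta^{(a)}\sigma^{\theta,v}Z_\theta^{(a)}]$ is $\approx_{\sqrt{\gammaT}+\negl}0$ by establishing separately that $\Lambda_0+\Lambda_1\approx_{\sqrt{\gammaT}+\negl}0$ and $\Lambda_0-\Lambda_1\approx_{\sqrt{\gammaT}+\negl}0$; adding and subtracting then finishes the proof. I may assume $\theta\le N$, the case $\theta>N$ being symmetric with $\theta+N$ replaced by $\theta-N$ throughout. Expanding $Z_\theta^{(a)}=\sum_{y,d}Z^{(a)}_{\theta,y,d}\otimes\ketbrasame{y,d}$, and likewise $X_\theta$, and noting that $\bigcup_v \Sigmaset(\theta,v)$ is exactly the set $\mathcal{S}$ of $(y,d)$ with $y$ valid and $d_\theta\ne 0^w$, one has $\Lambda_0+\Lambda_1 = \sum_a\sum_{(y,d)\in\mathcal{S}}\tr[X_{\theta,y,d}Z^{(a)}_{\theta,y,d}\sigma^\theta_{y,d}Z^{(a)}_{\theta,y,d}]$. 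The missing terms have total weight $\sum_{(y,d)\notin\mathcal{S}}\tr[\sigma^\theta_{y,d}]=1-\sum_v\tr[\sigma^{\theta,v}]\le\gammaT$ by \cref{lem:sandwich_argument}, so $\Lambda_0+\Lambda_1 = \sum_a\tr[X_\theta Z_\theta^{(a)}\sigma^\theta Z_\theta^{(a)}] + O(\gammaT)$, which is $\approx_{\sqrt{\gammaT}+\negl}0$ by \cref{lem:anticommutation_xzsigmaz}.

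For the difference, the naive move --- replacing $X_\theta$ by $(-1)^c\1$ via $X_\theta\simeq_{\chi(\theta,v),\sigma^{\theta,v}}(-1)^{v_\theta}\1$ --- fails, since in $\Lambda_c$ the observable $X_\theta$ sits between the $Z_\theta$-projectors and its state-dependent closeness is controlled only relative to $\sigma^{\theta,v}$, not relative to $Z_\theta^{(a)}\sigma^{\theta,v}Z_\theta^{(a)}$. Instead I would leave the ``post-$d$'' picture: apply \cref{lem:anticommutation_mpi_zm} with both indices equal to $\theta$ to replace $Z^{(a)}_{\theta,y,d}M^d_y$ by $M^d_y\hatPi^{\theta,a}_y$ on $\psi^\theta_y$, and bound the error by a Cauchy--Schwarz estimate (as in the proof of \cref{lem:replace}) summed over $a$ and $(y,d)\in\mathcal{S}_c$, where $\mathcal{S}_c \coloneqq \{(y,d)\in\mathcal{S} \mid \hath(k_\theta,y_\theta,d_\theta)\oplus\hatb(k_{\theta+N},y_{\theta+N})=c\}$. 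Using $\sum_{b\mid b_\theta=a}\hatPi^b_y\psi^\theta_y\hatPi^b_y = \hatPi^{\theta,a}_y\psi^\theta_y\hatPi^{\theta,a}_y$ (\cref{claim:nomatch_hatpi}), this yields
\begin{equation*}
    \Lambda_c \approx_{\sqrt{\gammaT}+\negl} \widetilde\Lambda_c \coloneqq \sum_a\sum_{(y,d)\in\mathcal{S}_c}\tr\bigl[X_{\theta,y,d}\,M^d_y\hatPi^{\theta,a}_y\psi^\theta_y\hatPi^{\theta,a}_y M^d_y\bigr].
\end{equation*}
(Honestly $\widetilde\Lambda_c=0$ term by term: committing to the branch $b_\theta=a$ before the Hadamard measurement leaves qubit $\theta$ in the computational-basis state $\ket{a}$, on which $X_\theta$ acts as $\sigmax$ on qubit $\theta$ times $\sigmaz$ on qubit $\theta+N$, so its expectation vanishes.)

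It remains to show $\widetilde\Lambda_0-\widetilde\Lambda_1=\negl$. Since $\widetilde\Lambda_0-\widetilde\Lambda_1 = \sum_a\sum_{(y,d)\in\mathcal{S}}(-1)^{\hath(k_\theta,y_\theta,d_\theta)\oplus\hatb(k_{\theta+N},y_{\theta+N})}\tr[X_{\theta,y,d}M^d_y\hatPi^{\theta,a}_y\psi^\theta_y\hatPi^{\theta,a}_y M^d_y]$, I would argue as in \cref{lem:uniformity_sigma_marginals} using the adaptive hardcore bit property. Consider the efficient algorithm that, on input $k_\theta\leftarrow\Gen_{\calF}(1^{\lambda})_{\key}$, samples the remaining $2N-1$ keys (with trapdoors) from $\Gen_{\calG}(1^{\lambda})$, prepares $\psi^\theta$, measures to get $y$, measures $\Pi$ to get $(b,x)$ (recording $a\coloneqq b_\theta$ and $x_\theta$), measures $M$ to get $d$, measures $P_1$ to get $v$, and outputs $(a,x_\theta,d_\theta)$ together with $\tilde g\coloneqq v_\theta\oplus\hatb(k_{\theta+N},y_{\theta+N})$ (computable since the algorithm holds $t_{k_{\theta+N}}$). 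Perfectness gives $f_{k_\theta,a}(x_\theta)=y_\theta$ with probability $\ge 1-\negl$; conditioned on this, \cref{claim:nomatch_hatpi} shows the post-$\Pi$ state in the branch $b_\theta=a$ is $\hatPi^{\theta,a}_y\psi^\theta_y\hatPi^{\theta,a}_y$, so by the Born rule the joint distribution of $(a,y,d,v)$ over valid branches is $\tr[P^v_{1,y,d}M^d_y\hatPi^{\theta,a}_y\psi^\theta_y\hatPi^{\theta,a}_y M^d_y]$ up to $\negl$. Because $\sum_{v\mid v_\theta=\beta}P^v_{1,y,d}=X^{(\beta)}_{\theta,y,d}$ and, on $\mathcal{S}$, ``$\tilde g=\hath(k_\theta,y_\theta,d_\theta)$'' is the event ``$v_\theta=\hath(k_\theta,y_\theta,d_\theta)\oplus\hatb(k_{\theta+N},y_{\theta+N})$'', the adaptive-hardcore-bit advantage $\Pr[\tilde g=\hath(k_\theta,y_\theta,d_\theta)\wedge\text{valid}]-\Pr[\tilde g\ne\hath(k_\theta,y_\theta,d_\theta)\wedge\text{valid}]$ equals $\widetilde\Lambda_0-\widetilde\Lambda_1$ up to $\negl$, and hence is $\negl$.

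The main obstacle is precisely the step for the difference: one has to recognise that the state-dependent sharpness of $X_\theta$ is useless here and that the right move is to pass, via \cref{lem:anticommutation_mpi_zm} and \cref{claim:nomatch_hatpi}, to the ``commit-to-a-preimage'' picture and then invoke the adaptive hardcore bit --- morally, a device that has committed to one preimage $\hatx(a,k_\theta,y_\theta)$ and produced $d_\theta$ cannot have $v_\theta$ correlated with $d_\theta\cdot(\hatx(0,k_\theta,y_\theta)\oplus\hatx(1,k_\theta,y_\theta))$. The remaining points --- the $d_\theta=0^w$ and invalid-$y$ contributions, and the Cauchy--Schwarz bounds for the \cref{lem:anticommutation_mpi_zm} replacement --- are routine and handled as in the cited lemmas.
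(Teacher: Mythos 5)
Your proposal is correct and follows essentially the same route as the paper: split into the sum (handled by \cref{lem:anticommutation_xzsigmaz}) and the difference, pass to the committed-preimage picture via \cref{lem:anticommutation_mpi_zm} with a Cauchy--Schwarz error bound (the paper's \cref{claim:replacezm_crossterm_technical}) and \cref{claim:nomatch_hatpi}, and finish with a reduction to the adaptive hardcore bit property. Your explicit tracking of the offset $\hatb(k_{\theta+N},y_{\theta+N})$ in $\mathcal{S}_c$ and in the guess bit $\tilde g$ (using the trapdoor $t_{k_{\theta+N}}$ the reduction holds) is in fact slightly more careful than the paper's writeup, which suppresses that offset; otherwise the arguments coincide, since measuring $P_1$ and reading off $v_\theta$ is the same as measuring the observable $X_\theta$.
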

\begin{proof}
The proof of this lemma generalizes that of \cite[Lemma 4.18]{mv2021selftest}.

It suffices to prove the lemma for $\theta = 1$ as the argument is analogous for other $\theta$. Define
\begin{equation}
E_{c} \coloneqq \sum_{a,v|v_{1} = c} \tr[X_1 Z_1^{(a)}\sigma^{1,v}Z_{1}^{(a)}].
\end{equation}
By \cref{lem:anticommutation_xzsigmaz}, we have $E_0+E_1 \approx_{\sqrt{\gammaH}+\negl} 0$. Therefore, it suffices to show that $E_0\approx_{\sqrt{\gammaH} + \negl} E_1$. Using the definitions of $\sigma^{1,v}$ and $\Sigma(1,v)$ from \cref{def:sig_thetav}, and using \cref{def:observables}, we have
\begin{equation}
\begin{aligned}
E_c &= \sum_{a,v|v_{1} = c} \, \sum_{y,d\in \Sigma_{1,v}} \tr[X_1Z_1^{(a)}M_y^d \, \psi_y^{1} \, M_y^d\otimes \ketbra{y,d}{y,d} Z_1^{(a)}]\\
&= \sum_{a,v|v_{1} = c} \, \sum_{y,d\in \Sigmaset_{1,v}} \tr[X_{1,y,d}Z_{1,y,d}^{(a)}M_y^d \, \psi_y^{1} \, M_y^d Z_{1,y,d}^{(a)}]\\
&= \sum_{a,y}\, \sum_{d|\hath(k_1,y_1,d_1) = c} \tr[X_{1,y,d}Z_{1,y,d}^{(a)}M_y^d \, \psi_y^{1} \, M_y^d Z_{1,y,d}^{(a)}].
\end{aligned}
\end{equation}

Now, also define
\begin{equation}
F_c \coloneqq \sum_{a,y}\, \sum_{d|\hath(k_1,y_1,d_1) = c}  \tr[X_{1,y,d}M^d_y \hatPi^{1,a}_y\psi^1_y\hatPi^{1,a}_yM^d_y].
\end{equation}

By the technical \cref{claim:replacezm_crossterm_technical}, deferred to after this proof, we have $E_c\approx_{\sqrt{\gammaH}} F_c$. Therefore, to prove the lemma, it suffices to prove $F_0\simeq_{\negl} F_1$. To this end, we first substitute the definition $\hatPi^{1,a}_y  \coloneqq \sum_{b|b_1 = a} \hatPi^{b}_{y}$ into the expression for $F_c$:
\begin{equation}
    F_c = \sum_{b,y}\, \sum_{d|\hath(k_1,y_1,d_1) = c}  \tr[X_{1,y,d} M^d_y \hatPi^b_{y} \psi^1_y  \hatPi^b_{y} M^d_y] + \sum_{a,y} \, \sum_{d|\hath(k_1,y_1,d_1) = c}\, \sum_{\substack{b\neq b'\\b_1=b_1'=a}} \tr[X_{1,y,d} M_y^d \hatPi_y^b \psi^1_y \hatPi^{b'}_yM_y^d].
\end{equation}

But by \cref{claim:nomatch_hatpi}, the second term is $0$. Therefore,
\begin{equation}
    F_c = \sum_{b,y}\, \sum_{d|\hath(k_1,y_1,d_1) = c}  \tr[X_{1,y,d} M^d_y \hatPi^b_{y} \psi^1_y  \hatPi^b_{y} M^d_y].
\end{equation}

With the above expression in hand, we prove $F_0 \simeq_{\negl} F_1$ by contradiction. Assume that there exists a non-negligible function $\mu$ such that $F_0-F_1>\notnegl>0$ for some $\lambda$ (as in the proof of \cref{lem:uniformity_sigma_marginals}, the case $F_1-F_0>\notnegl>0$ can be handled analogously). We can then construct the following efficient algorithm $\calA$ which breaks the adaptive hardcore bit property of ENTCFs (\cref{property:adaptive_hardcore_bit}).

\begin{enumerate}
\item Input: a key $k_1$ sampled from $\Gen_{\calF}(1^{\lambda})_{\key}$.
\item Independently sample $2N-1$ keys $k_2,\ldots,k_{2N-1}$, each from $\Gen_{\calG}(1^{\lambda})_{\key}$.
\item Prepare the state $\psi^1$, then make measurement obtaining $y$, then make preimage measurement obtaining $(b,x)$, then make Hadamard measurement obtaining $d$, and then measure $X_1$ obtaining (single-bit) $h$.
\item Output: $(y_1,b_1,x_1,d_1,h)$ and print ``$f_{k_1,b_1}(x_1)=y_1 \text{ and } \hath(k_1,y_1,d_1) = h$''.
\end{enumerate}

Because $D$ is perfect, we have, with probability $\geq 1 - \negl$,
\begin{equation}\label{eq:anticomm_xzsigz_marginals_preimage}
f_{k_1,b_1}(x_1)=y_1.
\end{equation}

Write $\hath$ as shorthand for $\hath(k_1,y_1,d_1)$. Using $F_c= \pr_{\calA}(h=0, \hat{h}= c) - \pr_{\calA}(h=1, \hat{h}= c)$, we find
\begin{equation}\label{eq:anticomm_xzsigz_marginals_hardbit}
\begin{aligned}
\pr_{\calA}(\hath = h) &= \pr_{\calA}(\hath = h = 0) + \pr_{\calA}(\hath = h = 1)\\
&= \pr_{\calA}(\hath = 0, h = 1) + \pr_{\calA}(\hath = 1, h = 0)+ F_0 - F_1 \\
&> \pr_{\calA}(\hath \neq h) + \notnegl.
\end{aligned}
\end{equation}
\cref{eq:anticomm_xzsigz_marginals_preimage,eq:anticomm_xzsigz_marginals_hardbit} contradict the adaptive hardcore bit property of ENTCFs (\cref{property:adaptive_hardcore_bit}) as required.
\end{proof}

\begin{claim}\label{claim:replacezm_crossterm_technical}
Under the same setup as in the proof of \cref{lem:anticomm_xzsigz_marginals}, for any $c\in \{0,1\}$, we have
\begin{align}
E_c &\approx_{\sqrt{\gammaH}} F_c.
\end{align}
\end{claim}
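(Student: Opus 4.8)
The plan is to isolate the single substitution that transforms $E_c$ into $F_c$ — replacing $Z_{1,y,d}^{(a)}M_y^d$ by $M_y^d\hatPi_y^{1,a}$ as an operator acting on $\psi_y^1$ — and to control the resulting error using \cref{lem:anticommutation_mpi_zm} together with two Cauchy--Schwarz steps, arranged so that no factor growing with the (exponentially many) values of $y,d$ is incurred. Concretely, I would set $A_{y,d}^a \coloneqq Z_{1,y,d}^{(a)}M_y^d$ and $B_{y,d}^a \coloneqq M_y^d\hatPi_y^{1,a}$. Since $Z_{1,y,d}^{(a)}$, $M_y^d$, and $\hatPi_y^{1,a}$ are Hermitian projectors, $(A_{y,d}^a)^\dagger = M_y^d Z_{1,y,d}^{(a)}$ and $(B_{y,d}^a)^\dagger = \hatPi_y^{1,a}M_y^d$, so (still under the setup of the proof of \cref{lem:anticomm_xzsigz_marginals}, where $\theta=1$)
\[
E_c = \sum_{a,y}\ \sum_{d\mid \hath(k_1,y_1,d_1)=c}\tr\bigl[X_{1,y,d}\,A_{y,d}^a\,\psi_y^1\,(A_{y,d}^a)^\dagger\bigr],\qquad F_c = \sum_{a,y}\ \sum_{d\mid \hath(k_1,y_1,d_1)=c}\tr\bigl[X_{1,y,d}\,B_{y,d}^a\,\psi_y^1\,(B_{y,d}^a)^\dagger\bigr].
\]

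Next, using the identity $A\psi A^\dagger - B\psi B^\dagger = (A-B)\psi A^\dagger + B\psi (A-B)^\dagger$ and $\norminfty{X_{1,y,d}}\le 1$, I would bound each resulting $\abs{\tr[\cdot]}$ by the corresponding trace norm and apply Cauchy--Schwarz for Schatten $2$-norms, giving, for instance, $\abs{\tr[X_{1,y,d}(A_{y,d}^a-B_{y,d}^a)\psi_y^1(A_{y,d}^a)^\dagger]}\le \normstate{A_{y,d}^a-B_{y,d}^a}{\psi_y^1}\cdot\norm{A_{y,d}^a\sqrt{\psi_y^1}}_2$, and symmetrically with $B_{y,d}^a$ replacing $A_{y,d}^a$ in the other cross term. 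Summing over $a,y$ and over $d$ with $\hath=c$: since all summands are nonnegative, this is at most the unrestricted sum over all $d$, and one further Cauchy--Schwarz over the index set $(a,y,d)$ yields
\[
\abs{E_c - F_c}\ \le\ \Bigl(\smallsum_{a,y,d}\normstate{A_{y,d}^a-B_{y,d}^a}{\psi_y^1}^2\Bigr)^{1/2}\Bigl[\Bigl(\smallsum_{a,y,d}\norm{A_{y,d}^a\sqrt{\psi_y^1}}_2^2\Bigr)^{1/2}+\Bigl(\smallsum_{a,y,d}\norm{B_{y,d}^a\sqrt{\psi_y^1}}_2^2\Bigr)^{1/2}\Bigr].
\]
The first factor is $\le\sqrt{2\gammaT+\negl}$ by \cref{lem:anticommutation_mpi_zm} applied with $i=\theta=1$. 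For the other two, $\norm{A_{y,d}^a\sqrt{\psi_y^1}}_2^2 = \tr[M_y^d Z_{1,y,d}^{(a)}M_y^d\psi_y^1]$ and $\norm{B_{y,d}^a\sqrt{\psi_y^1}}_2^2 = \tr[\hatPi_y^{1,a}M_y^d\hatPi_y^{1,a}\psi_y^1]$; using the resolutions $\sum_d M_y^d=\1$ and $\sum_a Z_{1,y,d}^{(a)}=\1$, together with $\sum_a\hatPi_y^{1,a}\le\1$ (orthogonal projectors) and $\sum_y\tr[\psi_y^1]=\tr[\psi^1]=1$, both exponential sums collapse to quantities bounded by $1$. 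Hence $\abs{E_c-F_c}\le 2\sqrt{2\gammaT+\negl}=O(\sqrt{\gammaT})$ up to a negligible additive term, i.e.\ $E_c\approx_{\sqrt{\gammaT}}F_c$.

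The only real obstacle is precisely this bookkeeping: a naive term-by-term application of \cref{lem:anticommutation_mpi_zm} would lose a multiplicative factor equal to the number of $(y,d)$ pairs (which is exponential in $N$), so one must keep all squared errors inside a single sum and spend the Cauchy--Schwarz steps against the ``partition of unity'' identities $\sum_d M^d=\1$ and $\sum_a Z^{(a)}=\1$ — the same device-sum trick used in \cref{lem:replace} and \cref{lem:anticommutation_mpi_zm}. Everything else is routine: the restriction $\hath(k_1,y_1,d_1)=c$ plays no role beyond being harmlessly dropped, and no use of \cref{claim:nomatch_hatpi} or the adaptive hardcore bit property is needed here (that enters only afterwards, in comparing $F_0$ with $F_1$).
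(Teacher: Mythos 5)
Your proposal is correct and matches the paper's own argument in essentially every respect: both isolate the substitution $Z^{(a)}_{1,y,d}M^d_y \mapsto M^d_y\hatPi^{1,a}_y$ via a cross-term decomposition, apply Cauchy--Schwarz term-wise and then over the joint sum over $(a,y,d)$ after dropping the $\hath=c$ restriction, and bound the error factor with \cref{lem:anticommutation_mpi_zm} while the remaining sums collapse to $O(1)$ via $\sum_d M^d=\1$, $\sum_a Z^{(a)}_{1,y,d}=\1$, and the $\hatPi$ projectors. The only (harmless) cosmetic difference is that you bound $X_{1,y,d}$ out by its operator norm and bound the $\hatPi$-sum by $1$ unconditionally, whereas the paper keeps $X_{1,y,d}$ inside the state-dependent norms and invokes perfectness to evaluate that sum up to $\negl$.
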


\begin{proof}
The proof of this claim generalizes that of \cite[Lemma 4.22]{mv2021selftest}.

By simple algebra, we have
\begin{equation}
\begin{aligned}
E_c - F_c &= \sum_{a,y}\, \sum_{d|\hath(k_1,y_1,d_1) = c}\ipstate{X_{1,y,d}Z^{(a)}_{1,y,d}M_y^d}{Z^{(a)}_{1,y,d}M_y^d-M_y^d\hatPi^{1,a}_y}{\psi^{1}_y} \\
& \hspace{90pt}  + \ipstate{Z^{(a)}_{1,y,d}M_y^d-M_y^d\hatPi^{1,a}_y}{X_{1,y,d}M_y^d\hatPi^{1,a}_y}{\psi^{1}_y}.
\end{aligned}
\end{equation}
Therefore, by Cauchy-Schwarz, we have
\begin{equation}
\begin{aligned}
\abs{E_c - F_c} &\leq \sum_{a,y,d} \normstate{X_{1,y,d}Z^{(a)}_{1,y,d}M_y^d}{\psi^{1}_y}\cdot \normstate{Z^{(a)}_{1,y,d}M_y^d-M_y^d\hatPi^{1,a}_y}{\psi^{1}_y} \\
& \hspace{90pt}  + \normstate{Z^{(a)}_{1,y,d}M_y^d-M_y^d\hatPi^{1,a}_y}{\psi^{1}_y} \cdot \normstate{X_{1,y,d}M_y^d\hatPi^{1,a}_y}{{\psi^{1}_y}},
\end{aligned}
\end{equation}
where we dropped the restriction on $d$, which only makes the upper bound looser.

Again by Cauchy-Schwarz, we have
\begin{equation}
\begin{aligned}
\abs{E_c - F_c} &\leq \sqrt{\sum_{a,y,d} \normstate{X_{1,y,d}Z^{(a)}_{1,y,d}M_y^d}{\psi^{1}_y}^2} \cdot \sqrt{\sum_{a,y,d}\normstate{Z^{(a)}_{1,y,d}M_y^d-M_y^d\hatPi^{1,a}_y}{\psi^{1}_y}^2} \\
& \hspace{90pt}  + \sqrt{\sum_{a,y,d}\normstate{Z^{(a)}_{1,y,d}M_y^d-M_y^d\hatPi^{1,a}_y}{\psi^{1}_y}^2} \cdot \sqrt{\sum_{a,y,d}\normstate{X_{1,y,d}M_y^d\hatPi^{1,a}_y}{\psi^{1}_y}^2}.
\end{aligned}
\end{equation}

Now, we have
$\sum_{a,y,d} \normstate{X_{1,y,d}Z^{(a)}_{1,y,d}M_y^d}{\psi^{1}_y}^2 = 1$ and 
$\sum_{a,y,d}\normstate{X_{1,y,d}M_y^d\hatPi^{1,a}_y}{{\psi^{1}_y}} \simeq_{\negl} 1$, where the first equality directly follows from definitions and the second equality follows as $D$ is perfect. 

Hence,
\begin{equation}
|E_c - F_c| \leq 2\sqrt{\sum_{a,y,d}\normstate{Z^{(a)}_{1,y,d}M_y^d-M_y^d\hatPi^{1,a}_y}{\psi^{1}_y}} \leq 2\sqrt{\gammaH},
\end{equation}
where the last inequality follows from
\cref{lem:anticommutation_mpi_zm}.
\end{proof}

Using the last lemma, \cref{lem:anticomm_xzsigz_marginals}, we can now prove \cref{prop:anticommutation} as follows.

\begin{proof}[Proof of \cref{prop:anticommutation}] The proof of this proposition generalizes that of \cite[Proposition 4.17]{mv2021selftest}.

Because $D$ is efficient, $X_i$ and $Z_i$ are efficient binary observables. Therefore, it suffices to prove the proposition for $\theta = i$ by the computational indistinguishability of the $\sigma^\theta$s (\cref{lem:indistinguishability}) and part $4$ of the lifting lemma (\cref{lem:lifting}). Moreover, it suffices to only prove the proposition for $\theta = i = 1$ as the argument is analogous otherwise.

\begin{equation}\label{eq:anticommutation}
\begin{aligned}
\tr[\{Z_1,X_1\}^2\sigma^1] &= 4\sum_v \tr[(X_1 Z_1^{(0)}X_1Z_1^{(0)} + Z_1^{(1)}X_1Z_1^{(1)}X_1]\sigma^{1,v}) \\
&= 4 \sum_{v} (-1)^{v_1}\tr[(Z_1^{(0)}X_1Z_1^{(0)} + Z_1^{(1)}X_1Z_1^{(1)}]\sigma^{1,v})+O(\sqrt{\chi(1,v)}\cdot \sqrt{\tr[\sigma^{1,v}]} )\\
&\approx_{\sqrt{\gammaH}+\negl} 0 + O\Bigr(\sum_v \sqrt{\chi(1,v)}\cdot \sqrt{\tr[\sigma^{1,v}]} \Bigr),
\end{aligned}
\end{equation}
where the first equality is by simple algebra, the second equality is by \cref{def:zeta_chi_def} and the replacement lemma (note that $\|Z_{1}^{(a)}X_1Z_{1}^{(a)} \|_{\infty} \leq 1$ for $a\in\{0,1\}$), and the last approximation is by \cref{lem:anticomm_xzsigz_marginals} with $\theta = 1$ (which uses the perfectness of $D$). But, by Cauchy-Schwarz, we have
\begin{equation}
    \sum_v \sqrt{\chi(1,v)}\cdot \sqrt{\tr[\sigma^{1,v}]}
    \leq \sqrt{ \Bigl( \sum_v \chi(1,v) \Bigr)
    \Bigl(\sum_v \tr[\sigma^{1,v}]\Bigr)} 
    \leq \sqrt{\gammaH}.
\end{equation}
Therefore, \cref{eq:anticommutation} implies that
\begin{equation}
    \tr[\{Z_1,X_1\}^2\sigma^1] \approx_{\sqrt{\gammaH} + \negl} 0,
\end{equation}
which completes the proof.
\end{proof}

\subsection{Operator-state commutation}
\label{sec:conj_inv}

In this subsection, we prove \cref{prop:operator_state_commutation} that shows the $\sigma^{\theta}$ states approximately commute with $Z_i$ and $X_i$ for certain pairs of $(\theta,i)$.

\begin{proposition}[Operator-state commutation]\label{prop:operator_state_commutation}
Let $D$ be an efficient perfect device. For all $i,\theta \in [2N]$ with $i\neq \theta$, we have
\begin{equation}\label{eq:operator_state_commutation}
    Z_i \,  \sigma^\theta  \approx_{\gammaH+\negl}  \sigma^\theta \, Z_i \quad \text{and} \quad
    X_\theta \,  \sigma^\theta \approx_{\gammaH+\negl}  \sigma^\theta \, X_\theta,
\end{equation}
moreover, for all $q\in \{0,1\}^{2N}$, we have
\begin{equation}\label{eq:operator_state_commutation_q}
\begin{alignedat}{2}
    &Z_{q,i} \,  \sigma^\theta   &\,\approx_{\gammaH + \gammaHq + \negl}  \sigma^\theta \, Z_{q,i} &\quad \text{if $q_i=0$}, 
    \\
    &X_{q,\theta} \,  \sigma^\theta  &\,\approx_{\gammaH + \gammaHq + \negl}  \sigma^\theta \, X_{q,\theta} &\quad \text{if $q_\theta=1$}.
\end{alignedat}
\end{equation}
\end{proposition}

\begin{proof}
The equations in \cref{eq:operator_state_commutation_q} follow from those in \cref{eq:operator_state_commutation} by \cref{prop:tilde_equal_nontilde} and \cref{lem:operator_to_state}. The proofs of the two equations in  \cref{eq:operator_state_commutation} are  analogous. We prove the first in detail and comment on the minor changes required to prove the second.

To prove the first equation, it suffices to prove it for $i=1$ and $\theta=2$ as the proof for other $i\neq \theta$ is analogous. By the triangle inequality and \cref{lem:sigma_equal_sumv}, we have

\begin{equation}\label{eq:operator_state_commutation_l1_z}
    \| Z_1 \sigma^2 Z_1 - \sigma^2 \|_1 \leq \sum_v \| Z_1 \sigma^{2,v} Z_1 - \sigma^{2,v} \|_1 + \negl.
\end{equation}
We bound each term in the sum. Recall the definition $\zeta(2,1,v) \coloneqq \normstate{Z_1-(-1)^{v_1}\1}{\sigma^2,v}^2$ from \cref{def:zeta_chi_def}. Therefore, by \cref{lem:operator_to_state}, we have 
\begin{equation}
    \normone{ Z_1\sigma^{2,v}Z_1 - \sigma^{2,v} } = \normone{ Z_1\sigma^{2,v}Z_1 - (-1)^{v_1}\1 \sigma^{2,v}(-1)^{v_1}\1} \leq  2\sqrt{\zeta(2,1,v)} \cdot \sqrt{\tr[\sigma^{2,v}]}.
\end{equation}
Therefore, resuming  \cref{eq:operator_state_commutation_l1_z}:
\begin{equation}
    \| Z_1 \sigma^2 Z_1 - \sigma^2 \|_1 \leq 2\sum_v \sqrt{\zeta(2,1,v)} \cdot \sqrt{\tr[\sigma^{2,v}]} + \negl \leq 4 \sqrt{\gammaH} + \negl,
\end{equation}
where the last inequality follows from Cauchy-Schwarz and bounds on $\zeta$ (and $\chi$) in \cref{lem:zeta_chi_bounds_new}. Hence we have proved the first equation.

To prove the second equation in \cref{eq:operator_state_commutation}, it suffices to prove it for $\theta=1$. Then, we can exactly reuse the above proof after changing the symbols $\{Z_1, \sigma^2, \sigma^{2,v}, \zeta(2,1,v)\}$ to $\{X_1, \sigma^1, \sigma^{1,v}, \chi(1,v)\}$, respectively.
\end{proof}

Observe that \cref{prop:operator_state_commutation}  does \emph{not} say $Z_{q,i}$ and $X_{q,i}$ commute with $\sigma^\theta$ 
for all pairs $(i,\theta)$. To get around this problem, we use the computational indistinguishability of the $\sigma^\theta$s to argue that efficient observables must act similarly on different $\sigma^\theta$s. For example,
\begin{equation}\label{eq:example_operator_state_commutation_similar}
    Z_1 X_3 Z_2 \sigma^2 \approx X_3 Z_1 Z_2 \sigma^2,
\end{equation}
does not follow from \cref{prop:operator_state_commutation}, since $Z_2$ does not commute with $\sigma^2$. Nevertheless, by using the computational indistinguishability of $\sigma^2$ and $\sigma^3$, we can derive an ``operational version'' of \cref{eq:example_operator_state_commutation_similar}. The operational version allows us to interchange the left-hand and right-hand sides of \cref{eq:example_operator_state_commutation_similar} when they appear inside traces (i.e., $\tr$). We can only derive such an operational version because the computational indistinguishability of $\sigma^2$ and $\sigma^3$ only allows us to interchange $\sigma^2$ and $\sigma^3$ inside traces, see \cref{lem:lifting,lem:lifting_proj}. For a concrete example of how to use \cref{prop:operator_state_commutation}, see the long aligned equation in the proof of \cref{lem:swap_observables_tildeprod}.

\subsection{The swap isometry}
\label{sec:swap}

In this subsection, we define the swap isometry that we will show maps the observables and states of the device onto the ideal ones.

\begin{defn}\label{def:swap}
Let $D$ be a device and let $\calH\coloneqq \calH_D \otimes \calH_Y \otimes \calH_R$. The swap isometry is the map $\calV: \calH \rightarrow \mathbb{C}^{2^{2N}}\otimes \calH$ defined by 
\begin{equation}\label{eq:def_swap}
    \calV = \sum_{u\in \{0,1\}^{2N}}\ket{u} \otimes \prod_{i \in [2N]} X_{i}^{u_i} \prod_{j \in [2N]} Z_{j}^{(u_j)}.
\end{equation}
Note that by the commutation relations (\cref{prop:commutation}) the ordering within each $\prod_{i\in [2N]}$ does not matter.
\end{defn}

\begin{lemma}\label{lem:v_efficient}
    Let $D$ be an efficient device. Then, $\calV$ is efficient.
\end{lemma}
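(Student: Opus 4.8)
The plan is to show that the swap isometry $\calV$ from \cref{def:swap} can be implemented by an efficient quantum circuit, which by \cref{def:efficient_operators} (part 2) means exhibiting an efficient family of unitaries on the enlarged space $\mathbb{C}^{2^{2N}}\otimes\calH$ whose restriction to the subspace $\calH\otimes\ket{0^{2N}}$ (with the output register adjoined in state $\ket{0^{2N}}$) equals $\calV$. The key inputs are that $D$ is efficient, so by \cref{def:device} (efficient device) the measurements $P_0, P_1$ are efficient, and hence by \cref{def:observables} the binary observables $Z_i$ and $X_i$ — being $\pm1$-weighted sums of the projectors $P_0^v$, $P_1^v$ — are efficient binary observables in the sense of \cref{def:efficient_operators} (part 3). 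The only subtlety is that \cref{def:efficient_operators} gives us efficient \emph{unitaries} $U_{Z_i}, U_{X_i}$ that conjugate $\sigma_Z\otimes\1$ to $Z_i$ (resp.\ $\sigma_X\otimes\1$ to $X_i$) relative to an ancilla in state $\ket{0}$, and these unitaries in general act on ancilla registers; I must check these ancillas can be uncomputed so that the net effect is a controlled application of the observable with no residual garbage.

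The circuit I would describe proceeds in $2N$ stages indexed by $j\in[2N]$, each followed by $2N$ stages indexed by $i\in[2N]$, mirroring the two products in \cref{eq:def_swap}. Concretely: introduce the output register $\mathbb{C}^{2^{2N}}$ initialized to $\ket{0^{2N}}$; for each $j$, use the efficient unitary $U_{Z_j}$ together with a fresh ancilla qubit to coherently "write" the bit of the $Z_j^{(u_j)}$ projection structure onto the $j$-th output qubit — more precisely, applying $U_{Z_j}$, then a CNOT from the $\sigma_Z$-register onto the $j$-th output qubit, then $U_{Z_j}^\dagger$, realizes the map $\ket{\phi}\otimes\ket{0}_j \mapsto \sum_b (Z_j^{(b)}\ket\phi)\otimes\ket{b}_j$ on $\calH$ tensored with the $j$-th output qubit, which is exactly the "coherent measurement" form of $\sum_b \ket b \otimes Z_j^{(b)}$. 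Since the $Z_j$ pairwise commute (\cref{prop:commutation}, first equation, which holds without any perfectness assumption as it follows directly from \cref{def:observables}), these $2N$ stages compose cleanly to give $\sum_{u}\ket u \otimes \prod_j Z_j^{(u_j)}$ on $\calH\otimes\mathbb{C}^{2^{2N}}$. Then, for each $i$, apply $X_i$ controlled on the $i$-th output qubit: using $U_{X_i}$, a controlled-$\sigma_X$ from the $i$-th output qubit onto the $\sigma_X$-register, and $U_{X_i}^\dagger$, we realize $\ket u \otimes \ket\phi \mapsto \ket u \otimes X_i^{u_i}\ket\phi$. Composing all $4N$ stages and reordering tensor factors gives precisely $\calV$ as in \cref{eq:def_swap}; since $X_i$ pairwise commute and each $X_i$ commutes with each $Z_j^{(u_j)}$-action on a \emph{distinct} output qubit (the controls are on different qubits), the ordering remarks in \cref{def:swap} are respected.

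Each stage is a composition of a constant number of efficient unitaries ($U_{Z_j}$ or $U_{X_i}$, their inverses, and a single (controlled-)Pauli gate) acting on $\calH$ plus one output qubit plus polynomially many ancilla qubits, all of which are returned to $\ket0$ by the $U^\dagger$ step; there are $4N = O(N) = O(\lambda)$ stages since $N=\lambda$; so the total gate count is $\poly(\lambda)$ and the whole thing is generated by a classical Turing machine on input $1^\lambda$. This verifies efficiency per \cref{def:efficient_operators}. The main obstacle — and the only place requiring genuine care rather than bookkeeping — is confirming that the conjugation-by-$U_{Z_j}$-then-CNOT-then-$U_{Z_j}^\dagger$ pattern really does leave \emph{no} entanglement with the ancilla registers internal to $U_{Z_j}$: one must note that \cref{def:efficient_operators}(3) guarantees $U_{Z_j}^\dagger(\sigma_Z\otimes\1)U_{Z_j}(\ket\psi\otimes\ket0) = (Z_j\ket\psi)\otimes\ket0$ for all $\ket\psi$, from which it follows that $U_{Z_j}(\ket\psi\otimes\ket0)$ lies in the $+1$ (resp.\ $-1$) eigenspace of $\sigma_Z\otimes\1$ exactly to the extent that $\ket\psi$ lies in the $+1$ (resp.\ $-1$) eigenspace of $Z_j$, so that the CNOT copies the correct bit and the subsequent $U_{Z_j}^\dagger$ restores the ancilla to $\ket0$ coherently across the superposition — this is a short but essential lemma-style argument that I would spell out, and it is entirely analogous to the corresponding claim in \cite{mv2021selftest}.
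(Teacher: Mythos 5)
Your proposal is correct and takes essentially the same route as the paper: the paper implements $\calV$ by the explicit circuit ($2N$ ancilla qubits in $\ket{0}$, then $H^{\otimes 2N}$, controlled-$Z_i$ gates, $H^{\otimes 2N}$, controlled-$X_i$ gates), which coincides with your staged coherent-measurement construction up to the standard identity relating a CNOT to a controlled-$\sigma_Z$ conjugated by Hadamards on the control, and both arguments rest on the efficiency of the binary observables $Z_i, X_i$ inherited from the efficient measurements $P_0, P_1$ of the device. The only cosmetic mismatch is that \cref{def:efficient_operators} (part 3) phrases efficiency of \emph{every} binary observable via $\sigma_Z$ rather than $\sigma_X$, so in your $X_i$-stages the gate applied in the $U_{X_i}$-frame should be a controlled-$\sigma_Z$ on the designated qubit; with that substitution your argument, including the ancilla-uncomputation check, goes through verbatim.
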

\begin{proof}
$\calV$ can be implemented efficiently by the following circuit: 
\begin{enumerate}
    \item Initialize $2N$ ancilla qubits to $\ket{0}$.
    \item Apply $H^{\otimes 2N}$ on the ancilla. 
    
    \item Apply $2N$ c-$Z_i$ gates where the control is by the $i$-th ancilla qubit for $i=1,\ldots,2N$.
    
    \item Apply $H^{\otimes 2N}$ on the ancilla.
    
    \item Apply $2N$ c-$X_i$ gates where the control is by the $i$-th ancilla qubit for $i=1,\ldots,2N$. 
\end{enumerate}
It can be verified by direct calculation that the above circuit indeed gives $\calV$.

We illustrate $\calV$ when $2N=4$ in \cref{fig:swap_isometry} below.
\begin{figure}[H]
\center
        \begin{tikzpicture}[thick]
        \tikzstyle{operator} = [draw,fill=white,minimum size=1.5em] 
        \tikzstyle{phase} = [fill,shape=circle,minimum size=5pt,inner sep=0pt]
        \tikzstyle{surround} = [fill=blue!10,thick,draw=black,rounded corners=2mm]

        \node at (0,0) (q1) {$\ket{0}$};
        \node at (0,-1) (q2) {$\ket{0}$};
        \node at (0,-2) (q3) {$\ket{0}$};
        \node at (0,-3) (q4) {$\ket{0}$};
        \node at (0,-4) (q5) {$\ket{\psi}$};
        
        \node[operator] (op11) at (1,0) {$H$} edge [-] (q1);
        \node[operator] (op12) at (1,-1) {$H$} edge [-] (q2);
        \node[operator] (op13) at (1,-2) {$H$} edge [-] (q3);
        \node[operator] (op14) at (1,-3) {$H$} edge [-] (q4);
        
        \node[phase] (phase11) at (2,0) {} edge [-] (op11);
	    \node[operator] (z1) at (2,-4) {$Z_1$} edge [-] (q5);
        \draw[-] (phase11) -- (z1);

        \node[phase] (phase12) at (3,-1) {} edge [-] (op12);
	    \node[operator] (z2) at (3,-4) {$Z_2$} edge [-] (z1);
        \draw[-] (phase12) -- (z2);

        \node[phase] (phase13) at (4,-2) {} edge [-] (op13);
	    \node[operator] (z3) at (4,-4) {$Z_3$} edge [-] (z2);
        \draw[-] (phase13) -- (z3);
        
        \node[phase] (phase14) at (5,-3) {} edge [-] (op14);
	    \node[operator] (z4) at (5,-4) {$Z_4$} edge [-] (z3);
        \draw[-] (phase14) -- (z4);
        
	    \node[operator] (op21) at (6,0) {$H$} edge [-] (phase11);
        \node[operator] (op22) at (6,-1) {$H$} edge [-] (phase12);
        \node[operator] (op23) at (6,-2) {$H$} edge [-] (phase13);
        \node[operator] (op24) at (6,-3) {$H$} edge [-] (phase14);
        
        \node[phase] (phase21) at (7,0) {} edge [-] (op21);
	    \node[operator] (x1) at (7,-4) {$X_1$} edge [-] (z4);
        \draw[-] (phase21) -- (x1);

        \node[phase] (phase22) at (8,-1) {} edge [-] (op22);
	    \node[operator] (x2) at (8,-4) {$X_2$} edge [-] (x1);
        \draw[-] (phase22) -- (x2);

        \node[phase] (phase23) at (9,-2) {} edge [-] (op23);
	    \node[operator] (x3) at (9,-4) {$X_3$} edge [-] (x2);
        \draw[-] (phase23) -- (x3);

        \node[phase] (phase24) at (10,-3) {} edge [-] (op24);
	    \node[operator] (x4) at (10,-4) {$X_4$} edge [-] (x3);
        \draw[-] (phase24) -- (x4);
        
        \node (end1) at (11,0) {} edge [-] (phase21);
        \node (end2) at (11,-1) {} edge [-] (phase22);
        \node (end3) at (11,-2) {} edge [-] (phase23);
        \node (end4) at (11,-3) {} edge [-] (phase24);
        \node (end5) at (11,-4) {} edge [-] (x4);
        
        \begin{pgfonlayer}{background} 
        \node[surround] (background) [fit = (q1) (q5) (end1)(end5)] {};
        \end{pgfonlayer}

        \end{tikzpicture}
	\caption{Illustration of the swap isometry $\calV$ when $2N=4$.}\label{fig:swap_isometry}
\end{figure}
\end{proof}

\subsection{Observables under the swap isometry}
\label{sec:swap_observables}

To prove the results in this and the following subsections, we make extensive use of \cref{prop:operator_state_commutation}.

\begin{lemma}\label{lem:swap_observables}
Let $D$ be an efficient perfect device. For all $k \in [2N]$ and $\theta\in \thetaset$, we have
\begin{equation}
 \begin{aligned}
    \calVdag (\sigmaz_k \otimes \1) \calV &= Z_k,
    \\
    \calVdag(\sigmax_k\otimes \1)\calV &\approx_{N\sqrt{\gammaH},\sigma^{\theta}}X_k,
\end{aligned}
\label{eq:swap_observables_x} 
\end{equation}
moreover, for all $q\in \{0,1\}^{2N}$, we have
\begin{alignat}{2}
    \calVdag (\sigmaz_k \otimes \1) \calV &\approx_{\gammaH+\gammaHq+\negl, \sigma^{\theta}} Z_{q,k} &\quad \text{if $q_k = 0$,}
    \\
    \calVdag(\sigmax_k\otimes \1)\calV &\approx_{N\sqrt{\gammaH} + \gammaHq,\sigma^{\theta}} X_{q,k}
    &\quad \text{if $q_k = 1$.}
\end{alignat}
\end{lemma}

\begin{proof}
The last two equations follow immediately from the first two and \cref{prop:tilde_equal_nontilde} (note $\gammaP=\negl$ as $D$ is perfect). The first equation follows by direct calculation using the fact that, for all $i\in [2N]$, we have $X_i^2=\1$ and $\sum_{a\in \{0,1\}}Z_i^{(a)}=\1$.

Consider the second equation. For notational convenience, we write for $i\in [2N]$, $k\in[2N]$, and $u = (u_i,\ldots,u_{2N})\in \{0,1\}^{2N-i+1}$,
\begin{enumerate}
    \item $\calZ_i(u)$ for the product of the $2N$-th (from the left) operator down to the $i$th operator in the $2N$-tuple $(Z_1^{(u_1)}, Z_2^{(u_2)},\ldots,Z_{k}^{(u_{k})},\ldots,Z_{2N}^{(u_{2N})})$. More precisely
    \begin{equation}
        \calZ_i(u) \coloneqq Z_{2N}^{(u_{2N})} Z_{2N-1}^{(u_{2N-1})} \cdots Z_{i}^{(u_{i})}.
    \end{equation}

    \item $\calZ_{k,i}(u)$ for the product of the $2N$-th operator down to the $i$th operator in the $2N$-tuple \newline
    $(Z_1^{(u_1)}, Z_2^{(u_2)},\ldots,Z_{k-1}^{(u_{k-1})}, Z_k^{(\overline{u_k})}, Z_{k+1}^{(u_{k+1})},\ldots,Z_{2N}^{(u_{2N})})$. More precisely
    \begin{equation}
        \calZ_{k,i}(u) \coloneqq 
        \begin{cases}
            Z_{2N}^{(u_{2N})} Z_{2N-1}^{(u_{2N-1})} \cdots Z_{i}^{(u_{i})} &\text{if $k<i$,}
            \\[10pt]
            Z_{2N}^{(u_{2N})} Z_{2N-1}^{(u_{2N-1})} \cdots Z_{k}^{(\overline{u_{k}})}\cdots Z_{i}^{(u_{i})} &\text{if $k\geq i$.}
        \end{cases}
    \end{equation}
    
    \item $\hatX_{k,i}=
        \sum_{u = (u_i,\ldots,u_{2N})\in \{0,1\}^{2N-i+1}} \calZ_{k,i}(u) X_k \calZ_i(u).$
\end{enumerate}
We also write $\calZ_{2N+1}(u) \coloneqq \1$, $\calZ_{k,2N+1}(u) = \1$, and $\hatX_{k,2N+1} = X_k$ for all $k \in [2N]$.

By direct calculation, we find that for $i\in [2N+1]$ and $k\in [2N]$,
\begin{equation}\label{eq:def_tildeX}
    \hatX_{k,i} = \calVdag_{k,i} (\sigmax_k \otimes \1) \calV_{k,i},
\end{equation}
where $\calV_{k,i}$ is an isometry defined in the next paragraph. \cref{eq:def_tildeX} allows us to deduce that
\begin{equation}\label{eq:tildeX_norm_leqone}
    \norminfty{\hatX_{k,i}}\leq 1,
\end{equation}
which will be crucial in the following main argument when we use the replacement lemma (\cref{lem:replace}).

For $i\in [2N]$ and $k\in [2N]$, we define the isometry $\calV_{k,i}$ according to whether $k<i$ or $k\geq i$. If $k\geq i$, we define $\calV_{k,i}$ by the following circuit:
\begin{enumerate}
    \item Initialize $2N$ ancilla qubits to $\ket{0}$.
    \item Apply $H^{\otimes (2N-i+1)}$ on the $i$-to-$2N$-th ancilla qubits.
    \item Apply $(2N-i+1)$ c-$Z_j$ gates where the control is by the $j$-th ancilla qubit and $j = i,\ldots,2N$.
    \item Apply $H^{\otimes (2N-i+1)}$ on the $i$-to-$2N$-th ancilla qubits.
    \item Apply a single c-$X_k$ gate where the control is by the $k$-th ancilla qubit.
\end{enumerate}
If $k<i$, we define $\calV_{k,i}$ by the same circuit except, in the second step, we additionally apply an $H$ gate on the $k$-th ancilla qubit. We also define $\calV_{k,2N+1} \coloneqq \1$ for all $k\in [2N]$. 

We illustrate $\calV_{k,i}$ when $2N = 4$, $k = 1$, and $i = 3$ below.

\begin{figure}[H]
\center
    \begin{tikzpicture}[thick]
    \tikzstyle{operator} = [draw,fill=white,minimum size=1.5em] 
    \tikzstyle{phase} = [fill,shape=circle,minimum size=5pt,inner sep=0pt]
    \tikzstyle{surround} = [fill=blue!10,thick,draw=black,rounded corners=2mm]

    \node at (0,0) (q1) {$\ket{0}$};
    \node at (0,-1) (q2) {$\ket{0}$};
    \node at (0,-2) (q3) {$\ket{0}$};
    \node at (0,-3) (q4) {$\ket{0}$};
    \node at (0,-4) (q5) {$\ket{\psi}$};
    
    \node[operator] (op11) at (1,0) {$H$} edge [-] (q1);
    \node[operator] (op13) at (1,-2) {$H$} edge [-] (q3);
    \node[operator] (op14) at (1,-3) {$H$} edge [-] (q4);
    
    \node[phase] (phase12) at (2,-2) {} edge [-] (op13);
    \node[operator] (z2) at (2,-4) {$Z_3$};
    \draw[-] (phase12) -- (z2);
    \draw[-] (q5) -- (z2);

    \node[phase] (phase13) at (3,-3) {} edge [-] (op14);
    \node[operator] (z3) at (3,-4) {$Z_4$} edge [-] (z2);
    \draw[-] (phase13) -- (z3);
    
    \node[operator] (op22) at (4,-2) {$H$} edge [-] (phase12);
    \node[operator] (op23) at (4,-3) {$H$} edge [-] (phase13);

    \node[phase] (phase21) at (5,0) {} edge [-] (op11);
    \node[operator] (x1) at (5,-4) {$X_1$} edge [-] (z3);
    \draw[-] (phase21) -- (x1);
    
    \node (end1) at (6,0) {} edge [-] (phase21);
    \node (end2) at (6,-1) {} edge [-] (q2);
    \node (end3) at (6,-2) {} edge [-] (op22);
    \node (end4) at (6,-3) {} edge [-] (op23);
    \node (end5) at (6,-4) {} edge [-] (x1);

    \begin{pgfonlayer}{background} 
    \node[surround] (background) [fit = (q1) (q5) (end1)(end5)] {};
    \end{pgfonlayer}
    
    \end{tikzpicture}
	\caption{Illustration of $\calV_{k,i}$ when $2N=4$, $k = 1$, and $i = 3$.}
\end{figure}

We proceed with our main argument. By direct calculation, we have
\begin{equation}\label{eq:swap_direct_x}
    \calVdag (\sigmax_k \otimes \1) \calV = \sum_{u\in \{0,1\}^{2N}} \calZ_{k,1}(u) X_k \calZ_1(u).
\end{equation}
Note that $\calVdag (\sigmax_k \otimes \1) \calV = \calVdag_{k,1} (\sigmax_k \otimes \1) \calV_{k,1} = \hatX_{k,1}$.

 By direct expansion, we have
 \begin{equation}
    \begin{aligned}
  \|\calVdag(\sigmax_k \otimes \1)\calV - X_k\|_{\sigma^{\theta}}^2&=     \tr[(\calVdag(\sigmax_k \otimes \1)\calV - X_k)^2 \sigma^{\theta}]  \\
    &= 1 + \tr[\calVdag(\sigmax_k \otimes \1)\calV\calVdag(\sigmax_k \otimes \1)\calV \sigma^{\theta}] - 2 \real \tr[\calVdag(\sigmax_k \otimes \1)\calV \, X_k \, \sigma^{\theta}] \\
    &= 1 + \tr[(\calV\calVdag)(\sigmax_k \otimes \1)\calV \sigma^{\theta}\calVdag(\sigmax_k \otimes \1)] - 2 \real \tr[\calVdag(\sigmax_k \otimes \1)\calV \, X_k \, \sigma^{\theta}]\\
    &\leq 2 - 2 \real \tr[\calVdag(\sigmax_k \otimes \1)\calV \, X_k \, \sigma^{\theta}],   
    \end{aligned} \label{eq:expand_observable_difference}
 \end{equation}
where, in the last inequality, we used the fact that $\calV\calVdag$ is a projector (Hermitian and idempotent) and $(\sigmaz_k \otimes \1)\calV \sigma^{\theta}\calVdag(\sigmaz_k \otimes \1)$ is a positive semi-definite operator.

We now show that the last term, $\real \tr[\calVdag(\sigmax_k \otimes \1)\calV \, X_k \, \sigma^{\theta}]$, is close to $1$. By the lifting lemma (\cref{lem:lifting}, part $5$) and the efficiency of $\calV$ (\cref{lem:v_efficient}), it suffices to consider the case when $\theta=k$. In the following, the terms \{commute, replace, o.s.-commute\} are shorthand for \{\cref{prop:commutation}, \cref{lem:replace}, \cref{prop:operator_state_commutation}\} respectively (``o.s.'' stands for ``operator-state''). When using the replacement lemma, we crucially use $\norminfty{\hatX_{k,i}}\leq 1$ (\cref{eq:tildeX_norm_leqone}).

\begin{align*}
    & \quad \tr[\calVdag(\sigmax_k \otimes \1)\calV \, X_k \, \sigma^{k}]
    \\
    &= \tr[\hatX_{k,1} \, X_k \, \sigma^{k}] 
    &&\text{($\hatX_{k,1} = \calVdag (\sigmax_k \otimes \1) \calV$)}
    \\
    &= \sum_{u} \tr[\calZ_{k,1}(u) X_k \calZ_1(u) \, X_k \sigma^{k}]
    &&\text{(\cref{eq:swap_direct_x})}
    \\
    &= \sum_{a} \tr\Bigl[Z_{1}^{(a)} \Bigl(\sum_{u = (u_2,\ldots,u_N)}\calZ_{k,2}(u) X_k \calZ_2(u) \Bigr) Z_{1}^{(a)} \, X_k \sigma^{k}\Bigr]
    &&\text{(definition of $\calZ_{k,2}(u)$, $\calZ_2(u)$)}
    \\
    &= \sum_{a} \tr[Z_{1}^{(a)}\hatX_{k,2} Z_{1}^{(a)} \, X_k \sigma^{k}]
    &&\text{(definition of $\hatX_{k,2}$)}
    \\
    &\approx_{\sqrt{\gammaH+\negl}} 
    \sum_{a} \tr[Z_{1}^{(a)} \hatX_{k,2} \, X_k Z_{1}^{(a)} \sigma^{k}]
    &&\text{(commute and replace)}
    \\
    &\approx_{\sqrt{\gammaH + \negl}} 
    \sum_{a} \tr[Z_{1}^{(a)} \hatX_{k,2} \, X_k  \sigma^{k} Z_{1}^{(a)}]
    &&\text{(o.s.-commute and replace)}
    \\
    &= \sum_{a} \tr[Z_{1}^{(a)}\hatX_{k,2} \, X_k  \sigma^{k}]
    &&\amatrix{\text{trace is cyclic and}}{\text{$(Z_{1}^{(a)})^2=Z_{1}^{(a)}$}}
    \\
    &= \tr[\hatX_{k,2} \, X_k  \sigma^{k}]
    &&\text{$\Bigl( \sum_a Z_1^{(a)} = \1 \Bigr)$}.
\end{align*}

Therefore, we have $\tr[\hatX_{k,1} \, X_k \, \sigma^{k}] \approx_{\sqrt{\gammaH}+\negl} \tr[\hatX_{k,2} \, X_k  \sigma^{k}]$.  Using the above reasoning another $k-2$ times and the triangle inequality gives 
$\tr[\hatX_{k,1} \, X_k \, \sigma^{k}]\approx_{k\sqrt{\gammaH} + \negl} 
   \tr[\hatX_{k,k} \, X_k  \sigma^{k}]$,
   where we use $k \cdot \negl \leq 2N \cdot \negl = \negl$. Then by the definitions of $\hatX_{k,k}$ and $\hatX_{k,k+1}$, and direct calculation, 
$ \tr[\hatX_{k,k} \, X_k  \sigma^{k}]= \sum_{a} \tr[Z_{k}^{(\abar)}\hatX_{k,k+1} Z_{k}^{(a)} \, X_k \sigma^{k}]$. Using $Z_k^{(a)} = (1+(-1)^aZ_k)/2$ to perform the sum over $a$ gives
\begin{equation}
\tr[\hatX_{k,k} \, X_k  \sigma^{k}]=\frac{1}{2} (\tr[\hatX_{k,k+1}  X_k \sigma^k] - \tr[Z_k\hatX_{k,k+1} Z_k X_k \sigma^k]).
\end{equation}

We approximate both terms using the reasoning above another $2N-k$ times and the triangle inequality (when handling the second term, we additionally use the commutativity of the $Z_i$s):
\begin{equation}
 \begin{aligned}
&\frac{1}{2} \tr[\hatX_{k,k+1}  X_k \sigma^k] - \frac{1}{2}\tr[Z_k\hatX_{k,k+1} Z_k X_k \sigma^k] \\
&\approx_{(2N-k)\sqrt{\gammaH} + \negl} \frac{1}{2} \tr[\hatX_{k,2N+1} X_k \sigma^k] - \frac{1}{2} \tr[Z_k \hatX_{k,2N+1}  Z_k X_k \sigma^k]
   \\
   &= \frac{1}{2} \tr[X_k X_k \sigma^k] - \frac{1}{2} \tr[Z_k X_k  Z_k X_k \sigma^k]
   \\
   &= \frac{1}{2} - \frac{1}{2} \tr[Z_k X_k Z_k X_k \sigma^k],
\end{aligned}   
\end{equation}
where the second equation is by $\hatX_{k,2N+1}=X_k$ and the last equation is by $X_k^2 = \1$.

Applying the triangle inequality to all of the above approximations and then taking the real part gives
\begin{equation}
    \real \tr[\calVdag(\sigmax_k \otimes \1)\calV \, X_k \, \sigma^{k}] \approx_{N\sqrt{\gammaH}+\negl} \frac{1}{2} - \frac{1}{4} (\tr[Z_k X_k Z_k X_k \sigma^k] + \tr[X_k Z_k X_k Z_k \sigma^k]).
\end{equation}
By directly unpacking the definition of the anti-commutation relation $\{X_k,Z_k\}\approx_{\sqrt{\gammaH}+\negl,\sigma^k} 0$ (\cref{prop:anticommutation}), we see that 
$\tr[Z_k X_k Z_k X_k \sigma^k] + \tr[X_k Z_k X_k Z_k \sigma^k] \approx_{\sqrt{\gammaH}+\negl} -2$ (no replacement lemma used). Hence
\begin{equation}
    \real \tr[\calVdag(\sigmax_k \otimes \1)\calV \, X_k \, \sigma^{k}] \approx_{N\sqrt{\gammaH} + \negl} 1.
\end{equation}
Lastly, recall from the comment below \cref{def:gamma} that we assume $N\gammaH$ is non-negligible in $\lambda$,
so $O( N \sqrt{\gammaH} + \negl) = O(N \sqrt{\gammaH})$. Hence the lemma.
\end{proof}

\begin{lemma}\label{lem:swap_observables_tildeprod}
Let $D$ be an efficient perfect device. For $k\in [N]$ and $\theta \in \thetaset$, we have
\begin{equation}
\begin{aligned}
    &\calVdag (\sigmax_k \otimes \sigmaz_{N+k}\otimes \1) \calV \approx_{N^{1/4} \gammaH^{1/8}, \sigma^{\theta}}  \tildeX_k \tildeZ_{N+k},
    \\
    &\calVdag (\sigmaz_k \otimes \sigmax_{N+k}\otimes \1) \calV \approx_{N^{1/4} \gammaH^{1/8}, \sigma^{\theta}}  \tildeZ_k \tildeX_{N+k}.
\end{aligned}
\end{equation}
\end{lemma}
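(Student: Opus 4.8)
The plan is to reduce both equations to the single-qubit results of \cref{lem:swap_observables} by inserting the swap isometry between the two Pauli factors and then conjugating one factor past $\sigma^{\theta}$. I will prove the first equation, $\calVdag (\sigmax_k \otimes \sigmaz_{N+k}\otimes \1) \calV \approx_{N^{1/4}\gammaT^{1/8}, \sigma^{\theta}} \tildeX_k\tildeZ_{N+k}$, in detail; the second follows by the same argument with $k$ and $N+k$ playing swapped roles (using that $\tildeZ_j$ commutes with $\tildeX_{N+j}$ and $\tildeX_j$ commutes with $\tildeZ_{N+j}$ by \cref{def:observables}). By the lifting lemma (part $5$ of \cref{lem:lifting}) and the efficiency of $\calV$ (\cref{lem:v_efficient}), together with the computational indistinguishability of the $\sigma^{\theta}$s (\cref{lem:indistinguishability}), it suffices to establish the bound for a single convenient choice of $\theta$; I would take $\theta = k$ (or $\theta$ anything in $[2N]$ distinct from both $k$ and $N+k$ where convenient), since the key inputs below — \cref{lem:swap_observables} and \cref{prop:conjugate_invariance} — hold for all $\theta$.

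First I would expand $\norm{\calVdag(\sigmax_k\otimes\sigmaz_{N+k}\otimes\1)\calV - \tildeX_k\tildeZ_{N+k}}_{\sigma^\theta}^2$ exactly as in \cref{eq:expand_observable_difference}: writing $W \coloneqq \calVdag(\sigmax_k\otimes\sigmaz_{N+k}\otimes\1)\calV$, which satisfies $\norminfty{W}\le 1$ since $\calV$ is an isometry, the square is bounded by $2 - 2\real\tr[W\,\tildeX_k\tildeZ_{N+k}\,\sigma^\theta]$, so it remains to show $\real\tr[W\,\tildeX_k\tildeZ_{N+k}\,\sigma^\theta]\approx_{N^{1/2}\gammaT^{1/4}} 1$ (note the squaring convention in the definition of $\approx$ for state distance turns an $N^{1/4}\gammaT^{1/8}$ conclusion into an $N^{1/2}\gammaT^{1/4}$ bound on the trace quantity). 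The main chain of approximations is then: replace $W$ by $\calVdag(\sigmax_k\otimes\1)\calV\cdot\calVdag(\1\otimes\sigmaz_{N+k}\otimes\1)\calV$ up to the error from $\calV\calVdag$ being a projector rather than the identity — more cleanly, use that $\calVdag(\sigmax_k\otimes\sigmaz_{N+k}\otimes\1)\calV$ and $\calVdag(\sigmax_k\otimes\1)\calV\,\calVdag(\sigmaz_{N+k}\otimes\1)\calV$ agree because the swap isometry's controlled structure makes these genuinely equal (the product of the two single-Pauli conjugations equals the conjugation of the product, since the relevant controlled gates on distinct ancilla qubits commute through). Granting that algebraic identity, $W = \bigl(\calVdag(\sigmax_k\otimes\1)\calV\bigr)\bigl(\calVdag(\sigmaz_{N+k}\otimes\1)\calV\bigr)$; by \cref{lem:swap_observables} the first factor is $\approx_{N\sqrt{\gammaT},\sigma^\theta}\tildeX_k$ and the second equals $Z_{N+k}$ exactly and is $\approx_{\gammaT+\negl,\sigma^\theta}\tildeZ_{N+k}$. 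I would then use the replacement lemma (\cref{lem:replace}, part $1$, with the $\norminfty{\cdot}\le1$ bounds on the conjugated Paulis and on $\tildeX_k,\tildeZ_{N+k}$) to substitute $\calVdag(\sigmax_k\otimes\1)\calV\to\tildeX_k$ inside $\tr[\,\cdot\,\tildeX_k\tildeZ_{N+k}\sigma^\theta]$, incurring $O(\sqrt{N\sqrt{\gammaT}}) = O(N^{1/2}\gammaT^{1/4})$; but before I can substitute the second factor I must move it next to $\sigma^\theta$, which is where conjugate invariance (\cref{prop:conjugate_invariance}) enters: $\tildeZ_{N+k}\,\sigma^\theta\approx_{\gammaT+\negl}\sigma^\theta\,\tildeZ_{N+k}$, so the remaining factor $\calVdag(\sigmaz_{N+k}\otimes\1)\calV = Z_{N+k}$ can be commuted and replaced using \cref{lem:tilde_equal_nontilde} ($\tildeZ_{N+k}\simeq_{\gammaT+\negl,\sigma^\theta}Z_{N+k}$, since $\gammaP=\negl$ for a perfect device). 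Collecting, $\real\tr[W\,\tildeX_k\tildeZ_{N+k}\sigma^\theta]\approx \real\tr[\tildeX_k\tildeZ_{N+k}\tildeX_k\tildeZ_{N+k}\sigma^\theta]$, and since $\tildeX_k$ commutes with $\tildeZ_{N+k}$ and both are binary observables this is $\tr[\tildeX_k^2\tildeZ_{N+k}^2\sigma^\theta] = \tr[\sigma^\theta] = 1-\negl$.

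The step I expect to be the main obstacle is controlling the accumulation of errors so that the final bound comes out as $N^{1/4}\gammaT^{1/8}$ rather than something worse, and relatedly, getting the algebraic identity $W = \bigl(\calVdag(\sigmax_k\otimes\1)\calV\bigr)\bigl(\calVdag(\sigmaz_{N+k}\otimes\1)\calV\bigr)$ to hold exactly (or with only negligible error) — this requires a careful direct calculation with the explicit circuit for $\calV$ in \cref{def:swap}, checking that the $\sigmax_k$-controlled block and the $\sigmaz_{N+k}$-controlled block of $\calV$ interact correctly, analogous to but more delicate than the direct calculations establishing \cref{eq:swap_direct_x}. Since each application of \cref{lem:replace} takes a state-dependent-operator error $\delta$ to a trace error $O(\sqrt{\delta})$, and \cref{lem:swap_observables} contributes $N\sqrt{\gammaT}$ at the operator level, a single such replacement already yields $O(N^{1/2}\gammaT^{1/4})$ at the trace level, which upon re-squaring to state distance gives exactly the claimed $N^{1/4}\gammaT^{1/8}$; I would need to verify that no step introduces a worse power, in particular that the conjugate-invariance and tilde-equals-nontilde substitutions (each $O(\gammaT+\negl)$ at the operator level, hence $O(\sqrt{\gammaT})$ at the trace level) are dominated by the $N^{1/2}\gammaT^{1/4}$ term.
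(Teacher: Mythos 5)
Your route is sound and genuinely different from the paper's. The paper never factorizes the two-Pauli conjugation: it first uses \cref{lem:mv_223} to flip \cref{lem:swap_observables} into statements about $\calV \tildeX_k \calVdag$ and $\calV \tildeZ_{N+k} \calVdag$ relative to $\calV\sigma^{\theta}\calVdag$ (at the cost of a square root, which is where the exponent $N^{1/4}\gammaT^{1/8}$ comes from), and then runs a lift/conjugate/replace chain entirely in that flipped picture, absorbing $\sigmaz_{N+1}$ and then $\sigmax_1$ one at a time. You instead rely on the exact operator identity $\calVdag(\sigmax_k\otimes\sigmaz_{N+k}\otimes\1)\calV = \bigl(\calVdag(\sigmax_k\otimes\1)\calV\bigr)\,Z_{N+k}$, which you only ``grant'' and flag as the main obstacle, but which does hold: by the same computation as \cref{eq:swap_direct_x} the left side equals $\sum_u (-1)^{u_{N+k}}\,\calZ_{k,1}(u)\,X_k\,\calZ_1(u)$, and since $\calZ_1(u)$ contains the factor $Z_{N+k}^{(u_{N+k})}$ and the $Z_j$'s commute exactly, $\calZ_1(u)\,Z_{N+k} = (-1)^{u_{N+k}}\calZ_1(u)$, giving exactly the right side (and similarly for the second equation with $Z_k$). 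Working directly against $\sigma^{\theta}$ with the unflipped bounds of \cref{lem:swap_observables} then saves one square root and yields $O(N^{1/2}\gammaT^{1/4})$ for the trace quantity, which does imply the stated $\approx_{N^{1/4}\gammaT^{1/8},\sigma^{\theta}}$ bound (trivially when $N^{1/4}\gammaT^{1/8}\geq 1$, since the squared state-dependent norm is at most $4$, and because $N^{1/2}\gammaT^{1/4}\leq N^{1/4}\gammaT^{1/8}$ otherwise). Note, however, that your parenthetical about the squaring convention is backwards: $\approx_{\epsilon,\psi}$ already bounds the \emph{squared} norm by $O(\epsilon)$, so what you prove is stronger than, not merely sufficient for, the lemma.

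Two small repairs are needed in the middle of your chain. Commuting only $\tildeZ_{N+k}$ past the state does not place the middle factor next to $\sigma^{\theta}$: after substituting $\calVdag(\sigmax_k\otimes\1)\calV\to\tildeX_k$ you face $\tr[\tildeX_k\,Z_{N+k}\,\tildeX_k\tildeZ_{N+k}\,\sigma^{\theta}]$, and you must also commute $\tildeX_k$ past $\sigma^{\theta}$ (or use $[\tildeX_k,\tildeZ_{N+k}]=0$ and then conjugate $\tildeX_k$ through) before $Z_{N+k}\to\tildeZ_{N+k}$ can be replaced adjacent to the state via \cref{lem:tilde_equal_nontilde} and \cref{lem:replace}. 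This forces the lifted value of $\theta$ to be $k$ (and $N+k$ for the second equation), exactly as in the paper: \cref{prop:conjugate_invariance} provides $X$-type conjugate invariance only when the observable's index equals $\theta$, so your alternative of lifting to any $\theta\in[2N]$ distinct from both $k$ and $N+k$ would not work. With $\theta=k$ fixed, all steps go through, and the additional errors are $O(\sqrt{\gammaT})+\negl$, dominated by the $N^{1/2}\gammaT^{1/4}$ term.
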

\begin{proof}
By \cref{lem:mv_223} and \cref{lem:swap_observables}, for all $k\in [2N]$ and $\theta\in \thetaset$, we have
\begin{equation}\label{eq:swap_observables_flipV}
\begin{alignedat}{2}  
    &\calV \tildeZ_{k} \calVdag &&\approx_{\sqrt{\gammaH}+\negl,\calV\sigma^{\theta}\calVdag} \sigmaz_{k}\otimes \1,
    \\
    &\calV \tildeX_{k} \calVdag &&\approx_{\sqrt{N}\gammaH^{1/4},\calV\sigma^{\theta}\calVdag} \sigmaz_{k}\otimes \1.
\end{alignedat}
\end{equation}

It suffices to prove the first equation for $k=1$ as the proof for other $k$ is analogous. In the following, the terms \{indist., lift, o.s.-commute, replace\} are shorthand for \{\cref{lem:indistinguishability}, \cref{lem:lifting} (part $5$), \cref{prop:operator_state_commutation}, \cref{lem:replace}\} respectively.

\begin{align*}
    &\tr[(\calVdag (\sigmax_1\otimes \sigmaz_{N+1}\otimes \1) \calV - \tildeX_1 \tildeZ_{N+1})^{\dagger} (\calVdag (\sigmax_1\otimes \sigmaz_{N+1}\otimes \1) \calV - \tildeX_1 \tildeZ_{N+1}) \sigma^{\theta}]
    \\
    &= 2 - 2 \real \tr[\calVdag(\sigmax_1\otimes \sigmaz_{N+1}\otimes \1) \calV \tildeX_1 \tildeZ_{N+1} \sigma^{\theta}]
    &&\text{(rearrange)}
    \\
    &\simeq_{\negl} 2 - 2 \real \tr[\calVdag(\sigmax_1\otimes \sigmaz_{N+1}\otimes \1) \calV \tildeX_1 \tildeZ_{N+1} \sigma^{1}]   &&\text{(indist. and lift)}
    \\
    &\approx_{\sqrt{\gammaH}+\negl} 2 - 2 \real \tr[\calVdag(\sigmax_1\otimes \sigmaz_{N+1}\otimes \1) \calV \tildeX_1  \sigma^{1} \tildeZ_{N+1}]
    &&\text{(o.s.-commute and replace)}
    \\
    &=  2 - 2 \real \tr[(\calV \tildeZ_{N+1} \calVdag) (\sigmax_1\otimes \sigmaz_{N+1}\otimes \1) (\calV \tildeX_1  \calVdag) (\calV \sigma^{1} \calVdag)] &&\text{(trace is cyclic and $\calVdag\calV=\1$)}
    \\
    &\approx_{\gammaH^{1/4}+\negl}  2 - 2 \real \tr[(\sigmax_1\otimes \1_2\otimes \1) (\calV \tildeX_1  \calVdag) (\calV \sigma^{1} \calVdag)]
    &&\text{(\cref{eq:swap_observables_flipV} and replace)}
    \\
    &= 2 - 2 \real \tr[\calVdag(\sigmax_1\otimes \1_2\otimes \1) \calV \tildeX_1 \sigma^{1} ]
    &&\text{(trace is cyclic and $\calVdag\calV=\1$)}
    \\
    &\approx_{\sqrt{\gammaH}+\negl} 2 - 2 \, \real \tr[\calVdag(\sigmax_1\otimes \1_2\otimes \1) \calV  \sigma^{1} \tildeX_1]
    &&\text{(o.s.-commute and replace)}
    \\
    &= 2 - 2 \real \tr[(\calV \tildeX_1 \calVdag)(\sigmax_1\otimes \1_2\otimes \1) (\calV  \sigma^{1} \calVdag)]
    &&\text{(trace is cyclic and $\calVdag\calV=\1$)}
    \\
    &\approx_{N^{1/4} \gammaH^{1/8}} 2 - 2 \real \tr[(\1_2 \otimes \1_2\otimes \1) (\calV  \sigma^{1} \calVdag)]
    &&\text{(\cref{eq:swap_observables_flipV} and replace)}
    \\
    &=0.
\end{align*}
Therefore, by the triangle inequality, we have
\begin{equation}
   \calVdag (\sigmax_k \otimes \sigmaz_{N+k}\otimes \1) \calV  \approx_{N^{1/4} \gammaH^{1/8}, \sigma^{\theta}}  \tildeX_k \tildeZ_{N+k},
\end{equation}
and hence the first equation of the lemma.

The proof for the second equation is similar. We may again only consider the case $k=1$. Then, the second equation follows from the same steps as above except we lift to $\theta=N+1$ at the lifting step.
\end{proof}

\subsection{States under the swap isometry}
\label{sec:swap_states}

In this subsection, we show that the states $\sigma^{\theta,v}$ of a device that passes our protocol with high probability must be close to states that we call $\tau^{\theta, v}$ under the swap isometry. The states $\tau^{\theta, v}$ are defined as follows.

\begin{defn}[density operators $\tau^{\theta,v}$]\label{def:tau_thetav}
    Let $v\in \{0,1\}^{2N}$. For $\theta \in \thetaset$, we define the $2N$-qubit density operator $\tau^{\theta, v} \coloneqq \ketbrasame{\tau^{\theta, v}}$, according to the following three cases. 
\begin{equation}\label{eq:tau_def}
    \ket{\tau^{\theta, v}} \coloneqq
    \begin{cases}
        \ket{v_1} \otimes \cdots \otimes \ket{v_{\theta-1}} \otimes \ket{(-)^{v_\theta}} \otimes \ket{v_{\theta+1}} \otimes \cdots \otimes \ket{v_{2N}} &\text{if $\theta \in [2N]$,}
        \\
        \ket{v} \coloneqq \ket{v_1} \otimes \cdots \otimes \ket{v_{2N}} & \text{if $\theta = 0$,}
        \\
       \ket{\psi^{v}} &\text{if $\theta = \diamond$,}
    \end{cases}
\end{equation}
where $\ket{\psi^{v}}$ is as defined in \cref{eq:def_psi}.
\end{defn}
For convenience, we also define the following notation. For $a\in \{0,1\}$ and $k\in [2N]$, we define
\begin{equation}
\begin{aligned}
    &\ketbrasame{a}_k \coloneqq \1_2^{\otimes (k-1)}  \otimes \ketbrasame{a} \otimes \1_2^{\otimes (2N-k)},\\
    & \ketbrasame{(-)^a}_k \coloneqq \1_2^{\otimes (k-1)}  \otimes \ketbrasame{(-)^a} \otimes \1_2^{\otimes (2N-k)}.
\end{aligned}
\end{equation}
For $v \in \{0,1\}^{2N}$ and $j \in [2N]$, we define
\begin{equation}\label{def:comp_hadamard_projs}
\begin{aligned}
    &\ketbrasame{v_{j:2N}} \coloneqq 
    \1_{2}^{\otimes(j-1)} \otimes \ketbrasame{v_j} \otimes \ldots
    \otimes \ketbrasame{v_{2N}},\\
    &\ketbrasame{(-)^{v_{j:2N}}} \coloneqq
    \1_{2}^{\otimes(j-1)} \otimes \ketbrasame{(-)^{v_j}}\otimes \ldots
    \otimes \ketbrasame{(-)^{v_{2N}}}.
\end{aligned}
\end{equation}

For $i,\theta\in [2N]$ and $v\in \{0,1\}^{2N}$, we define 
\begin{equation}\label{eq:chizeta_prime_definition}
\begin{alignedat}{2}
    &\chi'(\theta,i,v) &&\coloneqq \norm{X_i - \calVdag ( \sigmax_i \otimes \1) \calV}^2_{\sigma^{\theta,v}},
    \\
    &\chidiamond'(i,v) &&\coloneqq \norm{\tildeX_i\tildeZ_{N+i} - \calVdag ( \sigmax_i \otimes \sigmaz_{N+i} \otimes \1) \calV}^2_{\sigma^{\diamond,v}},
    \\
    &\zetadiamond'(i,v) &&\coloneqq \norm{\tildeZ_i\tildeX_{N+i} - \calVdag ( \sigmaz_i \otimes \sigmax_{N+i} \otimes \1) \calV}^2_{\sigma^{\diamond,v}},
\end{alignedat}
\end{equation}

so that, by definition,
\begin{equation}
\begin{alignedat}{3}
    &X_i &&\approx_{\chi'(\theta,i,v), \sigma^{\theta,v}} && \calVdag (\sigmax_i \otimes \1) \calV,
    \\
    &\tildeX_i\tildeZ_{N+i} &&\approx_{\chidiamond'(i,v),\sigma^{\diamond,v}} && \calVdag ( \sigmax_i \otimes \sigmaz_{N+i} \otimes \1) \calV,
    \\
    &\tildeZ_i\tildeX_{N+i}  &&\approx_{\zetadiamond'(i,v), \sigma^{\diamond,v}} && \calVdag (\sigmaz_i \otimes \sigmax_{N+i} \otimes \1) \calV.
\end{alignedat}
\end{equation}
Similarly to our use of $\chi$ and $\zeta$ introduced in \cref{def:zeta_chi_def}, 
we will use $\chi'$, $\chidiamond'$, and $\zetadiamond'$ to bound distances between the output states of our swap isometry and the ideal states in \cref{lem:swap_states_nondiamond,lem:swap_states_diamond}.

By \cref{lem:swap_observables} and \cref{lem:swap_observables_tildeprod}, we see that, for all $i, \theta\in [2N]$,
\begin{equation}\label{eq:chiprime_bound}
    \sum_v \chi'(\theta,i,v) = O(N\sqrt{\gammaH}),
    \quad
    \sum_v \chidiamond'(i,v) = O(N^{1/4}\gammaH^{1/8}),
    \quad 
    \text{and}
    \quad
    \sum_v \zetadiamond'(i,v) = O(N^{1/4}\gammaH^{1/8}).
\end{equation}
With the above definitions in place, we can prove the following lemma.
\begin{lemma}\label{lem:swap_states_nondiamond}
    Let $D$ be an efficient perfect device. For all $\theta\in \thetasetint$ and $v\in \{0,1\}^{2N}$, there exists a positive semi-definite operator $\alpha^{\theta,v} \in \PosH$ such that
    \begin{equation}
        \sum_{v} \normone{\calV \sigma^{\theta, v} \calVdag - \tau^{\theta,v} \otimes \alpha^{\theta,v}} \leq O(N^{3/2} \gammaH^{1/4}).
    \end{equation}
\end{lemma}

\begin{proof}
    It suffices to prove the lemma for $\theta = 1$ as the proof for other $\theta$ is analogous (the proof for $\theta = 0$ is also simpler). First, we prove the following claim using \cref{lem:swap_observables}.
    \begin{claim}\label{claim:swap_states_claim}
    \begin{enumerate}
        \item There exists a positive semi-definite operator $\beta^v \in \PosH$, such that
        \begin{equation}
            \norm{\calV\sigma^{1,v}\calVdag - \ketbrasame{(-)^{v_1}} \otimes \beta^v}_1 \leq \epsilon(v),
        \end{equation}
        where $\epsilon(v) \coloneqq O\Bigl(\bigl(\chi(1,v)+\sqrt{\tr[\sigma^{1,v}]\cdot \chi'(1,1,v)}\bigr)^{1/2}\sqrt{\tr[\sigma^{1,v}]}\Bigr)$. 
        \item For all $k\in [2N]$ with $k\geq 2$, we have $\norm{\ketbra{v_k}{v_k}_k \otimes \1-\1}_{\calV\sigma^{1,v}\calVdag}^2=\zeta(1,k,v)$.
    \end{enumerate}
    \end{claim}

    \begin{proof}[Proof of \cref{claim:swap_states_claim}]
        To prove the first part, consider
        \begin{equation}
          \begin{aligned}
            \norm{\ketbrasame{(-)^{v_1}}_1 \otimes \1 - \1}_{\calV\sigma^{1,v}\calVdag}^2 
            &=\tr[\calV \sigma^{1,v} \calVdag  ]- \tr[(\ketbrasame{(-)^{v_1}}_1 \otimes \1) \calV \sigma^{1,v} \calVdag ]\\
            &= \tr[\sigma^{1,v}] - \tr[\calVdag (\ketbrasame{(-)^{v_1}}_1 \otimes \1) \calV \sigma^{1,v}] \\
            &\leq \tr[\sigma^{1,v}] - \tr[X_1^{(v_1)} \sigma^{1,v}] + O(\sqrt{\tr[\sigma^{1,v}]\cdot \chi'(1,1,v)})\\
            &= \chi(1,v)/4 + O(\sqrt{\tr[\sigma^{1,v}]\cdot \chi'(1,1,v)}),
        \end{aligned}  
        \end{equation}
        where the inequality uses the replacement lemma (\cref{lem:replace}) with the definition of $\chi'(1,1,v)$ in \cref{eq:chizeta_prime_definition} and \cref{lem:mv_224}, and the last equality is by \cref{eq:zeta_chi_as_norm}. 

        Using \cref{lem:operator_to_state} with the above equation, we obtain
        \begin{equation}
            \normone{\calV\sigma^{1,v}\calVdag - (\ketbrasame{(-)^{v_1}}_1 \otimes \1) \calV\sigma^{1,v}\calVdag
            (\ketbrasame{(-)^{v_1}}_1 \otimes \1)} \leq \epsilon(v),
        \end{equation}
        where
        \begin{equation}\label{eq:def_epsi_v}
            \epsilon(v) \coloneqq O\Bigl(\bigl(\chi(1,v)+\sqrt{\tr[\sigma^{1,v}]\cdot \chi'(1,1,v)}\bigr)^{1/2}\sqrt{\tr[\sigma^{1,v}]}\Bigr).
        \end{equation}
        But $(\ketbrasame{(-)^{v_1}}_1 \otimes \1)\calV\sigma^{1,v}\calVdag (\ketbrasame{(-)^{v_1}}_1 \otimes \1)$ is of the form $\ketbrasame{(-)^{v_1}} \otimes \beta^v$ for
        \begin{align}
            \beta^v \coloneqq (\bra{(-)^{v_1}}_1\otimes \1) \calV\sigma^{1,v}\calVdag 
        (\ket{(-)^{v_1}}_1 \otimes \1).
        \end{align}
        Hence the first part of the claim.

        The second part is simpler. Using the first equation of \cref{lem:swap_observables} and \cref{eq:zeta_chi_as_norm} ($k\geq 2$), we see
        \begin{equation}
            \norm{\ketbrasame{v_k}_k \otimes \1-\1}_{\calV\sigma^{1,v}\calVdag}^2 \\
            = \tr[\sigma^{1,v}] - \tr[\calVdag (\ketbrasame{v_k}_k \otimes \1) \calV\sigma^{1,v}] \\
            = \tr[\sigma^{1,v}] - \tr[Z_k^{(v_k)}\sigma^{1,v}] = \zeta(1,k,v),
        \end{equation}
        which completes the proof.
    \renewcommand{\qedsymbol}{$\square \, \text{Proof of \cref{claim:swap_states_claim}}.$}
    \end{proof}

    We proceed to use the claim to establish the lemma. For all $k\in [2N]$ with $k\geq 2$, consider
    \begin{align*}
        &\normone{\ketbrasame{(-)^{v_1}}_1 \otimes \beta^v - (\ketbrasame{v_k}_k \otimes \1) \left(\ketbrasame{(-)^{v_1}}_1 \otimes \beta^v \right) \, (\ketbrasame{v_k}_k \otimes \1)} 
        \\
        &\leq \normone{\ketbrasame{(-)^{v_1}}_1 \otimes \beta^v - \calV \sigma^{1,v} \calVdag} + \normone{\calV \sigma^{1,v} \calVdag- (\ketbrasame{v_k}_k \otimes \1) \, \calV \sigma^{1,v} \calVdag\, (\ketbrasame{v_k}_k \otimes \1)} \\
        &\quad\quad\quad\quad + \normone{(\ketbrasame{v_k}_k \otimes \1) \, \calV \sigma^{1,v} \calVdag\, (\ketbrasame{v_k}_k \otimes \1) - (\ketbrasame{v_k}_k \otimes \1) \, \ketbrasame{(-)^{v_1}}_1 \otimes \beta^v \, (\ketbrasame{v_k}_k \otimes \1)}
        \\
        &\leq 2\normone{\ketbrasame{(-)^{v_1}}_1 \otimes \beta^v - \calV \sigma^{1,v} \calVdag} + \normone{\calV \sigma^{1,v} \calVdag- (\ketbrasame{v_k}_k \otimes \1) \, \calV \sigma^{1,v} \calVdag\, (\ketbrasame{v_k}_k \otimes \1)}
        \\
        \shortintertext{using the first part of \cref{claim:swap_states_claim} to bound the first term and the second part of \cref{claim:swap_states_claim} with \cref{lem:operator_to_state} to bound the second term:
        }
        &\leq 2\epsilon(v) + 2\sqrt{\zeta(1,k,v)\cdot \tr[\calV \sigma^{1,v} \calVdag]}=2\epsilon(v) + 2\sqrt{\zeta(1,k,v) \cdot \tr[\sigma^{1,v}]}.
    \end{align*}
    
    That is, for all $k\in [2N]$ with $k\geq 2$, we have
    \begin{equation}\label{eq:swap_states_induct}
        \normone{\ketbrasame{(-)^{v_1}}_1 \otimes \beta^v - (\ketbrasame{v_k}_k \otimes \1) \, \ketbrasame{(-)^{v_1}}_1 \otimes \beta^v \, (\ketbrasame{v_k}_k \otimes \1)} \leq 2\epsilon(v) + 2\sqrt{\tr[\sigma^{1,v}] \cdot \zeta(1,k,v)}.
    \end{equation}
    We then have \begin{align*}\label{eq:sigmaonev_closeto}
       &\normone{\calV \sigma^{1,v} \calVdag - (\ketbrasame{v_{2:2N}}\otimes\1) \, (\ketbrasame{(-)^{v_1}}_1 \otimes \beta^v) \, (\ketbrasame{v_{2:2N}}\otimes\1)} 
       \shortintertext{using the triangle inequality:}
       &\leq \normone{\calV \sigma^{1,v} \calVdag -(\ketbrasame{(-)^{v_1}}_1 \otimes \beta^v)} 
        \\
       &\quad\quad+ \normone{(\ketbrasame{(-)^{v_1}}_1 \otimes \beta^v) - (\ketbrasame{v_{2:2N}}\otimes\1) \, (\ketbrasame{(-)^{v_1}}_1 \otimes \beta) \, (\ketbrasame{v_{2:2N}}\otimes\1)} 
       \shortintertext{using \cref{claim:swap_states_claim}:}
       &\leq \epsilon(v) + \normone{(\ketbrasame{(-)^{v_1}}_1 \otimes \beta^v)\\
       &\quad\quad- (\ketbrasame{v_{2:2N}}\otimes\1) \, (\ketbrasame{(-)^{v_1}}_1 \otimes \beta^v) \, (\ketbrasame{v_{2:2N}}\otimes\1)}
       \shortintertext{using \cref{lem:linear_error_accumulation} with \cref{eq:swap_states_induct}:}
       &\leq (2N-1)\, \epsilon(v) + 2\, \sum_{k=2}^N \sqrt{\tr[\sigma^{1,v}]\cdot \zeta(1,k,v)}.
    \end{align*}

    Finally, using the properties of $\{\chi',\chi,\zeta\}$ given in \cref{eq:chiprime_bound,lem:zeta_chi_bounds_new}, the definition of $\epsilon(v)$ in \cref{eq:def_epsi_v}, and the Cauchy-Schwarz inequality, we obtain
    \begin{equation*}
    \begin{aligned}
        &\sum_{v\in \{0,1\}^{2N}} \normone{\calV \sigma^{1,v} \calVdag -(\ketbrasame{v_{2:2N}}\otimes\1) \, (\ketbrasame{(-)^{v_1}}_1 \otimes \beta^v) \, 
        (\ketbrasame{v_{2:2N}}\otimes\1)}
        \\
        &\leq O\Bigl(N \sum_v\epsilon(v) + \sum_v \sum_{k=2}^{2N} \sqrt{\tr[\sigma^{1,v}]\cdot \zeta(1,k,v)} \Bigr)
        \\
        &\leq O\Bigl(N \sqrt{\gammaH} + N \, \sum_v \bigl(\chi(1,v)+\sqrt{\tr[\sigma^{1,v}]\cdot \chi'(1,1,v)}\bigr)^{1/2}\sqrt{\tr[\sigma^{1,v}]} \Bigr)
        \\
        &\leq O\Bigl( N \sqrt{\gammaH} + N \Bigl(\sum_v \chi(1,v)+\sqrt{\tr[\sigma^{1,v}]\cdot \chi'(1,1,v)}\Bigr)^{1/2}  \Bigr)
        \\
        &\leq O\Bigl(N (\gammaH + N\sqrt{\gammaH}+\negl)^{1/2} + N\sqrt{\gammaH}\Bigr) \leq O(N^{3/2} \gammaH^{1/4}).
    \end{aligned}
    \end{equation*}
    Noting that $(\ketbrasame{v_{2:2N}}\otimes\1) \, (\ketbrasame{(-)^{v_1}}_1 \otimes \beta^v) \, 
    (\ketbrasame{v_{2:2N}}\otimes\1)$ is of the form
    \begin{equation}
        \tau^{1,v} \otimes \alpha^{1,v} = \ketbrasame{(-)^{v_1}}\otimes \ketbrasame{v_2} \otimes \cdots \otimes \ketbrasame{v_{2N}} \otimes \alpha^{1,v},
    \end{equation}
    for some positive semi-definite operator $\alpha^{1,v}$, establishes the lemma.
\renewcommand{\qedsymbol}{$\square \, \text{Proof of \cref{lem:swap_states_nondiamond}}.$}
\end{proof}

\cref{lem:swap_states_nondiamond} characterizes the states $\sigma^{\theta,v}$ for $\theta\in \{0\} \cup [2N]$ . For $\theta=\diamond$, we have
\begin{lemma}\label{lem:swap_states_diamond}
    Let $D$ be an efficient perfect device.  For all $v\in \{0,1\}^{2N}$, there exists a positive semi-definite operator $\alpha^{\diamond, v} \in \PosH$ such that 
    \begin{equation}
        \sum_{v}\normone{\calV \sigma^{\diamond, v} \calVdag - \tau^{\diamond,v} \otimes \alpha^{\diamond, v}} \leq O(N^{17/16}\gammaH^{1/32}).
    \end{equation}

\end{lemma}
\begin{proof}
    The proof is similar to the proof of \cref{lem:swap_states_nondiamond}. We provide the details for completeness.

    We will use the fact that, for all $v\in \{0,1\}^{2N}$, we have
    \begin{equation}\label{eq:diamond_projectorform}
        \ketbrasame{\psi^{v}}\otimes \1 = \prod_{i=1}^N(\sigmaz_i\otimes \sigmax_{N+i}\otimes \1)^{(v_i)}(\sigmax_i \otimes \sigmaz_{N+i} \otimes \1)^{(v_{N+i})}.
    \end{equation}
    
    For all $i \in [N]$, we have
    \begin{equation}\label{eq:xz_approx_id_diamond}
    \begin{aligned}
        &\quad\norm{(\sigmax_i \otimes \sigmaz_{N+i} \otimes \1)^{(v_{N+i})}\otimes \1 - \1}^2_{\calV\sigma^{\diamond,v} \calVdag} 
        \\
        &= \tr[\calV \sigma^{\diamond,v} \calVdag] - \tr[(\sigmax_i \otimes \sigmaz_{N+i} \otimes \1)^{(v_{N+i})} \calV \sigma^{\diamond, v} \calVdag] 
        \\
        &= \tr[\sigma^{\diamond,v}] - \tr[\calVdag (\sigmax_i \otimes \sigmaz_{N+i} \otimes \1)^{(v_{N+i})} \calV \sigma^{\diamond, v}]
        \\
        &\leq  \tr[\sigma^{\diamond,v}] - \tr[(\tildeX_i \tildeZ_{N+i})^{(v_{N+i})} \sigma^{\diamond, v}] + O(\sqrt{\tr[\sigma^{\diamond,v}]\cdot \chidiamond'(i,v)})
        \\
        &= \chidiamond(i,v) + O(\sqrt{\tr[\sigma^{\diamond,v}]\cdot \chidiamond'(i,v)}),
    \end{aligned}
    \end{equation}
    where the inequality uses the replacement lemma (\cref{lem:replace}) with the definition of $\chidiamond'$ in \cref{eq:chizeta_prime_definition} and \cref{lem:mv_224}, and the last equality is by \cref{eq:zeta_chi_as_norm}.
    Similarly, we have
    \begin{equation}\label{eq:zx_approx_id_diamond}
        \norm{(\sigmaz_i \otimes \sigmax_{N+i} \otimes \1)^{(v_i)}\otimes \1 - \1}^2_{\calV\sigma^{\diamond,v} \calVdag}  \leq \zetadiamond(i,v) + O\Bigl(\sqrt{\tr[\sigma^{\diamond,v}]\cdot \zetadiamond'(i,v)}\Bigr).
    \end{equation}

    Using \cref{lem:operator_to_state} with \cref{eq:xz_approx_id_diamond,eq:zx_approx_id_diamond}, we obtain
    \begin{equation}\label{eq:diamond_state_induct}
    \begin{aligned}
        &\normone{\calV\sigma^{\diamond,v} \calVdag - (\sigmax_i \otimes \sigmaz_{N+i} \otimes \1)^{(v_{N+i})} \calV\sigma^{\diamond,v} \calVdag
        (\sigmax_i \otimes \sigmaz_{N+i} \otimes \1)^{(v_{N+i})}} \leq \epsilon_1(i,v),
        \\
        &\normone{\calV\sigma^{\diamond,v} \calVdag - (\sigmaz_i \otimes \sigmax_{N+i} \otimes \1)^{(v_i)} \calV\sigma^{\diamond,v} \calVdag
        (\sigmaz_i \otimes \sigmax_{N+i} \otimes \1)^{(v_i)}} \leq \epsilon_2(i,v),
    \end{aligned}
    \end{equation}
    where
    \begin{equation}\label{eq:diamond_stiate_epsilon_def}
    \begin{aligned}
        &\epsilon_1(i,v) \coloneqq O\Bigl( \bigl(\chidiamond(i,v) + \sqrt{\tr[\sigma^{\diamond,v}]\cdot \chidiamond'(i,v)} \bigr)^{1/2}\sqrt{\tr[\sigma^{\diamond,v}]}\Bigr),
        \\
        &\epsilon_2(i,v) \coloneqq O\Bigl( \bigl(\zetadiamond(i,v) + \sqrt{\tr[\sigma^{\diamond,v}]\cdot \zetadiamond'(i,v)} \bigr)^{1/2}\sqrt{\tr[\sigma^{\diamond,v}]}\Bigr).
    \end{aligned}
    \end{equation}

    Using \cref{eq:diamond_projectorform} and \cref{lem:linear_error_accumulation} with \cref{eq:diamond_state_induct} gives
    \begin{equation}
        \normone{\calV\sigma^{\diamond,v}\calVdag - \ketbrasame{\psi^{v}} \calV\sigma^{\diamond,v}\calVdag\ketbrasame{\psi^{v}}} \leq \sum_{i=1}^N \epsilon_1(i,v)+\epsilon_2(i,v).
    \end{equation}

    Therefore, summing over $v$ gives
    \begin{equation}
        \sum_v \normone{\calV\sigma^{\diamond,v}\calVdag - \ketbrasame{\psi^{v}} \calV\sigma^{\diamond,v}\calVdag\ketbrasame{\psi^{v}}}
        \leq
        \sum_{i=1}^N \sum_v \epsilon_1(i,v)+\epsilon_2(i,v) \leq O(N^{17/16}\gammaH^{1/32}).
    \end{equation}
    But $\ketbrasame{\psi^{v}} \calV\sigma^{\diamond,v}\calVdag\ketbrasame{\psi^{v}}$ is of the form $\tau^{\diamond,v}\otimes\alpha^{\diamond,v}$ for some positive semi-definite operator $\alpha^{\diamond,v}$. Hence the first part of the lemma.

\end{proof}
Lastly, we show there is not much information about $\theta$
and $v$ in $\alpha^{\theta,v}$. The proof is similar to that of \cite[Proposition 4.32 (arXiv version v2)]{gmp2022parallelrsp} but we include it for completeness\footnote{In v1 of this paper, we provided an independent proof of this lemma (see Lemmas 4.37 and 4.39 in v1) but it gave a worse bound. To be clear, we learned the main ideas of the current proof from \cite[Proposition 4.32 (arXiv version v2)]{gmp2022parallelrsp}.}.
\begin{lemma}\label{lem:ci_aux}
    Let $D$ be an efficient perfect device. There exists a state $\alpha \in \PosH$ such that the following holds. For all $\theta\in \thetaset$ and $v\in \{0,1\}^{2N}$, let $\alpha^{\theta,v}\in \PosH$ be as defined in \cref{lem:swap_states_nondiamond,lem:swap_states_diamond}. Then, there exist numbers $\{\delta(v) \geq 0 \mid v\in \{0,1\}^{2N}\}$ such that 
    $\alpha^{\theta,v} \capprox_{\delta(v)} \alpha/2^{2N}$ for all $v\in \{0,1\}^{2N}$ and
    \begin{align*}
        \sum_{v \in \{0,1\}^{2N}} \delta(v) \leq O(N^{3/2} \gammaH^{1/32}).
    \end{align*}
\end{lemma}
\begin{proof}
    In this proof, we write 
    \begin{equation}
        \epsilon \coloneqq \max_{\theta\in \thetaset} \sum_v \normone{\calV \sigma^{\theta,v}\calV^\dagger - \tau^{\theta,v}\otimes \alpha^{\theta,v}} \leq O(N^{3/2} \gammaH^{1/32}),
    \end{equation}
    where the inequality uses  \cref{lem:swap_states_nondiamond,lem:swap_states_diamond}.
    
    For all $\theta\in \thetaset$, we have
    \begin{equation}\label{eq:ci_epsilon_negl}
        \norm{\calV \sigma^\theta \calVdag - \sum_v \tau^{\theta,v} \otimes \alpha^{\theta,v}} \leq \sum_v 
        \norm{\calV \sigma^{\theta,v}\calVdag - \tau^{\theta,v} \otimes \alpha^{\theta,v} } + \negl \leq \epsilon + \negl,
    \end{equation}
    where the first inequality uses \cref{lem:sigma_equal_sumv} and the second inequality uses the definition of $\epsilon$.
    
    Let $\alpha^\diamond \coloneqq \sum_{v} \alpha^{\diamond,v}$.
    We first prove
    \begin{equation}
        \label{eq:ci_0_diamond}
        \sum_{v} \tau^{0,v} \otimes \alpha^{0,v} 
        \csimeq_{2\epsilon + \negl}
        \sum_{v} \tau^{0,v} \otimes  \alpha^\diamond/2^{2N}
    \end{equation}
    by contradiction. Suppose \cref{eq:ci_0_diamond} does not hold. Then there exists an efficient POVM $\{\Lambda, \1-\Lambda\}$ such that
    \begin{equation}\label{eq:ci_contradiction_hypothesis}
        \tr\Bigl[\Lambda \Bigl( \sum_{v} \tau^{0,v} \otimes \alpha^{0,v} - \sum_{v} \tau^{0,v} \otimes \alpha^\diamond/2^{2N} \Bigr) \Bigr] > 2\epsilon + \notnegl
    \end{equation}
    for some non-negligible function $\notnegl > 0$.
    Since the two states on either side of \cref{eq:ci_0_diamond} have a block-diagonal structure, we can without-loss-of-generality assume that $\Lambda$ has the same structure. That is,
    \begin{equation}
        \Lambda = \sum_v \tau^{0,v} \otimes \Lambda_v \quad \text{for some $\Lambda_v$}.
    \end{equation}
    Hence, our assumption in \cref{eq:ci_contradiction_hypothesis} is equivalent to 
    \begin{equation}\label{eq:ci_equiv_assumption}
        \sum_v \tr\Bigl[\Lambda_v(\alpha^{0,v} - \alpha^{\diamond}/2^{2N} )\Bigr] \geq 2\epsilon + \notnegl.
    \end{equation}
    Define a new POVM $\{\Gamma, \1- \Gamma\}$ by
    \begin{equation}
        \Gamma \coloneqq \calV^{\dagger} \left( \sum_v \tau^{0,v} \otimes \Lambda_v\right) \calV,
    \end{equation}
    which is efficient since $\calV$ is efficient.
    
    Then, we obtain a contradiction as follows:
    \begin{align*}
        \negl &\geq \tr[\Gamma(\sigma^0 - \sigma^\diamond)] 
        &&\text{($\sigma^0 \capprox \sigma^\diamond$)}
        \\
        &= \tr\left[ \left( \sum_v \tau^{0,v} \otimes \Lambda_v\right)\left(\calV \sigma^0 \calVdag -
        \calV \sigma^\diamond \calVdag\right)\right] 
        &&\text{(trace is cyclic)}
        \\
        &\geq \tr \left[ \left( \sum_v \tau^{0,v} \otimes \Lambda_v\right)\left(\sum_v \tau^{0,v} \otimes \alpha^{0,v} -
        \sum_u \tau^{\diamond,u} \otimes \alpha^{\diamond,u} \right)\right] 
        \\
        &\quad \quad - 2 \epsilon - 2\, \negl &&\text{(\cref{eq:ci_epsilon_negl})}
        \\
        &= \sum_{v} \tr[\Lambda_v \alpha^{0,v}]-
        \frac{1}{2^{2N}}\sum_v \tr\Bigl[ \Lambda_v \Bigl(\sum_u \alpha^{\diamond,u}\Bigr)\Bigr] - 2\epsilon - 2\,\negl
        &&\amatrix{\text{$\tr[ \tau^{0,v}\tau^{\diamond,u}] = 1/2^{2N}$}}{\text{for all $u,v \in \{0,1\}^{2N}$}}
        \\
        &= \sum_v \tr[ \Lambda_v(\alpha^{0,v} - \alpha^{\diamond}/2^{2N})] - 2\epsilon - 2\,\negl
        &&\text{(definition of $\alpha^\diamond$)}
        \\
        &\geq \notnegl - 2\,\negl &&\text{(\cref{eq:ci_equiv_assumption})},
    \end{align*}
    which implies $3 \, \negl \geq \notnegl$, a contradiction. Hence we have proved \cref{eq:ci_0_diamond}.
    
    Now, for any $\theta \in \thetaset$, we have
    \begin{equation}\label{eq:ci_chain}
        \sum_v \tau^{\theta,v} \otimes \alpha^{\theta,v}
        \capprox_{\epsilon + \negl}
        \calV \sigma^\theta \calVdag
        \capprox_\negl
        \calV \sigma^{0} \calVdag
        \capprox_{\epsilon+\negl}
        \sum_v \tau^{0,v} \otimes \alpha^\diamond / 2^{2N}
        = \sum_v \tau^{\theta,v} \otimes \alpha^\diamond / 2^{2N},
    \end{equation}
    where we used \cref{eq:ci_epsilon_negl} in the first approximation, the computational indistinguishability of the $\sigma^\theta$s (\cref{lem:indistinguishability}) in the second approximation, \cref{eq:ci_0_diamond} in the third approximation, and the fact that $\sum_v \tau^{\theta,v} = \1$ for all $\theta \in \thetaset$ in the last equality.
    
    Setting $\alpha \coloneqq \alpha^\diamond$,
    the lemma follows from \cref{eq:ci_chain}, the triangle inequality for computational indistinguishability (\cref{lem:ci_triangle}), and noting that
    the computational distinguishability between 
    $\sum_v \tau^{\theta,v} \otimes \alpha^{\theta,v}$
    and $\sum_v \tau^{\theta,v} \otimes \alpha^\diamond/2^{2N}$ can be expressed as the sum-over-$v$ of the computational distinguishabilities between $\alpha^{\theta,v}$ and $\alpha^\diamond/2^{2N}$, which we set as $\delta(v)$.
\end{proof}

\subsection{Soundness for states and measurements}
\label{sec:soundness_states_meas}

We now put everything together to give our main theorem, \cref{thm:soundness}.

First, we need to do some bookkeeping. In most of the preceding analysis, we have considered the device as an efficient \textit{perfect} device in the knowledge that the states $\psi^{\theta}$ of the actual device are equal to those of a perfect device up to error $\sqrt{\gammaP}$ in trace distance, see~\cref{prop:reduce_to_pd}. In fact, $\gammaP$ also controls how close the states $\sigma^{\theta,v}$ of the actual device are to those of a perfect device:
\begin{lemma}\label{lem:sigma_reduce_to_pd}
    Let $D$ be an efficient device and $\widetilde{D}$ the perfect device associated with $D$ according to \cref{prop:reduce_to_pd}. Respectively, let $\sigma^{\theta,v},\tildesigma^{\theta,v}\in \PosH$ be the states prepared by $D$ and $\widetilde{D}$ in the Hadamard round as defined in \cref{def:sig_thetav}. Then, for all $\theta\in \thetaset$, we have
    \begin{equation}
        \sum_{v\in \{0,1\}^{2N}} \normone{\sigma^{\theta,v}-\tildesigma^{\theta,v}} \leq \sqrt{\gammaP} + 2 \gammaP. 
    \end{equation}
\end{lemma}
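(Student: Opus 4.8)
The plan is to push the trace-distance bound $\normone{\psi^\theta - \tildepsi^\theta}\leq \sqrt{\gammaP}$ from \cref{prop:reduce_to_pd} through the two operations that turn $\psi^\theta$ into the collection $\{\sigma^{\theta,v}\}_v$: first the Hadamard-round measurement $M = \{M^d\}_d$ (which produces $\sigma^\theta = \sum_{y,d}\sigma^\theta_{y,d}\otimes\ketbrasame{y,d}$ with $\sigma^\theta_{y,d} = M^d_y\psi^\theta_y M^d_y$), and then the restriction to the outcome set $\Sigmaset(\theta,v)$ that defines $\sigma^{\theta,v}$. Crucially, both $D$ and $\widetilde D$ use the \emph{same} measurement $M$ (that is the content of \cref{prop:reduce_to_pd}), so the only difference propagated is in the input states $\psi^\theta$ vs.\ $\tildepsi^\theta$. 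Since for each fixed $\theta$ the sets $\{\Sigmaset(\theta,v)\}_v$ are disjoint subsets of $\calY^{2N}\times\{0,1\}^{2Nw}$, the operators $\sigma^{\theta,v}$ (resp.\ $\tildesigma^{\theta,v}$) are mutually orthogonal (they live on orthogonal sectors of $\calH_Y\otimes\calH_R$), so by \cref{lem:normone_orthogonal} the sum $\sum_v\normone{\sigma^{\theta,v}-\tildesigma^{\theta,v}}$ equals $\normone{\sum_v(\sigma^{\theta,v}-\tildesigma^{\theta,v})}$, i.e.\ the trace norm of a single operator that is a ``sub-block'' of $\sigma^\theta - \tildesigma^\theta$.

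Concretely, I would define $\widehat\sigma^\theta \coloneqq \sum_v \sigma^{\theta,v}$ and $\widehat{\tildesigma}^\theta \coloneqq \sum_v \tildesigma^{\theta,v}$, so the quantity to bound is $\normone{\widehat\sigma^\theta - \widehat{\tildesigma}^\theta}$. By the triangle inequality this is at most $\normone{\widehat\sigma^\theta - \sigma^\theta} + \normone{\sigma^\theta - \tildesigma^\theta} + \normone{\tildesigma^\theta - \widehat{\tildesigma}^\theta}$. The first and third terms are each $\leq \gammaP$ by \cref{lem:sum_sigma_v} (applied to $D$ and to $\widetilde D$ respectively; note $\widetilde D$ is perfect so its version is actually $\negl$, but $\leq\gammaP$ is all we need — wait, we must be careful: \cref{lem:sum_sigma_v} for $\widetilde D$ gives $\gammaP(\widetilde D) = \negl \leq \gammaP(D)$, and for the statement as written we want the cleaner bound $\sqrt{\gammaP}+2\gammaP$, so using $\gammaP$ for both blocks is fine). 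The middle term: $\sigma^\theta$ and $\tildesigma^\theta$ are obtained from $\psi^\theta$ and $\tildepsi^\theta$ by the same quantum channel (apply $M$, write outcome $d$ into $\calH_R$ — this is a CPTP map on $\calH_D\otimes\calH_Y$ extended into $\calH_R$), hence by monotonicity of the trace norm under CPTP maps, $\normone{\sigma^\theta - \tildesigma^\theta}\leq \normone{\psi^\theta - \tildepsi^\theta}\leq\sqrt{\gammaP}$ by \cref{prop:reduce_to_pd}. Summing gives $\sqrt{\gammaP} + 2\gammaP$, as claimed.

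There is no real obstacle here; this is a routine bookkeeping argument. The one point that needs a moment's care is verifying that the map $\psi^\theta \mapsto \sigma^\theta$ is genuinely a CPTP map applied identically by both devices — i.e.\ making explicit that $M = \{M^d\}$ is a fixed projective measurement shared by $D$ and $\widetilde D$ (true by construction in \cref{prop:reduce_to_pd}), and that both $\psi^\theta$ and $\tildepsi^\theta$ are already classical on $\calH_Y$ so writing the $d$-outcome into $\calH_R$ is a legitimate isometric extension. Then monotonicity of $\normone{\cdot}$ under this channel is immediate. I would write:
\begin{equation}
\sum_v\normone{\sigma^{\theta,v}-\tildesigma^{\theta,v}}
= \bignorm{\sum_v(\sigma^{\theta,v}-\tildesigma^{\theta,v})}_1
\leq \normone{\widehat\sigma^\theta-\sigma^\theta} + \normone{\sigma^\theta-\tildesigma^\theta} + \normone{\tildesigma^\theta-\widehat{\tildesigma}^\theta}
\leq \gammaP + \sqrt{\gammaP} + \gammaP,
\end{equation}
using \cref{lem:normone_orthogonal} for the equality, the triangle inequality, \cref{lem:sum_sigma_v} for the two outer terms, and monotonicity of the trace norm together with \cref{prop:reduce_to_pd} for the middle term. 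This completes the proof.
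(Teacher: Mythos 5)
Your proposal is correct and follows essentially the same route as the paper's own proof: monotonicity of the trace norm under the shared measurement channel gives $\|\sigma^\theta-\tildesigma^\theta\|_1\le\sqrt{\gammaP}$, \cref{lem:normone_orthogonal} (via disjointness of the sets $\Sigmaset(\theta,v)$) collapses the sum over $v$ into a single trace norm, and the triangle inequality together with \cref{lem:sum_sigma_v} applied to both $D$ and $\widetilde D$ supplies the remaining $2\gammaP$. The small caveat you flag about $\gammaP(\widetilde D)=\negl$ versus $\gammaP(D)$ is handled at exactly the same level of precision as in the paper, so nothing further is needed.
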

\begin{proof}
    Since $\sigma^{\theta}$ is obtained from $\psi^{\theta}$ by a quantum channel, which cannot increase trace distance, \cref{prop:reduce_to_pd} implies $\normone{\sigma^\theta-\tildesigma^\theta}\leq \sqrt{\gammaP}$. As the operators in $\{\sigma^{\theta,v} - \tildesigma^{\theta,v}|v\in \{0,1\}^{2N}\}$ are Hermitian and  pairwise multiply to $0$ by \cref{def:sig_thetav}, we have 
    \begin{equation}
        \sum_{v\in \{0,1\}^{2N}} \normone{\sigma^{\theta,v}-\tildesigma^{\theta,v}} = \bignorm{\sum_{v\in \{0,1\}^{2N}} (\sigma^{\theta,v}-\tildesigma^{\theta,v})}_1 \leq  \normone{\sigma^{\theta} - \tildesigma^{\theta}} + 2\gammaP \leq \sqrt{\gammaP} + 2 \gammaP,
    \end{equation}
    where the first equality uses \cref{lem:normone_orthogonal} and the first inequality uses \cref{lem:sum_sigma_v}. Hence the lemma.
\end{proof}

To state our theorem, we recall the definition of $\tau^{\theta, v}$ from \cref{def:tau_thetav}: $\tau^{\theta, v} \coloneqq \ketbrasame{\tau^{\theta, v}}$, where
\begin{equation}\label{eq:tau_def_repeat}
    \ket{\tau^{\theta, v}} \coloneqq
    \begin{cases}
        \ket{v_1}\otimes \cdots  \otimes \ket{v_{\theta-1}} \otimes \ket{(-)^{v_\theta}} \otimes \ket{v_{\theta+1}} \otimes \cdots \otimes \ket{v_{2N}} &\text{if $\theta\in [2N]$},
        \\
        \ket{v} \coloneqq \ket{v_1} \otimes \cdots \otimes \ket{v_{2N}} &\text{if $\theta = 0$},
        \\
        \ket{\psi^{v}} &\text{if $\theta = \diamond$},
    \end{cases}
\end{equation}
where $\ket{\psi^{v}}$ is as defined in \cref{eq:def_psi}.

We also write, for all $u\in \{0,1\}^{2N}$, and $q\in \{0,1\}^{2N}$, $\ket{B_q^u}$ for the $2N$-qubit state 
\begin{equation}
\label{eq:bb84_product}
    \ket{B_q^u} \coloneqq \ket{B_{q_1}^{u_1}}\ket{B_{q_2}^{u_2}}\cdots\ket{B_{q_{2N}}^{u_{2N}}},
\end{equation}
where $\ket{B_{q_i}^{u_i}}$ denotes the BB84 states as in \cref{eq:bb84_states}.

Note that $\Pi_q^u = \ketbrasame{B_q^u}$ for all $q\in \{0,1\}^{2N}$ and $u\in \{0,1\}^{2N}$, where $\Pi_q^u$ is as defined in \cref{eq:measurements}. We finally stress that the theorem holds under the LWE hardness assumption made throughout this work.

\begin{thm}\label{thm:soundness}
    Let $D$ be an efficient device. Let $\calH \coloneqq \calH_{D}\otimes \calH_{Y} \otimes \calH_{R}$ be the Hilbert space of $D$.  Let $\calV:\calH \rightarrow \mathbb{C}^{2^{2N}}\otimes \calH$ be the swap isometry defined in \cref{def:swap}. For $\theta\in \thetaset$ and $v\in \{0,1\}^{2N}$, let $\sigma^{\theta,v} \in \Pos(\calH)$ be the states that $D$ prepares after returning the first answer in the Hadamard round, as defined in \cref{def:sig_thetav}. Let  $\{ \{P^u_q\}_{u\in\{0,1\}^{2N}} \, | \, q \in \{0,1\}^{2N} \}$ be the measurements $D$ performs to return the second answer in the Hadamard round when asked question $q$.
    
    Suppose that $D$ fails the protocol in \cref{fig:protocol} (with an input distribution $\mu$ on $\{0,1\}^{2N}$ and $N = \poly(\lambda)$) with probability at most $\epsilon$. Then, there exist states $\{\alpha^{\theta,v} \, | \, \theta\in \thetaset,v\in \{0,1\}^{2N}\}$, that are computationally indistinguishable from a single state $\alpha \in \Pos(\calH)$ in the way specified in \cref{lem:ci_aux}, such that
    \begin{align}
        &\sum_{v \in \{0,1\}^{2N}} \normone{\calV \sigma^{\theta,v} \calVdag - \tau^{\theta,v} \otimes \alpha^{\theta,v}} \leq O(N^{7/4}\epsilon^{1/32}),\label{eq:soundness_states}
        \\
        &\mathbb{E}_{q\leftarrow_\mu \{0,1\}^{2N}}\Biggl[\sum_{u,v\in \{0,1\}^{2N}} \normone{\calV P_q^u \sigma^{\theta,v} P_q^u \calV^\dagger - \bra{B_q^u} \tau^{\theta,v} \ket{B_q^u} \ketbrasame{B_q^u} \otimes \alpha^{\theta,v}}\Biggr] \leq O(N^2 \epsilon^{1/32}),
        \label{eq:soundness_measurements}
    \end{align}
    and, for all $q\in \qtestset$,
    \begin{equation}\label{eq:soundness_test_measurements}
        \sum_{u,v\in \{0,1\}^{2N}} \normone{\calV P_q^u \sigma^{\theta,v} P_q^u \calV^\dagger - \bra{B_q^u} \tau^{\theta,v} \ket{B_q^u} \ketbrasame{B_q^u} \otimes \alpha^{\theta,v}} \leq O(N^2 \epsilon^{1/32}).
    \end{equation}
\end{thm}

\begin{proof}
Let $\tildeD$ be the perfect device associated with $D$ according to \cref{prop:reduce_to_pd}. Let $\tildesigma^{\theta,v}$ be the states of $\tildeD$ corresponding to $\sigma^{\theta,v}$ of $D$. 

We first prove \cref{eq:soundness_states}, which characterizes the pre-measurement states $\sigma^{\theta,v}$. We have
\begin{equation}
\begin{aligned}
    \sum_v \normone{\calV \sigma^{\theta,v} \calVdag - \tau^{\theta,v} \otimes \alpha^{\theta,v}} &\leq \sum_v \normone{\calV \sigma^{\theta,v} \calVdag - \calV \tildesigma^{\theta,v} \calVdag}  + \sum_v \normone{\calV \tildesigma^{\theta,v} \calVdag - \tau^{\theta,v} \otimes \alpha^{\theta,v}} 
    \\
    &\leq \sqrt{\gammaP} + 2\gammaP + O(N^{17/16}\gammaH^{1/32} + N^{3/2}\gammaH^{1/4}),
\end{aligned}
\end{equation}
where, in the last inequality, we used \cref{lem:sigma_reduce_to_pd} to bound the first term and took the sum of the bounds in \cref{lem:swap_states_nondiamond,lem:swap_states_diamond} to bound the second term in a way that is independent of $\theta$. 

To get a bound in terms of $N$ and the failure probability, $\epsilon$, we can use \cref{prop:gamma_bound_by_fail}, which says $\gammaP,\gammaH= O(N\epsilon)$. Therefore, we obtain
\begin{equation}
    \sum_{v} \normone{\calV \sigma^{\theta,v} \calVdag - \tau^{\theta,v} \otimes \alpha^{\theta,v}} \leq O(N^{7/4}\epsilon^{1/32}).
\end{equation}
Hence the first equation of the theorem, \cref{eq:soundness_states}. 

We now proceed to prove the second and third equations of the theorem, \cref{eq:soundness_measurements,eq:soundness_test_measurements}, which characterize the post-measurement states $P^u_q\sigma^{\theta,v} P^u_q \calVdag$. We have
\begin{equation}\label{eq:post_meas_reduce_to_pd}
\begin{aligned}
    &\quad \sum_{u,v}\normone{\calV P^u_q \sigma^{\theta,v} P^{u}_q \calVdag - \bra{B_q^u} \tau^{\theta,v} \ket{B_q^u} \ketbrasame{B_q^u} \otimes \alpha^{\theta,v}}
    \\
    &\leq \sum_{u,v}\normone{\calV P^u_q \sigma^{\theta,v} P^{u}_q \calVdag - \calV P^u_q \tildesigma^{\theta,v} P^{u}_q \calVdag} + \normone{\calV P^u_q \tildesigma^{\theta,v} P^{u}_q \calVdag - \bra{B_q^u} \tau^{\theta,v} \ket{B_q^u} \ketbrasame{B_q^u} \otimes \alpha^{\theta,v}}
    \\
    &\leq \sum_{v}\normone{\sigma^{\theta,v} - \tildesigma^{\theta,v}} + \sum_{u,v}\normone{\calV P^u_q \tildesigma^{\theta,v} P^{u}_q \calVdag - \bra{B_q^u} \tau^{\theta,v} \ket{B_q^u} \ketbrasame{B_q^u} \otimes \alpha^{\theta,v}}
    \\
    &\leq \sqrt{\gammaP} + 2\gammaP + \sum_{u,v}\normone{\calV P^u_q \tildesigma^{\theta,v} P^{u}_q \calVdag - \bra{B_q^u} \tau^{\theta,v} \ket{B_q^u} \ketbrasame{B_q^u} \otimes \alpha^{\theta,v}},
\end{aligned}
\end{equation}
where the second inequality uses part $4$ of the replacement lemma (\cref{lem:replace}) and the third inequality uses \cref{lem:sigma_reduce_to_pd}.

\Cref{eq:post_meas_reduce_to_pd} allows us to assume that $D$ is already perfect and add on $\sqrt{\gammaP}+2\gammaP$ to the bound we derive on the first expression in \cref{eq:post_meas_reduce_to_pd} only at the end. Henceforth, we assume $D$ is perfect until indicated otherwise.

To prove the second part of the theorem, it primarily remains to prove
\begin{equation}\label{eq:q_measurement_bound}
    \sum_{u,v}\normone{\calV P_q^u \sigma^{\theta,v}P_q^u \calVdag - \bra{B_q^u} \tau^{\theta,v} \ket{B_q^u} \ketbrasame{B_q^u} \otimes \alpha^{\theta,v}} \leq O(N^{3/2} \gammaH^{1/32} + \sqrt{N} \gammaHq^{1/8}),
\end{equation}
for all $q\in \{0,1\}^{2N}$. In the following, we prove \cref{eq:q_measurement_bound} for $q = 1^{2N}$. After the proof, we explain how this proof can be slightly modified to prove \cref{eq:q_measurement_bound} for $q \neq 1^{2N}$.

Consider \cref{eq:q_measurement_bound} for $q = 1^{2N}$. For $u\in \{0,1\}^{2N}$, we have
\begin{equation}\label{eq:Pu_prod_Xu}
    P_{1^{2N}}^u = \prod_{i\in[2N]} X_i^{(u_i)}.
\end{equation}
Now, \cref{lem:swap_observables} and \cref{lem:mv_223} gives: for all $i\in [2N]$ and $\theta\in \thetaset$, $\calV X_i \calVdag \approx_{\delta,\calV\sigma^{\theta}\calVdag} \sigmax_i \otimes \1$, where
\begin{equation}
    \delta \coloneqq (N\sqrt{\gammaH})^{1/2}.
\end{equation}
Therefore, by \cref{lem:mv_224}, we have
\begin{equation}\label{eq:vxvproj_approx_sigmaxproj}
    \calV X_i^{(u_i)} \calVdag \approx_{\delta,\calV\sigma^{\theta}\calVdag} \ketbrasame{(-)^{u_i}}_i \otimes \1.
\end{equation}

We write
\begin{equation}
\begin{aligned} 
    &X_j[u] \coloneqq X_{2N}^{(u_{2N})} X_{2N-1}^{(u_{2N-1})} \cdots X_j^{(u_j)} = X_j^{(u_j)}X_{j+1}^{(u_{j+1})} \cdots X_{2N}^{(u_{2N})},
    \\
    &\ketbrasame{(-)^u} \coloneqq \ketbrasame{(-)^{u_1}}_1\otimes \cdots \otimes \ketbrasame{(-)^{u_{2N}}}_{2N},
\end{aligned}
\end{equation}
and recall $\ketbrasame{(-)^{u_{j:2N}}} \coloneqq \1_{2}^{\otimes(j-1)} \otimes \ketbrasame{(-)^{u_j}} \otimes \ldots \otimes \ketbrasame{(-)^{u_{2N}}}$ from \cref{def:comp_hadamard_projs}.

Then, by direct calculation, we have 
\begin{equation}\label{eq:proj_diff}
\begin{aligned}
   &\norm{\calV X_1[u] \calVdag - \ketbrasame{(-)^{u}}\otimes \1}^2_{\calV \sigma^{\theta}\calVdag} 
   \\
   = &\, \tr[X_1[u]\sigma^{\theta}] + \tr[(\ketbrasame{(-)^{u} }\otimes \1)\calV \sigma^{\theta}\calVdag]
   - 2 \real \tr[(\ketbrasame{(-)^{u} }\otimes \1)\calV X_1[u] \sigma^{\theta} \calVdag].
\end{aligned}
\end{equation}
We proceed to show that \cref{eq:proj_diff} summed over $u$ is close to zero. We do this by showing that the vector 
\begin{equation}
    \sum_u \real \tr[(\ketbrasame{(-)^{u} }\otimes \1)\calV X_1[u] \sigma^{\theta} \calVdag] \ket{u}
\end{equation}
is approximately equal to both 
\begin{equation}    
    \sum_u \tr[X_1[u]\sigma^{\theta}] \ket{u} \quad \text{and} \quad \sum_u \tr[(\ketbrasame{(-)^{u} }\otimes \1)\calV \sigma^{\theta}\calVdag] \ket{u}
\end{equation}
in $\ell_1$-norm distance.

In the following, we recall that, for vectors $x,y \in \mathbb{C}^n$, we defined $x \approx_{\epsilon} y$ to mean $\normone{x-y} \leq O(\epsilon)$, where $\normone{\cdot}$ denotes the $\ell_1$-norm. We give justifications for the steps below that do not involve an exact equality afterwards. The exact equalities follow from $\calV^\dagger\calV = \1$, $X_1[u] = X_2[u]X_1^{(u_1)}$, $\ketbrasame{(-)^{u_{1}}}_1 \cdot \ketbrasame{(-)^{u}} = \ketbrasame{(-)^{u_{1}}}_1 \cdot \ketbrasame{(-)^{u_{2:2N}}}$, and the cyclicity of the trace.

\begin{align}
    &\sum_u \real \tr[(\ketbrasame{(-)^{u} }\otimes \1)\calV X_1[u] \sigma^{\theta} \calVdag]  \ket{u} \label{eq:measurements_first_line}
    \\
    &= \sum_u \real \tr[ (\ketbrasame{(-)^{u}} \otimes \1) (\calV X_2[u] X_1^{(u_1)}\calVdag) (\calV  \sigma^{\theta}\calVdag)] \ket{u}
    \notag
    \\
    &\simeq_{\negl} \sum_u \real \tr[ (\ketbrasame{(-)^{u} }\otimes \1) (\calV X_2[u] X_1^{(u_1)}\calVdag) (\calV\sigma^{1}\calVdag)] \ket{u}
    \notag
    \\
    &\approx_{\sqrt{\gammaH}} \sum_u \real \tr[ (\ketbrasame{(-)^{u} }\otimes \1) (\calV X_2[u]  \sigma^{1} X_1^{(u_1)}\calVdag)] \ket{u}
    \notag
    \\
    &= \sum_u \real \tr[(\calV X_1^{(u_1)} \calVdag) (\ketbrasame{(-)^{u} }\otimes \1) (\calV X_2[u] \calVdag) (\calV \sigma^{1} \calVdag)] \ket{u}
    \notag
    \\
    &\approx_{\sqrt{\delta}} \sum_u \real \tr[(\ketbrasame{(-)^{u_1}}_1 \otimes \1) (\ketbrasame{(-)^{u} }\otimes \1) (\calV X_2[u] \calVdag) (\calV \sigma^{1} \calVdag)] \ket{u}
    \notag
    \\
    &= \sum_u \real \tr[ (\ketbrasame{(-)^{u_{1}}}_1 \otimes \1) (\ketbrasame{(-)^{u_{2:2N}} }\otimes \1)(\calV X_2[u] \calVdag) (\calV\sigma^{1}\calVdag)] \ket{u}
    \notag
    \\
    &\simeq_{\negl} \sum_u \real \tr[(\ketbrasame{(-)^{u_{1}}}_1\otimes \1) (\ketbrasame{(-)^{u_{2:2N}} }\otimes \1) (\calV X_2[u] \calVdag) (\calV\sigma^{\theta}\calVdag)] \ket{u}
    \notag
    \\
    &= \sum_u \real \tr[(\ketbrasame{(-)^{u} }\otimes \1)\calV X_2[u] \sigma^{\theta}\calVdag] \ket{u}.
    \notag
\end{align}

We justify the four steps above that do not involve an exact equality as follows:
\begin{enumerate}
    \item In the third line, we used the ``moreover'' part of the lifting-under-projections lemma (\cref{lem:lifting_proj}) with
    $Q^w = \1$, $\Pi^u = \ketbrasame{(-)^{u} }\otimes \1$,
    $P^u = \calV X_1[u] \calVdag$, $\psi = \calV\sigma^\theta\calVdag$, 
    and $\psi' = \calV \sigma^1 \calVdag$.
    
    \item In the fourth line, we used part $4$ of the replacement lemma (\cref{lem:replace}) with $P^u = \ketbrasame{(-)^{u} }\otimes \1$,
    \begin{align*}
        &\psi = \frac{1}{2} \calV (X_2[u]X_1^{(u_1)}\sigma^1 + 
        \sigma^1 X_1^{(u_1)}X_2[u]) \calVdag, \quad \text{and}\\
        &\psi' = \frac{1}{2} \calV (X_2[u] \sigma^1 X_1^{(u_1)}
        + X_1^{(u_1)} \sigma^1 X_2[u])\calVdag.
    \end{align*}
    Note that $\normone{\psi -\psi'} \leq O(\sqrt{\gammaH})$ by $\norminfty{\calV}=\norminfty{X_2[u]}=1$, and the operator-state commutation relation (\cref{prop:operator_state_commutation}).
    
    \item In the sixth line, we used part $3$ of the replacement lemma (\cref{lem:replace}) with
    $A = \calV X_1^{(u_1)} \calVdag$, $B = \ketbrasame{(-)^{u_1}}_1 \otimes \1$, $X_i^{(u_i)} =  \ketbrasame{(-)^{u_i}}_i \otimes \1$ for $i \in [2N]$,
    $Y_i = \calV X_i^{(u_i)} \calVdag$ for $i \in \{2,\ldots,2N\}$,
    and $\psi = \calV \sigma^1 \calVdag$. Importantly, we used $A \approx_{\delta, \psi} B$ from \cref{eq:vxvproj_approx_sigmaxproj}.
    
    \item In the second-to-last line, we again used the ``moreover'' part of the lifting-under-projections lemma (\cref{lem:lifting_proj}), this time with $Q^{u_1} =  \ketbrasame{(-)^{u_1}}_1 \otimes \1$, $\Pi^u = \ketbrasame{(-)^{u_{2:2N}} }\otimes \1$, $P^u = \calV X_2[u] \calVdag$, $\psi = \calV\sigma^1\calVdag$,
    and $\psi' = \calV \sigma^\theta \calVdag$.
\end{enumerate}

Comparing the expressions in the first and last lines (of the equations starting at \cref{eq:measurements_first_line}), we see that we can continue in the same way another $2N-1$ times, such that $\sigma^\theta$ is replaced by $\sigma^k$ by the lifting-under-projections lemma (\cref{lem:lifting_proj}) at time $k$, to obtain
\begin{equation}
    \sum_u \real \tr[(\ketbrasame{(-)^{u} }\otimes \1)\calV X_1[u] \sigma^{\theta} \calVdag]  \ket{u}\approx_{\delta'} \sum_u \real \tr[(\ketbrasame{(-)^{u} }\otimes \1)\calV 
    \sigma^{\theta}\calVdag] \ket{u},
\end{equation}
where
\begin{equation}\label{eq:thm_delta_prime}
    \delta' \coloneqq N(\sqrt{\delta}+\sqrt{\gammaH} + \negl) \leq O(N^{5/4}\gammaH^{1/8}).
\end{equation}

But $\real \tr[(\ketbrasame{(-)^{u} }\otimes \1)\calV 
\sigma^{\theta}\calVdag] = \tr[(\ketbrasame{(-)^{u} }\otimes \1)\calV 
\sigma^{\theta}\calVdag] (\geq 0)$.
Therefore, 
\begin{equation}\label{eq:proj_sumu_1}
    \sum_u \real \tr[(\ketbrasame{(-)^{u} }\otimes \1)\calV X_1[u] \sigma^{\theta} \calVdag] \ket{u} \approx_{\delta'} \sum_u \tr[(\ketbrasame{(-)^{u} }\otimes \1)\calV 
    \sigma^{\theta}\calVdag] \ket{u}.
\end{equation}

By an analogous argument, we can ``merge'' $\ketbrasame{(-)^u}$ into $\calV X_1[u] \calVdag$ (instead of merging $\calV X_1[u] \calVdag$ into $\ketbrasame{(-)^u}$ as done above) to obtain
\begin{equation}\label{eq:proj_sumu_2}
    \sum_u \real \tr[(\ketbrasame{(-)^{u} }\otimes \1)\calV X_1[u] \sigma^{\theta} \calVdag] \ket{u} \approx_{\delta'} \sum_u \tr[X_1[u] \sigma^\theta] \ket{u}.
\end{equation}

Substituting \cref{eq:proj_sumu_1,eq:proj_sumu_2} into \cref{eq:proj_diff} gives
\begin{equation}\label{eq:proj_diff_bound_step}
    \sum_u \norm{\calV X_1[u] \calVdag - \ketbrasame{(-)^{u} }\otimes \1}^2_{\calV \sigma^{\theta}\calVdag} \leq O(\delta').
\end{equation}
Then, applying \cref{lem:op_approx_compts} to \cref{eq:proj_diff_bound_step} gives
\begin{equation}\label{eq:proj_diff_bound}
    \sum_{u,v} \norm{\calV X_1[u] \calVdag - \ketbrasame{(-)^{u} }\otimes \1}^2_{\calV \sigma^{\theta,v}\calVdag} \leq  O(\delta').
\end{equation}

Therefore, we have
\begin{align*}
    & \quad \sum_{u,v} \calV P_{1^{2N}}^u \sigma^{\theta,v} P_{1^{2N}}^u\calVdag \otimes \ketbrasame{u,v} 
    \\
    &= \sum_{u,v} (\calV X_1[u] \calVdag) (\calV \sigma^{\theta,v} \calVdag) (\calV X_1[u] \calVdag) \otimes \ketbrasame{u,v}
    &&\text{(\cref{eq:Pu_prod_Xu} and $\calVdag\calV=\1$)}
    \\
    &\approx_{\delta'} \sum_{u,v} (\ketbrasame{(-)^u} \otimes \1)\calV \sigma^{\theta,v} \calVdag (\ketbrasame{(-)^u} \otimes \1) \otimes \ketbrasame{u,v} &&\amatrix{\text{Part $1$ of \cref{lem:post_meas_approx}}}{\text{with \cref{eq:proj_diff_bound}}}
    \\
    &\approx_{\delta''^2} \sum_{u,v} (\ketbrasame{(-)^u} \otimes \1) \tau^{\theta,v} \otimes \alpha^{\theta,v} (\ketbrasame{(-)^u} \otimes \1) \otimes \ketbrasame{u,v} &&\amatrix{\text{Part $2$ of \cref{lem:post_meas_approx}}}{\text{with \cref{lem:swap_states_nondiamond,lem:swap_states_diamond}}},
\end{align*}
where
\begin{equation}\label{eq:thm_delta_doubleprime}
    \delta''\coloneqq N^{3/2}\gammaH^{1/4}+N^{17/16}\gammaH^{1/32}.
\end{equation}
Note that the above approximations are for operators and we recall that for two operators $A$ and $B$, we defined $A\approx_\epsilon B$ to mean $\normone{A-B}^2\leq O(\epsilon)$, where we stress the square on the Schatten $1$-norm. Therefore, also recalling $\ket{B_{u,1^{2N}}}\coloneqq \ket{(-)^u}$, we can use the triangle inequality to obtain
\begin{equation}\label{eq:delta_primes_bound}
\begin{aligned}
    &\sum_{u,v}\normone{\calV P_{1^{2N}}^u \sigma^{\theta,v}P_{1^{2N}}^u \calVdag - \bra{B_{1^{2N}}^u} \tau^{\theta,v} \ket{B_{1^{2N}}^u} \ketbrasame{B_{1^{2N}}^u} \otimes \alpha^{\theta,v}} 
    \\
    &\leq O(\sqrt{\delta'}+\delta'') \leq O(N^{5/8}\gammaH^{1/16} + N^{3/2}\gammaH^{1/4}+N^{17/16}\gammaH^{1/32}) \leq O(N^{3/2} \gammaH^{1/32}),
\end{aligned}
\end{equation}
where we substituted the definitions of $\delta'$ and $\delta''$, given in \cref{eq:thm_delta_prime,eq:thm_delta_doubleprime} respectively, in the second inequality. This completes the proof of \cref{eq:q_measurement_bound} for $q = 1^{2N}$.

We now explain how the above proof of \cref{eq:q_measurement_bound} for $q = 1^{2N}$ can be slightly modified to prove \cref{eq:q_measurement_bound} for $q \neq 1^{2N}$. When $q\neq 1^{2N}$, the third and second-to-last lines of the equations starting at \cref{eq:measurements_first_line} require slight modification: to commute $Z_{q,i}^{(u_i)}$ past $\sigma^i$ (instead of moving $X_1^{(u_1)}$ past $\sigma^{\theta}$ as done above), we need to use the computational indistinguishability of $\sigma^i$ with $\sigma^j$ for some $j\neq i$. This is due to the conditions in the operator-state commutation lemma, \cref{prop:operator_state_commutation}. In addition, we need to redefine the terms $\delta$ and $\delta'$ appearing in the proof of \cref{eq:q_measurement_bound} for $q = 1^{2N}$ by
\begin{equation}
    \delta \coloneqq (N\sqrt{\gammaH}+\gammaHq + \negl)^{1/2} \quad \text{and} \quad \delta' \coloneqq N (\sqrt{\delta} + \sqrt{\gammaH+\gammaHq} + \negl),
\end{equation}
while $\delta''$ can be left unchanged. These re-definitions are simply due to \cref{lem:swap_observables} and \cref{prop:operator_state_commutation} giving different results for different $Z_{q,i}$ and $X_{q,i}$. With these modifications in place, we can use the first inequality of \cref{eq:delta_primes_bound} to obtain \cref{eq:q_measurement_bound} for $q\neq 1^{2N}$.

At this point, we drop the assumption that $D$ is perfect. In this case, as explained at the start of proof of \cref{eq:soundness_measurements}, we obtain the bound
\begin{equation}
\begin{aligned}\label{eq:pre_expectation_bound}
    &\sum_{u,v}\normone{\calV P_q^u \sigma^{\theta,v}P_q^u \calVdag - \bra{B_q^u} \tau^{\theta,v} \ket{B_q^u} \ketbrasame{B_q^u} \otimes \alpha^{\theta,v}}
    \\
    \leq& \sqrt{\gammaP} + 2\gammaP +  O(N^2\epsilon^{1/32} + N \epsilon_{H,q}^{1/8}) \leq O(\sqrt{N\epsilon} + N^{49/32}\epsilon^{1/32} + N^{5/8} \epsilon_{H,q}^{1/8}) \leq O(N^2\epsilon^{1/32} + N \epsilon_{H,q}^{1/8}),
\end{aligned}
\end{equation}
where we also used \cref{eq:q_measurement_bound} for the first inequality and \cref{prop:gamma_bound_by_fail} for the second inequality.

Therefore, we have
\begin{align*}
    &\quad \mathbb{E}_{q\leftarrow_\mu \{0,1\}^{2N}}\Biggl[\sum_{u,v\in \{0,1\}^{2N}} \normone{\calV P_q^u \sigma^{\theta,v} P_q^u - \bra{B_q^u} \tau^{\theta,v} \ket{B_q^u} \ketbrasame{B_q^u} \otimes \alpha^{\theta,v}}\Biggr]
    \\
    &\leq O\Bigl(\sum_{q\in \{0,1\}^{2N}} \mu(q) \bigl(N^2 \epsilon^{1/32} + N\epsilon_{H,q}^{1/8}\bigr)\Bigr) &&\text{(\cref{eq:pre_expectation_bound})}
    \\
    &\leq O\Bigl(N^2\epsilon^{1/32} + N\Bigl(\sum_{q\in \{0,1\}^{2N}} \mu(q) \epsilon_{H,q} \Bigr)^{1/8}\Bigr)
    &&\text{(Jensen's inequality)}
    \\
    &\leq O(N^2 \epsilon^{1/32}) &&\text{(\cref{def:failure_prob})}.
\end{align*}
Therefore, we have proved the second equation of the theorem, \cref{eq:soundness_measurements}. The last equation of the theorem, \cref{eq:soundness_test_measurements}, follows by applying \cref{def:failure_prob} to \cref{eq:pre_expectation_bound}.
\end{proof}

 \section{Applications}\label{sec:applications}

\subsection{Device-independent quantum key distribution}\label{sec:diqkd}

In this section, we describe how to adapt the protocol for DIQKD under computational assumptions in \cite{metger2021diqkd} to use our self-testing protocol as its main component. The resulting DIQKD protocol operates under the same setting and assumptions as in \cite{metger2021diqkd} except we remove the IID assumption. In particular, we highlight the fact that we retain the advantage of the generated key being information-theoretically secure. For details about the setting and assumptions, we refer the reader to \cite{metger2021diqkd}, in particular, its Figure 1. Our protocol will also make use of a ``cut-and-choose'' technique in \cite[Theorem 4.33 (arXiv version v2)]{gmp2022parallelrsp}). Henceforth, DIQKD without qualification refers to DIQKD under computational assumptions.

Recall that in our self-testing protocol there is a single verifier interacting with a single device. On the other hand, in DIQKD, there are two verifiers, Alice and Bob, that each interact with their own (untrusted) device. In DIQKD \emph{under computational assumptions}, the two devices are not assumed to be non-communicating and are modeled as a single device with two \emph{components}, one on Alice's side, and one on Bob's. At a high level, to resolve the difference in the number of verifiers, we will let Alice play the role of the single verifier in our self-testing protocol while Bob will play a relaying role. 

In \cref{fig:protocol_diqkd_test}, we describe a single test round of our DIQKD protocol. In \cref{fig:protocol_diqkd_gen}, we describe how to modify the test round to give a single generation round of our DIQKD protocol. Note that all communication between Alice and Bob are via their public authenticated channel. Eve, who is computationally unbounded, may compute any secret information hidden in this communication but we make the standard assumption that she cannot send the secret information to the devices (e.g., see \cite[Figure 1]{metger2021diqkd}).

\begin{figure}
    \renewcommand*{\arraystretch}{1.2}
    \vspace{0pt}
    \centering
    \scalebox{0.83}{
    \begin{tabular}{p{17cm}}
    \hline
    \smallskip
\begin{enumerate}
    \item Alice samples $\theta \leftarrow_U \thetaset$ uniformly at randomly, generates $2N$ key-trapdoor pairs $(k_1,t_1)$, $\dots$, $(k_{2N},t_{2N})$ according to $\theta$, and sends $k_{N+1}, \dots, k_{2N}$ to Bob. Note that Alice has all the trapdoors $\{t_i\}_{i=1}^{2N}$. Then Alice sends $k_1,\dots,k_N$ to her component. Bob sends $k_{N+1},\dots, k_{2N}$ to his component.
    \item Alice receives back $(y_1,\dots, y_N) \in \calY^N$ and Bob receives back images $(y_{N+1},\dots, y_{2N})\in \calY^N$.
    \item  Alice samples $c\leftarrow_U\{\text{preimage}, \text{Hadamard}\}$ uniformly at random, sends it to Bob, and they both send $c$ to their components.
    \item[] Case $c=\text{preimage}$. Alice receives  $(b_1,\dots b_N, x_1,\dots, x_N)\in \{0,1\}^{N+Nw}$ from her component and Bob receives $(b_{N+1},\dots, b_{2N}, x_{N+1},\dots, x_{2N})\in \{0,1\}^{N+Nw}$ from his component and sends it to Alice. Alice verifies $(b_1,\dots,b_{2N},x_1,\dots,x_{2N})$ according to our self-testing protocol.
    
    \item[] Case $c=\text{Hadamard}$. 
    \begin{enumerate}
        \item Alice receives $(d_1,\dots, d_N)\in \{0,1\}^{Nw}$ from her component and Bob receives $(d_{N+1},\dots,d_{2N})\in \{0,1\}^{Nw}$ from his component.
        \item Alice samples $a\leftarrow_U\{0,1\}$ uniformly at random. 
        \begin{itemize}[--]
            \item If $a=0$, Alice samples $q\leftarrow_U\qtestset$ uniformly at random. 
            \item If $a=1$, Alice sets $q= 1^N0^N$.
        \end{itemize}
        Note that the resulting distribution on $(q_1,\dots,q_{2N})\in \{0,1\}^{2N}$ is the same as in Step 4 of our self-testing protocol (\cref{fig:protocol}) with $\mu$ chosen as the distribution that always outputs $1^N0^N$.
        
        Alice sends $q_{N+1}$ to Bob. Alice sends $q_1,\dots, q_N$ to her component. Bob sends $q_{N+1},\dots, q_{N+1}$ ($=q_{N+1},\dots, q_{2N}$) to his component. 
        
        \item Alice receives $(u_1, \dots, u_N)\in \{0,1\}^N$ from her component and Bob receives $(u_{N+1}, \dots, u_{2N})\in \{0,1\}^N$ from his component.  Alice sends ``Test'' to Bob. Bob sends $\{(y_i,d_i,u_i)\}_{i=N+1}^{2N}$ to Alice. Alice verifies $\{(y_i,d_i,u_i)\}_{i=1}^{2N}$ according to our self-testing protocol using the trapdoors that she holds, $(t_1,\dots, t_{2N})$.
    \end{enumerate}
    \end{enumerate}
    \\
    \hline
    \end{tabular}
    }
    \caption{Test round for device-independent quantum key distribution (DIQKD) protocol.}
    \label{fig:protocol_diqkd_test}
\end{figure}

\begin{figure}
    \renewcommand*{\arraystretch}{1.2}
    \vspace{0pt}
    \centering
    \scalebox{0.83}{
    \begin{tabular}{p{17cm}}
    \hline
    \smallskip
    Same as the test round (see \cref{fig:protocol_diqkd_test}) except with the following modifications.
    \begin{itemize}[--]
        \item At Step 1, Alice chooses $\theta = \diamond$.
        \item At the start of Step 3, Alice chooses $c=\text{Hadamard}$.
        \item At the start of Step 3(b), instead of sampling $q$, Alice sets $q=1^N0^N$.
        \item Replace Step 3 (c) by the following.  Alice receives $(u_1, \dots, u_N)\in \{0,1\}^N$ from her component and Bob receives $(u_{N+1}, \dots, u_{2N})\in \{0,1\}^N$ from his component. Alice sends ``Generation'' to Bob.
    \end{itemize} 
    \\
    \hline
    \end{tabular}
    }
     \caption{Generation round for device-independent quantum key distribution (DIQKD) protocol.}
     \label{fig:protocol_diqkd_gen}
\end{figure}

We construct our overall DIQKD protocol by using multiple test rounds followed by a single generation round. More specifically, at the start of the protocol, Alice selects $m = \poly_1(N) \in \mathbb{N}$ and $0< \delta = 1/\poly_2(N)$, where the $\poly_i(N)$'s are some polynomials in $N$ that are sufficiently large relative to the robustness bounds in \cref{thm:soundness} (see \cite[Proof of Corollary 4.35 (arXiv version v2)]{gmp2022parallelrsp}).  Then, Alice samples $s\leftarrow_U \{0,\dots, m-1\}$ and performs $sm$ test rounds. Let $B_j\coloneqq\{(j-1)m+1,\dots, jm\}$ for $j=1,\dots, s$. If there exists some $j \in [s]$ such that the fraction of test rounds in $B_j$ that fail is $>\delta$, then Alice aborts the protocol. Otherwise, Alice samples $r\leftarrow [m]$ and performs a further $r-1$ test rounds followed by a generation round. These steps are due to our use of the aforementioned ``cut-and-choose'' technique in \cite{gmp2022parallelrsp}.

After the generation round, Alice and Bob proceed to key extraction, which is essentially the same as that in \cite[Protocol 3]{metger2021diqkd}. More specifically, Alice calculates $\hath_1,\dots, \hath_{N}$ as defined above \cref{eq:hardcore_bitstring}. Then
Alice sets $\tilde{u}_i = u_i \oplus \hath_i$ and Bob sets $\tilde{u}_{N+i} = u_{N+i}$.
Then $(\tilde{u}_i)_{i\in [N]}$ and $(\tilde{u}_{N+i})_{i\in [N]}$ are the raw shared secret keys of Alice and Bob, respectively. Finally, Alice and Bob apply one-way error correction and privacy amplification to these raw keys to obtain their final shared secret keys. This completes our description of the protocol. 

The completeness and soundness of this DIQKD protocol essentially follow from the completeness and soundness of our self-testing protocol combined with the proofs of \cite[Theorem 1]{metger2021diqkd} and \cite[Theorem 4.33 (arXiv version v2)]{gmp2022parallelrsp}. We sketch the salient aspects below.

\paragraph{Completeness (sketch).} Consider the case when the device is honest. By construction, an honest device can pass the test rounds with $\geq 1-\negl$ probability due to the completeness of our self-testing protocol (\cref{thm:completeness}.) It is also clear from \cref{eq:honest_final_state} (in the case $\theta = \diamond$) that $\tilde{u}_i = \tilde{u}_{N+i}$ for all $i\in I$.  The key rate\footnote{In defence of our key rate, we note here that the first protocol for DIQKD in the nonlocal setting without IID assumptions \cite{ruv2013command} also had a key rate that tends to zero as $N\rightarrow \infty$.} of our protocol will be $1/\poly(N)$ because Alice can implement up to $m^2 = \poly_1(N)^2$ test rounds before a generation round and $\poly_1(N)$ is a high-degree polynomial in $N$. 

We mention that our DIQKD protocol as described requires an honest device to use nonlocal controlled-$Z$ operations that act on both of its components. As discussed in \cite{metger2021diqkd}, it is preferable for the honest device to be able to only use only local operations, i.e., operations that act on only one of its components. However, as pointed out in \cite{metger2021diqkd}, it is straightforward to remove the need for the nonlocal controlled-$Z$ operation via gate teleportation. Doing this would require some small changes to the protocol and we refer the reader to \cite[Appendix A]{metger2021diqkd} for details.

\paragraph{Soundness (sketch).}
First, to prove the soundness of this protocol, we need to assume that LWE is hard even when a device has access to quantum advice, see \cite[Remark 4.3 (arXiv version v2)]{gmp2022parallelrsp}. As pointed out in \cite[Remark 4.3 (arXiv version v2)]{gmp2022parallelrsp}, this assumption is used in \cite{gheorghiuvidick2019rsp,brakerski2021cryptographic,gmp2022parallelrsp}. With this assumption, the same argument in \cite[Proof of Theorem 4.33 (arXiv version v2)]{gmp2022parallelrsp} implies that, conditioned on the protocol not aborting, the failure probability of the device in the generation round is of order $\sqrt{\delta}$. 
The physical state after the generation round is 
\begin{equation}
    \sum_{u,v} (P_{1^N0^N}^u)_{A'B'}\sigma_{A'B'E}^{\diamond, v}(P_{1^N0^N}^u)_{A'B'} \otimes \ketbrasame{u}_U,
\end{equation}
where $A'$ is the (quantum) register of Alice's component of the device,  $B'$ is the (quantum) register of Bob's component of the device, $U$ is the register containing Alice and Bob's measurement outcomes, and $E$ contains Eve's quantum side information.
By \cref{thm:soundness}, after tracing out 
systems $A'$ and $B'$, the physical state is
$\eta$-close to the ideal state
\begin{equation}
    \rho = \sum_{v,u\in \{0,1\}^{2N}} \bra{B_{1^N0^N}^u} \tau^{\diamond,v} \ket{B_{1^N0^N}^u} \ketbrasame{u}_{U} \otimes \rho_E^{v},
\end{equation}
where $\eta \coloneqq \sqrt{\delta} + (1 -\sqrt{\delta})N^2 \delta^{1/64} = 1/ \poly(N)$, and
we recall that $\tau^{\diamond,v}$ and $\ket{B_q^u}$ are defined in \cref{eq:tau_def_repeat,eq:bb84_product}. After Alice and Bob changes $u_i$ to $\tilde{u}_i$ for $i\in [N]$ and $i\in \{N+1,\dots, 2N\}$, respectively, the ideal state becomes
\begin{align*}
    \tilde{\rho} = \Bigl(\frac{1}{2^N}\sum_{a\in \{0,1\}^{N}}  \ketbrasame{a,a}_{U} \Bigr) \otimes \Bigl( \sum_{v \in \{0,1\}^{2N}} \rho_E^{v} \Bigr).
\end{align*}
That is, the 
register $U$ becomes independent of Eve's quantum side information, and
\begin{align*}
    H(U\mid E)_{\tilde{\rho}} = N.  
\end{align*}
By the continuity bound of conditional entropy \cite[Lemma 2]{winter2016entropy}, the entropy in the register $U$ conditioned on Eve's side information on the physical state is at least 
\begin{equation}
    N - \eta N - O(\eta \log(1/\eta)) \geq N - 1/\poly(N).
\end{equation}
Hence, our DIQKD protocol is capable of generating $\Omega(N)$ bits of shared key.

\subsection{Dimension test}\label{sec:dimension}
In this section, we present a protocol in \cref{fig:protocol_dimension} for testing the quantum dimension of an efficient device. This protocol is a simplified version of the self-testing protocol in \cref{fig:protocol}.
\begin{figure}[ht]
    \renewcommand*{\arraystretch}{1.2}
    \vspace{0pt}
    \centering
    \scalebox{0.83}{
    \begin{tabular}{p{17cm}}
    \hline
    \smallskip
    1. Set $N = \lambda$. Sample $\theta \leftarrow_U \{0,1,\ldots,N\}$ uniformly at random. Sample $N$ key-trapdoor pairs $(k_1,t_{k_1}),\ldots,(k_{N},t_{k_{N}})$ from an ENTCF according to $\theta$ as follows:
    \begin{enumerate}[leftmargin=50pt]
        \item[$\theta = 0$:] all pairs are sampled from $\Gen_{\calG}(1^{\lambda})$
        \item[$\theta > 0$:] the $\theta$-th key-trapdoor pair is sampled from $\Gen_{\calF}(1^{\lambda})$ and the remaining $N-1$ pairs are all sampled from $\Gen_{\calG}(1^{\lambda})$. 
    \end{enumerate}
    Send the keys $k = (k_1,\ldots,k_N)$ to the device.
    
    \\
    
    2. Receive $y = (y_1,\ldots,y_N)\in \calY^{N}$ from the device.
        
    \\
    
    3. Sample round type ``preimage" or ``Hadamard" uniformly at random and send to the device.
      
    \\
    
       \underline{Case ``preimage''}: receive 
       \begin{equation*}
       (b,x) = (b_1,\ldots,b_{N},x_1,\ldots,x_{N})
       \end{equation*}
       from the device, where $b \in \{0,1\}^{N}$ and $x\in \{0,1\}^{Nw}$.
    
      If $\CHK(k_i,y_i,b_i,x_i)=0$ for all $i\in [N]$, \textbf{accept}, else \textbf{reject}. 
      
    \\
    
        \underline{Case ``Hadamard''}:
        receive 
        \begin{equation*}
            d=(d_1,\ldots,d_{N}) \in \{0,1\}^{Nw}
        \end{equation*}
        from the device.
        
    \\
    
        \hspace{3.15mm} Sample $q\leftarrow_{U} \{0^N,1^N\}$ and send $q$ to the device.
        
    \\
    
        \hspace{3.15mm} Receive $u\in \{0,1\}^{N}$ from the device.
        \begin{enumerate}[\text{Case} A.,leftmargin=55pt]
            \item $\theta = 0$ and
            \begin{enumerate}[leftmargin=5pt]
                \item[$q =0^N$: ] if $\hatb(k_i,y_i)\neq u_i$ for some $i \in [N]$, \textbf{reject}, else \textbf{accept}.
                \item[$q =1^N$: ] \textbf{accept}.
            \end{enumerate}
            \item $\theta \in [N]$ and 
            \begin{enumerate}[leftmargin=5pt]
                \item[$q=0^N$:] if $\hatb(k_i,y_i)\neq u_i$ for some $i \neq \theta$, \textbf{reject}, else \textbf{accept}.
                \item[$q=1^N$:] if $\hath(k_\theta,y_\theta,d_\theta) \neq u_\theta$, \textbf{reject}, else \textbf{accept}.
            \end{enumerate}
        \end{enumerate}
    \\
        \hline
    \end{tabular}
    }
    \caption{A protocol that tests the quantum dimension of a computationally efficient device.}
    \label{fig:protocol_dimension}
\end{figure}

There exists an efficient honest quantum device that is accepted by the dimension test with probability $\geq 1-\negl$. The strategy of this device follows that described in the proof of \cref{thm:completeness}, except it is simpler because there is no need to apply controlled-$\sigmaz$ gates. In particular, when $q = 0^N$, the honest device measures $N$ qubits in the computational basis, and when $q = 1^N$, it measures $N$ qubits in the Hadamard basis. Moreover, there exists an efficient classical verifier, which again follows from the efficient function generation and efficient decoding properties of ENTCFs (\cref{property:efficient_generation,property:efficient_decoding}). We omit a formal completeness proof.

We first show in \cref{thm:approx_qubits} that if the device can pass this protocol with high probability, it must have $N$-approximate qubits following the definitions given in \cite{Vidick_Fsmp_2021}. We reproduce these definitions below for completeness.
\begin{defn}
    The Pauli group on $n$ qubits, denoted by $\calP_n$, is the multiplicative group of order $2\cdot 4^n$ defined by 
    \begin{equation}
        \calP_n \coloneqq \{(-1)^c \sigmax(a) \sigmaz(b) \mid c\in \{0,1\}, a\in \{0,1\}^n, b \in \{0,1\}^n\},
    \end{equation}
    where $\sigmax(a) \coloneqq (\sigmax_1)^{a_1}\cdots (\sigmax_n)^{a_n}$ and $\sigmaz(b) \coloneqq (\sigmaz_1)^{b_1}\cdots (\sigmaz_n)^{b_n}$. 
\end{defn}

\begin{defn}
    \label{def:approx_qubits}
    We say a Hilbert space $\calH$ has $n$ $\delta$-approximate qubits if there are reflections (observables that square to the identity)
    $X_1, \ldots, X_n$ and $Z_1, \ldots, Z_n$ on $\calH$ and a state $\rho \in \calD(\calH)$ such that
    for all $a, b, a', b' \in \{0,1\}^n$,
    \begin{equation}
        \norm{ X(a) Z(b) X(a') Z(b') -
        (-1)^{b \cdot a'} X(a \oplus a') Z(b \oplus b') }_{\rho}^2 \leq \delta,
    \end{equation}
    where $X(a) \coloneqq X_1^{a_1}\cdots X_n^{a_n}$, $Z(b) \coloneqq Z_1^{b_1} \cdots Z_n^{b_n}$, and $\oplus$ denotes bit-wise XOR.
\end{defn}

The motivation behind     \cref{def:approx_qubits} is the next theorem which is a version of \cite[Corollary 10.9]{Vidick_Fsmp_2021} that follows from \cite[Theorem 10.6]{Vidick_Fsmp_2021} (attributed to Gowers and Hatami~\cite{gowershatami2017rep}).
\begin{thm}\label{thm:gowers}
    Suppose a Hilbert space $\calH \cong \C^d$ has $n$ $\delta$-approximate qubits with respect to reflections 
    $X_1, \ldots, X_n$ and $Z_1, \ldots, Z_n$ on $\calH$ and a state $\rho \in \calD(\calH)$.
    Then there exists a $d' \geq d$, an isometry $V: \C^d \to \C^{d'}$, and a representation $g \colon \calP_n \to \mathcal{U}(\C^{d'})$ such that
    \begin{align*}
        \E_{a,b} \norm{ X(a)Z(b) - V^\dagger g( \sigmax(a) \sigmaz(b)) V}_{\rho}^2 \leq O(\delta),
    \end{align*}
    where the expectation is over uniformly random $a \leftarrow\{0,1\}^n$ and $b\leftarrow \{0,1\}^n$.
\end{thm}
Intuitively, the theorem says that the action of any product of the $X_i$s and $Z_i$s on $\rho$ is approximately the same as the action of the corresponding product of the $\sigmax_i$s and $\sigmaz_i$s embedded in $\calH$ via $g$ and $V$.
Since the $\sigmax_i$s and $\sigmaz_i$s specify $n$ qubits in $(\mathbb{C}^2)^{\otimes n}$, the $X_i$s and $Z_i$s specify $n$ $\delta$-approximate qubits in $\calH$.

\begin{thm}\label{thm:approx_qubits}
    Let $D$ be an efficient device with Hilbert space $\calH = \calH_D \otimes \calH_Y \otimes \calH_R$.
    If $D$ passes the dimension test of \cref{fig:protocol_dimension} with probability $\geq 1 - \epsilon$, then
    $\calH$ has $N$ $\delta$-approximate qubits for
    $\delta \coloneqq O(N^2\epsilon^{1/32}).$
\end{thm}
\begin{proof}[Proof sketch]
    It suffices to show that the $X_i$ and $Z_i$
    defined in \cref{def:observables} satisfy the condition of
    \cref{def:approx_qubits} with $\rho$ set to $\sigma^0$. This condition can be proved using techniques in the proof of \cref{thm:soundness}, in particular, those used in \cref{eq:measurements_first_line} to handle products of projectors (which can also handle products of observables).
\end{proof}

While \cref{thm:gowers} suggests that a Hilbert space  $\calH$ having $n$ $\delta$-approximate qubits should have a large quantum dimension, it is not apriori obvious how this can be proved. In the rest of this section, we prove that the soundness guarantee of our self-test (\cref{thm:soundness}) implies that any device passing the dimension test in \cref{fig:protocol_dimension} must have a large quantum dimension. 

The intuition is that when the dimension test is passed with high probability, \cref{thm:soundness} guarantees the existence of a quantum state $\rho^{\star}$
on the quantum part of the device's memory that is close to the maximally mixed state up to some isometry. 
More specifically, $\rho^{\star}$ comes from using \cref{thm:soundness} to force the device to perform a Hadamard basis measurement on $N$ qubits that are in  the computational basis and discarding the measurement results.
Then, the main proposition of this section, \cref{prop:rank}, shows that the guarantee on $\rho^{\star}$ is strong enough for us to lower bound the rank of $\rho^{\star}$, which is also a lower bound on the quantum dimension of the device's memory. We proceed to give a formal proof of soundness and stress that we are making the LWE hardness assumption throughout the rest of this section.

To prove \cref{prop:rank}, we use the vector-operator correspondence mapping, $\vc$, as defined in \cite[Chapter 1.1.2]{watrous2018}. Let $A\in \LH$. Informally, $\vc(A) \in \calH \otimes \calH$ is the column vector formed by concatenating the rows of $A$ vertically. We will use the following properties of $\vc$ without further comment:
\begin{enumerate}
    \item Suppose $A$ is of the form $\sum_i \lambda_i\ketbrasame{v_i}$ with $\lambda_i \geq 0$. Then, $\vc(\sqrt{A}) = \sum_i \sqrt{\lambda_i} \ket{v_i}\overline{\ket{v_i}}$, where the overline denotes element-wise complex conjugation.
    
    \item The $\ell_2$-norm of $\vc(A)$ equals the Frobenius norm of $A$. In symbols, $\norm{\vc(A)} = \norm{A}_F$.
    
    \item If $B,C\in \LH$, then $(B\otimes C) \vc(A) = \vc(B A C^\intercal)$, where $^\intercal$ denotes the transpose. In particular, if $U\in \LH$ is unitary, then $(U\otimes \overline{U}) \vc(A) = \vc(U A U^\dagger)$.
\end{enumerate}
We also need the following technical lemma.
\begin{lemma}[{\cite[Lemma 3.34]{watrous2018}}]
    \label{lem:f_to_1}
    Let $P_1, P_2 \in \PosH$ be positive semi-definite operators. Then,
    \begin{equation}
        \norm{\sqrt{P_1} - \sqrt{P_2}}_F
        \leq \sqrt{\norm{P_1 - P_2}_1}.
    \end{equation}
\end{lemma}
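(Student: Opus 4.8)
\textbf{Proof proposal for \cref{lem:f_to_1}.}

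The statement is exactly Lemma 3.34 of \cite{watrous2018}, so the intended plan is to cite it. However, since the excerpt presents it as a lemma we should be able to prove, let me sketch the standard argument. The plan is to reduce to the finite-dimensional spectral/integral-representation trick for the square-root function. First I would use the integral representation of the square root: for any $P \in \PosH$,
\begin{equation}
    \sqrt{P} = \frac{1}{\pi} \int_0^\infty \frac{P}{P + t\1} \, \frac{dt}{\sqrt{t}},
\end{equation}
which can be verified by simultaneous diagonalization of $P$ and checking the scalar identity $\sqrt{\lambda} = \tfrac{1}{\pi}\int_0^\infty \tfrac{\lambda}{\lambda+t}\tfrac{dt}{\sqrt{t}}$ for $\lambda \geq 0$. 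Then $\sqrt{P_1} - \sqrt{P_2} = \tfrac{1}{\pi}\int_0^\infty \bigl( \tfrac{P_1}{P_1+t\1} - \tfrac{P_2}{P_2+t\1} \bigr)\tfrac{dt}{\sqrt{t}}$, and I would manipulate the integrand using the resolvent identity $\tfrac{P_i}{P_i+t\1} = \1 - t(P_i+t\1)^{-1}$, giving
\begin{equation}
    \frac{P_1}{P_1+t\1} - \frac{P_2}{P_2+t\1} = t\bigl( (P_2+t\1)^{-1} - (P_1+t\1)^{-1} \bigr) = t (P_1+t\1)^{-1}(P_1-P_2)(P_2+t\1)^{-1}.
\end{equation}

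Next I would bound the Frobenius norm. Using $\norm{ABC}_2 \le \norm{A}_\infty \norm{B}_2 \norm{C}_\infty$ together with $\norm{(P_i+t\1)^{-1}}_\infty \le 1/t$, the integrand has Frobenius norm at most $\tfrac{1}{t}\norm{P_1-P_2}_2 \le \tfrac1t\sqrt{\norm{P_1-P_2}_1 \norm{P_1-P_2}_\infty}$; but this simple bound diverges at $t=0$, so a more careful split is needed. The clean route is instead to interpolate: for a fixed eigenbasis pair one reduces, by a standard argument on the function $x \mapsto \sqrt{x}$ being operator monotone and operator concave, to the scalar inequality $|\sqrt{a}-\sqrt{b}|^2 \le |a-b|$ for $a,b\ge 0$ (which is elementary: $(\sqrt a - \sqrt b)^2 = a + b - 2\sqrt{ab} \le a+b-2\min(a,b) = |a-b|$). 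The operator-level version of this, namely $\norm{\sqrt{P_1}-\sqrt{P_2}}_2^2 \le \norm{P_1-P_2}_1$, is precisely the content of \cite[Lemma 3.34]{watrous2018} and follows from the fact that the map $P \mapsto \sqrt P$ has the ``$1/2$-Hölder'' property with respect to these norms; the proof there uses the integral representation above combined with a Cauchy-Schwarz splitting of the $\tfrac{dt}{\sqrt t}$ measure at a carefully chosen cutoff depending on $\norm{P_1-P_2}$.

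\textbf{Main obstacle.} The delicate point is handling the region $t \to 0$ in the integral, where the naive resolvent bound $1/t$ is not integrable against $dt/\sqrt t$; one must exploit cancellation, e.g. by also using the complementary bound $\norm{t(P_1+t\1)^{-1}(P_1-P_2)(P_2+t\1)^{-1}}_2 \le \norm{P_1-P_2}_2$ (valid since $\norm{t(P_i+t\1)^{-1}}_\infty \le 1$) for small $t$ and the $1/t$ bound for large $t$, then optimizing the cutoff. Since this is a verbatim restatement of a lemma from \cite{watrous2018}, in the actual write-up I would simply invoke that reference rather than reproduce the estimate, as the excerpt already does.
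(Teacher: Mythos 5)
Your bottom line—cite \cite[Lemma 3.34]{watrous2018}—matches the paper exactly: the paper states the lemma with that citation and supplies no proof, so as a write-up decision you are in agreement with it.

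However, the supporting sketch you give could not replace the citation as written, and it is worth flagging why. First, the ``complementary bound'' you propose to tame the $t\to 0$ region, $\norm{t(P_1+t\1)^{-1}(P_1-P_2)(P_2+t\1)^{-1}}_2\le\norm{P_1-P_2}_2$, is false as stated: there is only a single factor of $t$ to spend on \emph{two} resolvents, so after absorbing it into one of them (giving $\norm{t(P_i+t\1)^{-1}}_\infty\le 1$) the other still costs $\norm{(P_j+t\1)^{-1}}_\infty\le 1/t$, and the integrand bound remains $\norm{P_1-P_2}_2/t$, which is not integrable against $dt/\sqrt{t}$ near $0$; when $P_1,P_2$ have nontrivial kernels there is no better uniform bound, so the cutoff optimization does not close. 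Second, the claimed reduction ``to the scalar inequality $\abs{\sqrt{a}-\sqrt{b}}^2\le\abs{a-b}$'' via operator monotonicity/concavity is not a reduction when $P_1$ and $P_2$ do not commute: no single eigenbasis diagonalizes both, so the scalar inequality does not transfer termwise, and even a repaired integral/cutoff argument would at best yield a constant-losing bound in terms of $\norm{P_1-P_2}_2$ rather than the stated constant-$1$ bound with the trace norm. The standard short proof is instead a Powers--St\o rmer-type trace argument, not the integral representation: set $\Delta\coloneqq\sqrt{P_1}-\sqrt{P_2}$ and $S\coloneqq\sqrt{P_1}+\sqrt{P_2}$, note $P_1-P_2=\tfrac12(\Delta S+S\Delta)$, let $\Pi$ be the projection onto the nonnegative eigenspace of $\Delta$ so that $(2\Pi-\1)\Delta=\abs{\Delta}$, and compare $\tr[(2\Pi-\1)(P_1-P_2)]\le\normone{P_1-P_2}$ with $\tr[(2\Pi-\1)(P_1-P_2)]=\tr[\abs{\Delta}S]\ge\tr[\Delta_+^2]+\tr[\Delta_-^2]=\tr[\Delta^2]$, the last step using $\tr[\Delta_\pm(S\mp\Delta)]\ge 0$ since $S\pm\Delta=2\sqrt{P_{1/2}}\ge 0$. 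Keeping the bare citation, as you conclude, is the right call; just do not lean on the sketch as an actual argument.
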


\begin{proposition}
    \label{prop:rank}
     Let $\rho, \alpha \in D(\calH)$ be density operators. If there exists a unitary $U \in \calL(\mathbb{C}^{2^n}\otimes \calH)$ such that
    \begin{equation}\label{eq:prop_rank}
        \normone{U (\ketbrasame{0}^{\otimes n} \otimes \rho) U^\dagger - 2^{-n}\1 \otimes \alpha}\leq \epsilon,
    \end{equation}
    then $\Rank(\rho) \geq (1 - \epsilon)2^n$.
\end{proposition}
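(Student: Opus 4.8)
The plan is to exploit two elementary facts: rank is invariant under conjugation by a unitary (and $\ketbrasame{0}^{\otimes n}$ is rank one), while a positive semi-definite operator of small operator norm cannot have large trace overlap with a low-rank projector. So there is no need for any fidelity or purification machinery; a direct trace manipulation suffices.

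First I would set $\sigma \coloneqq U(\ketbrasame{0}^{\otimes n}\otimes\rho)U^\dagger$ and $\tau \coloneqq 2^{-n}\1\otimes\alpha$, so that \cref{eq:prop_rank} reads $\normone{\sigma - \tau}\leq \epsilon$. Since $\ketbrasame{0}^{\otimes n}$ is a rank-one projector and conjugation by a unitary preserves rank, $\Rank(\sigma) = \Rank(\ketbrasame{0}^{\otimes n}\otimes\rho) = \Rank(\rho)$; write $r$ for this common value. Let $\Pi$ be the orthogonal projector onto the support of $\sigma$, so that $\Rank(\Pi) = r$ and $\tr[\Pi\sigma] = \tr[\sigma] = 1$.

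Next I would transfer this overlap from $\sigma$ to $\tau$. Because $\norminfty{\Pi} = 1$, H\"older's inequality gives $\abs{\tr[\Pi\sigma] - \tr[\Pi\tau]} \leq \norminfty{\Pi}\,\normone{\sigma-\tau} \leq \epsilon$, hence $\tr[\Pi\tau]\geq 1-\epsilon$. On the other hand, writing $\Pi = \sum_{i=1}^r \ketbrasame{e_i}$ for an orthonormal set $\{\ket{e_i}\}$ and using $\tau\geq 0$, we get $\tr[\Pi\tau] = \sum_{i=1}^r \bra{e_i}\tau\ket{e_i} \leq r\,\norminfty{\tau}$. Finally, $\norminfty{\tau} = 2^{-n}\norminfty{\1\otimes\alpha} = 2^{-n}\norminfty{\alpha}\leq 2^{-n}\tr[\alpha] = 2^{-n}$ since $\alpha$ is a density operator. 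Combining these gives $1-\epsilon \leq \tr[\Pi\tau]\leq r\cdot 2^{-n}$, i.e. $\Rank(\rho) = r \geq (1-\epsilon)2^n$, which is the claim.

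I do not expect a serious obstacle; the only point that needs care is that one must lower-bound the rank via an overlap with a projector rather than by directly comparing the ranks of $\sigma$ and $\tau$ (ranks are not robust under trace-norm closeness in general), and this works precisely because $\tau = 2^{-n}\1\otimes\alpha$ has operator norm at most $2^{-n}$. I note that the $\vc$ machinery and \cref{lem:f_to_1} recorded just above suggest the authors instead argue through purifications of $\sigma$ and $\tau$ together with an Uhlmann-type fidelity bound; that route also works, but seems heavier than necessary, so I would keep the support-projector argument as the main proof and treat the purification approach only as an alternative.
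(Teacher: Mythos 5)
Your argument is correct, and it is a genuinely different (and more elementary) route than the one taken in the paper. The paper proves \cref{prop:rank} by passing to square roots via \cref{lem:f_to_1}, vectorizing to obtain canonical purifications $\vc(\sqrt{\ketbrasame{0}^{\otimes n}\otimes\rho})$ and $\vc(\sqrt{2^{-n}\1\otimes\alpha})$ that are $\sqrt{\epsilon}$-close in $\ell_2$-norm, and then bounding the overlap of a normalized Schmidt-rank-$\Rank(\rho)$ vector against a vector all of whose Schmidt coefficients are at most $2^{-n/2}$, yielding $\Rank(\rho)\geq (1-\epsilon/2)^2 2^n \geq (1-\epsilon)2^n$. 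You instead stay at the level of operators: rank invariance under unitary conjugation identifies $\Rank(\sigma)=\Rank(\rho)=r$, H\"older's inequality transfers the unit overlap of the support projector $\Pi$ of $\sigma$ to $\tr[\Pi\tau]\geq 1-\epsilon$, and the crucial structural fact $\norminfty{\tau}=2^{-n}\norminfty{\alpha}\leq 2^{-n}$ caps this overlap at $r\cdot 2^{-n}$; all steps check out, and your closing remark correctly identifies why a naive rank comparison would fail and why the small operator norm of $\tau$ rescues it. What each approach buys: yours avoids the $\vc$ and fidelity-type machinery entirely, uses only facts already recorded in the paper's preliminaries, and gives the bound $(1-\epsilon)2^n$ directly rather than through $(1-\epsilon/2)^2$; the paper's purification route is heavier here but is the natural one if one wants to track closeness of the states themselves (not just rank), which is in the spirit of the self-testing statements elsewhere in the paper. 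Either proof suffices for \cref{cor:dimension}.
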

\begin{proof}
    Let us write the eigen-decompositions of $\rho$ and $\alpha$ as
    $\rho = \sum_{i=1}^N
    \lambda_i \ketbrasame{v_i}$ and 
    $\alpha = \sum_{k = 1}^{N'} \mu_k \ketbrasame{u_k}$, where $\lambda_i,\mu_k > 0$. Note that $\Rank(\rho) = N$; $\sum_i \lambda_i = 1$ because $\rho$ is normalized; $\mu_k\leq 1$ for all $k\in [N']$ because $\alpha$ is normalized. We have
    \begin{equation}
    \begin{aligned}
        &\vc(\sqrt{\ketbrasame{0}^{\otimes n} \otimes \rho}) = \sum_{i=1}^N \sqrt{\lambda_i}
        \ket{0}^{\otimes n}\ket{v_i} \otimes \ket{0}^{\otimes n} \overline{\ket{v_i}} ,
        \\
        &\vc(\sqrt{2^{-n}\1 \otimes \alpha})
        = \sum_{k=1}^{N'}\sum_{x\in \{0,1\}^n} \sqrt{\mu_k/2^n}\ket{x}\ket{u_k} \otimes \ket{x} \overline{\ket{u_k}}.
    \end{aligned}
    \end{equation}
    Therefore, the condition of the proposition, \cref{eq:prop_rank}, implies that
    \begin{equation}
    \begin{aligned}
        &\quad \bignorm{ U \otimes \overline{U} \Bigl(\sum_{i=1}^N \sqrt{\lambda_i}
        \ket{0}^{\otimes n}\ket{v_i} \otimes \ket{0}^{\otimes n} \overline{\ket{v_i}}\Bigr) - 
        \sum_{k=1}^{N'}\sum_{x\in \{0,1\}^n} \sqrt{\mu_k/2^n}\ket{x}\ket{u_k} \otimes \ket{x} \overline{\ket{u_k}} } 
        \\
        & = \norm{ U [ \ketbrasame{0}^{\otimes n} \otimes \sqrt{\rho} \, ] U^\dagger - 
        2^{-n/2} \1 \otimes \sqrt{\alpha}}_F  \leq  \norm{ U [\ketbrasame{0}^{\otimes n} \otimes \rho \, ] U^\dagger - 2^{-n}\1 \otimes \alpha}_1^{1/2} \leq \sqrt{\epsilon}.
    \end{aligned}
    \end{equation}
    
    The key observation of this proof is the following. Let $\ket{\alpha}$, $\ket{\beta}$ be bipartite states on $\calH_A' \otimes \calH_B'$ such that $\ket{\alpha}$ is normalized and has Schmidt rank $R$, and the Schmidt coefficients of $\ket{\beta}$ are each at most $b$. Then,
    \begin{equation}
        \abs{\BraKet{\alpha}{\beta}}^2\leq Rb^2.
    \end{equation}
    To see this, write $\ket{\alpha} = \sum_{i=1}^R a_i \ket{e_i} \otimes \ket{e_i'}$ and $\ket{\beta} = \sum_{j=1}^d b_j \ket{f_j} \otimes \ket{f_j'}$ in terms of their Schmidt decompositions, where $d \coloneqq \min\{\dim \calH,\dim \calH'\}$. Then, by using the Cauchy-Schwarz inequality twice,
    \begin{equation}
    \begin{aligned}
        \abs{\BraKet{\alpha}{\beta}}  &\leq b \sum_{i=1}^R a_i \sum_{j=1}^d \abs{\BraKet{e_i}{f_j}}\abs{\BraKet{e_i'}{f_j'}} 
        \\
        &\leq b \Bigl(\sum_{i=1}^R a_i^2\Bigr)^{1/2} \Bigl(\sum_{i=1}^R \sum_{j=1}^d \abs{\BraKet{e_i}{f_j}}\abs{\BraKet{e_i'}{f_j'}}\Bigr)^{1/2}
        \\
        &\leq b \Bigl(\sum_{i=1}^R \Bigl(\sum_{j=1}^d \abs{\BraKet{e_i}{f_j}}^2\Bigr)^{1/2}\Bigl(\sum_{k=1}^d \abs{\BraKet{e_i'}{f_k'}}^2\Bigr)^{1/2}\Bigr)^{1/2} = b \sqrt{R}.
    \end{aligned}
    \end{equation}
    
    To conclude the proof, we use the above observation as follows. Set 
    \begin{equation}
    \ket{\alpha} \coloneqq U \otimes \overline{U} \Bigl(\sum_{i=1}^N \sqrt{\lambda_i}
    \ket{0}^{\otimes n}\ket{v_i}  \otimes \ket{0}^{\otimes n}\overline{\ket{v_i}}\Bigr)
    \quad \text{and} \quad
    \ket{\beta} \coloneqq \sum_{k=1}^{N'}\sum_{x\in \{0,1\}^n} \sqrt{\mu_k/2^n}\ket{x}\ket{u_k} \otimes \ket{x} \overline{\ket{u_k}},
    \end{equation}
    then,
    \begin{equation}
        \abs{\BraKet{\alpha}{\beta}} \geq \real \, \BraKet{\alpha}{\beta} =
          \frac{1}{2} (\norm{\ket{\alpha}}^2 + \norm{\ket{\beta}}^2 - \norm{\ket{\alpha} - \ket{\beta}}^2)
        \geq  1-\epsilon/2.
    \end{equation}
    Now, $\ket{\alpha}$ is normalized (because $\sum_i \lambda_i = 1$) and has Schmidt rank $N$, and the Schmidt coefficients of $\ket{\beta}$ are each at most $\sqrt{\mu_k/2^n} \leq 2^{-n/2}$. Therefore, by the above observation,
    \begin{equation}
        N \geq (1-\epsilon/2)^2/(2^{-n/2})^2 \geq (1-\epsilon) \, 2^n,
    \end{equation}
    which completes the proof.
\end{proof}

\begin{remark}
Some ideas behind our proof of \cref{prop:rank}
come from \cite[Proof of Theorem 8.3]{ji2020mipre}.
\end{remark}

We now use \cref{prop:rank} to prove the main theorem of this section. Much of the proof is devoted to bookkeeping to ensure that the (normalized) density operator condition in \cref{prop:rank} is satisfied and that we are bounding the \emph{quantum} dimension. 
\begin{thm}
\label{cor:dimension}
Let $D$ be an efficient device with Hilbert space $\calH = \calH_D \otimes \calH_Y \otimes \calH_R$. Let the classical-quantum decomposition of $\calH$ be $\calH_C \otimes \calH_Q$, so that all states and observables of $D$ on $\calH$ are classical on $\calH_C$, i.e., block-diagonal in a fixed basis $\{\ket{c}\mid c \in [\dim(\calH_C)]\}$ of $\calH_C$. If $D$ can pass the dimension test protocol of \cref{fig:protocol_dimension} with probability $\geq 1 - \epsilon$, then the quantum dimension of $D$, $\dim(H_Q)$, is at least $(1 - O(N^2\epsilon^{1/32}))2^N$.
\end{thm}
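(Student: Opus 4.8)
The plan is to apply \cref{prop:rank} to an appropriate sub-normalized state arising from the soundness theorem, \cref{thm:soundness}, specialized to the simplified dimension-testing protocol. Since the dimension-testing protocol of \cref{fig:protocol_dimension} is exactly the restriction of the self-testing protocol of \cref{fig:protocol} to $\theta \in \{0,1,\ldots,N\}$ with only $N$ (instead of $2N$) function pairs and only questions $q\in\{0,1\}$ (which correspond to the all-computational-basis and all-Hadamard-basis measurements), the soundness analysis of \cref{sec:soundness} goes through verbatim with $2N$ replaced by $N$. In particular, taking $\theta = 0$ and $q = 1$ in \cref{eq:soundness_measurements}, we get an efficient isometry $\calV$ and states $\alpha^{0,v}$ such that
\begin{equation}
\sum_{u,v\in\{0,1\}^N} \calV P_1^u \sigma^{0,v} P_1^u \calVdag \otimes \ketbrasame{u,v} \approx_{N^{7/2}\epsilon^{1/16}} \sum_{u,v\in\{0,1\}^N} \abs{\BraKet{u^{(1)}}{v}}^2 \ketbrasame{u^{(1)}}\otimes \alpha^{0,v}\otimes\ketbrasame{u,v}.
\end{equation}
Here $\abs{\BraKet{u^{(1)}}{v}}^2 = 2^{-N}$ for all $u,v$, so summing over $v$ (and using that the $\alpha^{0,v}$ are all computationally close to a single $\alpha$, plus \cref{lem:swap_states_ci} part 4 to relate $\sum_v\sigma^{0,v}$ to $\sigma^0$) shows that performing the Hadamard-basis measurement on the computational-basis state prepared by $\sigma^0$ produces, under $\calV$, something close to $2^{-N}\1_2^{\otimes N}\otimes\alpha$. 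This is the state $\rho^\star$ referred to in the discussion preceding the theorem.

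First I would make this precise: define $\rho^\star$ to be (a normalization of) the reduced state $\sum_v \sigma^{0,v}$ on $\calH$, or more carefully the post-$d$-measurement state that feeds into the $q=1$ measurement, and show via the chain above plus \cref{lem:swap_states_ci} (part 4) that there is an efficient unitary $W$ (obtained from $\calV$ composed with the circuit implementing $H^{\otimes N}$, using that $\calV$ adds $N$ ancilla qubits and $\calV\calVdag$ is a projector onto a $2^N$-dimensional subspace realizable as $\1_2^{\otimes N}\otimes(\text{proj})$) with
\begin{equation}
\normone{W(\ketbrasame{0}^{\otimes N}\otimes\rho^\star)W^\dagger - 2^{-N}\1_2^{\otimes N}\otimes\alpha} \leq O(N^{7/4}\epsilon^{1/32}).
\end{equation}
There is bookkeeping here because \cref{thm:soundness} is stated for sub-normalized states (the $\sigma^{\theta,v}$ satisfy $\sum_v\tr[\sigma^{\theta,v}]\leq 1$, and equal $1$ only up to $\gammaP$), whereas \cref{prop:rank} wants honest density operators; I would absorb the normalization discrepancy $1 - \tr[\rho^\star] = O(N\epsilon)$ into the overall error, since rescaling a sub-normalized state of trace $\geq 1 - O(N\epsilon)$ to trace $1$ changes its trace norm by $O(N\epsilon)$.

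Next I would invoke \cref{prop:rank} with $n = N$, $\rho = \rho^\star$ (normalized), and the unitary $W$, concluding $\Rank(\rho^\star) \geq (1 - O(N^{7/4}\epsilon^{1/32}))2^N$. The remaining — and I expect principal — obstacle is the restriction to the \emph{quantum} dimension $\dim(\calH_Q)$ rather than the full $\dim(\calH)$. By hypothesis all states and observables of $D$ are block-diagonal with respect to the fixed basis $\{\ket{c}\}$ of $\calH_C$. Consequently $\rho^\star$ decomposes as $\rho^\star = \bigoplus_c p_c\, \rho^\star_c$ with $\rho^\star_c \in \calD(\calH_Q)$ and $\sum_c p_c = 1$, and $\Rank(\rho^\star) = \sum_c \Rank(\rho^\star_c) \cdot [p_c > 0]$... wait — more carefully, $\Rank(\rho^\star) = \sum_{c: p_c>0}\Rank(\rho^\star_c) \leq \dim(\calH_C)\cdot\dim(\calH_Q)$, which is the wrong bound. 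The fix is that $\calV$, being built only from the device's observables $Z_i, X_i$, is itself block-diagonal on $\calH_C$ (tensored with $\ketbrasame{u}$ on the ancilla), so the conjugation by $W$ respects the $\calH_C$-block structure; thus within \emph{each} block $c$ we get $\normone{W_c(\ketbrasame{0}^{\otimes N}\otimes\rho^\star_c)W_c^\dagger - 2^{-N}\1\otimes\alpha_c}\leq \epsilon_c$ for suitable $\alpha_c$ and $\sum_c p_c\epsilon_c = O(N^{7/4}\epsilon^{1/32})$. Applying \cref{prop:rank} blockwise gives $\Rank(\rho^\star_c)\geq(1-\epsilon_c)2^N$ for each $c$ with $p_c > 0$, hence $\dim(\calH_Q)\geq \max_c \Rank(\rho^\star_c) \geq (1 - \min_c\epsilon_c)2^N$, and since $\min_c\epsilon_c \leq \sum_c p_c\epsilon_c / (\text{something})$... the cleanest route is to pick $c$ achieving $p_c > 0$ with $\epsilon_c$ small — by Markov there is a block with $\epsilon_c \leq O(N^{7/4}\epsilon^{1/32})$ (since the weighted average is this small), giving $\dim(\calH_Q) \geq \Rank(\rho^\star_c) \geq (1 - O(N^{7/4}\epsilon^{1/32}))2^N$. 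I would write this up carefully, as the interplay between the block structure, the sub-normalization, and ensuring $\calV$ preserves $\calH_C$-blocks is exactly where the argument could go wrong.
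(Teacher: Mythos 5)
Your overall architecture matches the paper's proof: specialize the soundness analysis to the simplified protocol with $\theta=0$, $q=1$; repair the sub-normalization so that \cref{prop:rank} applies to genuine density operators; observe that $\calV$ (hence the unitary $U$ realizing it) is block-diagonal over $\calH_C$ because it is built from the device's observables; apply \cref{prop:rank} blockwise; and select a good block by an averaging/Markov argument. This is exactly how the paper proceeds (it likewise selects a single good index, first over $v$ and then over the classical block $c^\star$), and your remark that no diamond/Bell or $q\in\{2,3\}$ machinery is needed is correct.

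The one step that would fail as written is the merge of the $\alpha^{0,v}$ into a single $\alpha$ via their computational indistinguishability together with part $4$ of \cref{lem:swap_states_ci}. Two problems: first, \cref{prop:rank} needs a genuine trace-norm bound, whereas the closeness of the $\alpha^{0,v}$ to one another (and of $\calV\sigma^0\calVdag$ to $\1\otimes\alpha$) is only \emph{computational} indistinguishability, which does not control $\normone{\cdot}$ and in particular cannot certify rank; second, the quantitative bounds you would be invoking carry a $2^{2N}$ (here $2^N$) prefactor, e.g.\ $\calV\sigma^{\theta}\calVdag \capprox_{2^{2N}N^{7/4}\gammaT^{1/8}} \1\otimes\alpha$, so even granting a trace-norm version the error would be exponentially large and the polynomial conclusion would be lost. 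The fix is easy and keeps your structure: do not merge at all. Either sum \cref{eq:soundness_measurements} over $v$ and take the auxiliary state to be $\bar\alpha \coloneqq \sum_v \alpha^{0,v}$ (the maximally mixed factor already comes from summing $\ketbrasame{u^{(1)}}$ over $u$, and $\tr[\bar\alpha]$ is close to $1$ by \cref{lem:sum_sigma_v} and \cref{prop:gamma_bound_by_fail}), or, as the paper does, pick the single $v_{\min}$ minimizing the per-$v$ trace-norm error and normalize the corresponding $\sigma^{0,v_{\min}}$ and $\alpha^{0,v_{\min}}$. A minor additional note: the unitary fed to \cref{prop:rank} is just the circuit unitary $U$ with $\calV = U(\1\otimes\ket{0}^{\otimes N})$; no composition with $H^{\otimes N}$ is needed.
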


\begin{proof}
It suffices to assume $\epsilon = O(1/N^{64})$, else the bound on $\dim(H_Q)$ holds vacuously as $N\rightarrow \infty$. In this proof, we use $\poly(N,\epsilon)$ to mean a polynomial of order $O(N^2\epsilon^{1/32})$ for convenience.

We model $D$ in a way that is analogous to \cref{sec:devices}, which defines states $\sigma^\theta\in \calH_D \otimes \calH_Y \otimes \calH_R$ and projective measurements $P_{0^N}$ and $P_{1^N}$ corresponding to the device's second answer when asked questions $q=0^N,1^N$ respectively. By following the arguments in \cref{sec:soundness}, we can prove the following analogue of \cref{eq:soundness_test_measurements} in \cref{thm:soundness}:
\begin{equation}\label{eq:dim_vu}
    \sum_{v \in \{0,1\}^N} \sum_{u \in \{0,1\}^N} \normone{\calV P_{1^N}^u \sigma^{0,v} P_{1^N}^u\calVdag - 2^{-N} \ketbrasame{-^u} \otimes \alpha^{0,v}} \leq \poly(N, \epsilon),
\end{equation}
where $\ket{-^u}\coloneqq H^{\otimes N} \ketbrasame{u} H^{\otimes N}$. 

In the following, for $v\in \{0,1\}^{N}$, we write $\sigma^v \coloneqq \sigma^{0,v}$ and $\alpha^v \coloneqq \alpha^{0,v}$ for convenience. 
Let $S\coloneqq \{v \in \{0,1\}^N \mid \tr[\sigma^v] > 0\}$. For $v \in S$, we write $\hat{\sigma}^{v} \coloneqq \sigma^{v}/\tr[\sigma^{v}]$ and $\tilde{\alpha}^v \coloneqq \alpha^{v}/\tr[\sigma^{v}]$. Then,
\begin{equation}
    \sum_{v \in S} \tr[\sigma^v] \bignorm{\calV \Bigl(\sum_{u \in \{0,1\}^N} P_{1^N}^u \hat{\sigma}^v P_{1^N}^u\Bigr)\calVdag
    - 2^{-N}\sum_{u \in \{0,1\}^N} \ketbrasame{-^u} \otimes \tilde{\alpha}^v}_1 \leq \poly(N, \epsilon).
\end{equation}
Note that $\sum_{u \in \{0,1\}^N} \ketbrasame{-^u} = \1$. 

Let $v_{\min}$ be the $v\in S$ that minimizes $\normone{\calV (\sum_{u \in \{0,1\}^N} P_{1^N}^u \hat{\sigma}^v P_{1^N}^u)\calVdag
- 2^{-N}\1 \otimes \tilde{\alpha}^v}$. Then, \cref{lem:sum_sigma_v} gives
\begin{equation}
    \bignorm{\calV \Bigl(\sum_{u \in \{0,1\}^N} P_{1^N}^u \hat{\sigma}^{v_{\min}} P_{1^N}^u\Bigr)\calVdag
    - 2^{-N} \1 \otimes \tilde{\alpha}^{v_{\min}}}_1 \leq \poly(N,\epsilon)/(1-\gammaP) \leq \poly(N,\epsilon),
\end{equation}
where the last inequality holds by \cref{prop:gamma_bound_by_fail} and $\epsilon = O(1/N)$.

Note that $\tilde{\alpha}^{v_{\min}}$ is not necessarily normalized, but we can show $\tr[\tilde{\alpha}^{v_{\min}}]$ is close to $1$
as follows.
To simplify notation, let $\sigma \coloneqq \hat{\sigma}^{v_{\min}}$ and $\tilde{\alpha} \coloneqq \tilde{\alpha}^{v_{\min}}$. Then,
\begin{align*}
    \abs{1- \tr[\tilde{\alpha}]} = &\Bigl| \bignorm{\calV \Bigl(\sum_{u \in \{0,1\}^N} P_{1^N}^u \sigma P_{1^N}^u\Bigr)\calVdag}_1 - \normone{2^{-N}\1\otimes \tildealpha} \Bigr| \\
    \leq &\bignorm{\calV \Bigl(\sum_{u \in \{0,1\}^N} P_{1^N}^u \sigma P_{1^N}^u \Bigr)\calVdag - 2^{-N} \1 \otimes \tildealpha}_1 \leq \poly(N,\epsilon).
\end{align*}
Now, let $\alpha \coloneqq \tilde{\alpha} / \tr[\tilde{\alpha}]$. Note that $\tr[\alpha] = \tr[\sigma] = 1$.
Then,
\begin{equation}
\begin{aligned}
    &\quad \bignorm{\calV \Bigl(\sum_{u \in \{0,1\}^N} P_{1^N}^u \sigma P_{1^N}^u\Bigr)\calVdag - 2^{-N}\1 \otimes \alpha}_1
    \\
    &\leq  \bignorm{\calV \Bigl( \sum_{u \in \{0,1\}^N} P_{1^N}^u \sigma P_{1^N}^u \Bigr)\calVdag - 2^{-N} \1 \otimes \tr[\tilde{\alpha}]\alpha}_1 + \normone{2^{-N}\1 \otimes \tr[\tilde{\alpha}]\alpha - 2^{-N}  \1 \otimes \alpha} \leq \poly(N,\epsilon).
\end{aligned}
\end{equation}

Let $\rho \coloneqq \sum_{u \in \{0,1\}^N} P_{1^N}^u \sigma P_{1^N}^u$. From the definition of $\calV$ in \cref{fig:swap_isometry}, we see
that there exists a unitary $U\in \calL(\mathbb{C}^{2^n} \otimes \calH)$ of the form 
\begin{equation}
    U = \sum_{c = 1}^{\dim \calH_C} \ketbrasame{c} \otimes U_c,  
\end{equation}
where each $U_c \in \calL(\mathbb{C}^{2^n} \otimes \calH_Q)$ is unitary, such that
\begin{align}
    \label{eq:rho_iden}
    \normone{U (\ketbrasame{0}^{\otimes N} \otimes \rho) U^\dagger
    - 2^{-N}\1 \otimes \alpha} \leq \poly(N,\epsilon).
\end{align}
Here, we used the fact that the controlled-$Z_i$ and controlled-$X_i$ gates appearing in $\calV$ are both block-diagonal in the $\{\ketbrasame{c}\}_c$ basis because $Z_i$ and $X_i$ are observables of $D$ on $\calH$. 

We can also write $\rho$ and $\alpha$ as
\begin{equation}\label{eq:cq_decomposition}
    \rho = \sum_{c = 1}^{\dim \calH_C} \ketbrasame{c} \otimes \rho_c
    \quad \text{and} \quad
    \alpha = \sum_{c = 1}^{\dim \calH_C} \ketbrasame{c} \otimes \alpha_c,
\end{equation}
where $\rho_c,\alpha_c \in \Pos(\calH_Q)$ are such that $\sum_c\tr[\rho_c] = \sum_c\tr[\alpha_c] = 1$.

Then, \cref{eq:rho_iden} implies
\begin{equation}\label{eq:dim_c}
    \sum_{c=1}^{\dim \calH_C} 
    \normone{U_c (\ketbrasame{0}^{\otimes N} \otimes \rho_c) U_c^\dagger
    - 2^{-N}\1 \otimes \alpha_c} \leq \poly(N,\epsilon).
\end{equation}
Analogously to how we handled the sum over $v \in \{0,1\}^N$ in \cref{eq:dim_vu}, we can use \cref{eq:dim_c} to show that
there exists some $c^\star \in [\dim \calH_C]$ and (normalized) density operators $\rho^\star$ and $\alpha^\star$ on $\calH_Q$ such that 
\begin{equation}
    \normone{U_{c^\star} (\ketbrasame{0}^{\otimes N} \otimes \rho^\star) U_{c^*}^\dagger
    - 2^{-N}\1 \otimes \alpha^\star} \leq \poly(N,\epsilon).
\end{equation}

Finally, we apply \cref{prop:rank} to obtain
\begin{equation}
    \dim \calH_Q \geq \Rank(\rho^\star) \geq (1 - \poly(N,\epsilon))2^N,
\end{equation}
which completes the proof.
\end{proof}

\begin{remark}
    We allow the prover to prepare $\rho$ by a procedure that includes measurements and resetting qubits. These may introduce extra classical dimensions to the system but not extra quantum dimensions. Since in our proof, we factor out all the possible classical dimensions (see~\cref{eq:cq_decomposition}) the prover's measurements and resetting of qubits do not affect our lower bound on the \emph{quantum} dimension.
\end{remark}

\bibliographystyle{alphaurl}
\bibliography{references_testing}
\end{document}